\documentclass[11pt]{article} 

\usepackage[letterpaper,margin=1in]{geometry}

\usepackage[letterpaper,margin=1in]{geometry}

\newif\ifred
\redtrue

\usepackage[T1]{fontenc}
\usepackage{microtype}

\usepackage{titlesec}
\titleformat*{\paragraph}{\bfseries}

\usepackage{amsthm,amsmath,amssymb,amsfonts,amssymb,mathtools}
\usepackage{amsmath,amssymb,amsfonts,amssymb,mathtools}
\usepackage{booktabs}
\usepackage{array}
\usepackage{xcolor}
\usepackage{empheq}
\usepackage{dsfont}
\usepackage{mathrsfs}
\usepackage{graphicx}
\usepackage{enumitem}
\usepackage{aliascnt} 
\usepackage{xspace}
\usepackage{bbm, bm}

\newcommand{\llbracket}{\mathopen{[\![}}
\newcommand{\rrbracket}{\mathclose{]\!]}}

\usepackage{caption}
\usepackage{tikz}
\usetikzlibrary{patterns}
\usetikzlibrary{patterns}
\usetikzlibrary{arrows,shapes,automata,backgrounds,petri,positioning}
\usetikzlibrary{shadows}
\usetikzlibrary{calc}
\usetikzlibrary{spy}
\usetikzlibrary{angles, quotes}

\usetikzlibrary{matrix}
\usepackage{pgf,pgfplots}
\usepackage{pgfmath,pgffor}
\pgfplotsset{compat=1.17}

\usepackage{framed}

 \usepackage{algorithm}

\usepackage[noend]{algpseudocode}

\definecolor[named]{ACMBlue}{cmyk}{1,0.1,0,0.1}
\definecolor[named]{ACMYellow}{cmyk}{0,0.16,1,0}
\definecolor[named]{ACMOrange}{cmyk}{0,0.42,1,0.01}
\definecolor[named]{ACMRed}{cmyk}{0,0.90,0.86,0}
\definecolor[named]{ACMLightBlue}{cmyk}{0.49,0.01,0,0}
\definecolor[named]{ACMGreen}{cmyk}{0.20,0,1,0.19}
\definecolor[named]{ACMPurple}{cmyk}{0.55,1,0,0.15}
\definecolor[named]{ACMDarkBlue}{cmyk}{1,0.58,0,0.21}

\usepackage[colorlinks,citecolor=blue,linkcolor=magenta,bookmarks=true]{hyperref}
 \usepackage[nameinlink]{cleveref}

\creflabelformat{ineq}{#2{\upshape(#1)}#3}
\crefname{sub}{Subsection}{Subsection}
\creflabelformat{Subsection}{#2{\upshape(#1)}#3}
\crefname{sdp}{SDP}{SDP}
\creflabelformat{sdp}{#2{\upshape(#1)}#3}
\crefname{lp}{LP}{LP}
\creflabelformat{lp}{#2{\upshape(#1)}#3}
\usepackage{aliascnt}
\usepackage{cleveref}
\crefname{ineq}{Inequality}{Inequality}
\creflabelformat{ineq}{#2{\upshape(#1)}#3}
\crefname{sub}{Subsection}{Subsection}
\creflabelformat{Subsection}{#2{\upshape(#1)}#3}
\crefname{sdp}{SDP}{SDP}
\creflabelformat{sdp}{#2{\upshape(#1)}#3}
\crefname{lp}{LP}{LP}
\creflabelformat{lp}{#2{\upshape(#1)}#3}



\newtheorem{theorem}{Theorem}[section]

\newtheorem{lemma}{Lemma}[section]
\newtheorem{informal theorem}[theorem]{Theorem (informal statement)}

\newtheorem{proposition}[theorem]{Proposition}
\newtheorem{corollary}[theorem]{Corollary}

\newtheorem{remark}[theorem]{Remark}

\newtheorem{definition}[theorem]{Definition}
\newtheorem*{definition*}{Definition}

\newcommand{\lp}{\left}
\newcommand{\rp}{\right}
\newcommand{\p}[1]{\left( #1\right)}
\newcommand\norm[1]{\left\| #1 \right\|}
\newcommand\snorm[2]{\left\| #2 \right\|_{#1}}

\DeclareMathOperator*{\pr}{\mathrm{Pr}}
\DeclareMathOperator*{\E}{\mathbf{E}}

\newcommand{\X}{\mathcal X }
\newcommand{\calC}{\mathcal C }

\newcommand{\EX}{\mathsf{EX}}
\newcommand{\QEX}{\mathsf{QEX}}

\newcommand{\R}{\mathbb{R}}

\newcommand{\Z}{\mathbb{Z}}
\newcommand{\N}{\mathbb{N}}
\newcommand{\eps}{\epsilon}

\newcommand{\poly}{\mathrm{poly}}

\newcommand{\polylog}{\mathrm{polylog}}

\newcommand{\ket}[1]{\lvert #1\rangle}
\newcommand{\bra}[1]{\langle #1\rvert}

\newcommand{\inner}[2]{\langle #1, #2\rangle}

\newcommand{\sgn}{\mathrm{sign}}
\newcommand{\calN}{{\cal N}}

\newcommand{\D}{D}

\newcommand{\Ind}{\mathds{1}}
\newcommand{\1}{\Ind}
\newcommand{\matr[1]}{\boldsymbol{#1}}

\newcommand{\sphere}{\mathbb{S}}

\newcommand{\QHT}{\mathsf{QHT}}

\newcommand{\abs}[1]{\lp| #1 \rp|}

\renewcommand\Pr{\pr}

\begin{document}

\title{Fast Quantum Algorithms for Learning Linear Threshold Functions}

\author{
Aleksandrs Krivcenko\thanks{QuSoft, CWI, the Netherlands, and Télécom Paris, Institut Polytechnique de Paris, France. Supported by the Unitary Foundation microgrant ``Designing quantum query algorithms on a laptop'', and the France Excellence Quantum Scholarship program awarded by IFNL\@.   \nolinkurl{krivcenko@telecom-paris.fr} and \nolinkurl{aleksandrs.krivcenko@gmail.com}}
\and
Tuyen Nguyen\thanks{University of Technology Sydney, Ultimo, NSW, Australia. Work partially done while visiting QuSoft, CWI, the Netherlands. Supported by a scholarship from the Sydney Quantum Academy, PHDR06031. {\tt Tuyen.Q.Nguyen@student.uts.edu.au}}
\and
Ronald de Wolf\thanks{QuSoft, CWI and University of Amsterdam, the Netherlands. Partially supported by the Dutch Research Council (NWO) through Gravitation-grant Quantum Software Consortium, 024.003.037. {\tt rdewolf@cwi.nl}}
}
\date{}

\maketitle

\begin{abstract}
Linear threshold functions are $f_{w,\theta}(x)=\sgn(\inner{w}{x}-\theta)$, where the weight vector $w\in\mathbb{R}^n$ is a unit vector, $\theta\in\mathbb R$ is a threshold, and  typically $x\in\mathbb{R}^n$ or $x\in\{-1,1\}^n$. When $\theta=0$, the LTF is called homogeneous, and we write $f_w:=f_{w,0}$. Such functions are among the most important objects in machine learning, since they serve to linearly discriminate positive and negative examples. 
We give three positive results about learning LTFs:
\begin{enumerate}
\item Suppose we can make real-domain queries, meaning we can compute $f_{w,\theta}(x)$ at any $x\in\mathbb{R}^n$ of our choice. We give a quantum algorithm that learns $f_{w,\theta}$ up to Euclidean error~$\eps$ using $O(\log(n/\eps))$ membership queries and $\widetilde{O}(n)$ other gates. Then we have also learned $f_{w,\theta}$ up to error $O(\eps)$ when $x$ is Gaussian. Classical algorithms need $\Omega(n\log(1/\eps))$ queries.
\item A homogeneous  LTF $f_w$ on domain $\{-1,1\}^n$ where $w$ has only $k$ nonzero entries of the same value, is the Majority function on the support of $w$. Belovs
gave a bounded-error quantum algorithm that identifies the hidden support exactly (and hence learns $f_w$) using $O(k^{1/4})$ queries. We give an exponential improvement, using $O(\log k)$ queries. 
\item Suppose we have a unitary $U$ 
that can produce (discretized) \emph{quantum examples} under Gaussian measure, corresponding to $\int_x \sqrt{\gamma_n(x)}\ket{x}\ket{f_w(x)}dx$. 
This is a weaker access model than membership queries.
We give a quantum algorithm based on the efficient \emph{Hermite transform} of Jain et al.\ to learn homogeneous LTFs~$f_w$ with error~$\eps$ under the Gaussian distribution, using $O(n^{1/4}/\sqrt{\eps})$ applications of $U$ and $U^\dagger$ and $\widetilde{O}(n^2/\eps^4)$ other gates.
\end{enumerate}
\end{abstract}

\setcounter{page}{0}

\thispagestyle{empty}

\newpage

\tableofcontents

\section{Introduction}
Linear threshold functions (LTFs) are $\pm 1$-valued functions of the form $f_{w,\theta}(x)=\sgn(\inner{w}{x}-\theta)$, where the \emph{weight} vector $w\in\mathbb{R}^n$ is a unit vector, the \emph{threshold} $\theta\in\mathbb{R}$, and $x$ typically comes from $\mathbb{R}^n$ or $\{-1,1\}^n$~\cite{Muroga1971}. If $\theta = 0$, the function is called \textit{homogeneous} and we write \(f_w:=f_{w,0}\).

Such functions are among the most fundamental objects in machine learning, as they linearly separate positive and negative examples by a hyperplane (or equivalently define a halfspace); in the homogeneous case $\theta=0$, the separating hyperplane passes through the origin. 
They can also be used to model individual neurons in neural networks. 
The study of LTFs dates back to the Perceptron algorithm~\cite{Rosenblatt1958} and has strongly influenced the development of many foundational learning methods, including support vector machines~\cite{vapnik1997support} and AdaBoost~\cite{freund1997decision}.

The problem of learning LTFs has previously been considered in several settings, depending on the type of access one has to~$f_{w,\theta}$. For example, one can try to learn an approximation of~$f_{w,\theta}$ from multiple examples of the form $(x,f_{w,\theta}(x))$, with either a known or an unknown distribution on~$x$. One can also try to learn $f_{w,\theta}$ by evaluating it at inputs of one's choice, using so-called
``membership queries''. Under the standard Gaussian distribution, if we learn a unit vector $\tilde w$ satisfying
$\norm{w-\tilde w}_2=O(\eps)$ and a threshold $\tilde\theta$ satisfying $|{\theta-\tilde\theta}|=O(\eps)$, then the corresponding LTFs disagree on at most an $O(\eps)$ fraction of inputs, as shown in~\Cref{lem:gaussian-disagreement}. For homogeneous LTFs, this relation is stronger: Gaussian classification error and Euclidean error in the weight vector are equivalent up to constant factors. For general LTFs, however, the converse need not hold. We therefore formulate our learning guarantees for general LTFs directly in terms of classification error under the Gaussian distribution. 

Classically, the query complexity of learning LTFs with \emph{membership queries} has been studied extensively. With this form of access, an information-theoretic lower bound follows from the metric entropy of the class of LTFs: an $\eps$-packing of $\mathbb S^{n-1}$ contains $(1/\eps)^{\Omega(n)}$ distinguishable weight vectors, while each classical query reveals only one bit. Consequently, identifying the target halfspace to error $\eps$ requires $\Omega\!\left(n\log(1/\eps)\right)$ queries in the worst case~\cite{kulkarni1993active}. A strictly weaker access model is the standard \emph{example} or \emph{passive-learning} model, in which the learner does not choose the inputs but instead receives independent labeled examples $(x,f_{w. \theta}(x))$, where $x$ is sampled from a distribution $\mathcal{D}$.\footnote{When the distribution $\mathcal D$ is known and can be sampled efficiently, membership-query access can simulate such examples simply by sampling $x\sim\mathcal D$ and querying $f_{w, \theta}(x)$. In contrast, example access does not, in general, allow the learner to evaluate $f_{w, \theta}$ at inputs of its choice. A general separation for Boolean functions is discussed in~\cite{arunachalam2017guest}.} For LTFs in $n$ dimensions, Long~\cite{long1995sample} established a lower bound of $\Omega(n/\eps)$ passive examples under the uniform distribution on the unit ball, and a matching upper bound was subsequently obtained~\cite{Long2003}. For homogeneous LTFs, the same scaling transfers to the standard Gaussian distribution by radial symmetry~\cite{diakonikolas2021agnostic}. Hence, for constant success probability, passive learning requires $\Theta(n/\eps)$ examples, which is an exponentially worse $\eps$-dependence than the logarithmic dependence available in the stronger membership-query access.

On the quantum side, Wiebe, Kapoor, and Svore~\cite{kapoor2016quantum}
studied quantum algorithms for perceptron learning, obtaining quadratic
improvements in the dependence on the number of training examples and on the
margin parameter in their respective settings. Most relevant to our work is Belovs~\cite{belovs2015symmetricjuntas} on learning symmetric juntas: $n$-bit Boolean functions that depend only on an unknown subset of $k$ input bits, on which it acts as a known symmetric $k$-bit function~$h$. When $h$ is Majority, which is itself a homogeneous Boolean threshold function, Belovs showed that the unknown set of $k$ relevant variables can be identified using $O(k^{1/4})$ quantum membership queries, compared with $\Omega(k\log(n/k))$ randomized classical queries~\cite{angluin1988queries}, yielding a quartic separation. This problem can be viewed as a restricted instance of learning a homogeneous LTF over the Boolean domain~$x\in\{-1,1\}^n$, where the weight vector $w$ assigns equal weight to the coordinates in an unknown support of size~$k$. In this setting, learning $w$ is equivalent to identifying the hidden support of the corresponding Majority-junta.

Quantum learning from quantum examples (superpositions of the form $\sum_x\sqrt{{\cal D}(x)}\ket{x,f(x)}$) has also been studied since the work of Bshouty and Jackson~\cite{bshouty1995learning}. While quantum examples can provide advantages for particular fixed distributions, Arunachalam and de Wolf~\cite{arunachalam_optimal_2017, arunachalam2017guest} showed that in the distribution-free example model, the quantum and classical complexities coincide up to constant factors. These general results, however, do not preclude stronger quantum advantages in structured settings, such as homogeneous LTFs under a fixed Gaussian distribution.

\subsection{Our results}

We give three positive results about efficiently learning LTFs from various kinds of quantum access (one for general LTFs and two for homogeneous LTFs) in terms of their query and gate complexities.
\begin{table}[t]
    \centering
    \begin{tabular}{|l|c|c|c|c|}
\hline
& \multicolumn{2}{c|}{\textbf{Classical}}
& \multicolumn{2}{c|}{\textbf{Quantum}} \\
\cline{2-5}
\multicolumn{1}{|c|}{}
& \textit{Lower}
& \textit{Upper}
& \textit{Lower}
& \textit{Upper} \\
\hline
\shortstack[l]{Real Membership\\[-1pt]
Query (\Cref{sec:quantum-membership})}
& \shortstack{
    $\Omega\!\left(n\log(1/\eps)\right)$\\[-1pt]
    \cite{kulkarni1993active}\\[-1pt]
    \footnotesize see also~\Cref{thm:classical-main}}
& \shortstack{
    $\widetilde O\!\left(n\log(1/\eps)\right)$\\[-1pt]
    \cite{hopkins2020point}}
& \shortstack{
    $\Omega\!\left(\log(1/\eps)\right)$\\[-1pt]
    \Cref{thm:ordered-search-membership-lower-bound}}
& \shortstack{
    $O\!\left(\log(n/\eps)\right)$\\[-1pt]
    \Cref{thm:quantum-ltf-learning}}
\\
\hline
\shortstack[l]{Boolean Membership\\[-1pt]
Query (\Cref{sec:boolean})}
& \shortstack{
    $\Omega\!\left(k\log(n/k)\right)$\\[-1pt]
    \cite{bshouty1996oracles}}
& \shortstack{
    $O\!\left(k\log(n/k)\right)$\\[-1pt]
    \Cref{prop:majority-classical-upper-bound}}
& \shortstack{
    $\Omega(\log k)$\\[-1pt]
    \Cref{thm:majority-lower-bounds}}
& \shortstack{
    $O(\log k)$\\[-1pt]
    \Cref{thm:majority-upper-bound}\\[-1pt]
    \footnotesize prev. $O(k^{1/4})$~\cite{belovs2015symmetricjuntas}}
\\
\hline
\shortstack[l]{Example Query\\[-1pt]
(\Cref{sec:example})}
& \shortstack{
    $\Omega(n/\eps)$\\[-1pt]
    \cite{BalcanLong2013}\\[-1pt]
    \footnotesize see also~\Cref{thm:classical-example-main}}
& \shortstack{
    $O(n/\eps)$\\[-1pt]
    \cite{Long2003}}
& ---
& \shortstack{
    $O\!\left(n^{1/4}/\sqrt{\eps}\right)$\\[-1pt]
    \Cref{thm:quantum-example-learning}}
\\
\hline
\end{tabular}
    \caption{Over view of our main results in comparison to what was known before. Here $n$ is the dimension of $w$ and $x$, $\eps$ is the target error, $k$ is the support size, and the notation $\widetilde O(\cdot)$ hides polylogarithmic factors in $n$ and $1/\eps$.\protect\footnotemark}
    \label{tab:query-complexities}
\end{table}
\footnotetext{For Boolean membership queries, all candidate vectors $w$ are assumed to have identical weight. The classical upper bound $O(k\log(n/k))$ holds in the regime $\liminf_{k\to \infty} n/k > 1$. The quantum upper bound holds for every $n\geq k$. The restriction $n\geq2k$ applies only to the lower bound under the promise $|A|=k$; see \Cref{thm:exact-majority-lower-bound}. Under the promise $1\leq |A|\leq k$, the lower bound holds for every $n\geq k$; see \Cref{thm:majority-lower-bounds}.}

\subsubsection{Learning general LTFs from real membership queries}\label{ssec:Rn}

Suppose we can make real-domain membership queries, meaning we can query $f_{w,\theta}(x)$ at any point $x\in\mathbb{R}^n$ of our choice. We give a quantum algorithm that learns $f_{w,\theta}$ up to error~$\eps$ under the standard Gaussian distribution using $O(\log(n/\eps))$ membership queries and $\widetilde{O}(n)$ other gates.
In contrast, classical algorithms need $\Omega(n\log(1/\eps))$ membership queries, even for homogeneous LTFs~\cite{kulkarni1993active}, meaning we obtain an exponential quantum--classical separation for this important learning task.

The key idea of the algorithm is inspired by~\cite{van_Apeldoorn_2020,chakrabarti2018QuantumConvexOpt}, which used quantum membership queries for a given convex set~$K$ to efficiently implement a ``separation query'' for~$K$. For $f_{w,\theta}(x)=\sgn(\inner{w}{x}-\theta)$, let $K_{w,\theta}:=\{x\in\mathbb R^n:\inner{w}{x}\leq\theta\}$, whose boundary is the hyperplane $\partial K_{w,\theta}=\{x\in\mathbb R^n:\inner{w}{x}=\theta\}$. Fix a direction $p\in\mathbb R^n$ such that $\inner{w}{p}\neq0$. For each $x\in\mathbb R^n$, consider the affine line $x+\mathbb Rp$. Since this line is not parallel to $\partial K_{w,\theta}$, it intersects the boundary at a unique point. We define the corresponding \emph{height function} $h_p:\mathbb R^n\to\mathbb R$ by requiring $x+h_p(x)p\in\partial K_{w,\theta}$. Explicitly,
\begin{equation}
    h_p(x)
    =
    \frac{\theta}{\inner{w}{p}}
    -
    \frac{\inner{w}{x}}{\inner{w}{p}}.
\end{equation}
Along the line $\mathbb Rp$, the function $f_{w,\theta}$ is an exact one-dimensional threshold function, so the boundary location $\theta/\inner{w}{p}$ can be estimated efficiently using binary search. More generally, we can approximate $h_p(x)$ coherently from membership queries to $f_{w,\theta}$ by performing binary search along the line $x+\mathbb Rp$. Since $h_p$ is affine in $x$, its gradient is proportional to $-w$ and is independent of~$\theta$. We can therefore apply Jordan's gradient-estimation algorithm~\cite{Jordan_2005} (further extended in~\cite{Gily_n_2019}) to recover $w$ up to small error using only $O(1)$ queries to our approximation of $h_p$. Combining the recovered $w$ with the previously-estimated boundary location then gives~$\theta$. Overall, the algorithm uses $O(\log(n/\eps))$ membership queries to $f_{w,\theta}$ and $\widetilde O(n)$ additional gates. We also prove an $\Omega(\log(1/\eps))$ lower bound on the quantum query complexity by a reduction from the ordered-search lower bound of~\cite{HoyerNeerbekShi2002}.

\subsubsection{Learning a \texorpdfstring{weight-$k$ vector~$w$}{weight-k vector w} from Boolean membership queries}
\label{boolean}

If the domain of a homogeneous LTF $f_w$ is $\{-1,1\}^n$ and the weight vector $w\in\{0,1/\sqrt{k}\}^n$ has support~$k$, then $f_w$ is the Majority function on the $k$ bits in the support of $w$.
Belovs~\cite{belovs2015symmetricjuntas} gave a bounded-error quantum algorithm for this case that identifies the hidden support exactly (and hence learns $f_w$) using $O(k^{1/4})$ membership queries. 
In contrast, he also showed that classical algorithms need $\Omega(k\log(n/k))$  queries for this (in particular, $\Omega(n)$ queries if $\Omega(n)=k\leq n/2$).
He also gave several algorithms for other types of $k$-juntas, which compute an OR or an ExactHalf function on the $k$ bits rather than Majority, and showed matching poly$(k)$ lower bounds for those problems. However, for the case of Majority, an exponential gap was left open between the best known lower and upper bounds.

We show here that this gap can be closed. We provide a quantum algorithm with bounded error that identifies the support of $w$ exactly using only $O(\log k)$ membership queries, giving an \emph{exponential} quantum speedup in query complexity for the natural problem of learning this special class of LTFs.
Our algorithm also works if the actual support size of $w$ is unknown and $k$ is only an upper bound on it.
For the promise $1\leq|A|\leq k$, we show that this bound is optimal for every $n\geq k$; under the exact promise $|A|=k$, it is optimal when $n\geq2k$. This fully answers an open question from~\cite[Section 8]{belovs2015symmetricjuntas}.

This logarithmic query-complexity upper bound looks quite similar to our logarithmic upper bound in~\Cref{ssec:Rn}, but it is proved quite differently. We cannot use those gradient-based techniques here, since now $f_w$ has discrete domain $\{-1,1\}^n$.
Our upper bounds are proven by constructing explicit feasible solutions to the dual adversary semidefinite program, whose optimal value is known to characterize quantum query complexity up to constant factors~\cite{HoyerLeeSpalek2007,reichardt2011span,lee2011stateconversion}: such a feasible solution can be converted into an algorithm. Note that for such efficient quantum query algorithms coming from the dual adversary semidefinite program (SDP), we typically cannot give a good upper bound on their gate complexity.

The feasible solutions in question were found using a \emph{intersection-free} restriction of the dual adversary SDP. This restriction presents the advantage that its feasible points are \emph{low-rank} and are parametrized by exponentially fewer variables than the original SDP. Numerical optimization on the intersection-free SDP led to the discovery of feasible solutions with objective value $O(\log k)$, which were later translated into an analytical construction. 

\subsubsection{Learning from quantum examples using the Hermite transform}

Suppose we are given access to a unitary $U$, together with its inverse $U^\dagger$, that prepares, up to finite discretization, \emph{quantum examples} of a homogeneous LTF $f_w$ under the standard Gaussian distribution $\gamma_n$, corresponding to the ideal state
\[
    U\ket{0}
    =
    \int_{\R^n}
    \sqrt{\gamma_n(x)}\,
    \ket{x}\ket{f_w(x)}\,dx.
\]
This constitutes a weaker access model than coherent membership queries, since the learner cannot choose the query points and instead receives them as a superposition according to the fixed Gaussian distribution.  We give a quantum algorithm that learns $f_w$ to error at most $\eps$ under $\gamma_n$ using $O(n^{1/4}/\sqrt{\eps})$ applications of $U$ and $U^\dagger$, together with
$\widetilde O(n^2/\eps^4)$ additional gates. 
As mentioned before, this learning task would require $\Omega(n/\eps)$ examples under Gaussian distribution~\cite[Theorem~13]{BalcanLong2013}, so we obtain a quartic improvement in terms of~$n$ when going from classical examples to a quantum example oracle. Unlike our membership-query result from~\Cref{ssec:Rn}, this algorithm currently applies only to homogeneous LTFs; extending the algorithm to general LTFs with nonzero threshold remains an open problem.

Our algorithm is built on the efficient \emph{quantum Hermite transform} (QHT) of Jain, Iyer, Somma, Bao, and Jordan~\cite{JainIyerSommaBaoJordan2025}. This answers an open question from~\cite{JainIyerSommaBaoJordan2025} and, to the best of our knowledge, gives the first application of the Hermite transform to a natural machine-learning problem. \cite{JainIyerSommaBaoJordan2025} already had an application to a Gaussian variant of Goldreich--Levin, but our use of the QHT is conceptually different from theirs. There, the QHT is primarily used to identify a small number of individual large Hermite coefficients. For halfspaces, however, the relevant information is spread across many coefficients of the same degree. The key observation that we use is that these coefficients are highly structured: by the \textit{Hermite ridge identity} (\Cref{lem:ridge-identity}), they collectively encode the unknown direction~$w$. Our algorithm uses the QHT to isolate this structured high-degree component and then recovers $w$ from it using quantum state tomography.  Since
$\Theta(n/\eps^2)$ copies suffice to reconstruct an $n$-dimensional pure state to error $O(\eps)$, we choose $k=\Theta(n/\eps^2)$. The probability of obtaining this informative high-degree component is $\Theta(k^{-1/2})$, so amplitude amplification accesses it using only $O(k^{1/4})$ quantum examples. This gives the query complexity $O(n^{1/4}/\sqrt{\eps})$.

\section{Preliminaries}
\label{sec:prelim}
\subsection{Basic notation}
Throughout, $n$ denotes the ambient dimension and $[n]:=\{1,\ldots,n\}$. For an integer $D>0$, we write $[D]_0:=\{0,\ldots,D-1\}$. Let $w=(w_1,\ldots,w_n)$ be a real vector and let $\alpha=(\alpha_1,\ldots,\alpha_n)$ be a vector of nonnegative integers. Whenever these expressions are well defined, we write
\[
w^\alpha:=\prod_{i=1}^n w_i^{\alpha_i},
\qquad
\alpha!:=\prod_{i=1}^n\alpha_i!,
\qquad
|\alpha|:=\sum_{i=1}^n\alpha_i.
\]
For such a vector $\alpha$, let $\operatorname{type}(\alpha)=(\beta_0,\beta_1,\ldots)$, where $\beta_j:=|\{i\in[n]:\alpha_i=j\}|$.

We will also write $x$ for an input vector in the domain under consideration and $w\in\sphere^{n-1}$ for the unknown unit vector. The notation
$\langle\cdot,\cdot\rangle$ denotes the standard inner product. For a vector $v$, $\norm{v}$ denotes its Euclidean norm; for a matrix or linear operator $M$, $\norm{M}$ denotes its operator norm. We use the convention
\[
\sgn(t):=
\begin{cases}
1,&t\geq0,\\
-1,&t<0,
\end{cases}.
\]
For $w\in\sphere^{n-1}$ and $\theta\in\mathbb R$, we write $f_{w,\theta}(x):=\sgn(\langle w,x\rangle-\theta)$, and abbreviate $f_w:=f_{w,0}$.

For $x\in\{-1,1\}^n$, we write $|x|:=|\{i\in[n]:x_i=1\}|$ for its Hamming weight, and $\bar x$ the vector in $\{-1,1\}^n$ such that for each $i\in[n]$, $\bar x_i = -x_i$. For $A\subseteq[n]$, $|A|$ denotes its cardinality and $\mathbf 1_A\in\{0,1\}^n$
its associated indicator vector; equivalently, $|A|=|\mathbf 1_A|$. We write
$\binom{[n]}{k}$ for the family of all subsets of $[n]$ of size $k$. Finally,
$x_A\in\{-1,1\}^A$ is the restriction of $x$ to the coordinates in~$A$.

\subsection{Computational model}\label{ssec:compmodel}
\paragraph{Classical and quantum examples.}
Following the standard learning models (see e.g.~\cite{arunachalam2017guest}), let $\mathcal C$ be a concept class over $\mathcal X$, let $f\in\mathcal C$
be the unknown target concept, and let $\mathcal D$ be a distribution over $\mathcal X$. The classical example oracle $\EX(f,\mathcal D)$ returns, on each invocation, an independent labeled example $(x,f(x))$ with $x\sim\mathcal D$. A quantum analogue is a unitary quantum example oracle $\QEX(f,\mathcal D)$ satisfying
\begin{equation}
    \QEX(f,\mathcal D)\ket{0}
    :=
    \sum_{x\in\mathcal X}
    \sqrt{\mathcal D(x)}\,
    \ket{x}\ket{f(x)}.
\end{equation}
Salmon, Strelchuk, and Gur~\cite{Salmon2024Provable} showed that this unitary-example-generation model allows for more quantum advantage than the setting where one is given only copies of the example state~\cite{arunachalam_optimal_2017}.

For continuous instance spaces such as $\mathcal X=\R^d$, the input must in practice be represented with finite precision. We therefore write $\QEX_M(f,\mathcal D)$ for the quantum example oracle associated with a finite discretization of the domain, indexed by a precision parameter $M$. If $\mathcal X_M$ denotes the resulting finite grid and $\mathcal D_M$ the induced discrete distribution, then
\begin{equation}
    \QEX_M(f,\mathcal D)\ket{0}
    =
    \sum_{x\in\mathcal X_M}
    \sqrt{\mathcal D_M(x)}\,
    \ket{x}\ket{f(x)}.
\end{equation}
The particular choice of discretization and the resulting approximation error will be specified when we discuss the implementation of the algorithm.

In our quantum example-query model, we additionally allow coherent access to the inverse oracle $\QEX_M(f,\mathcal D)^\dagger$, with each
application of either $\QEX_M(f,\mathcal D)$ or $\QEX_M(f,\mathcal D)^\dagger$ counted as one query. Thus, $\EX(f,\mathcal D)$ reveals one randomly drawn labeled example per query, whereas $\QEX_M(f,\mathcal D)$ and its inverse permit coherent processing and uncomputation of the labeled distribution. In both models, the queried points are determined by $\mathcal D$, rather than chosen arbitrarily by the learner.

\paragraph{Membership queries.}
Membership-query access is stronger than examples, in that the learner may choose the point at which the target is evaluated.  Classically, a membership oracle returns $f(x)$ on an arbitrary query $x\in\mathcal X$. For $\{\pm1\}$-valued functions, define $\bar f(x):=(1-f(x))/2\in\{0,1\}$. The coherent quantum membership oracle is the unitary
\begin{equation}
    O_f:\ket{x}\ket{b}
    \longmapsto
    \ket{x}\ket{b\oplus\bar f(x)},
\end{equation}
for Boolean labels, or a phase oracle
\begin{equation}
    O_f^{\pm}\ket{x}
    =
    f(x)\ket{x},
    \qquad
    f(x)\in\{\pm1\},
\end{equation}
obtained from the bit-output oracle by phase kickback (these two types of oracles are equivalent if we can apply them in a controlled manner).  The crucial distinction with classical membership queries is that $O_f$ may be applied to a superposition of query points, allowing the learner to process many function values coherently in superposition.  Moreover, when the sampling distribution $\mathcal D$ is known and its coherent superposition can be prepared, a quantum membership query can generate a quantum example by preparing $\sum_x\sqrt{\mathcal D(x)}\ket{x}\ket{0}$ and applying $O_f$; the converse simulation is not possible in general with only a small number of quantum examples~\cite{arunachalam2017guest}.

\subsection{Adversary bound}
\label{sec:transducers-introduction}

In the special case of \Cref{boolean}, the problem becomes an instance of a discrete \textit{function evaluation} problem. 
The query complexity of this class of problems is well studied~\cite{hamoudi2025brief}, and many techniques are known for proving lower and upper bounds.
One of the main lower-bound techniques is the \textit{adversary bound}, first introduced in~\cite{ambainis2002quantum}.
It was generalized in~\cite{HoyerLeeSpalek2007} and was later shown to characterize the complexity of bounded-error quantum queries up to constant factors~\cite{reichardt2011span,lee2011stateconversion}.

The adversary bound can be phrased as a \textit{semidefinite program} (SDP), which admits a dual formulation that gives upper bounds on query complexity.
A feasible solution to the dual program can be 
turned into a quantum query algorithm whose query complexity matches the objective value of the SDP, up to constant factors, using
the span-program/transducer framework~\cite{reichardt2011span,belovs2023taming}.
Note that this construction does not, in general, give any guarantees on the space or time complexity of the resulting algorithm.

For Majority-junta learning, we use the standard dual adversary bound given in \Cref{eq:junta-dual-adversary}. 
Its optimal value characterizes bounded-error quantum query complexity up to constant factors, for both Boolean and non-Boolean alphabets \cite{lee2011stateconversion}.
By weak duality, any feasible solution of the primal adversary SDP gives a lower bound on quantum query complexity, whereas a feasible solution of the dual adversary SDP gives an upper bound.

We first define the junta learning problem, for which the adversary bound will give a tight query complexity characterization.

For each $k\geq1$, fix a known non-constant symmetric Boolean function
$h_k:\{-1,1\}^k\to\{-1,1\}$. We also write $h_k(t)$ for its value
on inputs of Hamming weight $t\in\{0,\ldots,k\}$.

\begin{definition}
\label{def:junta}
Let $1\leq k\leq n$ and $A\subseteq[n]$ with $|A|=k$. The symmetric
$k$-junta with hidden set $A$ is the function
$f_A:\{-1,1\}^n\to\{-1,1\}$ defined by
$f_A(x)=h_k(x_A)$. Equivalently,
\[
f_A(S)=h_k(|A\cap S|).
\]
where $S:=\{i\in[n]:x_i=1\}$.
\end{definition}

\begin{definition}
\label{def:junta-learning}
Let $\mathcal A$ be any promise of nonempty subsets of $[n]$. For every
$A\in\mathcal A$, let $f_A$ be the symmetric $|A|$-junta from the preceding
definition. The symmetric-junta learning problem with promise $\mathcal A$ is,
given oracle access to $f_A$ for an unknown $A\in\mathcal A$, to output $A$
with error probability at most $1/3$.
\end{definition}

The inputs to this learning task are functions. Thus, it is the partial
function-evaluation problem
\[
L_{\mathcal A}:\{f_A:A\in\mathcal A\}\to\mathcal A,
\qquad L_{\mathcal A}(f_A)=A.
\]
For every $S\subseteq[n]$, define the matrix $\Delta_S$ whose rows and columns are indexed by
$\mathcal A$, by
$\Delta_S\llbracket A,B\rrbracket:=\mathbf 1_{\{f_A(S)\ne f_B(S)\}}$.
We define both the primal and dual adversary bounds for this learning problem.
All matrices below are indexed by $\mathcal A$.

\par

\noindent \textbf{Primal adversary bound for $L_{\mathcal A}$}

\begin{equation}
\label{eq:junta-primal-adversary}
\begin{aligned}
\underset{\Gamma}{\operatorname{maximize}}\quad
&\norm{\Gamma}\\[3pt]
\textnormal{subject to}\quad
&\norm{\Gamma\circ\Delta_S}\leq1
\quad(S\subseteq[n])\\[3pt]
&\Gamma=\Gamma^{\mathsf T},
\qquad \Gamma\llbracket A,A\rrbracket=0
\quad(A\in\mathcal A)
\end{aligned}
\end{equation}

\par

\noindent \textbf{Dual adversary bound for $L_{\mathcal A}$}

\begin{equation}
\label{eq:junta-dual-adversary}
\begin{aligned}
\underset{\{X_S\}}{\operatorname{minimize}}\quad
&\max_{A\in\mathcal A}\sum_{S\subseteq[n]}
X_S\llbracket A,A\rrbracket\\[3pt]
\textnormal{subject to}\quad
&\sum_{\substack{S\subseteq[n]\\ f_A(S)\ne f_B(S)}}
X_S\llbracket A,B\rrbracket=1
\quad(A\ne B)\\[3pt]
&X_S\succeq0
\quad(S\subseteq[n])
\end{aligned}
\end{equation}

Here $\circ$ denotes the entrywise product. The two programs have the same
optimal value, which characterizes bounded-error quantum query complexity up
to constant factors \cite{reichardt2011span,lee2011stateconversion}.

\subsection{Hermite polynomials}\label{ssec:hermitedefs}
\paragraph{Hermite functions and polynomials.} Let $\gamma_n$ denote the standard Gaussian measure on $\R^n$, with density $\gamma_n(x)=(2\pi)^{-n/2}e^{-\norm{x}^2/2}$. We
work in $L^2(\gamma_n)$, the vector space of all functions $f:\R^n\to\R$ such that $\E[f^2]<\infty$ under the Gaussian measure~$\gamma_n$. This is an inner product space under $\inner{f}{g}=\E_{x\sim\gamma_n}[f(x)g(x)]$. It has a complete orthonormal basis given by the \textit{Hermite polynomials}. For $n=1$, the probabilists' Hermite polynomials are defined by 
\[
\operatorname{He}_k(t):=(-1)^k e^{t^2/2}\frac{\mathrm d^k}{\mathrm dt^k}e^{-t^2/2}, 
\mbox{ with normalized versions }h_k(t):=\operatorname{He}_k(t)/\sqrt{k!}.
\]
For general $n$, a basis for $L^2(\gamma_n)$ is formed by products of these polynomials, one for each coordinate: 
\[
h_\alpha(x)=\prod_{i=1}^n h_{\alpha_i}(x_i)\mbox{ for }\alpha\in\N^n.
\]
Thus, every $f\in L^2(\gamma_n)$ has a unique Hermite expansion
\[
    f(x)=\sum_{\alpha\in\N^n}\hat f(\alpha)h_\alpha(x),
    \qquad
    \hat f(\alpha)=\int f(x)h_\alpha(x)\,\mathrm d\gamma_n(x).
\]
The above convention is related to the \emph{physicists' Hermite polynomials}, defined by $H_k(t):=(-1)^k e^{t^2}\frac{\mathrm d^k}{\mathrm dt^k}e^{-t^2}$. The two conventions differ only by a rescaling of the argument and normalization. In particular, $\operatorname{He}_k(t)=2^{-k/2}H_k\!\left(\frac{t}{\sqrt{2}}\right)$. Hence, our normalized probabilists' Hermite polynomial is $h_k(t)=\frac{1}{\sqrt{2^k k!}}\,H_k\!\left(\frac{t}{\sqrt{2}}\right)$. The rescaling reflects the different Gaussian weights underlying the two conventions: $e^{-t^2/2}$ for the probabilists' convention and $e^{-t^2}$ for the physicists' convention. Throughout this work, we primarily use the probabilists' convention and convert between the two when needed.

\paragraph{Gaussian halfspaces.} We first show that the disagreement probability of two Gaussian halfspaces depends only on the Euclidean error between their parameters.

\begin{lemma}[Gaussian disagreement of LTFs]
\label{lem:gaussian-disagreement}
For all $w_1,w_2\in\sphere^{n-1}$ and $\theta_1,\theta_2\in\R$,
\begin{equation}
    \Pr_{x\sim\gamma_n}
    \left[
        f_{w_1,\theta_1}(x)
        \neq
        f_{w_2,\theta_2}(x)
    \right]
    \leq
    \frac12\|w_1-w_2\|
    +
    \frac{|\theta_1-\theta_2|}{\sqrt{2\pi}}.
\end{equation}
Moreover, in the homogeneous case $\theta_1=\theta_2=0$, the disagreement probability admits the exact expression
\begin{equation}
    \Pr_{x\sim\gamma_n}
    \left[
        f_{w_1}(x)
        \neq
        f_{w_2}(x)
    \right]
    =
    \frac{1}{\pi}
    \arccos\!\bigl(\langle w_1,w_2\rangle\bigr).
    \label{eq:homogeneous-gaussian-disagreement}
\end{equation}
\end{lemma}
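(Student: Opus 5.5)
We first prove the exact homogeneous formula \eqref{eq:homogeneous-gaussian-disagreement} and then obtain the general bound by a triangle inequality through the intermediate function $f_{w_2,\theta_1}$.

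\textbf{Homogeneous case.} Both $f_{w_1}$ and $f_{w_2}$ depend on $x$ only through the projection of $x$ onto $V=\mathrm{span}(w_1,w_2)$. By rotation invariance of $\gamma_n$, this projection is a standard Gaussian in the plane $V$ (if $w_1=\pm w_2$ the claim is immediate, since the probability is $0$ or $1$ up to a null set). A planar standard Gaussian has uniformly distributed angle. The two signs disagree exactly when the angular direction lies in one of two opposite sectors, each of angle $\phi=\arccos\langle w_1,w_2\rangle$, namely the sectors between the two boundary lines. The disagreement probability is therefore $2\phi/(2\pi)=\phi/\pi$. Ties on the hyperplanes have measure zero, so the convention $\sgn(0)=1$ is irrelevant.

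\textbf{Same direction, different thresholds.} The functions $f_{w_2,\theta_1}$ and $f_{w_2,\theta_2}$ disagree exactly when $\langle w_2,x\rangle$ lies between $\theta_1$ and $\theta_2$. Since $\langle w_2,x\rangle\sim N(0,1)$ and the standard normal density is at most $1/\sqrt{2\pi}$, this event has probability at most $|\theta_1-\theta_2|/\sqrt{2\pi}$.

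\textbf{Different directions, same threshold.} This is the step where care is needed. We must bound $\Pr[f_{w_1,\theta}\ne f_{w_2,\theta}]$ by $\frac12\|w_1-w_2\|$ uniformly in $\theta$. Again reduce to the plane $V$ with a standard Gaussian $g$. The disagreement region is a union of two opposite wedges bounded by the two lines $\langle w_i,g\rangle=\theta$. These lines meet at a point $c\in V$, and the wedges have opening angle $\phi$ at $c$. We claim a wedge of angle $\phi$ (counting both opposite pieces) with apex at an arbitrary point has Gaussian measure at most its measure when the apex is at the origin, which is $\phi/\pi$.

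To prove the claim, rotate coordinates so the wedge is symmetric. One route uses the fact that the family of lines through the apex is a translate, and the probability that $g$ falls in a double wedge of angle $\phi$ is bounded by the measure of directions seen from the apex. Concretely, if $g-c$ has polar angle $\psi$, the wedge is $\{\psi\in I\cup(I+\pi)\}$. The density of $\psi$ for a shifted Gaussian, symmetrized over $\psi$ and $\psi+\pi$, is at most $1/\pi$. This holds because $\int_{-\infty}^{\infty}\frac{1}{2\pi}e^{-\|c+r u\|^2/2}|r|\,dr$ equals $\frac{1}{\pi}e^{-d^2/2}$, where $d$ is the distance from the origin to the line $c+\mathbb{R}u$, and this is at most $1/\pi$.

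The probability is then at most $\phi/\pi$. Finally, use $\phi/\pi\le\frac12\|w_1-w_2\|$, which follows from $\|w_1-w_2\|=2\sin(\phi/2)$ and $\sin(t)\ge 2t/\pi$ on $[0,\pi/2]$.

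Combining the two bounds by the triangle inequality for disagreement probabilities gives the stated inequality.
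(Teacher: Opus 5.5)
Your homogeneous wedge computation, the threshold-shift bound, the triangle inequality through $f_{w_2,\theta_1}$, and the final step $\phi/\pi\le\sin(\phi/2)=\tfrac12\|w_1-w_2\|$ all match the paper and are fine. The gap is in the step with different directions and a common threshold $\theta$, where there are two problems.

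First, the claim that a double wedge of opening $\phi$ with apex at an \emph{arbitrary} point has Gaussian measure at most $\phi/\pi$ is false. Put the apex at distance $R$ from the origin with the bisector passing through the origin. The piece containing the origin then contains the disk of radius $R\sin(\phi/2)$ about the origin, so the measure tends to $1$ as $R\to\infty$. The double wedge is centrally symmetric but not convex, so no Anderson-type translation inequality is available.

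Second, the identity you use is wrong. Write $s:=\langle c,u\rangle$ and $d^2:=\|c\|^2-s^2$. Then $\|c+ru\|^2=(r+s)^2+d^2$, so
\[
\int_{\mathbb R}\frac{1}{2\pi}e^{-\|c+ru\|^2/2}|r|\,dr=\frac{e^{-d^2/2}}{2\pi}\int_{\mathbb R}|r|\,e^{-(r+s)^2/2}\,dr .
\]
The last integral equals $2$ only when $s=0$, i.e.\ when $u\perp c$. In general it is $\sqrt{2\pi}\,\E|Y-s|$ with $Y\sim N(0,1)$, which grows like $\sqrt{2\pi}|s|$. So for directions nearly parallel to $c$, the symmetrized angular density is of order $\|c\|/\sqrt{2\pi}$, not at most $1/\pi$.

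This failure occurs in the configuration you actually need. The apex lies on $\mathbb R(w_1+w_2)$ at distance $|\theta|/\cos(\phi/2)$ from the origin. The disagreement wedges contain every direction within angle $\phi/2$ of $\pm(w_1-w_2)$. Hence for $\phi$ near $\pi$ and $\theta\neq0$, the wedges contain directions nearly parallel to $c$ with $\|c\|$ large. For example, $\phi=0.9\pi$, $\theta=1$, and a direction $80^\circ$ off the bisector give a symmetrized density of about $1.36>1/\pi$. The bound $\phi/\pi$ on the wedge measure is still true, but only after integrating, so no pointwise density bound can prove it.

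The missing idea is to use where the apex sits relative to the wedge, which is what the paper does. Take coordinates along $(w_1+w_2)/\|w_1+w_2\|$ and $(w_1-w_2)/\|w_1-w_2\|$. These give independent standard normals $A,B$ with $\langle w_1,x\rangle=aA+bB$ and $\langle w_2,x\rangle=aA-bB$, where $a=\cos(\phi/2)$ and $b=\sin(\phi/2)$.

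The classifiers disagree iff $(aA-\theta+bB)(aA-\theta-bB)<0$, i.e.\ iff $|aA-\theta|<b|B|$. Conditioned on $B$, this says $A$ lies in an interval of fixed length $2b|B|/a$ centered at $\theta/a$. The slices perpendicular to the bisector are intervals, and moving the apex only translates them. Since the standard normal density is symmetric and decreasing in $|t|$, such an interval has maximal mass when centered at $0$.

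Hence the disagreement probability is at most its value at $\theta=0$, which is $\phi/\pi$ by your homogeneous computation. The degenerate case $w_1=-w_2$ (where $a=0$ and there is no apex) is trivial, since the right-hand side is then at least $1$.
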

\begin{proof}
By the triangle inequality for disagreement probability,
\[
\Pr\!\left[f_{w_1,\theta_1}(x)\neq f_{w_2,\theta_2}(x)\right]
\leq
\Pr\!\left[f_{w_1,\theta_1}(x)\neq f_{w_2,\theta_1}(x)\right]
+
\Pr\!\left[f_{w_2,\theta_1}(x)\neq f_{w_2,\theta_2}(x)\right].
\]
The second term is easy to control because the two classifiers have the same normal vector \(w_2\) and differ only in their thresholds. Since \(w_2\) is a unit vector and \(x\sim\gamma_n\), the scalar random variable \(\langle w_2,x\rangle\) is distributed as \(\mathcal N(0,1)\). Therefore, the two classifiers disagree exactly when \(\langle w_2,x\rangle\) lies between \(\theta_1\) and \(\theta_2\). Hence, this disagreement probability is the integral of the standard Gaussian density over an interval of length \(|\theta_1-\theta_2|\). Since the Gaussian density is everywhere at most \(1/\sqrt{2\pi}\), this contribution is at most \(|\theta_1-\theta_2|/\sqrt{2\pi}\).

It remains to bound the first term, where the two classifiers have the same threshold \(\theta_1\) but different normals. Let \(Z_i=\langle w_i,x\rangle\) and \(\rho:=\langle w_1,w_2\rangle=\cos\alpha\), where \(\alpha\) is the angle between \(w_1\) and \(w_2\). We can write the correlated Gaussian pair as
\[
Z_1=aA+bB,\qquad Z_2=aA-bB,
\]
where \(A,B\sim\mathcal N(0,1)\) are independent, \(a=\sqrt{(1+\rho)/2}\), and \(b=\sqrt{(1-\rho)/2}\). For a common threshold \(t\), disagreement occurs exactly when \((Z_1-t)(Z_2-t)<0\), or equivalently \(|aA-t|<b|B|\). Conditioning on \(B\), this means that \(A\) must lie in an interval of fixed length centered at \(t/a\). Because the standard Gaussian density is symmetric and decreases away from the origin, an interval of fixed length has the largest Gaussian mass when centered at \(0\). Thus, the disagreement probability is maximized at \(t=0\), giving
\[
\Pr[f_{w_1,\theta_1}(x)\neq f_{w_2,\theta_1}(x)]
\leq
\Pr[f_{w_1,0}(x)\neq f_{w_2,0}(x)]
=
\frac{\alpha}{\pi}.
\]
Finally, since \(\|w_1-w_2\|=2\sin(\alpha/2)\) and \(\sin(\alpha/2)\geq \alpha/\pi\), we obtain \(\alpha/\pi\leq \|w_1-w_2\|/2\). Combining the two contributions gives the claimed bound.

For the homogeneous case, let
$\alpha:=\arccos(\langle w_1,w_2\rangle)\in[0,\pi]$ be the angle between
$w_1$ and $w_2$. By rotational invariance of the Gaussian distribution,
the disagreement probability depends only on the projection of $x$ onto
$\operatorname{span}\{w_1,w_2\}$. In this two-dimensional plane, the two
homogeneous halfspaces disagree on two opposite wedges, each of angle
$\alpha$. Since the direction of a standard Gaussian vector is uniform
on the circle, their total Gaussian measure is
\[
    \frac{2\alpha}{2\pi}
    =
    \frac{\alpha}{\pi}
    =
    \frac{1}{\pi}
    \arccos\!\bigl(\langle w_1,w_2\rangle\bigr),
\]
which proves~\eqref{eq:homogeneous-gaussian-disagreement}.
\end{proof}

We now recall two basic facts about the Hermite expansion of homogeneous halfspaces.
\begin{lemma}[Hermite ridge identity]\label{lem:ridge-identity}
For every \(k\geq0\),
\begin{equation}
  h_k(\langle w,x\rangle)
  =\sum_{\abs\alpha=k}
    \sqrt{\frac{k!}{\alpha!}}\,w^\alpha h_\alpha(x).
  \label{eq:ridge-identity}
\end{equation}
\end{lemma}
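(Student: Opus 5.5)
The plan is to use the exponential generating function of the probabilists' Hermite polynomials, $e^{st-s^2/2}=\sum_{k\geq0}\operatorname{He}_k(t)\,s^k/k!$, which is equivalent to $e^{st-s^2/2}=\sum_{k\geq0}h_k(t)\,s^k/\sqrt{k!}$ in the normalized convention. This follows from the Rodrigues formula defining $\operatorname{He}_k$ in \Cref{ssec:hermitedefs}. Indeed, $e^{st-s^2/2}=e^{t^2/2}e^{-(t-s)^2/2}$, and Taylor expanding $e^{-(t-s)^2/2}$ in $s$ around $s=0$ gives coefficients $(-1)^k\frac{d^k}{dt^k}e^{-t^2/2}/k!$. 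I will therefore compare the generating functions of both sides of \eqref{eq:ridge-identity}.

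For the left-hand side, substitute $t=\langle w,x\rangle$. We get
\[
\sum_{k\geq0} h_k(\langle w,x\rangle)\frac{s^k}{\sqrt{k!}}=\exp\!\Big(s\langle w,x\rangle-\frac{s^2}{2}\Big).
\]
Since $\|w\|=1$, we have $s^2/2=\sum_i (sw_i)^2/2$. The exponential therefore factors as
\[
\prod_{i=1}^n \exp\!\Big(sw_ix_i-\frac{(sw_i)^2}{2}\Big).
\]
This is the key use of the unit-norm hypothesis. Applying the one-dimensional generating function to each factor with parameter $sw_i$ gives
\[
\prod_{i=1}^n\sum_{\alpha_i\geq0} h_{\alpha_i}(x_i)\frac{(sw_i)^{\alpha_i}}{\sqrt{\alpha_i!}}=\sum_{\alpha\in\N^n}\frac{w^\alpha}{\sqrt{\alpha!}}\,h_\alpha(x)\,s^{|\alpha|}.
\]

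Next I equate the coefficients of $s^k$ on the two sides. This gives
\[
\frac{h_k(\langle w,x\rangle)}{\sqrt{k!}}=\sum_{|\alpha|=k}\frac{w^\alpha}{\sqrt{\alpha!}}h_\alpha(x).
\]
Multiplying by $\sqrt{k!}$ yields exactly \eqref{eq:ridge-identity}.

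The only point that needs care is justifying the coefficient comparison, which is the mild main obstacle. I would handle it by treating both sides, for fixed $x$, as entire functions of $s\in\mathbb C$. The one-dimensional series converges absolutely, since $|\operatorname{He}_k(t)|$ grows at most like $C(t)\sqrt{k!}\,c^k$, or more simply because the series is the Taylor series of an entire function. A finite product of absolutely convergent power series may be rearranged into a single power series, and the Taylor coefficients of an entire function are unique. Alternatively, I could avoid analysis altogether by working in the ring of formal power series in $s$ with polynomial coefficients in $x$. There the identity $\exp(a+b)=\exp(a)\exp(b)$ holds formally, and the comparison is immediate. I would present the formal-power-series version for brevity.
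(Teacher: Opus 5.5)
Your proof is correct, but it takes a different route from the paper's. The paper argues directly from the Rodrigues formula. It decomposes $x = zw + x_\perp$ with $z=\langle w,x\rangle$ and uses $\|x\|^2 = z^2+\|x_\perp\|^2$ to rewrite $h_k(z)$ as $\frac{(-1)^k}{\sqrt{k!}}e^{\|x\|^2/2}\frac{d^k}{dz^k}e^{-\|x\|^2/2}$. It then identifies $\frac{d}{dz}$ with the directional derivative $\sum_i w_i\partial_{x_i}$ and expands its $k$-th power by the multinomial theorem, so each term factors into one-dimensional Rodrigues formulas.

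You instead pass through the generating function $e^{st-s^2/2}=\sum_k h_k(t)s^k/\sqrt{k!}$ and use $\|w\|=1$ to split the Gaussian factor $e^{-s^2/2}$ across coordinates. The paper's argument is a finite computation with commuting differential operators, so it needs no convergence or formal-power-series bookkeeping. Yours handles all degrees $k$ at once and makes the role of the unit-norm hypothesis especially transparent: it is exactly what lets $e^{-s^2/2}$ factor. Its only cost is the coefficient-comparison justification, which you already supply.

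The formal-power-series version is fine, provided you note two things. First, the one-variable generating identity holds in $\mathbb{R}[t][[s]]$, because its coefficients are polynomials in $t$ that agree with the analytic Taylor coefficients for every real $t$. Second, the substitutions $t\mapsto\langle w,x\rangle$ and $(s,t)\mapsto(sw_i,x_i)$ are ring homomorphisms that commute with the formal exponential.
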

\begin{proof}
    Denote $z := \inner{w}{x}$, then we can write 
    \[
        x = zw + x_{\perp} \quad \text{such that } \inner{w}{x_{\perp}} = 0 
    \]
    Since $\norm{w} = 1$, we have $\norm{x}^2 = z^2 + \norm{x_{\perp}}^2$ and
    \begin{equation}
        e^{\norm{x}^2/2}  \frac{\mathrm d^k}{\mathrm dz^k}e^{-\norm{x}^2/2} = e^{\norm{x}^2/2} e^{-\norm{x_{\perp}}^2/2} \frac{\mathrm d^k}{\mathrm dz^k} e^{-z^2/2} = e^{z^2/2} \frac{\mathrm d^k}{\mathrm dz^k} e^{-z^2/2}. 
    \end{equation}
    Thus, by definition, we have 
    \begin{equation}
        h_k(z) = \frac{(-1)^k}{\sqrt{k!}} e^{\norm{x}^2/2}  \frac{\mathrm d^k}{\mathrm dz^k}e^{-\norm{x}^2/2}.\label{temp1}
    \end{equation}
    On the other hand, the operator
    \begin{align}
        \frac{\mathrm d^k}{\mathrm dz^k} &= \left( \sum_{i=1}^n w_i \cdot \frac{\mathrm d}{\mathrm dx_i} \right)^k = \sum_{|\alpha| = k} \frac{k!}{\alpha!} w^\alpha \prod_{i=1}^n \frac{\mathrm d^{\alpha_i}}{\mathrm d x_i^{\alpha_i}} \label{temp2}
    \end{align}
    Substituting~\Cref{temp2} into~\Cref{temp1}, we obtain
    \begin{align}
         h_k(z) &= \frac{(-1)^k}{\sqrt{k!}} e^{\sum_i x^2_i/2} \sum_{|\alpha| = k} \frac{k!}{\alpha!} w^\alpha \prod_{i=1}^n \frac{\mathrm d^{\alpha_i}}{\mathrm d x_i^{\alpha_i}} e^{\sum_i -x^2_i/2} \\
         &= \sum_{|\alpha| = k} \sqrt{\frac{k!}{\alpha!}} w^\alpha \prod_{i=1}^n \frac{(-1)^{\alpha_i}}{\sqrt{\alpha_i!}} e^{x^2_i/2} \frac{\mathrm d^{\alpha_i}}{\mathrm d x_i^{\alpha_i}} e^{-x^2_i/2}\\
         &=\sum_{|\alpha| = k} \sqrt{\frac{k!}{\alpha!}} w^\alpha h_{\alpha}(x).
    \end{align}
\end{proof}

\begin{lemma}[Hermite mass of sign function]
\label{lem:sign-band}
Consider the function $g(z):=\sgn(z)$ with Hermite decomposition $g(z)=\sum_r a_r h_r(z)$. For every $r\geq0$, we have
\begin{enumerate}[label=(\roman*)]
    \item $a_{2r}=0$.
    \item $|a_{2r+1}|^2=\frac{2}{\pi}\frac{\binom{2r}{r}}{4^r(2r+1)}=\Theta((r+1)^{-3/2})$.
\end{enumerate}
\end{lemma}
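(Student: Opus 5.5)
Part (i) follows from parity. The function $g=\sgn$ is odd: it equals $\pm1$ on $t\neq0$, and the single point $t=0$ has Gaussian measure zero. Each $h_{2r}$ is even, since $\operatorname{He}_{2r}$ contains only even powers. Therefore $g\,h_{2r}$ is odd almost everywhere, and $a_{2r}=\E_{z\sim\gamma_1}[g(z)h_{2r}(z)]=0$.

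For part (ii), we compute $a_{2r+1}$ directly from the Rodrigues formula. Write $\varphi(t)=(2\pi)^{-1/2}e^{-t^2/2}$, so that $\operatorname{He}_k(t)\varphi(t)=(-1)^k\varphi^{(k)}(t)$. By oddness,
\[
a_{k}=\frac{2}{\sqrt{k!}}\int_0^\infty \operatorname{He}_k(t)\varphi(t)\,dt=\frac{2(-1)^k}{\sqrt{k!}}\int_0^\infty \varphi^{(k)}(t)\,dt .
\]
For $k\geq1$ the integral equals $-\varphi^{(k-1)}(0)=-(-1)^{k-1}\operatorname{He}_{k-1}(0)\varphi(0)$, because $\varphi^{(k-1)}$ vanishes at infinity. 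Setting $k=2r+1$ gives
\[
a_{2r+1}=\frac{2}{\sqrt{2\pi}}\frac{\operatorname{He}_{2r}(0)}{\sqrt{(2r+1)!}}.
\]
The remaining input is the standard value $\operatorname{He}_{2r}(0)=(-1)^r(2r-1)!!=(-1)^r\frac{(2r)!}{2^r r!}$. I would derive it from the recurrence $\operatorname{He}_{k+1}(t)=t\operatorname{He}_k(t)-k\operatorname{He}_{k-1}(t)$ evaluated at $t=0$, or from the generating function $e^{tx-x^2/2}$. Substituting,
\[
|a_{2r+1}|^2=\frac{2}{\pi}\cdot\frac{((2r)!)^2}{4^r (r!)^2 (2r+1)!}=\frac{2}{\pi}\frac{\binom{2r}{r}}{4^r(2r+1)},
\]
using $(2r+1)!=(2r+1)(2r)!$.

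The asymptotic $\Theta((r+1)^{-3/2})$ then comes from the central binomial bounds $\frac{4^r}{2\sqrt{r}}\leq\binom{2r}{r}\leq\frac{4^r}{\sqrt{\pi r}}$ for $r\geq1$, or from Stirling's formula, with $r=0$ checked separately (there the value is $2/\pi$). These bounds give $\binom{2r}{r}/4^r=\Theta((r+1)^{-1/2})$, and dividing by $2r+1$ gives the claim.

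The main point needing care is the boundary-term computation. One must track signs correctly and justify that $\varphi^{(k-1)}(t)\to0$ as $t\to\infty$, which holds because it is a polynomial times a Gaussian. Since only $|a_{2r+1}|^2$ is needed, sign errors are harmless to the final formula.
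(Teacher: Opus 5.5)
Your proof is correct. Your value $a_{2r+1}=\frac{2}{\sqrt{2\pi}}\frac{\operatorname{He}_{2r}(0)}{\sqrt{(2r+1)!}}$ agrees with the paper's $a_{2r+1}=(-1)^r\sqrt{2/\pi}\,\frac{\sqrt{(2r+1)!}}{(2r+1)2^r r!}$ once you substitute $\operatorname{He}_{2r}(0)=(-1)^r(2r)!/(2^r r!)$.

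The route differs from the paper's. The paper computes no integral at all. It imports the Hermite expansion of $\mathbf{1}_{\{z>0\}}$ from the cited result of Davis (Theorem 2.11) and writes $\sgn=2\mathbf{1}_{\{z>0\}}-1$ almost everywhere. It then reads off both parts from uniqueness of the expansion: for (i), the constant $\tfrac12$ cancels and only odd $\operatorname{He}_{2r+1}$ appear; for (ii), it reads off the signed odd coefficients. The closing algebra and the Stirling step are the same as yours.

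Your argument replaces the citation with a self-contained derivation: parity for (i), and Rodrigues' formula plus the fundamental theorem of calculus for (ii). The only extra input is $\operatorname{He}_{2r}(0)$, which the three-term recurrence gives immediately.

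A side benefit of your boundary-term computation is that it is not tied to the origin. Drop the parity step and write $\sgn(t-\theta)=2\mathbf{1}_{\{t\geq\theta\}}-1$. The same computation then gives $\E[\sgn(z-\theta)h_k(z)]=\frac{2}{\sqrt{k!}}\operatorname{He}_{k-1}(\theta)\varphi(\theta)$ for $k\geq1$. These are exactly the one-dimensional coefficients the ridge identity would need for $f_{w,\theta}$ with $\theta\neq0$, the case the paper leaves open.
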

\begin{proof}
By~\cite[Theorem~2.11]{davis2024hermite}, the positive-half-line indicator has the Hermite expansion
\begin{equation*}
    \mathbf{1}_{\{z>0\}}
    =
    \frac12
    +
    \frac{1}{\sqrt{2\pi}}
    \sum_{r=0}^{\infty}
    \frac{(-1)^r}{(2r+1)2^r r!}
    \operatorname{He}_{2r+1}(z)
\end{equation*}
Since $g(z)=2\mathbf{1}_{\{z>0\}}-1$ almost everywhere, uniqueness of the Hermite expansion gives, for every $r\geq0$,
\begin{equation*}
    a_{2r}=0,
    \qquad
    a_{2r+1}
    =
    (-1)^r\sqrt{\frac{2}{\pi}}\,
    \frac{\sqrt{(2r+1)!}}{(2r+1)2^r r!}.
\end{equation*}
Consequently,
\begin{equation*}
    |a_{2r+1}|^2
    =
    \frac{2}{\pi}
    \frac{(2r+1)!}{(2r+1)^2 4^r(r!)^2}
    =
    \frac{2}{\pi}
    \frac{\binom{2r}{r}}{4^r(2r+1)}.
\end{equation*}
Finally, Stirling's formula yields $\binom{2r}{r}=\Theta(4^r/\sqrt r)$ for $r\geq1$. Thus $|a_{2r+1}|^2=\Theta((r+1)^{-3/2})$ for all $r\geq0$.
\end{proof}
\begin{remark} Under our convention $\sgn(0)=1$. On the other hand, every odd
Hermite polynomial satisfies $h_{2r+1}(0)=0$, and therefore the Hermite series in \Cref{lem:sign-band} evaluates to $0$ at
$z=0$. Thus, the Hermite expansion of $\sgn$ should not be interpreted as a pointwise identity at the origin, but rather under the
continuous Gaussian measure, where $\gamma_1(\{0\})=0$. When we later pass to a finite grid, however, the decision boundary may carry positive discrete mass. This contribution is handled separately in
\Cref{prop:halfspace-boundary-error}.
\end{remark}

\section{Learning LTFs with a Membership Oracle}\label{sec:quantum-membership}

In this section, we present a quantum algorithm for learning a general linear threshold function $f_{w,\theta}(x)=\sgn(\inner{w}{x}-\theta)$ from real-domain membership queries. Given target accuracy $\eps>0$, our goal is to yield an $O(\eps)$-accurate hypothesis under the standard Gaussian distribution. By~\Cref{lem:gaussian-disagreement}, it suffices to recover a unit vector $\widetilde w$ and threshold $\widetilde\theta$ such that $\norm{\widetilde w-w}=O(\eps)$ and
$|{\widetilde\theta-\theta}|=O(\eps)$. Classically, even the homogeneous case requires $\Omega(n\log(1/\eps))$ membership queries in the worst case~\cite{kulkarni1993active},
while the known upper bound is $\widetilde O(n\log(1/\eps))$~\cite{hopkins2020point}. We show that coherent quantum membership access reduces the query complexity exponentially,
to $O(\log(n/\eps))$, while using $\widetilde O(n)$ additional gates.

\subsection{Quantum algorithm}
\label{subsec:quantum-membership-algorithm}

We start with the high-level overview of the algorithm.  The key insight is a geometric connection between learning linear threshold functions and convex optimization. Apart from the convention at the boundary, we interpret a membership query to $f_{w, \theta}$ as a membership query to the convex halfspace
\begin{equation}
    K_{w,\theta}
    :=
    \{x\in\R^n:\langle w,x\rangle\leq\theta\}.
\end{equation}
Its boundary is the affine hyperplane $\partial K_{w,\theta}=\{x\in\R^n:\langle w,x\rangle=\theta\}$. Thus, learning $f_{w, \theta}$ amounts to recovering both the normal direction $w$ and the offset $\theta$ of this affine hyperplane.

This viewpoint is reminiscent of membership-to-separation reductions in convex optimization, where one introduces a \emph{height function} describing the location of the boundary along a chosen direction and uses its gradient to recover a supporting normal ~\cite{lee2017efficientconvexoptimizationmembership, van_Apeldoorn_2020,chakrabarti2018QuantumConvexOpt}. For halfspaces, the situation is particularly simple: the height function is exactly affine, and therefore its gradient is constant over the entire domain and directly determines the unknown normal vector~$w$.

More precisely, fix a reference direction $p\in\R^n$ satisfying $\langle w,p\rangle\neq0$. We do not assume that the orientation of $p$ relative to $w$ is known. For every $z\in\R^n$, the affine line $ z+\R p:=\{z+tp:t\in\R\}$ intersects $\partial K_{w,\theta}$ at a unique point. We define the
\emph{height function} $h_p(z)$ to be the unique scalar satisfying
\begin{equation}
    z+h_p(z)p\in\partial K_{w,\theta}.
    \label{eq:def-boundary-height}
\end{equation}
Equivalently,
\begin{equation}
    \langle w,z\rangle
    +
    h_p(z)\langle w,p\rangle
    =
    \theta.
\end{equation}
Solving for the height gives
\begin{equation}
    h_p(z)
    =
    h_p(0)
    -
    \frac{\langle w,z\rangle}{\langle w,p\rangle}, \quad \mbox{for}\quad  h_p(0):=\frac{\theta}{\langle w,p\rangle}.
    \label{eq:height-linear-form}
\end{equation}
Thus, $h_p$ is an exactly \emph{affine} function of $z$. Its gradient is
\begin{equation}
    \nabla h_p(z)
    =
    -\frac{w}{\langle w,p\rangle},
    \qquad
    \text{for every }z\in\R^n.
    \label{eq:height-gradient}
\end{equation}
The threshold $\theta$ therefore affects only the additive offset, while the gradient is independent of both $z$ and $\theta$
and remains collinear with the unknown normal vector $w$. In particular,
\begin{equation}
    \frac{\nabla h_p}{\|\nabla h_p\|}
    =
    -\sgn(\langle w,p\rangle)\,w.
    \label{eq:height-gradient-direction}
\end{equation}
Hence, estimating the gradient determines the direction of the separating hyperplane, while estimating the scalar $h_p(0)$ allows us to recover its offset~$\theta$.

Our algorithm is naturally divided into two parts: first, estimating the offset of the separating hyperplane, and then recovering its normal vector. For the first part, conditioned on a good choice of $p$ and on the nontrivial regime where $\theta$ is bounded, the intersection of the line $\R p$ with the separating hyperplane occurs at the single point $h_p(0)p$. A membership query at a point $rp$ therefore determines on which side of this intersection the query point lies, up to the fixed orientation $\sgn(\langle w,p\rangle)$. Thus using binary search, we could estimate $h_p(0)$ to the desired precision using only logarithmically many membership queries.

For the second part, once the location and orientation of the boundary are known, membership queries provide a comparison oracle for the full
height function. Indeed,
\begin{equation}
    \langle w,z+tp\rangle-\theta
    =
    \langle w,p\rangle
    \bigl(t-h_p(z)\bigr).
    \label{eq:height-factorization}
\end{equation}
Hence, after fixing the orientation of $p$, a membership query at $z+tp$ determines whether the candidate value $t$ lies below or above $h_p(z)$. We implement this comparison coherently to approximate $h_p(z)$ over a superposition of grid points, and encode the resulting height values as phases. We then apply Jordan's quantum gradient-estimation algorithm~\cite{Jordan_2005}, which recovers all coordinates of $\nabla h_p$ simultaneously. Since $h_p$ is exactly affine, its gradient is constant, and there is no Taylor-remainder or local-linearization error. Moreover, the additive offset $h_p(0)$ contributes only a global phase in Jordan's procedure, so the gradient estimation is governed entirely by the linear part of $h_p$. The only approximation errors therefore arise from the finite precision of the coherent binary search and the finite Fourier resolution. Finally, normalizing the estimated gradient and using the orientation bit recovers $w$, while the previously estimated value $h_p(0)$ determines $\theta$. Thus, the two stages together recover the general LTF using only logarithmically many membership queries.

\subsubsection{Estimating offset}
In this section, we describe how to choose a good direction $p$ and how to estimate the offset $h_p(0)$ cheaply using binary search.

We first show that one can choose a finite-precision reference direction $p$ such that $\abs{\inner{w}{p}}$ is bounded above and below by universal constants. In particular, we construct $p$ so that each coordinate can be stored using only $O(\log n)$ bits. This guarantees that the associated height function is well-conditioned and can be implemented using reversible fixed-point arithmetic.
\begin{lemma}[Finite-precision reference direction]
\label{lem:finite-reference}
There is a randomized classical preprocessing procedure that outputs a reference vector $p\in\mathbb{R}^n$, represented using $O(\log n)$ bits
per coordinate, such that for every fixed $w\in\sphere^{n-1}$,
\begin{equation}
    \Pr\left[
        \frac1{10}
        \leq
        |\langle w,p\rangle|
        \leq
        3
    \right]
    \geq
    \frac45.
    \label{eq:good-reference-event}
\end{equation}
\end{lemma}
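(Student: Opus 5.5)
I would take $p$ to be a random sign vector, scaled so that it is a unit vector: $p=\frac{1}{\sqrt n}(s_1,\ldots,s_n)$ with $s_i\in\{-1,1\}$ independent and uniform. Each coordinate is $\pm 1/\sqrt n$, and $1/\sqrt n$ can be rounded to $O(\log n)$ bits with error at most $1/n$ per coordinate. The rounding perturbs $\inner{w}{p}$ by at most $\|w\|_1/n\le 1/\sqrt n$, which is negligible for large $n$. For small $n$ I would simply use more constant slack, since then $O(\log n)$ bits is still a constant number of bits. An alternative is a discretized Gaussian vector $g/\sqrt n$, but the Rademacher choice makes the finite-precision claim trivial, so I commit to it.

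\textbf{Upper bound.} Write $Z:=\inner{w}{p}=\frac1{\sqrt n}\sum_i w_i s_i$. Then $\E[Z^2]=\frac1n\sum_i w_i^2=1/n$. Hmm — that is too small: a unit vector $p$ gives $|Z|\approx 1/\sqrt n$, not a constant. I would therefore drop the normalization and take $p=(s_1,\ldots,s_n)$ with plain signs, which also makes the representation exact with $O(1)$ bits per coordinate. Now $Z=\sum_i w_i s_i$ satisfies $\E[Z^2]=\|w\|^2=1$. Chebyshev/Markov gives $\Pr[|Z|>3]\le 1/9$.

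\textbf{Lower bound.} I need $\Pr[|Z|<1/10]$ to be small, uniformly over all unit $w$. This is the main obstacle: for Rademacher sums, anticoncentration can fail. For example, $w=(1,1,0,\ldots)/\sqrt2$ gives $Z=0$ with probability $1/2$. So signs do not suffice. I would instead use Gaussian entries $p_i\sim\mathcal N(0,1)$, for which $Z\sim\mathcal N(0,1)$ exactly by rotational invariance. Then
\[
\Pr[|Z|<1/10]\le \frac{2\cdot (1/10)}{\sqrt{2\pi}}<0.08,
\qquad
\Pr[|Z|>3]<0.003.
\]

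\textbf{Finite precision.} To get $O(\log n)$ bits per coordinate, I would first truncate each $p_i$ to $[-T,T]$ with $T=O(\sqrt{\log n})$. A union bound gives failure probability at most $2n e^{-T^2/2}\le 0.01$. I would then round each coordinate to a multiple of $\delta=1/(100\sqrt n)$, which costs $O(\log n)$ bits. The rounded vector $\tilde p$ satisfies $|\inner{w}{\tilde p}-Z|\le \delta\|w\|_1\le\delta\sqrt n=1/100$. The good event therefore becomes $|Z|\in[0.11,2.99]$, whose failure probability is at most about $0.09+0.003+0.01<1/5$. Sampling the Gaussians to the needed precision can be done classically, for instance by Box--Muller with $O(\log n)$-bit uniforms, and its discretization error is absorbed into the same $1/100$ slack.
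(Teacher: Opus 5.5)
Your final argument is correct and is essentially the paper's proof; you were right to discard the Rademacher attempts because of the anticoncentration failure. Like the paper, you sample $p^\star\sim\mathcal N(0,I_n)$ so that $\langle w,p^\star\rangle\sim\mathcal N(0,1)$, truncate coordinates at $\sqrt{2\log(200n)}$ with a union bound costing $1/100$, and round to a grid of spacing $1/(100\sqrt n)$ so that the inner product moves by at most a small constant. The differences are cosmetic: you bound the rounding error via $\|w\|_1\le\sqrt n$ rather than via $\|p-p^\star\|_2\le\sqrt n\,\Delta/2$, and you additionally sketch how to generate the Gaussians from finite-precision uniforms, which the paper simply idealizes.
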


\begin{proof}
Let $p^\star\sim\mathcal N(0,I_n)$. For every fixed $w\in\sphere^{n-1}$, we have $\langle w,p^\star\rangle \sim\mathcal N(0,1)$. Hence,
\begin{equation}
    \Pr\left[
        \frac18
        \leq \abs{\inner{w}{p^\star}}
        \leq \frac{29}{10}
    \right]
    >
    0.89.
    \label{eq:ideal-overlap-event}
\end{equation}

We truncate each coordinate of $p^{\star}$ within $[-R, R]$ where $R:=\sqrt{2\log(200n)}$. As $p^{\star}_i \sim \calN(0,1)$, union bound yields
\begin{align}
    \Pr\left[
        \max_{i\in[n]}|p_i^\star|>R
    \right]
    &\leq \sum_{i\in [n]} \Pr[|p^\star_i| > R] \leq
    2n e^{-R^2/2}
    \leq
    \frac1{100}.
    \label{eq:gaussian-truncation}
\end{align}
Thus, except with probability at most $1/100$, truncating each coordinate to the interval $[-R,R]$ does not modify $p^\star$ at all.

We now choose the number of bits $b_0$ used to represent $p^{\star}_i$. Define $\Delta:=2^{-b_0}$, round each truncated coordinate to the nearest multiple of $\Delta$, and denote the resulting vector by $p$. On the event that no truncation occurs, we have
\begin{align}
\norm{p-p^\star}&\leq\frac{\sqrt n\,\Delta}{2}.
\end{align}
Choosing $\Delta \leq \frac{1}{100\sqrt{n}}$, we have the error $\norm{p-p^\star} \leq 1/200$. Hence, because $\norm{w}=1$,
\begin{equation}
    \left|
        \langle w,p\rangle
        -
        \langle w,p^\star\rangle
    \right|
    \leq
    \norm{p-p^\star}
    \leq
    \frac1{200}
    \label{eq:reference-projection-error}
\end{equation}
Consequently, whenever~\Cref{eq:ideal-overlap-event} holds and no truncation occurs, $|\langle w,p\rangle|\geq\frac18-\frac1{200}>\frac1{10}$, and 
$|\langle w,p\rangle|\leq\frac{29}{10}+\frac1{200}<3$. Hence,
\begin{align}
    \Pr\left[
        \frac1{10}
        \leq|\langle w,p\rangle|
        \leq3
    \right]
    &\geq
    0.89-\frac1{100}>
    \frac45.
\end{align}
Finally, since $R=O(\sqrt{\log n})$ and $\Delta^{-1}=O(\sqrt n)$, a coordinate of $p$ requires only $O\!\left(\log\frac{R}{\Delta}\right)=O(\log n + \log \log(n)) = O(\log n)$ bits to represent.
\end{proof}

Now, we are ready to handle the offset estimation. We begin with a simple observation: if the threshold $\theta$ is sufficiently far from the origin, then the corresponding halfspace is already close to a constant function under the standard Gaussian distribution. In this regime, there is therefore no need to recover either the normal vector $w$ or the threshold $\theta$ accurately. Moreover, the appropriate constant label can be determined using a single membership query at the origin. We formalize this observation in the following lemma.
\begin{lemma}
\label{lem:large-threshold-constant}
Let $f_{w,\theta}(x):=\sgn(\langle w,x\rangle-\theta)$,
where $w\in\sphere^{n-1}$ and $\theta\in\R$, and let $x\sim\gamma_n$.
For every $0<\eps<1$, if $|\theta|
    \geq
    \sqrt{2\log\frac1\eps}$, then a single membership query suffices to output a constant function $c:\R^n\to\{-1,+1\}$ satisfying
\begin{equation}
    \Pr_{x\sim\gamma_n}
    \left[
        f_{w,\theta}(x)\neq c(x)
    \right]
    \leq
    \eps.
\end{equation}
\end{lemma}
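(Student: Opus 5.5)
The plan is to query $f_{w,\theta}$ once at the origin and output the constant function $c\equiv f_{w,\theta}(0)$. Since $f_{w,\theta}(0)=\sgn(-\theta)$, this single query reveals the sign of $\theta$ whenever $\theta\neq0$. Under our convention $\sgn(0)=1$, if $\theta>0$ we get $c\equiv-1$, and if $\theta<0$ we get $c\equiv+1$. The hypothesis $|\theta|\geq\sqrt{2\log(1/\eps)}>0$ (valid since $\eps<1$) rules out $\theta=0$. So the query returns the label that $f_{w,\theta}$ takes on the side of the hyperplane containing the origin, which is the majority side.

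Next we bound the disagreement. As in the proof of \Cref{lem:gaussian-disagreement}, for $x\sim\gamma_n$ and unit $w$ the scalar $Z:=\langle w,x\rangle$ is distributed as $\mathcal N(0,1)$. In the case $\theta>0$, $f_{w,\theta}(x)\neq c(x)=-1$ exactly when $Z\geq\theta$. In the case $\theta<0$, $f_{w,\theta}(x)\neq c(x)=+1$ exactly when $Z<\theta$. In both cases the disagreement probability is at most $\Pr[|Z|\geq|\theta|]$ restricted to one tail, namely $\Pr[Z\geq|\theta|]$, by symmetry of $Z$.

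Finally we apply a Gaussian tail bound. The standard one-sided Chernoff bound gives $\Pr[Z\geq t]\leq e^{-t^2/2}$ for $t\geq0$; this follows from $\E[e^{\lambda Z}]=e^{\lambda^2/2}$ with $\lambda=t$. Substituting $t=|\theta|\geq\sqrt{2\log(1/\eps)}$ gives $e^{-t^2/2}\leq\eps$, which is the claim.

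There is no real obstacle in this argument. The only points needing care are the boundary convention at $\theta=0$, which is excluded by the hypothesis, and using the one-sided tail bound rather than the two-sided bound $2e^{-t^2/2}$, which would lose a factor of $2$.
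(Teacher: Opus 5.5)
Your proposal is correct and follows essentially the same route as the paper: you query the origin, output the constant $f_{w,\theta}(0)=-\sgn(\theta)$, and reduce the disagreement to a one-sided tail of $Z=\langle w,x\rangle\sim\mathcal N(0,1)$, which is at most $e^{-\theta^2/2}\leq\eps$. The only difference is that you derive the tail bound $\Pr[Z\geq t]\leq e^{-t^2/2}$ from the moment generating function, whereas the paper cites it as a standard Gaussian tail bound.
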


\begin{proof}
Since $\|w\|=1$ and $x\sim\gamma_n$, the random variable
$Z:=\langle w,x\rangle$ is distributed as $\calN(0,1)$.
Because the assumed lower bound on $|\theta|$ implies $\theta\neq0$,
a membership query at the origin returns
\[
    f_{w,\theta}(0)
    =
    \sgn(-\theta)
    =
    -\sgn(\theta).
\]
We therefore output the constant function
$c(x)\equiv f_{w,\theta}(0)$. Its disagreement probability is
\begin{equation}
    \Pr[f_{w,\theta}(x)\neq c(x)]
    =
    \begin{cases}
        \Pr[Z\geq|\theta|], & \theta>0,\\
        \Pr[Z<-|\theta|], & \theta<0,
    \end{cases}
    \leq
    e^{-\theta^2/2},
\end{equation}
where the last inequality follows from the standard Gaussian tail bound.
If $|\theta|\geq\sqrt{2\log(1/\eps)}$, then
$e^{-\theta^2/2}\leq\eps$, which proves the claim.
\end{proof}
Conditioned on a good choice of reference direction $p$, either the target is close to a constant function by~\Cref{lem:large-threshold-constant}, or the boundary location $h_p(0)$ is contained in a known bounded interval. Since the restriction of $f_{w,\theta}$ to the line $\R p$ is an exact one-dimensional threshold function, we can estimate $h_p(0)$ efficiently by binary search.

\begin{theorem}[Estimating the boundary offset]
\label{thm:estimate-boundary-offset}
Condition on the event $\frac1{10}\leq|\langle w,p\rangle|\leq3$ and $|\theta| < \sqrt{2\log\frac1\eps}$. Let $T_\eps:=\sqrt{2\log(1/\eps)}$ and $L_\eps:=2^{\lceil\log_2(20T_\eps)\rceil}$. Then, for every accuracy parameter $\zeta>0$, there is an algorithm that, using $O(\log(1+L_\eps/\zeta))$ membership queries, outputs
\[
    s=\sgn(\langle w,p\rangle)
\]
and an estimate $\widetilde h_p(0)$ satisfying
\begin{equation}
    \bigl|
        \widetilde h_p(0)-h_p(0)
    \bigr|
    \leq
    \zeta.
\end{equation}
\end{theorem}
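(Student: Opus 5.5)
The plan is to restrict $f_{w,\theta}$ to the line $\R p$ and run an ordinary (classical) binary search on the one-dimensional threshold function
\[
g(t):=f_{w,\theta}(tp)=\sgn\bigl(t\langle w,p\rangle-\theta\bigr)=\sgn\bigl(\langle w,p\rangle\,(t-h_p(0))\bigr),
\]
which follows from \eqref{eq:height-factorization} with $z=0$. The first step is to bound where the breakpoint can lie. Under the conditioning, $|h_p(0)|=|\theta|/|\langle w,p\rangle|<10\,T_\eps\leq L_\eps/2$. Hence $h_p(0)\in(-L_\eps/2,L_\eps/2)$.

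The second step determines the orientation with two queries, at $t=\pm L_\eps/2$. Because the breakpoint lies strictly inside $(-L_\eps/2,L_\eps/2)$, we get
\[
g(L_\eps/2)=\sgn(\langle w,p\rangle)=s \quad\text{and}\quad g(-L_\eps/2)=-s,
\]
since $\langle w,p\rangle\neq0$ and both endpoints are strictly on opposite sides. We output $s:=g(L_\eps/2)$. The case $s=0$ is impossible because $\sgn$ takes values in $\{\pm1\}$ and $|\langle w,p\rangle|\geq 1/10$. After multiplying answers by $s$, the function $s\,g(t)$ is $-1$ for $t<h_p(0)$ and $+1$ for $t>h_p(0)$. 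At $t=h_p(0)$ it takes the value $s\cdot\sgn(0)=s$, but this single point does not affect the search.

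The third step is bisection. We maintain an interval $[a,b]$, initially $[-L_\eps/2,L_\eps/2]$, with the invariant $s\,g(a)=-1$ and $s\,g(b)=+1$. This invariant forces $h_p(0)\in[a,b]$: if $h_p(0)<a$ then $s\,g(a)=+1$, and if $h_p(0)>b$ then $s\,g(b)=-1$. At each round we query the midpoint $m$ and replace $a$ or $b$ by $m$ according to the sign of $s\,g(m)$. After $\lceil\log_2(L_\eps/\zeta)\rceil$ rounds the interval has length at most $\zeta$, and we output its midpoint (or either endpoint) as $\widetilde h_p(0)$. The total number of queries is $2+\lceil\log_2(L_\eps/\zeta)\rceil=O(\log(1+L_\eps/\zeta))$.

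Choosing $L_\eps$ to be a power of two makes all midpoints dyadic rationals $L_\eps j/2^\ell$. The query points $tp$ therefore have finite-precision coordinates, since $p$ has $O(\log n)$ bits per coordinate by \Cref{lem:finite-reference}. This matters only for implementability, not for correctness.

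The only point needing care is the boundary convention $\sgn(0)=1$. A query can land exactly on $h_p(0)$ and return $s$. The invariant argument above shows this is harmless, because any returned value keeps $h_p(0)$ in the closed updated interval. I expect no other obstacle; the rest is the standard bisection count.
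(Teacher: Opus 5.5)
Your proposal is correct and follows essentially the paper's argument: you read off $s=\sgn(\langle w,p\rangle)$ from a query far out on the line $\R p$, then bisect using $s\,f_{w,\theta}(tp)$ as a comparison oracle for $h_p(0)$, and you handle the convention $\sgn(0)=1$ by keeping the queried midpoint as an endpoint of the retained closed interval. The differences are cosmetic: you use the conditioning $|\theta|<T_\eps$ to start from $[-L_\eps/2,L_\eps/2]$, so the paper's ``endpoint labels agree'' branch (which is vacuous under this conditioning) and its $h_p(0)=L_\eps$ edge case never arise, and your invariant $s\,g(a)=-1$, $s\,g(b)=+1$ is a clean way to certify $h_p(0)\in[a,b]$ (just take $\max\{0,\cdot\}$ in the round count when $\zeta\geq L_\eps$).
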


\begin{proof}
Recall that $h_p(0)=\frac{\theta}{\langle w,p\rangle}$. By definition, $20T_\eps\leq L_\eps<40T_\eps$. We first query $f_{w,\theta}(-L_\eps p)$ and $f_{w,\theta}(L_\eps p)$. If the two labels agree, then $|h_p(0)|\geq L_\eps$, and hence
\[
|\theta|=|h_p(0)|\,|\langle w,p\rangle|\geq 2T_\eps.
\]
In this case, we output the constant label $f_{w,\theta}(0)$, whose error is at most $\eps$ by~\Cref{lem:large-threshold-constant}. We therefore assume below that the two endpoint labels differ, which implies $h_p(0)\in[-L_\eps,L_\eps]$.

Restricting the target function to the line $\R p$, we have
\begin{align}
    f_{w,\theta}(rp)
    =\sgn\!\left(r\langle w,p\rangle-\theta\right)
    =\sgn\!\left(\langle w,p\rangle\bigl(r-h_p(0)\bigr)\right).
    \label{eq:offset-line-threshold}
\end{align}
If $h_p(0)<L_\eps$, then~\Cref{eq:offset-line-threshold} gives $f_{w,\theta}(L_\eps p)=\sgn(\langle w,p\rangle)$. If $h_p(0)=L_\eps$, the fact that the two endpoint labels differ forces $\sgn(\langle w,p\rangle)=+1$, and the same identity still holds because $f_{w,\theta}(L_\eps p)=+1$ by our boundary convention. Hence, the membership query at $L_\eps p$ determines $\sgn(\langle w,p\rangle)$.

Once $s$ is known, every subsequent membership query on the line $\R p$ becomes a comparison query for the unknown value $h_p(0)$.
Indeed, for any $r\neq h_p(0)$,
\begin{equation}
    s\,f_{w,\theta}(rp)
    =
    \sgn(r-h_p(0)).
\end{equation}
Equivalently, define
\begin{equation}
    c(r)
    :=
    \frac{
        1-sf_{w,\theta}(rp)
    }{2}.
\end{equation}
Then
\begin{equation}
    c(r)
    =
    \begin{cases}
        1, & r<h_p(0),\\
        0, & r>h_p(0).
    \end{cases}
    \label{eq:offset-comparison}
\end{equation}
Thus a single membership query at $rp$ determines which side of the candidate point $r$ contains the true value of $h_p(0)$.

We now perform standard binary search over the initial interval $I_0=[-L_\eps,L_\eps]$. Suppose after $j-1$ rounds we have a closed interval $I_{j-1}=[\ell_{j-1},u_{j-1}]$ containing $h_p(0)$. Let $r_j:=(\ell_{j-1}+u_{j-1})/{2}$ be its midpoint. We query $f_{w,\theta}(r_jp)$ and compute $c(r_j)$.
If $c(r_j)=1$, then $r_j<h_p(0)$ and we retain the right half, $I_j=[r_j,u_{j-1}]$, while if $c(r_j)=0$, then $r_j>h_p(0)$ and we retain the left half,
$I_j=[\ell_{j-1},r_j]$. Under the convention $\sgn(0)=+1$, the update above keeps $r_j$ as one of the endpoints of the next closed interval, regardless
of the value of $s$. Hence the invariant $h_p(0)\in I_j$ holds after every round.

After $q$ rounds, the interval length is $|I_q|={2L_\eps}/{2^q}$. Let $\widetilde h_p(0)$ be the midpoint of $I_q$. Then
\begin{equation}
    \bigl|
        \widetilde h_p(0)-h_p(0)
    \bigr|
    \leq
    \frac{L_\eps}{2^q}.
\end{equation}
Therefore it suffices to choose $q=
    \max\left\{
        0,
        \left\lceil
            \log_2\frac{L_\eps}{\zeta}
        \right\rceil
    \right\}$. Since $L_\eps=\Theta(T_\eps)$, the total number of membership queries is
\begin{equation}
    O\!\left(
        \log\p{1+\frac{L_\eps}{\zeta}}
    \right).
\end{equation}

\end{proof}

\subsubsection{Implementing a height-function query}
In this section, we give a coherent implementation of $h_p$ using membership queries to $O_{f_{w, \theta}}$. We condition throughout on the nontrivial case of the boundary-offset estimation procedure, so that we have already obtained the sign $s:=\sgn(\langle w,p\rangle)$ and an estimate $\widetilde h_p(0)$ of $h_p(0)$. Before going into the details, we first introduce a representation of qubit strings suitable for reversible fixed-point arithmetic and for the finite-grid implementation of quantum gradient estimation.

For $b\in\mathbb N$, we label the $b$-qubit computational basis state $\ket{j}$, where $j\in\{0,\ldots,2^b-1\}$, by $g_b(j)$, where
\begin{equation}
    g_b(j)
    :=
    \frac{j}{2^b}
    -\frac12
    +2^{-b-1}.
    \label{eq:centered-fixed-point}
\end{equation}
Equivalently, we define the one-dimensional grid $G_b:=\left\{\frac{j}{2^b}-\frac12+2^{-b-1}:j=0,\ldots,2^b-1\right\}\subset\left(-\frac12,\frac12\right)$. By~\cite[Claim 19]{Gily_n_2019}, the map $j\mapsto g_b(j)$ is a bijection, so we use $\ket{j}$ and $\ket{g_b(j)}$ interchangeably.

\begin{theorem}[Coherent implementation of height function]
\label{lem:coherent-height}
Condition on the event $\frac1{10}\leq|\langle w,p\rangle|\leq3$, assume that the procedure of~\Cref{thm:estimate-boundary-offset} returns $s= \sgn(\inner{w}{p})$ and an estimate $\widetilde h_p(0)$ satisfying $|\widetilde h_p(0)-h_p(0)|\leq\frac1{16}$. Let $\rho=2^{-\lceil\log_2(40\sqrt n)\rceil}$ and define $H_p(x):=h_p(\rho x)$. Then, for every accuracy parameter $0<\eta<1/8$, there is a reversible
circuit that maps $\ket{x}\ket{0}\longmapsto\ket{x}\ket{\widetilde H_p(x)}$
such that
\[
    |\widetilde H_p(x)-H_p(x)|
    \leq
    \eta,
\]
for every $x\in G_b^n$. The circuit uses $O\!\left(\log\frac1{\eta}\right)$ queries to $O_{f_{w, \theta}}$ and
\[
    O\!\left(
        n\left(b+\log n+\log\log\frac1\eps+\log\frac1\eta\right)^2
        \log\frac1\eta
    \right)
\]
gates.
\end{theorem}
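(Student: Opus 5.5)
The plan is a coherent binary search along the line $\rho x+\R p$, centered at the known estimate $\widetilde h_p(0)$. First we localize $H_p(x)$. By~\eqref{eq:height-linear-form},
\[
H_p(x)=h_p(0)-\rho\,\frac{\langle w,x\rangle}{\langle w,p\rangle}.
\]
For $x\in G_b^n$ we have $\norm{x}\leq\sqrt n/2$. Since $|\langle w,p\rangle|\geq 1/10$ and $\rho\leq 1/(40\sqrt n)$, the second term has absolute value at most $10\rho\sqrt n/2\leq 1/8$. Combining this with $|\widetilde h_p(0)-h_p(0)|\leq 1/16$ gives
\[
H_p(x)\in I_0:=\bigl[\widetilde h_p(0)-\tfrac14,\ \widetilde h_p(0)+\tfrac14\bigr]
\]
for every grid point $x$. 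This is the role of the choice of $\rho$: the initial search interval has constant length and is independent of $x$. We can then use a fixed number $q=\lceil\log_2(1/(2\eta))\rceil+1=O(\log(1/\eta))$ of rounds, and this does not depend on the branch of the superposition.

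Second, we carry out the comparison step reversibly. By~\eqref{eq:height-factorization}, for $t\neq H_p(x)$,
\[
s\,f_{w,\theta}(\rho x+tp)=\sgn\bigl(t-H_p(x)\bigr).
\]
Each round $j$ holds registers $\ell_{j-1},u_{j-1}$ in fixed point and computes the midpoint $t_j$. It then computes the query point $\rho x+t_jp$ coordinatewise into an ancilla. This costs $n$ multiplications of numbers with $O(b+\log n+\log\log(1/\eps)+\log(1/\eta))$ bits: $p$ has $O(\log n)$ bits by \Cref{lem:finite-reference}, $\rho$ is a power of two, and $|t_j|=O(T_\eps)$ contributes the $\log\log(1/\eps)$ bits. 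The round then calls $O_{f_{w,\theta}}$ once, XORs with the known classical bit $s$ to obtain $c(t_j)$, and updates the interval conditioned on $c(t_j)$, exactly as in the proof of \Cref{thm:estimate-boundary-offset}. The ties $t_j=H_p(x)$ are harmless by the same $\sgn(0)=+1$ argument as there: the invariant $H_p(x)\in I_j$ holds and the interval endpoints are exact since $t_j$ is a dyadic midpoint. After $q$ rounds the midpoint of $I_q$ is within $2^{-q-2}\leq\eta$ of $H_p(x)$. We copy it into the output register.

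Third, we uncompute. The computed query-point ancilla of each round is uncomputed right away by reversing the arithmetic. The comparison bits and intermediate interval registers are erased at the end by running the whole search backwards after copying the output. This doubles the query count, still $O(\log(1/\eta))$. The gate count is $q$ rounds of $n$ fixed-point multiplications, each quadratic in the bit length, which gives the stated $O\bigl(n(b+\log n+\log\log\frac1\eps+\log\frac1\eta)^2\log\frac1\eta\bigr)$.

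I expect the main obstacle to be the precision bookkeeping. The query points must be represented exactly, or with error small enough not to flip comparisons, and the register widths must be consistent across all branches. The plan is to choose every quantity dyadic so that the arithmetic is exact, with no rounding, and then to check that the bit lengths stay within the bound above. The remaining claims then follow from the classical invariant of binary search applied branchwise.
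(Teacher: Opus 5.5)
Your proposal is correct and follows essentially the same route as the paper. Both localize $H_p(x)$ to $[\widetilde h_p(0)-\tfrac14,\widetilde h_p(0)+\tfrac14]$ using the choice of $\rho$, build a one-query reversible comparator (address computation, one oracle call, XOR with the known bit $s$, then uncomputing the address), run a fixed number $q=O(\log(1/\eta))$ of coherent bisection rounds, and clean up Bennett-style by running the search in reverse, which gives $2q$ queries and $O(n\ell^2)$ gates per round.

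The only difference is cosmetic. You keep interval-endpoint registers for every round, whereas the paper stores only the comparison bits $C_1,\ldots,C_q$ and recomputes and uncomputes each midpoint $t_r$ from them. The paper's choice saves workspace but does not change the query or gate bounds.
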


\begin{proof}
Recall that every $x\in G_b^n$ satisfies $|x_i|\leq 1/2$ for all $i\in[n]$. Hence $\norm{x}\leq\frac{\sqrt n}{2}$. Moreover, by the definition of $\rho$, we have $\rho\leq\frac{1}{40\sqrt n}$ and $\rho\norm{x}\leq 1/80$.  Consequently,
\begin{align}
    |H_p(x)-h_p(0)|
    &\leq
    \frac{\rho\norm{x}}
         {|\langle w,p\rangle|}
    \leq
    \frac{1/80}{1/10}
    =
    \frac18.
    \label{eq:height-around-offset}
\end{align}
Thus, using $|\widetilde h_p(0)-h_p(0)|\leq\frac1{16}$, we obtain
\begin{align}
    |H_p(x)-\widetilde h_p(0)|
    &\leq
    |H_p(x)-h_p(0)|
    +
    |h_p(0)-\widetilde h_p(0)|
    \nonumber\\
    &\leq
    \frac18+\frac1{16}
    =
    \frac3{16}
    <
    \frac14.
\end{align}
Therefore, choosing the interval
$
    I
    :=
    \left[
        \widetilde h_p(0)-\frac14,
        \widetilde h_p(0)+\frac14
    \right]
$
suffices to contain $H_p(x)$ at every point in the entire hypergrid. We next construct a coherent comparison oracle for the unknown height.
Recall that, for every $t\in\R$,
\begin{align}
    \langle w,\rho x+tp\rangle-\theta
    &=
    \langle w,p\rangle
    \bigl(t-H_p(x)\bigr).
    \label{eq:general-height-comparison-factor}
\end{align}
Since $s:=\sgn(\langle w,p\rangle)$ is already known from~\Cref{thm:estimate-boundary-offset}, define $c_p(x,t)
    :=
    \frac{1-sf_{w, \theta}(\rho x+tp)}{2}$. Then, away from the boundary,
\begin{equation}
    c_p(x,t)
    =
    \begin{cases}
        1, & t<H_p(x),\\
        0, & t>H_p(x).
    \end{cases}
    \label{eq:comparison-direction}
\end{equation}
Thus, $c_p(x,t)$ determines on which side of the candidate $t$ the boundary height $H_p(x)$ lies. Using it as a comparison oracle in a
binary search over $I$ yields an approximation of $H_p(x)$.

We now show that this comparison can be implemented coherently and reversibly using one membership query. Encode the $\{\pm1\}$-valued
label by
\[
    \bar f_{w, \theta}(y)
    :=
    \frac{1-f_{w, \theta}(y)}{2}
    \in\{0,1\},
\]
and similarly define the known classical bit
\[
    \sigma
    :=
    \frac{1-s}{2}
    \in\{0,1\}.
\]
Then
\begin{equation}
    c_p(x,t)
    =
    \frac{1-sf_{w, \theta}(\rho x+tp)}{2}
    =
    \bar f_{w, \theta}(\rho x+tp)\oplus\sigma.
    \label{eq:comparison-bit-xor}
\end{equation}

Let $B$ be a comparison qubit and $Q$ a query-address register. Define the reversible address-preparation circuit
\begin{equation}
    A_{\mathrm{addr}}:
    \ket{x,t}\ket0_Q
    \longmapsto
    \ket{x,t}\ket{\rho x+tp}_Q.
    \label{eq:address-preparation}
\end{equation}
We use the standard coherent membership oracle in the form
\begin{equation}
    O_{f_{w, \theta}}:
    \ket{y}\ket b
    \longmapsto
    \ket{y}
    \ket{b\oplus\bar f_{w, \theta}(y)}.
    \label{eq:standard-membership-oracle}
\end{equation}
Starting from an arbitrary basis state $\ket{x,t}\ket b_B\ket0_Q$, we obtain
\begin{align}
    \ket{x,t}\ket b_B\ket0_Q
    &\xmapsto{A_{\mathrm{addr}}}
    \ket{x,t}\ket b_B\ket{\rho x+tp}_Q
    \nonumber\\
    &\xmapsto{O_{f_{w, \theta}}}
    \ket{x,t}
    \ket{b\oplus\bar f_{w, \theta}(\rho x+tp)}_B
    \ket{\rho x+tp}_Q
    \nonumber\\
    &\xmapsto{X^\sigma}
    \ket{x,t}
    \ket{b\oplus c_p(x,t)}_B
    \ket{\rho x+tp}_Q
    \nonumber\\
    &\xmapsto{A_{\mathrm{addr}}^\dagger}
    \ket{x,t}
    \ket{b\oplus c_p(x,t)}_B
    \ket0_Q.
    \label{eq:coherent-comparator}
\end{align}
Hence one call to $O_{f_{w, \theta}}$ implements the reversible comparison
\begin{equation}
    U_{\mathrm{cmp}}:
    \ket{x,t}\ket b
    \longmapsto
    \ket{x,t}
    \ket{b\oplus c_p(x,t)}.
    \label{eq:reversible-comparator}
\end{equation}
In particular, $U_{\mathrm{cmp}}$ is its own inverse.

We now implement the binary search reversibly. Let \(C:=C_1\cdots C_q\) be a \(q\)-qubit register initialized to \(\ket{0^q}\). For a fixed basis state $\ket{x}$, after $r$ rounds the first $r$ qubits store
\[
    \ket{
        c_p(x,t_1),
        \ldots,
        c_p(x,t_r)
    },
\]
while the remaining $q-r$ qubits remain in $\ket0$. We use
\begin{equation}
    K_r(x)
    :=
    \sum_{j=1}^r
    c_p(x,t_j)2^{r-j},
    \qquad
    K_0:=0,
    \label{eq:recursive-Kr}
\end{equation}
as notation for the integer encoded by the first $r$ comparison bits. After \(r-1\) rounds, we have
\begin{equation}
    H_p(x)\in
    I_{r-1}(x)
    :=
    \left[
        \widetilde h_p(0)-\frac14
        +
        K_{r-1}(x)2^{-r},
        \;
        \widetilde h_p(0)-\frac14
        +
        \bigl(K_{r-1}(x)+1\bigr)2^{-r}
    \right].
    \label{eq:binary-search-invariant}
\end{equation}
The midpoint of this interval is
\begin{equation}
    t_r
    :=
    \widetilde h_p(0)-\frac14+
    \left(
        K_{r-1}(x)+\frac12
    \right)2^{-r}.
    \label{eq:recursive-midpoint}
\end{equation}
In particular $t_1=\widetilde h_p(0)$,  which is the midpoint of the initial interval $I_0=
    \left[
        \widetilde h_p(0)-\frac14,
        \widetilde h_p(0)+\frac14
    \right]$. At the $r$-th round, we first compute $t_r$ reversibly into a temporary register from the previously stored bits
$C_1,\ldots,C_{r-1}$. We then apply $U_{\mathrm{cmp}}$ to the fresh qubit $C_r$, obtaining $C_r=c_p(x,t_r)$. Finally, we reverse the arithmetic used to compute $t_r$, returning the temporary midpoint register to $\ket0$. Thus, the only information
retained after the $r$-th round is the new comparison bit $C_r$.

Equivalently, the integer represented by the first $r$ comparison bits satisfies $K_r=2K_{r-1}+C_r$, and the interval consistent with these outcomes is
\begin{equation}
    I_r(x)
    :=
    \left[
        \widetilde h_p(0)-\frac14
        +
        K_r2^{-r-1},
        \;
        \widetilde h_p(0)-\frac14
        +
        (K_r+1)2^{-r-1}
    \right].
    \label{eq:binary-search-next-interval}
\end{equation}
If $t_r\neq H_p(x)$, the claim $H_p(x)\in I_r(x)$ follows directly from~\Cref{eq:comparison-direction}. And if
$t_r=H_p(x)$, we have $f_{w, \theta}(\rho x+t_rp)=+1$ under our convention $\sgn(0)=+1$. If $s=+1$, then $c_p(x,t_r)=0$ and the left closed child interval is retained; if $s=-1$, then $c_p(x,t_r)=1$ and the right closed child interval is retained. In either case, $t_r=H_p(x)$ remains an endpoint of the retained interval. Then $H_p(x)\in I_r(x)$ holds for every round. Hence, after $q$ rounds, we reversibly compute the midpoint of $I_q(x)$,
\begin{equation}
    \widetilde H_p(x)
    :=
    \widetilde h_p(0)-\frac14+
    \left(
        K_q+\frac12
    \right)2^{-q-1},
    \label{eq:final-height-estimate}
\end{equation}
into a designated output register.

Reversing the $q$ bisection rounds cleans all work registers and yields
\begin{equation}
    \ket{x}\ket0
    \longmapsto
    \ket{x}\ket{\widetilde H_p(x)}.
    \label{eq:clean-height-oracle}
\end{equation}

Since $|I_q(x)|=2^{-q-1}$ and $\widetilde H_p(x)$ is the midpoint of $I_q(x)$, we have
\begin{equation}
    \left|
        \widetilde H_p(x)-H_p(x)
    \right|
    \leq
    \frac12|I_q(x)|
    =
    2^{-q-2}.
\end{equation}
Therefore, choosing $q
    =
    \Theta\!\left(
        \log\frac1\eta
    \right)$ guarantees $\left|
        \widetilde H_p(x)-H_p(x)
    \right|
    \leq
    \eta$. Each forward round uses one membership query, and its inverse uses one additional membership query. Hence, the forward and reverse bisection
procedures together use
\[
    2q
    =
    O\!\left(
        \log\frac1\eta
    \right)
\]
membership queries.

It remains to account for the gate complexity. We take $\widetilde h_p(0)$ to be the number returned by~\Cref{thm:estimate-boundary-offset}, and hence it can be stored exactly using finitely many bits. A coordinate of $x\in G_b^n$ requires $b+1$ fractional bits, and multiplication by $\rho=2^{-O(\log n)}$ only shifts the binary point by $O(\log n)$ positions. By~\Cref{lem:finite-reference}, each coordinate $p_i$ is represented using $O(\log n)$ bits. Moreover, throughout the binary search, $|t_r|\leq|\widetilde h_p(0)|+\frac14=O(\sqrt{\log(1/\eps)})$,  and after at most $q$ rounds $t_r$ requires $O(q)$ additional fractional
bits. Thus all intermediate quantities can be represented exactly using
\[
    \ell
    =
    O\!\left(
        b+\log n+\log\log \frac{1}{\eps}+\log\frac1\eta
    \right)
\]
bits, where the precision used to store $\widetilde h_p(0)$ is absorbed into $b$ in the final choice of parameters.

To implement $A_{\mathrm{addr}}$ in
\Cref{eq:address-preparation}, for every $i\in[n]$ we reversibly compute $\rho x_i+t_rp_i$. Since $\rho$ is a power of two, multiplication by $\rho$ is only a binary shift. Computing $t_rp_i$ requires one fixed-point multiplication, followed by a constant number of additions or subtractions. Using standard reversible arithmetic, multiplication of $\ell$-bit numbers costs $O(\ell^2)$ elementary Toffoli/CNOT gates, whereas addition and subtraction cost $O(\ell)$ gates. Hence computing all $n$ coordinates of the query address costs $O(n\ell^2)$ non-oracle gates, and $A_{\mathrm{addr}}^\dagger$ has the same asymptotic cost.

Computing $t_r$ from the already stored comparison bits requires only $O(\ell)$ additional gates and is therefore a lower-order contribution. Consequently, one coherent comparison has non-oracle gate complexity $O(n\ell^2)$. Since the forward and reverse procedures contain $O(\log(1/\eta))$ comparison rounds in total, the gate complexity is
\begin{align}
    O\!\left(
        n\ell^2\log\frac1\eta
    \right)
    &=
     O\!\left(
        n\left(b+\log n+\log\log\frac1\eps+\log\frac1\eta\right)^2
        \log\frac1\eta
    \right),
\end{align}
as claimed.
\end{proof}

\subsubsection{Gradient estimation}
\label{sec:gradient-estimation}

We now combine the coherent height-evaluation procedure
of~\Cref{lem:coherent-height} with the grid-based formulation of
Jordan's quantum gradient-estimation algorithm in~\cite{Gily_n_2019}.
The main simplification in our setting is that the height function is
exactly affine. Consequently, there is no Taylor-remainder or
local-linearization error. Moreover, the additive offset $h_p(0)$
contributes only an input-independent global phase, so the error
analysis reduces to controlling the finite precision of the implemented
phase oracle and the finite resolution of the Fourier readout.

Jordan's algorithm estimates the gradient components in parallel using a
single call of an appropriately scaled phase oracle. It first prepares a
uniform superposition over a finite grid and applies the oracle to encode
the function values as phases. Applying an inverse quantum Fourier
transform to each coordinate register, followed by measurement, produces
an $\eps$-coordinate-wise approximation of the gradient components.

To formulate the Fourier readout on our centered grid $G_b$, we define
the Fourier transform of a state $\ket{x}$, for $x\in G_b$, as
$|G_b|=2^b$,
\begin{equation}
    \mathsf{QFT}_{G_b}:\ket{x}
    \mapsto
    \frac{1}{\sqrt{2^b}}
    \sum_{u\in G_b}
    e^{2\pi i\,2^b xu}\ket{u}.
    \label{eq:centered-qft}
\end{equation}
The following lemma shows that centering the grid incurs only a linear
overhead in single-qubit phase gates.

\begin{lemma}[{\cite[Claim~19]{Gily_n_2019}}]
\label{lem:centered-qft}
$\mathsf{QFT}_{G_b}$ is the same as the usual quantum Fourier transform
up to composition before and after with a tensor product of $b$
single-qubit unitaries.
\end{lemma}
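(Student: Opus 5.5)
The claim concerns the centered grid $G_b=\{g_b(j)\}$ with $g_b(j)=j2^{-b}-\tfrac12+2^{-b-1}$. We plan to expand the phase $e^{2\pi i\,2^b xu}$ in terms of the integer labels and see that it factors into the standard QFT phase times terms depending on only one of $j,k$. Write $c:=-\tfrac12+2^{-b-1}$, so that $x=g_b(j)=j2^{-b}+c$ and $u=g_b(k)=k2^{-b}+c$. Then
\[
2^b xu = 2^b\bigl(j2^{-b}+c\bigr)\bigl(k2^{-b}+c\bigr)=\frac{jk}{2^b}+c\,j+c\,k+2^b c^2 .
\]
Hence the matrix entry of $\mathsf{QFT}_{G_b}$ between $\ket{j}$ and $\ket{k}$ is
\[
e^{2\pi i\,2^bc^2}\,e^{2\pi i c j}\cdot\frac{1}{\sqrt{2^b}}e^{2\pi i jk/2^b}\cdot e^{2\pi i c k}.
\]
This gives the operator identity $\mathsf{QFT}_{G_b}=e^{i\phi}\,D\,\mathsf{QFT}_{2^b}\,D$, where $D=\mathrm{diag}(e^{2\pi i c j})_{j}$ and $\phi=2\pi\,2^bc^2$ is a global phase.

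It remains to show that $D$ is a tensor product of $b$ single-qubit unitaries. Write $j=\sum_{\ell=0}^{b-1} j_\ell 2^\ell$ in binary. Then $e^{2\pi i c j}=\prod_\ell e^{2\pi i c\,2^\ell j_\ell}$, so
\[
D=\bigotimes_{\ell}\mathrm{diag}\bigl(1,e^{2\pi i c 2^\ell}\bigr),
\]
a product of single-qubit phase gates. The global phase $e^{i\phi}$ can be absorbed into any one of these factors, for example the first copy of $D$. Therefore $\mathsf{QFT}_{G_b}=(\bigotimes_\ell V_\ell)\,\mathsf{QFT}_{2^b}\,(\bigotimes_\ell W_\ell)$ with single-qubit $V_\ell,W_\ell$.

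The only step needing care is the bookkeeping that $\mathsf{QFT}_{G_b}$, indexed by grid points, coincides with the standard $\mathsf{QFT}_{2^b}$ acting on integer labels under the bijection $j\mapsto g_b(j)$. The paper has already identified $\ket{j}$ with $\ket{g_b(j)}$ via that bijection, so this is consistent. We also note that the sign convention (positive exponent) matches that of the usual QFT; the inverse transform is handled identically by conjugating all the factors. The gate overhead is therefore $2b$ single-qubit gates, which is linear in $b$ as claimed.
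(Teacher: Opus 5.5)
Your proof is correct and follows the paper's argument. You expand $2^b x u = jk/2^b + c(j+k) + 2^b c^2$ with $c=-\tfrac12+2^{-b-1}$, sandwich the standard QFT between two copies of the diagonal phase $\mathrm{diag}(e^{2\pi i c j})$, and use the binary expansion of $j$ to see that this diagonal is a tensor product of $b$ single-qubit phase gates. The only cosmetic difference is that the paper splits the global phase $e^{2\pi i 2^b c^2}$ evenly between the two diagonal factors, so that both sides use the same unitary $U$, whereas you absorb it into one factor.
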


\begin{proof}
Let $x=g_b(j^{(x)})\in G_b$. Then $\mathsf{QFT}_{G_b}$ acts on $\ket{x}$
as
\begin{align}
    \mathsf{QFT}_{G_b}:\ket{x}
    &\longmapsto
    \frac{1}{\sqrt{2^b}}
    \sum_{u\in G_b}
    e^{2\pi i\,2^b xu}
    \ket{u}
    \\
    &=
    \frac{1}{\sqrt{2^b}}
    \sum_{j^{(u)}\in\{0,\ldots,2^b-1\}}
    e^{
        2\pi i\,2^b
        \left(
            \frac{j^{(x)}}{2^b}-\frac12+2^{-b-1}
        \right)
        \left(
            \frac{j^{(u)}}{2^b}-\frac12+2^{-b-1}
        \right)
    }
    \ket{j^{(u)}}
    \\
    &=
    \frac{1}{\sqrt{2^b}}
    \sum_{j^{(u)}\in\{0,\ldots,2^b-1\}}
    e^{
        2\pi i
        \left(
            \frac{j^{(u)}j^{(x)}}{2^b}
            -
            (j^{(u)}+j^{(x)})
            \left(
                \frac12-2^{-b-1}
            \right)
            +
            \left(
                2^{b-2}-\frac12+2^{-b-2}
            \right)
        \right)
    }
    \ket{j^{(u)}}.
\end{align}
Using the usual quantum Fourier transform,
\begin{equation}
    \mathsf{QFT}_b:\ket{j^{(x)}}
    \longmapsto
    \frac{1}{\sqrt{2^b}}
    \sum_{j^{(u)}\in\{0,\ldots,2^b-1\}}
    e^{2\pi i\,2^{-b}j^{(u)}j^{(x)}}
    \ket{j^{(u)}},
\end{equation}
and the phase unitary
\begin{equation}
    U:\ket{j^{(u)}}
    \longmapsto
    e^{
        2\pi i
        \left(
            -j^{(u)}
            \left(
                \frac12-2^{-b-1}
            \right)
            +
            \frac{
                2^{b-2}-\frac12+2^{-b-2}
            }{2}
        \right)
    }
    \ket{j^{(u)}},
\end{equation}
we obtain
$\mathsf{QFT}_{G_b}=U\cdot\mathsf{QFT}_b\cdot U$.
Writing $j^{(u)}$ in binary shows that $U$ is a tensor product of $b$
single-qubit phase gates.
\end{proof}

We next give the end-to-end complexity of estimating the gradient of
$h_p$ as follows.

\begin{theorem}[Gradient estimation]
\label{lem:gradient-estimation}
Assume the conditions of~\Cref{lem:coherent-height}, and let
$g:=\nabla h_p=-\frac{w}{\langle w,p\rangle}$. Fix an accuracy parameter
$0<\eps\leq10$ and failure probability $0<\gamma<1$, and set
\begin{equation}
    \eta
    :=
    \frac{\rho\eps}{48\pi(3n+1)}.
\end{equation}
Let $H_p(x):=h_p(\rho x)$ for $x\in G_b^n$, where
$b=O(\log(n/\eps))$. If we have access to oracle
$O_H:\ket{x}\ket0\to\ket{x}\ket{\widetilde H_p(x)}$, such that
\begin{equation}
    \abs{
        \widetilde H_p(x)-H_p(x)
    }
    \leq
    \eta,
\end{equation}
for every $x\in G_b^n$, then we can calculate a vector
$\widetilde g\in\R^n$ such that
$\snorm{\infty}{\widetilde g-g}\leq\eps$ with probability at least
$1-\gamma$, using
$O\!\left(\log\frac1\gamma\right)$ queries to $O_H$ and its inverse.
The additional non-oracle gate complexity is
\begin{equation}
    O\!\left(
        n\log^2\frac{n}{\eps}
        \log\frac1\gamma
    \right).
\end{equation}
\end{theorem}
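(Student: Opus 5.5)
We run Jordan's algorithm in the grid form of \cite{Gily_n_2019}. Write $H_p(x)=h_p(0)+\rho\inner{g}{x}$, with $g=-w/\inner{w}{p}$. The conditioning $|\inner{w}{p}|\geq 1/10$ gives $\snorm{\infty}{g}\leq\norm{g}\leq 10$.

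Fix a scale $S:=2^b/M$, where $M$ is a power of two with $M\geq 2\cdot 10$ (say $M=32$), so that the rescaled gradient $S\rho g/\rho\cdot(1/S)$ stays well inside the Fourier window. Concretely:
\begin{enumerate}
\item Prepare the uniform superposition $2^{-bn/2}\sum_{x\in G_b^n}\ket{x}$.
\item Call $O_H$, apply the phase $e^{2\pi i\,2^b\widetilde H_p(x)/(M\rho)}$ to the output register, and call $O_H^\dagger$. This uses two oracle calls and $O(b+\log(1/\eta))$ controlled phase rotations on the output bits, with the phases computed by shifts because $\rho$ and $M$ are powers of two.
\item Apply $\mathsf{QFT}_{G_b}^{-1}$ to each of the $n$ coordinate registers, using \Cref{lem:centered-qft}.
\item Measure to obtain $u\in G_b^n$ and output $\widetilde g:=Mu$.
\end{enumerate}

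In the ideal case $\widetilde H_p=H_p$, the phase factorizes as $e^{i\phi_0}\prod_i e^{2\pi i 2^b x_i (g_i/M)}$. The global phase from $h_p(0)$ is irrelevant, and each coordinate is a product state equal to $\mathsf{QFT}_{G_b}$ applied to a ``plane wave'' with frequency $g_i/M\in(-1/2,1/2)$.

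The standard phase-estimation analysis applies to each coordinate independently. It gives $|u_i-g_i/M|\leq 4\cdot 2^{-b}$ with probability at least $1-1/\Theta(1)$ per coordinate; more precisely, the failure probability decays like $O(1/k)$ at distance $k2^{-b}$. A union bound over $n$ coordinates would cost a $\log n$ factor. Instead, the plan is to choose $b=\lceil\log_2(Mn\cdot c/\eps)\rceil$ so that tolerating distance $\sim n$ grid steps still yields error $M\cdot O(n)2^{-b}\leq\eps$. Then each coordinate fails with probability $O(1/n)$ and all succeed with probability at least $2/3$. This is exactly the counting behind $b=O(\log(n/\eps))$ in \cite{Gily_n_2019}.

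To reach failure probability $\gamma$, repeat $O(\log(1/\gamma))$ times and take the coordinatewise median. A coordinate median is within $\eps$ if more than half of the runs are good on that coordinate. Since each coordinate is good with constant probability bounded away from $1/2$ in each run, Chernoff plus a union bound works. To avoid an extra $\log n$ here, it is cleaner to boost per-coordinate success to $1-1/(10n)$ through the choice of $b$ as above; then whole-vector success is at least $0.9$, and the coordinatewise median over $O(\log(1/\gamma))$ runs succeeds for all coordinates simultaneously with probability at least $1-\gamma$. A vectorwise majority argument suffices, since at least half the runs are entirely good.

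The main obstacle is the robustness to the oracle error $\eta$. The implemented state differs from the ideal one only by pointwise phase errors. Each such error is at most $2\pi\,2^b\eta/(M\rho)$, so the $\ell_2$ distance between the two states is bounded by this same amount. With $2^b=\Theta(Mn/\eps)$, this becomes $O(n\eta/(\rho\eps))$. The stated choice $\eta=\rho\eps/(48\pi(3n+1))$ makes this a small constant, say $\leq 1/8$. Measurement probabilities then change by at most twice that, which keeps per-run success above a constant exceeding $1/2$ after slightly adjusting the constants in $b$.

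For gate counts:
\begin{itemize}
\item $n$ centered QFTs on $b$ qubits cost $O(nb^2)=O(n\log^2(n/\eps))$, using \Cref{lem:centered-qft}.
\item State preparation costs $O(nb)$.
\item The phase kickback costs $O(\log(n/\eps))$.
\item Medians cost $O(nb\log(1/\gamma))$.
\end{itemize}
Multiplying by the $O(\log(1/\gamma))$ repetitions yields the claimed non-oracle cost.
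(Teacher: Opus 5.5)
Your proposal is correct and follows essentially the paper's proof: uniform superposition over $G_b^n$, phase kickback by computing $\widetilde H_p$ with $O_H$ and uncomputing with $O_H^\dagger$, an inverse centered QFT per coordinate with a tail bound at distance $\Theta(n)$ grid steps (so the union bound over coordinates costs only $O(\log n)$ extra bits of $b$), the $2\pi\cdot(\text{scale})\cdot\eta$ bound on the state perturbation, and a coordinatewise median over $O(\log(1/\gamma))$ runs. The only difference is cosmetic: you scale the phase by $2^b/(M\rho)$ so the frequency is $g_i/M$, whereas the paper uses scale $N=2^b$ and frequency $\rho g_i$; one constant does need adjusting, because with $M=32$ and $\eps$ close to $10$ the window $g_i/M\pm\eps/M$ can cross $\pm\frac12$, where the circular phase-estimation bound no longer controls $|u_i-g_i/M|$, so take e.g.\ $M\geq 64$ (the paper's parameters give $|\rho g_i|+r/N\leq\frac{1}{2\sqrt n}\leq\frac12$, which avoids this).
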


\begin{proof}
We adapt the analysis of Jordan's gradient-estimation algorithm
from~\cite[Lemma~20]{Gily_n_2019} to the exactly-affine height function
$h_p$.

Under the condition of~\Cref{lem:coherent-height}, we have
$\frac1{10}\leq|\langle w,p\rangle|\leq3$. Define
\begin{equation}
    g
    :=
    \nabla h_p
    =
    -\frac{w}{\langle w,p\rangle}.
    \label{eq:gradient-target}
\end{equation}
Hence, $\frac13\leq\|g\|\leq10$, and $\|g\|_\infty\leq10$. The restriction $\eps\leq10$ only excludes a trivial regime: if
$\eps\geq10$, then the zero vector already satisfies
$\|0-g\|_\infty\leq\eps$, and no oracle queries are needed.
Moreover, by the choice
$\rho=2^{-\lceil\log_2(40\sqrt n)\rceil}$,
$
    \|\rho g\|_\infty
    \leq
    \frac{1}{40\sqrt n}\|g\|_\infty
    \leq
    \frac{1}{4\sqrt n}
    \leq
    \frac13.
$

As $h_p$ is affine,
\begin{equation}
    H_p(x)
    =
    h_p(\rho x)
    =
    h_p(0)+\rho\langle g,x\rangle.
    \label{eq:affine-scaled-height}
\end{equation}

Let
$
    b:=
    \left\lceil
        \log_2\frac{3n+1}{\rho\eps}
    \right\rceil$, and $
    N:=2^b.
$
Then $b=O(\log(n/\eps))$. We use $n$
registers of $b$ qubits each. Their computational basis states are
interpreted directly as points $x\in G_b^n$. Starting from
$\ket0^{\otimes bn}$, Hadamard gates prepare the uniform superposition
\begin{equation}
    \ket{\Psi_0}
    =
    \frac{1}{N^{n/2}}
    \sum_{x\in G_b^n}
    \ket{x}.
    \label{eq:gradient-uniform-scaled-grid}
\end{equation}

Suppose first that $H_p(x)$ could be evaluated exactly. We apply the
phase oracle with scaling $N$,
\begin{equation}
    O_H^{N}:
    \ket{x}
    \longmapsto
    e^{2\pi iN H_p(x)}
    \ket{x}.
\end{equation}
Using~\Cref{eq:affine-scaled-height}, it yields
\begin{align}
    \ket{\Psi}
    &=
    O_H^{N}\ket{\Psi_0}
    \nonumber\\
    &=
    \frac{1}{N^{n/2}}
    \sum_{x\in G_b^n}
    e^{2\pi iN
        \left(
            h_p(0)+\rho\langle g,x\rangle
        \right)}
    \ket{x}
    \nonumber\\
    &=
    e^{2\pi iN h_p(0)}
    \left(
        \frac{1}{\sqrt N}
        \sum_{x_1\in G_b}
        e^{2\pi iN\rho g_1x_1}
        \ket{x_1}
    \right)
    \otimes\cdots\otimes
    \left(
        \frac{1}{\sqrt N}
        \sum_{x_n\in G_b}
        e^{2\pi iN\rho g_nx_n}
        \ket{x_n}
    \right).
    \label{eq:ideal-gradient-state-scaled-grid}
\end{align}
The prefactor
$e^{2\pi iN h_p(0)}$ is independent of $x$ and is therefore a
global phase. In particular, it has no effect on the subsequent inverse
Fourier transforms or on any measurement probabilities. Thus, apart
from this global phase, the state is identical to the one obtained in
the homogeneous case.

Applying the inverse Fourier transform to each register separately
therefore gives, up to the same global phase,
\begin{equation}
    \bigotimes_{i=1}^n
    \left(
        \sum_{k_i\in G_b}
        \left[
            \frac1N
            \sum_{x_i\in G_b}
            e^{
                2\pi iN
                \left(
                    \rho g_i-k_i
                \right)x_i
            }
        \right]
        \ket{k_i}
    \right).
    \label{eq:gradient-fourier-state}
\end{equation}
Suppose we measure the output of the inverse Fourier transforms and
obtain $k=(k_1,\ldots,k_n)$. We use the more general form of the
standard phase-estimation tail bound: for every $r>1$ and every
$i\in[n]$,
\begin{equation}
    \Pr\!\left[
        \left|
            k_i-\rho g_i
        \right|
        >
        \frac{r}{N}
    \right]
    \leq
    \frac{1}{2(r-1)}.
    \label{eq:jordan-coordinate-tail}
\end{equation}
We choose $r:=3n+1$. Then, for every $i\in[n]$,
\begin{equation}
    \Pr\!\left[
        \left|
            k_i-\rho g_i
        \right|
        >
        \frac{r}{N}
    \right]
    \leq
    \frac{1}{6n}.
\end{equation}
Hence, by a union bound over the $n$ coordinates,
\begin{equation}
    \Pr\!\left[
        \max_{i\in[n]}
        \left|
            k_i-\rho g_i
        \right|
        >
        \frac{r}{N}
    \right]
    \leq
    \frac16.
    \label{eq:all-coordinate-event}
\end{equation}
Thus, with probability at least $5/6$, all coordinates are
simultaneously accurate. On this event, defining
$\widetilde g_i:=\frac{1}{\rho}k_i$, we have
\begin{align}
    |\widetilde g_i-g_i|
    &\leq
    \frac{1}{\rho}\frac{r}{N}
    =
    \frac{r}{\rho N}
    \leq
    \eps
\end{align}
for every $i\in[n]$. Therefore, a single application of the ideal phase
oracle produces an $\eps$-accurate estimate of the entire gradient in
$\ell_\infty$ norm with probability at least $5/6$.

It remains to account for the approximation-errors in our coherent height evaluation.
Let $\ket{\widetilde\Psi}$ denote the phase state obtained by replacing
$H_p(x)$ with $\widetilde H_p(x)$:
\begin{equation}
    \ket{x}
    \longmapsto
    e^{2\pi iN\widetilde H_p(x)}
    \ket{x}.
\end{equation}
This phase can be implemented by computing $\widetilde H_p(x)$ using
$O_H$, applying the corresponding phase rotation, and uncomputing with
$O_H^\dagger$, and hence requires only a constant number of
height-oracle queries.

For the ideal and approximate phase states,
\begin{align}
    \|
        \ket{\widetilde\Psi}-\ket{\Psi}
    \|^2
    &=
    \frac1{N^n}
    \sum_{x\in G_b^n}
    \left|
        e^{2\pi iN\widetilde H_p(x)}
        -
        e^{2\pi iN H_p(x)}
    \right|^2
    \nonumber\\
    &\leq
    \frac1{N^n}
    \sum_{x\in G_b^n}
    \left(
        2\pi N
        |\widetilde H_p(x)-H_p(x)|
    \right)^2
    \nonumber\\
    &\leq
    (2\pi N\eta)^2,
\end{align}
where we used $|e^{ia}-e^{ib}|\leq|a-b|$. Therefore, using that $N
    <
    \frac{2(3n+1)}{\rho\eps}$ by the definition of $b$, and $\eta=\frac{\rho\eps}{48\pi(3n+1)}$, we have
\begin{equation}
    \|
        \ket{\widetilde\Psi}-\ket{\Psi}
    \|
    \leq
    2\pi N\eta\leq
    \frac1{12}.
    \label{eq:phase-state-distance}
\end{equation}
For any measurement event, the difference between its probabilities on
two pure states is at most their trace distance, and
\begin{equation}
    \frac12
    \left\|
        \ket{\Psi}\!\bra{\Psi}
        -
        \ket{\widetilde\Psi}\!\bra{\widetilde\Psi}
    \right\|_1
    =
    \sqrt{
        1-
        |\langle\Psi|\widetilde\Psi\rangle|^2
    }
    \leq
    \|
        \ket{\widetilde\Psi}-\ket{\Psi}
    \|.
\end{equation}
Hence the probability of the good event
$\|\widetilde g-g\|_\infty\leq\eps$ can decrease by at most $1/12$.
Since the ideal procedure succeeds with probability at least $5/6$,
one approximate run succeeds with probability at least $\frac56-\frac1{12}
    =
    \frac34$.

We now repeat the entire procedure independently $R$ times, where $R$
is an odd integer satisfying $R=\Theta\!\left(\log\frac1\gamma\right)$. Let
$\widetilde g^{(1)},\ldots,\widetilde g^{(R)}$ denote the resulting
gradient estimates. Each whole vector satisfies $\|
        \widetilde g^{(j)}-g
    \|_\infty
    \leq
    \eps$
with probability at least $3/4$. By Hoeffding's inequality,
\begin{equation}
    \Pr\!\left[
        \text{at most $R/2$ runs are good}
    \right]
    \leq
    e^{-R/8}.
\end{equation}
Thus, choosing $R
    \geq
    8\log\frac1\gamma$ makes this probability at most $\gamma$. Finally, take the median coordinate-wise over the $R$ output vectors.
Whenever more than half of the whole-vector estimates are good, more
than half of the estimates of every coordinate lie in
$[g_i-\eps,g_i+\eps]$, and hence their median lies in the same interval.
Consequently, with probability at least $1-\gamma$, we have $\|\widetilde g_{\mathrm{med}}-g\|_\infty\leq\eps$. Since each repetition uses only a constant number of calls to $O_H$ and $O_H^\dagger$, the total number of height-oracle queries is
\begin{equation}
    O\!\left(
        \log\frac1\gamma
    \right).
    \label{eq:height-oracle-query-complexity}
\end{equation}

For the gate complexity, our choice $N=\Theta(n/(\rho\eps))$ implies $b=\log N=O\!\left(\log\frac{n}{\eps}\right)$, where we used
$\rho^{-1}=\poly(n)$. A single run requires $n$ exact inverse Fourier transforms on $b$ qubits, each using $O(b^2)$ gates. The additional phase corresponding to $h_p(0)$ is global and requires no further operation.
Hence, one run uses $O\!\left(n\log^2\frac{n}{\eps}\right)$ non-oracle quantum gates. Repeating the procedure $R=O(\log(1/\gamma))$ times gives total quantum gate complexity
\begin{equation}
    O\!\left(
        n\log^2\frac{n}{\eps}
        \log\frac1\gamma
    \right).
\end{equation}
There is also classical post-processing of the measurement outcomes. For each of the $n$ coordinates, we compute the median of the $R$
measured $b$-bit values. Using a linear-time selection algorithm, this
requires $O(R)$ comparisons per coordinate, and hence
\begin{equation}
    O(nRb)
    =
    O\!\left(
        n\log\frac{n}{\eps}
        \log\frac1\gamma
    \right)
\end{equation}
classical bit operations. Thus, the overall computational complexity is
\begin{equation}
    O\!\left(
        n\log^2\frac{n}{\eps}
        \log\frac1\gamma +  n\log\frac{n}{\eps}
        \log\frac1\gamma
    \right) = O\!\left(
        n\log^2\frac{n}{\eps}
        \log\frac1\gamma
    \right).
\end{equation}
\end{proof}

\subsubsection{Putting everything together}
\label{sec:putting-together}

We now combine the boundary-offset estimation procedure of
\Cref{thm:estimate-boundary-offset}, the coherent implementation of the
height function from~\Cref{lem:coherent-height}, and the gradient-estimation
procedure of~\Cref{lem:gradient-estimation} to obtain a quantum learner for
general linear threshold functions. Recall that
\begin{equation}
    f_{w,\theta}(x)
    :=
    \sgn(\langle w,x\rangle-\theta),
    \qquad
    w\in\sphere^{n-1},
    \quad
    \theta\in\R,
\end{equation}
and that, for a reference direction $p$ satisfying
$\langle w,p\rangle\neq0$, the associated height function obeys
\begin{equation}
    h_p(x)
    =
    h_p(0)
    -
    \frac{\langle w,x\rangle}
         {\langle w,p\rangle},
    \qquad
    h_p(0)
    =
    \frac{\theta}{\langle w,p\rangle},
    \qquad
    g:=\nabla h_p
    =
    -\frac{w}{\langle w,p\rangle}.
    \label{eq:final-height-gradient}
\end{equation}
Thus, recovering $g$, the orientation
$s:=\sgn(\langle w,p\rangle)$, and the boundary offset $h_p(0)$ suffices
to recover both parameters of the target halfspace. Indeed, since
$\|w\|=1$,
\begin{equation}
    \|g\|
    =
    \frac{1}{|\langle w,p\rangle|},
    \qquad
    w
    =
    -s\frac{g}{\|g\|},
    \qquad
    \theta
    =
    \frac{s\,h_p(0)}{\|g\|}.
    \label{eq:general-ltf-recovery}
\end{equation}

\begin{theorem}[Quantum learning of general LTFs]
\label{thm:quantum-ltf-learning}
Let
$f_{w,\theta}(x)=\sgn(\langle w,x\rangle-\theta)$, where
$w\in\sphere^{n-1}$ and $\theta\in\R$, and suppose that the learner has coherent membership-query access to $f_{w, \theta}$. For every
$0<\eps<1$, there exists a quantum algorithm that outputs a hypothesis $\widehat f:\R^n\to\{-1,+1\}$ such that
\begin{equation}
    \Pr_{x\sim\gamma_n}
    \left[
        \widehat f(x)
        \neq
        f_{w,\theta}(x)
    \right]
    \leq
    \eps
    \label{eq:final-learning-guarantee}
\end{equation}
with probability at least $2/3$. The algorithm uses $O\!\left(\log\frac{n}{\eps}\right)$ membership queries and
$\widetilde O\!\left(
    n\,\operatorname{polylog}\frac{n}{\eps}
\right)$ gates.
\end{theorem}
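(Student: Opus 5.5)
We will prove \Cref{thm:quantum-ltf-learning} by chaining the three components already established, and then converting parameter accuracy into Gaussian classification error via \Cref{lem:gaussian-disagreement}.

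\textbf{Offset stage.} First run the preprocessing of \Cref{lem:finite-reference} to obtain $p$; with probability at least $4/5$ we have $\frac1{10}\le|\langle w,p\rangle|\le3$. Set $\eps_0:=c\eps$ for a small constant $c$. Query the endpoints $\pm L_{\eps_0}p$ as in \Cref{thm:estimate-boundary-offset}. If the labels agree, output the constant $f_{w,\theta}(0)$; by \Cref{lem:large-threshold-constant} this has error at most $\eps_0\le\eps$. This also covers the case $|\theta|\ge\sqrt{2\log(1/\eps_0)}$ whenever the endpoints agree. If the endpoints disagree, then $h_p(0)\in[-L_{\eps_0},L_{\eps_0}]$ and $|\theta|\le 3L_{\eps_0}=O(\sqrt{\log(1/\eps)})$. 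The binary search in that proof never actually needs the bound $|\theta|<T$ beyond this containment, so we may run it with $\zeta:=\min\{1/16,\,c\eps/30\}$. This yields $s$ and $\widetilde h_p(0)$ with $|\widetilde h_p(0)-h_p(0)|\le\zeta$ using $O(\log(\log(1/\eps)/\eps))=O(\log(1/\eps))$ queries.

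\textbf{Gradient stage.} Invoke \Cref{lem:coherent-height} with $\eta=\rho\eps_1/(48\pi(3n+1))$ to build $O_H$, where $\eps_1:=c'\eps/\sqrt n$ is chosen so that an $\ell_\infty$ error $\eps_1$ becomes an $\ell_2$ error $c'\eps$. Each $O_H$ call costs $O(\log(1/\eta))=O(\log(n/\eps))$ membership queries, since $\rho^{-1}=O(\sqrt n)$. Then apply \Cref{lem:gradient-estimation} with constant $\gamma=1/20$. This uses $O(1)$ calls to $O_H$, so $O(\log(n/\eps))$ membership queries in total, and yields $\widetilde g$ with $\|\widetilde g-g\|\le\sqrt n\,\eps_1=c'\eps$. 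The gate counts from the two results multiply to $\widetilde O(n\,\mathrm{polylog}(n/\eps))$; the $\log\log(1/\eps)$ term is absorbed.

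\textbf{Recovery and error propagation.} Output $\widetilde w:=-s\widetilde g/\|\widetilde g\|$ and $\widetilde\theta:=s\widetilde h_p(0)/\|\widetilde g\|$, following \Cref{eq:general-ltf-recovery}. Since $\|g\|\ge1/3$, normalization is $O(1)$-Lipschitz near $g$, so $\|\widetilde w-w\|\le 6\|\widetilde g-g\|=O(c'\eps)$.

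The threshold is the main obstacle, because $\theta=s h_p(0)/\|g\|$ multiplies the error in $1/\|g\|$ by $|h_p(0)|$, which can be as large as $L_{\eps_0}=\Theta(\sqrt{\log(1/\eps)})$. To handle this, bound
\[
|\widetilde\theta-\theta|\le \frac{|\widetilde h_p(0)-h_p(0)|}{\|\widetilde g\|}+|h_p(0)|\left|\frac1{\|\widetilde g\|}-\frac1{\|g\|}\right|\le 4\zeta+O\bigl(\sqrt{\log(1/\eps)}\bigr)\|\widetilde g-g\|.
\]
I would fix the second term by shrinking the gradient accuracy to $\eps_1=c'\eps/(\sqrt{n\log(1/\eps)})$. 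This only changes $\eta$, and hence the query count, by a $\log\log$ factor inside the logarithm, so the bound remains $O(\log(n/\eps))$. Then \Cref{lem:gaussian-disagreement} gives disagreement at most $\frac12\|\widetilde w-w\|+|\widetilde\theta-\theta|/\sqrt{2\pi}\le\eps$ for small enough constants $c,c'$.

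\textbf{Success probability.} A union bound over the bad reference direction ($\le1/5$) and gradient failure ($1/20$) gives failure probability at most $1/4<1/3$. The offset stage and $O_H$ are deterministic given a good $p$.
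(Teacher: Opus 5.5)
Your proposal is correct and follows essentially the same route as the paper's proof: the endpoint test at $\pm L_\eps p$ with a constant fallback, then binary search for $h_p(0)$, then the coherent height oracle with Jordan gradient estimation at per-coordinate accuracy shrunk by $\sqrt n$ and by a threshold-dependent factor (which controls the $|h_p(0)|\,\bigl|1/\|\widetilde g\|-1/\|g\|\bigr|$ term), and finally recovery via $-s\widetilde g/\|\widetilde g\|$ and $s\widetilde h_p(0)/\|\widetilde g\|$ followed by \Cref{lem:gaussian-disagreement}. One detail needs tidying: the paper scales the gradient accuracy by $1/(L_\eps+1)$ rather than $1/\sqrt{\log(1/\eps)}$, and you should do the same, because $\log(1/\eps)\to0$ as $\eps\to1$ whereas $L_{\eps_0}$ stays bounded below by a constant, so your choice of $\eps_1$ would blow up in that regime.
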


\begin{proof}

We first use the randomized classical preprocessing procedure
of~\Cref{lem:finite-reference} to choose a reference vector $p$. For
every fixed $w\in\sphere^{n-1}$, with probability at least $4/5$, we
have
\begin{equation}
    \frac1{10}
    \leq
    |\langle w,p\rangle|
    \leq
    3.
    \label{eq:final-good-reference}
\end{equation}

We do not assume that the threshold~$\theta$ is known. We first invoke the boundary-search procedure of~\Cref{thm:estimate-boundary-offset}. Conditioned on the above good-reference event, this procedure begins by querying the two endpoints $\pm L_\eps p$ and has two possible outcomes.

\paragraph{The endpoint labels agree.} In this case the separating hyperplane does not intersect the segment $\{-L_\eps p,\ldots,L_\eps p\}$, and hence $|h_p(0)|\geq L_\eps$. Since
$|\langle w,p\rangle|\geq 1/10$, we obtain
\[
    |\theta|
    =
    |h_p(0)|\,|\langle w,p\rangle|
    \geq
    \frac{L_\eps}{10}
    \geq
    2\sqrt{2\log(1/\eps)}.
\]
Therefore, by~\Cref{lem:large-threshold-constant}, the constant hypothesis returned by the procedure has Gaussian error at most~$\eps$, and the algorithm terminates.

\paragraph{The endpoint labels differ.}
In this case, the separating hyperplane intersects the search segment, so $|h_p(0)|\leq L_\eps$. The boundary-search procedure then estimates $h_p(0)$ and returns $s=\sgn(\langle w,p\rangle)$. We use these quantities together with gradient estimation to recover an $\eps$-accurate hypothesis, as follows.

We condition on the good-reference event in~\Cref{eq:final-good-reference} throughout the remainder of the analysis. In
particular,
\begin{equation}
    \frac13
    \leq
    \|g\|
    =
    \frac1{|\langle w,p\rangle|}
    \leq
    10.
    \label{eq:final-gradient-norm}
\end{equation}

We first locate the offset of the separating hyperplane. By
\Cref{thm:estimate-boundary-offset}, the boundary-search procedure either
certifies that the target is already within error $\eps$ of a constant
classifier, in which case we output the constant hypothesis given
by~\Cref{lem:large-threshold-constant} and terminate, or it returns
\begin{equation}
    s
    =
    \sgn(\langle w,p\rangle)
\end{equation}
together with an estimate $\widetilde h_p(0)$ of $h_p(0)$. In the remainder of the proof, we consider the latter case. Let $L_\eps$ denote
the binary-search length in~\Cref{thm:estimate-boundary-offset}. In particular,
\begin{equation}
    |h_p(0)|
    \leq
    L_\eps,
    \qquad
    L_\eps
    =
    O\!\left(
        \sqrt{\log\frac1\eps}
    \right).
    \label{eq:final-offset-bound}
\end{equation}
We run the boundary search to accuracy $\zeta:=\frac{\eps}{100}$, so that
\begin{equation}
    |\widetilde h_p(0)-h_p(0)|
    \leq
    \frac{\eps}{100} < \frac{1}{16},
    \label{eq:final-offset-error}
\end{equation}
for $0<\eps<1$.

We next estimate the gradient. Define $\Delta:=\frac{\eps}{100(L_\eps+1)}$ and set $\eps_g:=\frac{\Delta}{\sqrt n}$ and $\gamma:=\frac16$. Following~\Cref{lem:gradient-estimation}, define $\eta:=\frac{\rho\eps_g}{48\pi(3n+1)}$ and $H_p(x):=h_p(\rho x)$. Since $\rho=\Theta(n^{-1/2})$ and
$\eps_g=\Theta(\eps/((L_\eps+1)\sqrt n))$, we have
\begin{equation}
    \log\frac1\eta
    =
    O\!\left(
        \log\frac{n}{\eps}
    \right),
    \label{eq:final-eta-log}
\end{equation}
where we used $L_\eps=O(\sqrt{\log(1/\eps)})$.

By~\Cref{lem:coherent-height}, we can implement coherent value access $O_H:
    \ket{x}\ket0
    \longmapsto
    \ket{x}\ket{\widetilde H_p(x)}$ such that
\begin{equation}
    |\widetilde H_p(x)-H_p(x)|
    \leq
    \eta
\end{equation}
using $O(\log(1/\eta))=O(\log(n/\eps))$ queries to $O_{f_{w,\theta}}$.

We then invoke~\Cref{lem:gradient-estimation}, which returns
$\widetilde g\in\mathbb R^n$ satisfying $\|\widetilde g-g\|_\infty\leq\eps_g$ with probability at least $1-\gamma=5/6$. Consequently,
\begin{equation}
    \|\widetilde g-g\|
    \leq
    \sqrt n\,\eps_g
    =
    \Delta.
    \label{eq:final-gradient-l2}
\end{equation}
We now reconstruct the normal vector and the threshold. Define
\begin{equation}
    \widehat w
    :=
    -s
    \frac{\widetilde g}{\|\widetilde g\|},
    \qquad
    \widehat\theta
    :=
    \frac{s\,\widetilde h_p(0)}
         {\|\widetilde g\|}.
    \label{eq:final-general-output}
\end{equation}
Using the standard normalization inequality,
\begin{equation}
    \left\|
        \frac{\widetilde g}{\|\widetilde g\|}
        -
        \frac{g}{\|g\|}
    \right\|
    \leq
    \frac{
        2\|\widetilde g-g\|
    }{
        \|g\|
    },
    \label{eq:normalization-inequality}
\end{equation}
and since $\norm{g} \geq 1/3$, combining~\Cref{eq:final-gradient-norm} with \Cref{eq:final-gradient-l2}, we obtain
\begin{equation}
    \|\widehat w-w\|
    \leq
    6\Delta.
    \label{eq:final-normal-distance}
\end{equation}
On the other hand, since
$\Delta\leq1/100<1/12$, using~\Cref{eq:final-gradient-norm} gives 
\begin{equation}
\|\widetilde g\|
    \geq
    \|g\|-\|\widetilde g-g\|
    \geq
    1/3-1/{12}
    =
    1/4.\label{eq:estimated-gradient-lower-bound}    
\end{equation}
Moreover,
\begin{equation}
    \left|
        \frac1{\|\widetilde g\|}
        -
        \frac1{\|g\|}
    \right|
    =
    \frac{
        \bigl|
            \|\widetilde g\|-\|g\|
        \bigr|
    }{
        \|\widetilde g\|\|g\|
    }
    \leq
    12\Delta.
    \label{eq:inverse-gradient-norm-error}
\end{equation}
Thus, combining~\Cref{eq:final-offset-bound},
\Cref{eq:final-offset-error},
\Cref{eq:estimated-gradient-lower-bound}, and
\Cref{eq:inverse-gradient-norm-error}, we obtain
\begin{align}
    |\widehat\theta-\theta|
    &\leq
    \frac{
        |\widetilde h_p(0)-h_p(0)|
    }{
        \|\widetilde g\|
    }
    +
    |h_p(0)|
    \left|
        \frac1{\|\widetilde g\|}
        -
        \frac1{\|g\|}
    \right| \\
    &\leq 4\frac{\eps}{100}
    +
    12L_\eps\Delta\\
    & \leq
    \frac{4\eps}{100}
    +
    \frac{12L_\eps\eps}
         {100(L_\eps+1)}\\
    &\leq \frac{16\eps}{100}.
    \label{eq:threshold-error-decomposition}
\end{align}
Finally, by~\Cref{lem:gaussian-disagreement},
\begin{align}
    \Pr_{x\sim\gamma_n}
    \left[
        f_{\widehat w,\widehat\theta}(x)
        \neq
        f_{w,\theta}(x)
    \right]
    &\leq
    \frac12\|\widehat w-w\|
    +
    \frac1{\sqrt{2\pi}}
    |\widehat\theta-\theta| \\
    &\leq 3\Delta
    +
    \frac{16\eps}{100\sqrt{2\pi}}
    <
    \eps.
    \label{eq:final-general-disagreement}
\end{align}
The last inequality uses $\Delta\leq\eps/100$.

Conditioned on the reference direction satisfying~\Cref{eq:final-good-reference}, all steps preceding gradient estimation are deterministic, while~\Cref{lem:gradient-estimation} succeeds with probability at least $5/6$. Therefore, the algorithm succeeds with probability $\frac45\cdot\frac56
    =
    \frac23$.
    
We now consider the membership-query complexity. The boundary-offset
estimation procedure uses
\begin{equation}
    O\!\left(
        \log\frac{L_\eps}{\zeta}
    \right)
    =
    O\!\left(
        \log\frac1\eps
    \right)
\end{equation}
membership queries. By~\Cref{lem:gradient-estimation}, the
gradient-estimation procedure makes
$O(\log(1/\gamma))=O(1)$ queries to the coherent height oracle $O_H$.
Each such query is implemented using~\Cref{lem:coherent-height}, with
membership-query complexity
\begin{equation}
    O\!\left(
        \log\frac1\eta
    \right)
    =
    O\!\left(
        \log\frac{n}{\eps}
    \right).
\end{equation}
Therefore, the total number of membership queries is
\begin{equation}
    O\!\left(
        \log\frac{n}{\eps}
    \right).
\end{equation}

Similarly, the gate complexity is obtained by composing the previous
subroutines. One coherent height query has gate complexity
\begin{equation}
     O\!\left(
        nb^2
        +
        n\operatorname{polylog}
        \frac{nL_\eps}{\eta}
    \right).
\end{equation}
Using
$b=O(\log(n/\eps_g))=O(\log(n/\eps))$,
$L_\eps=O(\sqrt{\log(1/\eps)})$, and $\eta=\frac{\rho\eps_g}{48\pi(3n+1)}$, this becomes $\widetilde O\!\left(
        n
    \right)$. The gradient-estimation algorithm makes $O(1)$ such oracle calls, while its additional Fourier-transform and arithmetic gates contribute $ O\!\left(
        n\log^2\frac{n}{\eps_g}
    \right)$, which is of the same order. The boundary-offset estimation and final classical post-processing contribute only lower-order
$\widetilde O(n)$ operations. Therefore, the total non-oracle gate complexity is
\begin{equation}
    \widetilde O\!\left(
        n
    \right).
\end{equation}
Combining the analyses of the two regimes for $\theta$ completes the proof.
\end{proof}

\subsection{Quantum lower bound}
\label{subsec:quantum-lower-bound-ordered-search}

We reduce quantum ordered search to learning a special family of hard instances $\{w^{(i)}\}_i$ with
membership queries. First, we define the quantum ordered search task and its oracle model.

\begin{lemma}[Quantum ordered-search lower bound~{\cite{HoyerNeerbekShi2002}}]
\label{lem:quantum-ordered-search-lower-bound}
In the ordered-search problem, an unknown index $i\in[N]$ is accessed through
\[
\widetilde O_i\ket{m,b}
=\ket{m,b\oplus\mathbf 1_{\{i>m\}}},
\qquad m\in\{0,\ldots,N\}.
\]
The task is to output $i$ with probability at least $2/3$. Its quantum query complexity is
$\Omega(\log N)$.
\end{lemma}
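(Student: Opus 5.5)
The statement is the Høyer--Neerbek--Shi lower bound for quantum ordered search, cited from~\cite{HoyerNeerbekShi2002}. Here the paper takes it as given rather than reproving it. If I were to prove it myself, I would use the adversary method for the partial function $i\mapsto i$ with input strings $x^{(i)}\in\{0,1\}^{N+1}$, where $x^{(i)}_m=\mathbf 1_{\{i>m\}}$. Each $x^{(i)}$ is a monotone step string, and the task is to locate the step position. The plain (positive-weight) adversary bound only gives a constant here, so the plan is to use the negative-weight (general) adversary bound of~\cite{HoyerLeeSpalek2007}, which is tight, and exhibit a matrix $\Gamma$ with $\|\Gamma\|=\Omega(\log N)$ and $\|\Gamma\circ\Delta_m\|\leq 1$ for every query position $m$.

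\textbf{Step 1: choose the adversary matrix.} The natural candidate is the Hilbert-type matrix $\Gamma[i,j]=1/(i-j)$ for $i\neq j$ and $0$ on the diagonal. Up to sign conventions I would take the symmetrized version $1/|i-j|$ and check which version works. The entry depends only on $|i-j|$, which matches the translation-invariant structure of ordered search. To lower-bound the norm, I would test $\Gamma$ against the uniform vector:
\[
\frac1N\sum_{i\neq j}\frac{1}{|i-j|}=\Theta(\log N),
\]
so that $\|\Gamma\|=\Omega(\log N)$.

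\textbf{Step 2: bound $\|\Gamma\circ\Delta_m\|$ for each $m$.} Strings $x^{(i)}$ and $x^{(j)}$ differ at position $m$ exactly when $m$ lies between $i$ and $j$. Concretely, $\Delta_m$ is supported on the blocks $\{i\leq m<j\}$ and the symmetric block. So $\Gamma\circ\Delta_m$ is a block anti-diagonal matrix whose off-diagonal block has entries $1/(j-i)$ with $i\leq m<j$. This is a Hankel matrix of the form $1/(a+b+1)$ with $a,b\geq0$. By Hilbert's inequality it has operator norm at most $\pi$, uniformly in $m$ and $N$.

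\textbf{Step 3: conclude.} Rescaling by $1/\pi$ gives a feasible primal adversary matrix of value $\Omega(\log N)/\pi$. The general adversary bound then yields $\Omega(\log N)$ queries, even for error $1/3$, since the bound holds for bounded error.

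\textbf{Main obstacle.} The main obstacle is Step 2. One must check that the sign and symmetrization choices keep $\Gamma$ symmetric with a large norm, while every masked matrix reduces to a Hilbert/Hankel block. Otherwise the signs could cancel the $\log N$ in Step 1, so a careful choice of the sign pattern is required.
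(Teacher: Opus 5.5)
Your proof is correct. It does more than the paper does here: the paper gives no argument for this lemma and simply imports it from~\cite{HoyerNeerbekShi2002}. What you reconstruct is, up to presentation, the original H{\o}yer--Neerbek--Shi weighted argument, with weights $1/|i-j|$ and the Hilbert-matrix norm bound supplying their constant $1/\pi$. You phrase it as a spectral adversary certificate, and each piece checks out:
\begin{enumerate}
\item The uniform vector gives $\|\Gamma\|\geq 2H_{N-1}-2+2/N=\Theta(\log N)$.
\item The query positions $m=0$ and $m=N$ are constant on all inputs, so $\Delta_m=0$ there.
\item For $1\leq m<N$, the matrix $\Gamma\circ\Delta_m$ is, after reordering, the symmetric matrix with off-diagonal blocks $H$ and $H^{\mathsf T}$. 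Here $H[a,b]=1/(a+b+1)$ with $a=m-i$ and $b=j-m-1$. Its norm is $\|H\|<\pi$ by Hilbert's inequality.
\item The adversary bound then gives $\Omega(\log N)$ at error $1/3$, with the usual factor $\bigl(1-2\sqrt{\epsilon(1-\epsilon)}\bigr)/2$.
\end{enumerate}

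Two parts of your framing should be corrected, although neither breaks the proof.

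First, it is not true that the positive-weight adversary only gives a constant for ordered search. That is the limitation of Ambainis's \emph{unweighted}, relation-based bound. Your $\Gamma$ is entrywise nonnegative, so it is itself a positive-weight certificate. The positive-weight bound $\mathrm{ADV}$ of~\cite{HoyerLeeSpalek2007} therefore already yields $\Omega(\log N)$. Invoking the negative-weight bound is harmless, since it dominates, but it is unnecessary. This is the same kind of nonnegative certificate the paper uses for its Majority lower bound in \Cref{lem:balanced-gamma-norm,lem:balanced-mask-norm}.

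Second, the ``main obstacle'' about sign patterns does not exist. With $1/|i-j|$ there is nothing that can cancel. The alternative $1/(i-j)$ is antisymmetric rather than symmetric, so it is not an admissible real symmetric adversary matrix in the first place. You can drop the hedging in Step~1 and commit to $1/|i-j|$. Step~2 is then exactly Hilbert's inequality, with no further case analysis.
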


\paragraph{Hard instances.}
Fix $\eps>0$ and $N:=\left\lfloor\frac{1}{16\eps}\right\rfloor$. For each hidden $j\in[N]$ set
\begin{equation}
\theta_j:=4\pi\eps j,
\qquad
w^{(i)}:=(\cos\theta_j,\sin\theta_j,0,\ldots,0)\in\sphere^{n-1}.
\label{eq:ordered-search-hard-family}
\end{equation}
We encode an instance of ordered search with hidden index $i$ as $f_i:=f_{w^{(i)}}$. Since $0<\theta_j\leq\theta_N\leq\pi/4$, we have
$\cos\theta_j>0$, and the tangent function is strictly increasing and bijective on $(-\pi/2,\pi/2)$. Moreover,~\Cref{lem:gaussian-disagreement} gives, for distinct $j$ and $r$,
\begin{equation}
 d_{\gamma_n}(f_j,f_r) = \frac{\arccos(w^{(j)}, w^{(r)})}{}
 =\frac{|\theta_j-\theta_r|}{\pi}
 =4\eps|j-r|
 \geq4\eps.
 \label{eq:hard-family-separation}
\end{equation}

\paragraph{Oracle simulation.}
For the family in \Cref{eq:ordered-search-hard-family}, one query to
$\widetilde O_i$ simulates one membership query to $O_{f_i}$ coherently.
Indeed,
\[
f_i(x)=\sgn\!\left(x_1+x_2\tan\theta_i\right).
\]
When $x_2>0$, the membership query asks whether
$\tan\theta_i\geq-x_1/x_2$. Prepare the first input register of
$\widetilde O_i$ in the computational-basis state
\[
\ket{\,\left\lvert\{r\in[N]:\tan\theta_r<-x_1/x_2\}\right\rvert\,}.
\]
Because $\tan\theta_1<\cdots<\tan\theta_N$, we have
\[
\begin{aligned}
i>\left\lvert\{r\in[N]:\tan\theta_r<-x_1/x_2\}\right\rvert
&\quad\Longleftrightarrow\quad
\tan\theta_i\geq-x_1/x_2 \\
&\quad\Longleftrightarrow\quad f_i(x)=1.
\end{aligned}
\]
If $x_2<0$, prepare instead
\[
\ket{\,\left\lvert\{r\in[N]:\tan\theta_r\leq-x_1/x_2\}\right\rvert\,}.
\]
The oracle then returns $1$ precisely when
$\tan\theta_i>-x_1/x_2$, so we complement the answer in that case.
When $x_2=0$, prepare $\ket N$; the oracle returns $0$, and the membership bit
is the known value $\mathbf 1_{\{x_1\geq0\}}$. In each case, let $\widetilde{x}$ denote the value prepared in the first input register of $\widetilde O_i$. Overall, we have
\begin{equation}
\ket{x,b,0}
\xmapsto{U}\ket{x,b,\widetilde{x}}
\xmapsto{\widetilde O_i+\mathrm{corr.}}
\ket{x,b\oplus\mathbf 1_{\{f_i(x)=1\}},\widetilde{x}}
\xmapsto{U^\dagger}
\ket{x,b\oplus\mathbf 1_{\{f_i(x)=1\}},0}
\xmapsto{X}
\ket{x,b\oplus\bar f_i(x),0}.
\label{eq:reduction}
\end{equation}

By linearity, this implements a query to $O_{f_i}$ on any
superposition, using one query to $\widetilde{O}_i$. We can now state the $\Omega\left(\log(1/\eps)\right)$ lower bound.

\begin{theorem}[$\Omega(\log(1/\eps))$ lower bound]
\label{thm:ordered-search-membership-lower-bound}
For $n\geq2$ and every sufficiently small $\eps>0$, the bounded-error quantum
query complexity of the homogeneous-LTF learning problem in
\Cref{thm:quantum-ltf-learning} is
\[
\Omega\!\left(\log\frac1\eps\right).
\]
\end{theorem}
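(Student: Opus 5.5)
The plan is a reduction from quantum ordered search, \Cref{lem:quantum-ordered-search-lower-bound}, using the hard family in \Cref{eq:ordered-search-hard-family} and the oracle simulation in \Cref{eq:reduction}. Suppose $\mathcal L$ is a bounded-error quantum learner that, for every homogeneous LTF, makes $Q(\eps)$ membership queries and with probability at least $2/3$ outputs a hypothesis $\widehat f$ with $\Pr_{x\sim\gamma_n}[\widehat f(x)\neq f_{w,0}(x)]\leq\eps$. We build from $\mathcal L$ an ordered-search algorithm on $N=\lfloor 1/(16\eps)\rfloor$ elements that uses exactly $Q(\eps)$ queries to $\widetilde O_i$. \Cref{lem:quantum-ordered-search-lower-bound} then gives $Q(\eps)=\Omega(\log N)=\Omega(\log(1/\eps))$ once $\eps$ is small enough that $N\geq2$. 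Since $n\geq2$, the vectors $w^{(i)}$ are well defined in $\sphere^{n-1}$.

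The construction has three steps. First, run $\mathcal L$ and replace each membership query $O_{f_i}$ by the circuit in \Cref{eq:reduction}. The circuit $U$ computes the rank $|\{r:\tan\theta_r<-x_1/x_2\}|$ (with the $\leq$ variant when $x_2<0$, and the special case $x_2=0$). It depends only on the known values $\tan\theta_1<\dots<\tan\theta_N$, not on $i$, so it is a valid oracle-independent unitary. The output register is then corrected classically using the sign of $x_2$. By linearity this reproduces $O_{f_i}$ on superpositions, so the run of $\mathcal L$ is faithful and costs one $\widetilde O_i$ query per membership query.

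Second, decode the output. With probability at least $2/3$, $\mathcal L$ returns $\widehat f$ with $d_{\gamma_n}(\widehat f,f_i)\leq\eps$. By \Cref{eq:hard-family-separation}, $d_{\gamma_n}(f_j,f_r)\geq4\eps$ for $j\neq r$. Disagreement probability satisfies the triangle inequality, so for $j\neq i$ we get $d(\widehat f,f_j)\geq 4\eps-\eps=3\eps>\eps$. Outputting $\arg\min_j d_{\gamma_n}(\widehat f,f_j)$ therefore identifies $i$. This decoding step makes no queries, since it is classical post-processing of the description of $\widehat f$; computational efficiency is irrelevant for a query lower bound.

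The main subtlety is the oracle simulation and the indexing. The paper's family writes $\theta_j$ with hidden index $i$, and one must check that $\mathbf 1_{\{i>m\}}$, with $m$ the prepared rank, equals $\mathbf 1_{\{f_i(x)=1\}}$ exactly, including the boundary convention $\sgn(0)=1$. For $x_2>0$ this follows from strict monotonicity of $\tan$ on $(0,\pi/4]$. For $x_2<0$, dividing by $x_2$ flips the inequality: $f_i(x)=1$ iff $\tan\theta_i\leq -x_1/x_2$, which is why the $\leq$-rank is used and the answer complemented. I would verify these cases carefully, together with the range $m\in\{0,\dots,N\}$, and otherwise the argument is routine.
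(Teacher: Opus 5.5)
Your proposal is correct and follows essentially the same route as the paper: the hard family from \Cref{eq:ordered-search-hard-family}, one $\widetilde O_i$ query per membership query via the rank-preparation circuit of \Cref{eq:reduction}, and nearest-neighbour decoding using the $4\eps$ separation, so that \Cref{lem:quantum-ordered-search-lower-bound} gives $\Omega(\log N)=\Omega(\log(1/\eps))$. Your case check for $x_2<0$ (that $f_i(x)=1$ iff $\tan\theta_i\leq -x_1/x_2$, hence the $\leq$-rank followed by complementation) is exactly the detail the paper's oracle simulation relies on.
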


\begin{proof}
A $T$-query quantum algorithm that solves the homogeneous-LTF task would solve ordered search on $[N]$: encode the hidden index $i\in[N]$ by $f_i$ and simulate each membership query with one query to $\widetilde O_i$ as in \Cref{eq:reduction}. By \Cref{eq:hard-family-separation} and the triangle inequality, an output hypothesis within error $\eps$ of $f_i$ is closer to $f_i$ than to any other function in the hard family, so nearest-neighbor decoding recovers $i$. Hence
\Cref{lem:quantum-ordered-search-lower-bound} gives
$T=\Omega(\log N)=\Omega(\log(1/\eps))$.
\end{proof}

\section{Learning equal-weight \texorpdfstring{$w$}{w} with Boolean Oracle: learning Majority}
\label{sec:boolean}

We now consider a special case of the homogeneous halfspace learning problem studied in the previous sections. 
Suppose that
\[
w=\frac{\mathbf 1_A}{\sqrt{|A|}}
\]
for some unknown set $A\subseteq[n]$ with the promise that $1\leq |A|\leq k$. In the regime $\eps<1/(2\sqrt{k})$, learning $w$ to Euclidean error at most $\eps$ is enough to recover $A$ \emph{exactly}. We further restrict the membership queries to $x\in\{-1,1\}^n$, for which
\[
\sgn(\langle w,x\rangle)
=
\operatorname{MAJORITY}_{|A|}(x_A)
:=\begin{cases}
    1 &\text{ if } \quad |x_A| \geq |A|/2\\
    -1 &\text{ otherwise }
\end{cases}
\]

Therefore, learning an equal-weight halfspace using Boolean membership queries is exactly the problem of learning a Majority-junta with hidden set $A$.\footnote{Note that if we choose instead $x\in\{0,1\}^n$, and set $w_i$ to be all negative with $\theta=0$, then the problem becomes equivalent to combinatorial group testing, whose quantum query complexity is $\Theta(\sqrt{k})$ as was shown in \cite{belovs2015symmetricjuntas}.}

This gives a natural \textit{discrete} version of the halfspace learning problem using membership queries from the previous section. In that setting, the quantum algorithm queries arbitrary points in $\mathbb{R}^n$ and uses gradient estimation. This technique does not apply to the Boolean domain, so we instead investigate the query complexity of this task using the adversary bound.

The problem of junta learning has been studied classically since its introduction in
\cite{angluin1988queries}. For arbitrary $h$ applied to substring $x_A$, under the promise $|A|\leq k\leq n$, its classical query
complexity is at least $\Omega\!\left(k\log(n/k)\right)$
\cite{bshouty1996oracles}. 

We treat the special case where $h$ (as defined in \Cref{def:junta}) is the majority function. The output of $\operatorname{MAJORITY}_k$ is
also Boolean. Learning majority under the promise $|A|=k$ was studied by Belovs~\cite{belovs2015symmetricjuntas}, who gave an $O(k
^{1/4})$ upper bound. To the best of our knowledge, there is no prior lower bound for this problem, which was left as an open problem in
the same paper.
Surprisingly, we show here that, for $k\geq2$, Majority-junta learning under the promise $|A|=k$ has an exponentially better $O(\log k)$ upper bound, which we also show to be tight when $n\geq2k$. We then extend our results to the more general promise $1\leq|A|\leq k$, and prove a tight $\Theta(\log k)$ bound for every $n\geq k\geq2$.
The main result of this section is:

\begin{theorem}[Query complexity of the Majority-junta learning problem]
\label{thm:majority-query-complexity}
Let $k\geq2$.
Under one of the following two promises
\begin{enumerate}[label=\textup{(\roman*)}]
\item $|A|=k$, for every $n\geq2k$
\item $1\leq |A|\leq k$, for every $n\geq k$
\end{enumerate}
the bounded-error quantum query complexity of learning
the Majority-junta is
$\Theta(\log k)$.
\end{theorem}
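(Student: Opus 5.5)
The plan has two independent halves: an upper bound from an explicit feasible solution of the dual adversary SDP \eqref{eq:junta-dual-adversary}, and a lower bound from an explicit feasible solution of the primal SDP \eqref{eq:junta-primal-adversary}. Both are then converted to query bounds by the characterization of \cite{reichardt2011span,lee2011stateconversion}. It suffices to prove the upper bound under promise (ii), since the family $\{A:|A|=k\}$ is a subfamily and restricting a feasible dual solution to a sub-promise keeps it feasible with no larger objective. The lower bound is needed separately under both promises.

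\textbf{Upper bound.} I would look for a dual solution that is invariant under the symmetric group on $[n]$. It will be a low-rank (``intersection-free'') solution in which each $X_S$ is a sum of a few rank-one Gram terms $X_S\llbracket A,B\rrbracket=\sum_\ell \langle u_{S,\ell}(A),u_{S,\ell}(B)\rangle$. The vectors $u_{S,\ell}(A)$ should depend only on the pair $(|A\cap S|,|A\setminus S|)$ and on a ``scale'' parameter $\ell\in\{0,\dots,\lceil\log_2 k\rceil\}$.

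The heuristic behind the scales is the following. A query $S$ distinguishes $A$ from $B$ iff $|A\cap S|-|A|/2$ and $|B\cap S|-|B|/2$ straddle $0$. I would weight the queries $S$ so that at scale $\ell$ the margin $\bigl||A\cap S|-|A|/2\bigr|$ is of order $2^\ell$. I would then choose the coefficients so that each pair $A\neq B$ collects total weight exactly $1$ across scales, while each diagonal entry $X_S\llbracket A,A\rrbracket$ summed over $S$ contributes $O(1)$ per scale. This gives objective $O(\log k)$.

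Concretely, I would first solve the symmetrized SDP numerically for small $k$ and read off the per-scale weights. I would then verify the equality constraints analytically by a hypergeometric-sum computation over $S$, grouped by the orbit type of $(A,B,S)$.

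\textbf{Main obstacle.} This verification step is where I expect the real difficulty. The constraint must hold with \emph{equality} for every pair $A\neq B$, including pairs of different sizes and pairs that differ in only one element. The natural fix is to allow a slack correction at the finest scale $\ell=0$, where the contribution is controlled directly.

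\textbf{Lower bound.} I would embed a chain of $m=\Theta(k)$ instances and mimic the Høyer--Neerbek--Shi ordered-search adversary \cite{HoyerNeerbekShi2002}, taking $\Gamma\llbracket A_i,A_j\rrbracket=1/|i-j|$, whose norm is $\Theta(\log m)$.
\begin{itemize}
\item Under promise (ii) with $n\ge k$, the chain is the prefixes $A_j=[j]$.
\item Under promise (i) with $n\ge 2k$, the chain is the sliding windows $A_j=\{j+1,\dots,j+k\}\subseteq[2k]$.
\end{itemize}
To make the adversary work, I would average $\Gamma$ over the symmetries of $[n]$ that preserve the chain structure. The goal is to show that for every $S$, the matrix $\Gamma\circ\Delta_S$ is supported on a pattern whose norm is $O(1)$. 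For each $S$, the set of pairs $(i,j)$ with $f_{A_i}(S)\neq f_{A_j}(S)$ is determined by the sign changes of the walk $j\mapsto |A_j\cap S|-|A_j|/2$. The averaging should ensure that on average only $O(1)$ sign changes carry weight, so the Hilbert-type matrix restricted to ``crossing'' pairs has bounded norm, as in ordered search. Then $\|\Gamma\|/\max_S\|\Gamma\circ\Delta_S\|=\Omega(\log k)$.

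If this direct route fails, I would instead use the primal SDP numerically to guess a symmetric $\Gamma$ and prove its feasibility via Schur-test bounds on $\Gamma\circ\Delta_S$.
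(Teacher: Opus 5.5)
There is a genuine gap in both halves, and the lower-bound route fails as stated.

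\textbf{Lower bound.} The ordered-search matrix $\Gamma\llbracket A_i,A_j\rrbracket=1/|i-j|$ only works when each query cuts the chain in $O(1)$ places, and Majority queries need not do this. For the prefix chain $A_j=[j]$, take $S=\{2,4,6,\ldots\}$. Then $|A_j\cap S|=\lfloor j/2\rfloor$, so $f_{A_j}(S)=1$ iff $j$ is even, and $\Delta_S$ marks every pair with $i-j$ odd. The all-ones vector then gives $\norm{\Gamma\circ\Delta_S}=\Theta(\log k)$, the same order as $\norm{\Gamma}$, so the adversary ratio is $O(1)$. The sliding windows fail the same way: for odd $k$ take $S$ to be the even elements of $[2k]$; for even $k$, the odd elements of $[k]$ together with the even elements of $\{k+1,\ldots,2k\}$. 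In both cases the window count alternates across the threshold.

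The adversary bound needs $\norm{\Gamma\circ\Delta_S}=O(1)$ for \emph{every} $S$, so ``on average only $O(1)$ sign changes'' is not the right criterion. Moreover, the permutations of $[n]$ mapping such a chain to itself act on it trivially or by reversal, and $1/|i-j|$ is already invariant under both, so your averaging step changes nothing.

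What works is to symmetrize over a whole slice. The paper takes $\mathcal A=\{A\subseteq U:|A|=m\}$ for a known set $U$ with $|U|=2m$. For promise (ii), $2m\le k$. For promise (i), $m=\min\{k,n-k\}$ and the hidden sets are $C\cup A$ with a known core $C$ of size $k-m$. It sets $\Gamma[A,B]=1/(s\binom{m}{s}^2)$ with $s=|A\setminus B|$. Row $A$ of this $\Gamma$ is the Hilbert weight $1/s$ averaged over a random swap chain $A=A_0,A_1,\ldots,A_m$, which exchanges elements of $A$ with elements of $U\setminus A$ one at a time; hence $\norm{\Gamma}=\sum_{s=1}^m 1/s$. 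The key lemma is that, for every $S$, the $\Gamma$-mass from $A$ (with $a=|A\cap S|$) to the sets $B$ with $|B\cap S|=b\neq a$ is at most $1/|b-a|$. This follows from the cycle lemma applied to the walk $s\mapsto|A_s\cap S|$. A weighted Schur test, with weights the inverse square root of the distance from $|A\cap S|$ to the threshold, then gives $\norm{\Gamma\circ\Delta_S}\le 4$. Your fallback sentence points this way but supplies neither the matrix nor this argument.

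\textbf{Upper bound.} Here you give no dual solution at all. The dyadic-scale heuristic neither specifies $\{X_S\}$ nor explains how the equality constraints can hold simultaneously for all pairs $A\neq B$ (all intersection sizes, and nested pairs of different sizes) and all $n$. The ``slack correction at scale $\ell=0$'' is exactly the unresolved part.

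The missing idea is the rank-one intersection-free form $X_S\llbracket A,B\rrbracket=2^{-n}W(|A\cap S|)W(|B\cap S|)$ with $W=L+U$, where $L$ is supported below and $U$ above the Majority threshold. Then only the cross terms $L(a)U(b)+U(a)L(b)$ survive on distinguishing queries, and they vanish on the others. Write $L(a)=\sum_tK_t(a)p_t$ and $U(a)=\sum_tK_t(k-a)q_t$, and use $\mathbb E_S[K_s(|A\cap S|)K_t(|B\cap S|)]=\mathbf 1_{\{s=t\}}\binom{|A\cap B|}{t}$. Every constraint then becomes $2\sum_t(-1)^t\binom{|A\cap B|}{t}p_tq_t=1$, independently of $n$.

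Hence, for odd $k=2m+1$, all constraints follow from $p_0q_0=\frac12$ and $p_tq_t=0$ for $1\le t\le k-1$. Equivalently, you need two polynomials of degree $m$ (the degree is forced by the threshold support) whose root sets partition $\{1,\ldots,k-1\}$. The objective is $\sum_t\binom{k}{t}(p_t^2+q_t^2)$. The mod-$4$ partition gives termwise bounds $O\bigl(k/((t+1)(k-t+1))\bigr)$, which sum to $O(\log k)$. So the logarithm comes from a harmonic sum over Krawtchouk levels, not from margin scales.

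Promise (ii) additionally needs vector-valued correction polynomials that cancel the surviving term $t=|A|$ for nested pairs $A\subsetneq B$, at extra cost $O(\log k)$, plus a parity lift for even $|A|$. Your observation that the upper bound for (ii) implies the one for (i) is correct.
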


\subsection{Our technique}
\label{sec:our-technique}
In this section, we present the technique that led to the family feasible solutions of \Cref{eq:junta-dual-adversary} with cost $O(\log k)$.
The dual construction was inspired by the numerical solutions to a restrictive version of the SDP 
in \Cref{eq:junta-dual-adversary}, which we call the \textit{intersection-free} restriction. 
The restriction serves two purposes:
it makes numerical exploration tractable since the matrix dimension of the SDP is exponentially smaller, 
and the restriction enforces low-rank solutions, 
which reduces the space used by the span program or transducer \cite{czekanski2023robust}.

Before introducing the intersection-free framework, we will first reduce the number of variables of the SDP 
via symmetry
reduction, also called restriction to the
invariant subspace in \cite{bachoc2010invariant}. 
Given a feasible family
$\{X_S\}_{S\subseteq[n]}$, define its group average by
\begin{equation}
\overline X_S\llbracket A,B\rrbracket
=
\frac{1}{n!}\sum_{\pi\in S_n}
X_{\pi(S)}\llbracket \pi(A),\pi(B)\rrbracket\,.
\label{eq:group-averaging}
\end{equation}
Averaging preserves positive semidefiniteness and the adversary constraints,
and does not increase the objective value. In particular, the average of an
optimal solution is optimal and has the same objective value. Furthermore, such a solution satisfies
\begin{equation}
\overline X_{\pi(S)}\llbracket \pi(A),\pi(B)\rrbracket
=
\overline X_S\llbracket A,B\rrbracket.
\label{eq:symmetric-covariance}
\end{equation}
For fixed $S$, each $X_S$ remains a
$\binom{n}{k}\times\binom{n}{k}$ matrix, with entries indexed by pairs
$(A,B)$. The simultaneous orbit of an index triple $(S,A,B)$ is
\[
\mathcal O(S,A,B)
:=
\{(\pi(S),\pi(A),\pi(B)): \pi\in S_n\}\,.
\]
By \Cref{eq:symmetric-covariance}, the entry
$\overline X_S\llbracket A,B\rrbracket$ depends only on this orbit, which is uniquely determined by
\[
|S| \quad |A| \quad |A\cap S| \quad |B| \quad |B\cap S| \quad
|A\cap B| \quad |A\cap B\cap S|\,.
\]
Thus symmetry reduction preserves the optimum and reduces the number of
distinct entries, but not the dimensions of the matrices $X_S$. Note that for the promise $|A|=k$, the variables $|A|$ and $|B|$ are fixed. For this exact promise, we use the following restriction of the orbit reduced adversary SDP, that we call the \emph{intersection-free} restriction:

\begin{definition}
\label{def:intersection-free-witness}
The \emph{intersection-free} junta SDP restricts
\Cref{eq:junta-dual-adversary} to matrices of the form
\begin{equation}
\label{eq:intersection-free-matrices}
X_S\llbracket A,B\rrbracket
=\frac{1}{2^n}Y\llbracket |A\cap S|,|B\cap S|\rrbracket
\qquad Y\succeq0\,.
\end{equation}
\end{definition}

This restriction identifies
all orbits having the same two intersection parameters
$|A\cap S|$ and $|B\cap S|$. Equivalently, for fixed values of these two
parameters, $X_S\llbracket A,B\rrbracket$ is constant as the remaining orbit
parameters vary. As formalized in \Cref{def:intersection-free-witness}, such a
family is parametrized by a single positive semidefinite matrix
$Y\in\mathbb R^{(k+1)\times(k+1)}$. Note that this enforces the rank of the resulting matrices $\{X_S\}_{S\subseteq[n]}$ to be at most $k+1$.
Since this restricts the feasible set of a
minimization problem,
\[
\operatorname{OPT}(\text{original SDP})
=
\operatorname{OPT}(\text{symmetry-reduced SDP})
\leq
\operatorname{OPT}(\text{intersection-free SDP})\,.
\]
Thus every feasible intersection-free solution gives an upper bound on the
original SDP.

For the more general promise $1\leq |A|\leq k$, we enlarge \Cref{def:intersection-free-witness} to matrices $Y$ that also depend on the size of $A$ and $B$. Their entries are
\begin{equation}
\label{eq:size-dependent-intersection-free-matrices}
Y\llbracket
(|A|,|A\cap S|),(|B|,|B\cap S|)
\rrbracket.
\end{equation}

Both versions of the feasible intersection-free solution $Y\succeq0$
can be turned into a feasible solution $\{X_S\}_{S\subseteq[n]}$ by \Cref{eq:intersection-free-matrices}.

The intersection-free method reduces the space of feasible solutions, in a stronger way than symmetry reduction, thus optimality is, in principle, not preserved. However, given the low rank nature of the solutions, one might hope to
retrieve the analytical solutions from the numerical optimization.

\subsection{Learning Majority on \texorpdfstring{$|A|=k$}{|A| = k}}
In this section, for every $k\leq n$, we propose a feasible construction for $h=\text{MAJORITY}_k$, with objective value $O(\log k)$ for odd $k=2m+1$. We use the binary Krawtchouk polynomials, defined as follows.
\begin{definition}[Binary Krawtchouk polynomials]
\label{def:krawtchouk-polynomials}
Let $r\geq0$ and $0\leq t,a\leq r$. The binary Krawtchouk
polynomials are
\[
K_t^{(r)}(a)
=
\sum_{j=0}^t(-1)^j\binom{a}{j}\binom{r-a}{t-j}.
\]
In the convention of~\cite{nomura2012krawtchouk},
$K_t^{(r)}(a)=\binom rt K_t(a;1/2,r)$. We write
$K_t=K_t^{(k)}$. Useful properties of these polynomials are given in
\Cref{lem:krawtchouk-properties}.
\end{definition}

\subsubsection{Rank-one intersection-free solution}

Throughout the rest of the paper, whenever $S\subseteq[n]$ is clear from context, we write
$a:=|A\cap S|$ and $b:=|B\cap S|$, otherwise $a$ and $b$ represent integers of $[k+1]$. Within the intersection-free junta SDP of
Definition~\ref{def:intersection-free-witness}, we further suppose that
$Y$ is a rank-one positive-semidefinite matrix. Equivalently,
\[
Y\llbracket a,b\rrbracket
=W\llbracket a\rrbracket W\llbracket b\rrbracket
\qquad W\in\mathbb{R}^{k+1}\,.
\]

\begin{lemma}[Feasibility]
\label{lem:rank-one-intersection-free-structure}
A feasible $W$ of the above form is obtained by taking
\[
W\llbracket a\rrbracket
=\sum_{t=0}^{k}K_t(a)\bigl(p_t+(-1)^tq_t\bigr)
\]
where $p$ and $q$ are polynomials of degree $m$ whose root sets
partition $\{1,\ldots,k-1\}$, normalized so that
$p_0q_0=\frac12$. This construction is feasible for
every $n\geq k$.
\end{lemma}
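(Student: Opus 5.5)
The plan is to verify the equality constraints of \Cref{eq:junta-dual-adversary} directly. Positive semidefiniteness is automatic: $Y=WW^{\mathsf T}$ is rank one, and each $X_S$ is a nonnegative multiple of $vv^{\mathsf T}$ with $v\llbracket A\rrbracket=W\llbracket|A\cap S|\rrbracket$.

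\textbf{Step 1: rewrite the constraint as an expectation.} Identify $S$ with $x\in\{-1,1\}^n$ and write $g(a):=\operatorname{MAJORITY}_k$ evaluated at weight $a$. Since $k=2m+1$, we have $g(a)=-1$ for $a\leq m$ and $g(a)=+1$ for $a\geq m+1$. Then $\mathbf 1_{\{f_A(S)\neq f_B(S)\}}=\tfrac12(1-g(a)g(b))$. For $A\neq B$ the constraint becomes
\[
\E_S\bigl[W(a)W(b)\bigr]-\E_S\bigl[(gW)(a)\,(gW)(b)\bigr]=2,
\]
with $S$ uniform. Because $A\neq B$ are both of size $k$, the overlap $c:=|A\cap B|$ ranges over $\{0,\dots,k-1\}$, and this holds for every $n\geq k$.

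\textbf{Step 2: pass to the Fourier side via Krawtchouk polynomials.} With $\chi_T(S)=(-1)^{|T\cap S|}$, we have $K_t(|A\cap S|)=\sum_{T\subseteq A,|T|=t}\chi_T(S)$. So any $F(a)=\sum_t \hat F_t K_t(a)$ has Fourier coefficients $\hat F_{|T|}$ on subsets $T\subseteq A$. Parseval then gives
\[
\E_S\bigl[F(|A\cap S|)\,F(|B\cap S|)\bigr]=\sum_{j=0}^{c}\binom{c}{j}\hat F_j^{\,2}.
\]
Apply this to $F=W$ with $\hat W_t=p_t+(-1)^tq_t$, and to $F=gW$ with coefficients $e_t$. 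The constraint becomes $\sum_{j=0}^c\binom cj(\hat W_j^2-e_j^2)=2$ for all $0\leq c\leq k-1$. By binomial inversion this is equivalent to two conditions: $\hat W_0^2-e_0^2=2$, and $\hat W_j^2=e_j^2$ for $1\leq j\leq k-1$.

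\textbf{Step 3 (main obstacle): compute $e_t$.} Write $W=W_p+W_q$ with $W_p(a)=\sum_tK_t(a)p_t$ and $W_q(a)=\sum_tK_t(a)(-1)^tq_t$. I will show $W_p$ is supported on $a\leq m$ and $W_q$ on $a\geq m+1$.

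For $W_p$, use the standard fact that the Krawtchouk transform of a polynomial in $t$ of degree $\leq m$ is supported on weights $a\leq m$. This follows from the generating function $\sum_tK_t(a)z^t=(1-z)^a(1+z)^{k-a}$: applying $(z\,d/dz)^j$ at $z=1$ kills every $a>j$. I would spell this out, possibly via \Cref{lem:krawtchouk-properties}.

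For $W_q$, use the reflection $K_t(k-a)=(-1)^tK_t(a)$. This gives $W_q(a)=\sum_tK_t(k-a)q_t$, which is supported on $k-a\leq m$, i.e.\ $a\geq m+1$.

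Hence $gW=-W_p+W_q$, so $e_t=-p_t+(-1)^tq_t$. This yields
\[
\hat W_t^2-e_t^2=4(-1)^tp_tq_t .
\]

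\textbf{Step 4: conclude.} For $1\leq t\leq k-1$, $t$ is a root of $p$ or of $q$, since their root sets partition $\{1,\dots,k-1\}$. So $\hat W_t^2-e_t^2=0$ there. At $t=0$ it equals $4p_0q_0=2$ by the normalization. Both conditions of Step 2 hold, which gives feasibility.

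The delicate point is the support claim in Step 3. It needs $\deg p,\deg q\leq m$, which is consistent with the partition: the two root sets together cover $k-1=2m$ points, so each has exactly $m$ roots. I would also double-check the convention $\operatorname{MAJORITY}$ at $a=m$ versus $m+1$ for odd $k$, since the support split relies on it.
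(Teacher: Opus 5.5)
Your proof is correct and is essentially the paper's argument: since $W=L+U$ with $gW=-L+U$, your quantity $W(a)W(b)-(gW)(a)(gW)(b)$ equals $2\bigl(L(a)U(b)+U(a)L(b)\bigr)$, so your Parseval computation is exactly the paper's cross-term evaluation via \Cref{lem:krawtchouk-properties}(v), and it rests on the same support separation of $L$ and $U$ (you derive it from the generating function, the paper from duality and orthogonality (ii),(iv)). The conclusion also uses the same two facts, $p_tq_t=0$ for $1\leq t\leq k-1$ and $p_0q_0=\tfrac12$; you skip the paper's preliminary necessity discussion and verify sufficiency directly.
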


\begin{proof}

For this construction, separate
$W\llbracket a\rrbracket=L(a)+U(a)$ by taking
$L = W \mathbf{1}_{a\leq m}$ and $U = W \mathbf{1}_{a > m}$.
To verify feasibility, take $a,b$ such that $h(a)\neq h(b)$. Then one of
$a,b$ lies above $m$ and the other lies below, so the $L(a)L(b)$ and
$U(a)U(b)$ terms vanish. By symmetry between $A$ and $B$, the
feasibility condition becomes
\begin{equation}
\label{eq:intersection-free-feasibility-sum}
\begin{aligned}
\sum_{S:\,f_A(S)\neq f_B(S)}X_S\llbracket A,B\rrbracket
&=\mathbb{E}_S\left[
Y\llbracket |A\cap S|,|B\cap S|\rrbracket
\mathbf{1}_{\{f_A(S)\neq f_B(S)\}}
\right]\\
&=\mathbb{E}_S\left[
W\llbracket a\rrbracket W\llbracket b\rrbracket
\mathbf{1}_{\{f_A(S)\neq f_B(S)\}}
\right]\\
&=\mathbb{E}_S[U(a)L(b)+L(a)U(b)]\\
&=2\mathbb{E}_S[U(a)L(b)]
=1
\end{aligned}\,.
\end{equation}

This feasibility condition is best expressed using the Krawtchouk
polynomials introduced in \Cref{def:krawtchouk-polynomials}. Without loss of generality, we can take $L$ and $U$ in the
Krawtchouk basis and write
\[
L(a)=\sum_{t=0}^{k}K_t(a)p_t
\qquad
U(a)=\sum_{t=0}^{k}K_t(k-a)q_t\,.
\]
By \Cref{lem:krawtchouk-properties}\textup{(i)},
$
W\llbracket a\rrbracket=L(a)+U(a)
=\sum_{t=0}^{k}K_t(a)\bigl(p_t+(-1)^tq_t\bigr)
$. Using \Cref{lem:krawtchouk-properties}\textup{(i)} and
\Cref{lem:krawtchouk-properties}\textup{(v)}, the feasibility
condition becomes
\begin{align}
\mathbb{E}_S[L(a)U(b)]
&=
\sum_{s,t=0}^{k}
p_sq_t
\mathbb{E}_S
\left[
K_s(a)K_t(k-b)
\right] \notag\\
&=
\sum_{t=0}^{k}
(-1)^t
\binom{|A\cap B|}{t}
p_tq_t\,.
\label{eq:LU-expectation}
\end{align}
We impose the following sufficient condition for every
$0\leq d\leq k-1$:
\begin{equation}
2\sum_{t=0}^{k}(-1)^t\binom{d}{t}p_tq_t=1\,.
\label{eq:rank-one-feasibility}
\end{equation}
Indeed, $|A\cap B|\leq k-1$ for any distinct
$k$-subsets, therefore $d$ is ranging over all possible intersection sizes $0\leq |A\cap B|\leq k-1$. 
\Cref{eq:LU-expectation} then shows that these conditions imply all
adversary feasibility constraints for every $n\geq k$.
Binomial inversion in
\Cref{eq:rank-one-feasibility} gives
$p_0q_0=\frac12$ and $p_1q_1=\cdots=p_{k-1}q_{k-1}=0$. When seen as a polynomial,
$\deg (pq)\leq2m=k-1$, therefore it can be written as 
\[
p_tq_t=\frac{1}{2(k-1)!}\prod_{j=1}^{k-1}(t-j)
\]
and hence $p_kq_k=\frac12$. Therefore
\begin{equation}
\label{eq:pq-vanishing}
p_tq_t=
\begin{cases}
\frac12&t\in\{0,k\}\\
0&1\leq t\leq k-1
\end{cases}\,.
\end{equation}

The threshold support of $L$ and $U$ implies
that both polynomials have degree at most $m$, by \Cref{par:krawtchouk-threshold-expansions}. Since there are
$k-1=2m$ interior points, $p$ and $q$ must each have exactly $m$
roots, and their root sets must partition $\{1,\ldots,k-1\}$.
Conversely, every root partition of this kind, with the normalization
$p_0q_0=\frac12$, gives polynomials satisfying
\Cref{eq:pq-vanishing}. 
Define
\[
L(a)=\sum_{t=0}^{k}K_t(a)p_t
\qquad
U(a)=\sum_{t=0}^{k}K_t(k-a)q_t\,.
\]
By Krawtchouk duality and polynomial orthogonality,
\Cref{lem:krawtchouk-properties}\textup{(ii),(iv)},
\[
L(a)=0\quad(a>m)
\qquad
U(a)=0\quad(a\leq m)
\]
because $\deg p,\deg q\leq m$ and $k-a>m$ whenever $a\leq m$.
Hence the support separation used above also holds in the converse
direction, and the resulting $W=L+U$ is feasible.
\end{proof}

\begin{lemma}[Objective value]
\label{lem:rank-one-objective}
The objective value of the construction in
\Cref{lem:rank-one-intersection-free-structure} is
\[
\max_A\sum_{S\subseteq[n]}X_S\llbracket A,A\rrbracket
=\sum_{t=0}^{k}\binom{k}{t}\left(p_t^2+q_t^2\right)\,.
\]
\end{lemma}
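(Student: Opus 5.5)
The diagonal entries are $X_S\llbracket A,A\rrbracket=2^{-n}W\llbracket a\rrbracket^2$ with $a=|A\cap S|$. Hence
\[
\sum_{S\subseteq[n]}X_S\llbracket A,A\rrbracket=\mathbb{E}_S\bigl[W\llbracket |A\cap S|\rrbracket^2\bigr],
\]
where $S$ is a uniformly random subset of $[n]$. For fixed $A$ with $|A|=k$, the variable $a=|A\cap S|$ is Binomial$(k,1/2)$, regardless of $n$ and of which $A$ is chosen. So the maximum over $A$ is attained by every $A$, and it equals $2^{-k}\sum_{a=0}^k\binom{k}{a}W\llbracket a\rrbracket^2$. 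It remains to compute this weighted norm of $W$.

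Next, use the decomposition $W=L+U$ from the proof of \Cref{lem:rank-one-intersection-free-structure}. There $L$ is supported on $a\leq m$ and $U$ on $a>m$, so $L(a)U(a)=0$ for every $a$. Therefore
\[
\mathbb{E}_a[W^2]=\mathbb{E}_a[L(a)^2]+\mathbb{E}_a[U(a)^2].
\]
Now expand $L(a)=\sum_t K_t(a)p_t$ and apply the binary Krawtchouk orthogonality relation from \Cref{lem:krawtchouk-properties}:
\[
2^{-k}\sum_a\binom{k}{a}K_s(a)K_t(a)=\binom{k}{t}\delta_{st}.
\]
This is also the $d=k$ case of the expectation formula behind \Cref{eq:LU-expectation}, namely $\mathbb{E}_S[K_s(a)K_t(a)]$ with $A=B$. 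It gives
\[
\mathbb{E}_a[L^2]=\sum_t\binom{k}{t}p_t^2.
\]
For $U(a)=\sum_t K_t(k-a)q_t$, substitute $a\mapsto k-a$; this preserves the binomial weights, so the same relation yields $\mathbb{E}_a[U^2]=\sum_t\binom{k}{t}q_t^2$. Adding the two gives the stated formula.

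The main point needing care is the disjoint-support step, which removes the cross term $2\,\mathbb{E}_a[LU]$. If one instead expands $W=\sum_tK_t(a)(p_t+(-1)^tq_t)$ directly, orthogonality gives $\sum_t\binom{k}{t}(p_t+(-1)^tq_t)^2$. Its cross term is $2\sum_t(-1)^t\binom{k}{t}p_tq_t$, and by \Cref{eq:pq-vanishing} this equals $2\bigl(\tfrac12+(-1)^k\tfrac12\bigr)$, which vanishes because $k=2m+1$ is odd. This agrees with the disjoint-support argument and serves as a consistency check. The remaining step is to cite the exact normalization of the orthogonality relation from \Cref{lem:krawtchouk-properties}, which is routine.
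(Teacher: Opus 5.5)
Your proof is correct and follows the paper's argument. You reduce to $\mathbb{E}_S[W\llbracket a\rrbracket^2]$ with $a\sim\operatorname{Bin}(k,1/2)$ independently of $A$, split $W^2=L^2+U^2$ by disjoint supports, and apply Krawtchouk orthogonality, using the reflection $a\mapsto k-a$ for $U$. Your direct-expansion check is also valid: the cross term $2\sum_t(-1)^t\binom{k}{t}p_tq_t$ vanishes by \Cref{eq:pq-vanishing} because $k$ is odd, which gives an alternative derivation that does not need the support separation at all.
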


\begin{proof}
A feasible solution from \Cref{lem:rank-one-intersection-free-structure} has objective value
\begin{align*}
\max_A\sum_{S\subseteq[n]}X_S\llbracket A,A\rrbracket &= \max_A \sum_{S\subseteq[n]} \frac{1}{2^n}W\llbracket|A\cap S|\rrbracket^2\\
&= \max_A \mathbb{E}_S\left[W\llbracket a\rrbracket ^2\right]\\
&= \mathbb{E}_S\left[W\llbracket a\rrbracket^2\right]
\end{align*}

The last equality follows from the fact that for any $A\in\binom{[n]}{k}$, the random variable
$a=|A\cap S|$ has
the same $\operatorname{Bin}(k,1/2)$ distribution (for uniformly random $S\subseteq[n]$). 
Since $L$ and $U$ have disjoint supports
$
W\llbracket a\rrbracket^2=L(a)^2+U(a)^2
$, and using \Cref{lem:krawtchouk-properties}\textup{(iii)},
\begin{align*}
\mathbb{E}_S[L(a)^2]
&=
2^{-k}\sum_{a=0}^{k}\binom{k}{a}
\left(\sum_{s=0}^{k}K_s(a)p_s\right)
\left(\sum_{t=0}^{k}K_t(a)p_t\right)\\
&=\sum_{t=0}^{k}\binom{k}{t}p_t^2\,.
\end{align*}
Similarly, $\mathbb{E}_S[U(a)^2]=\sum_{t=0}^{k}\binom{k}{t}q_t^2$.
Summing the two identities proves the claim.
\end{proof}

Now that we have reduced the problem of finding a feasible intersection-free and rank one solution to \Cref{eq:junta-dual-adversary},
to finding polynomials $p$ and $q$ of degree $m$ that satisfy \Cref{lem:rank-one-intersection-free-structure}, we need to find good candidates
with respect to the quantity in \Cref{lem:rank-one-objective}. We define a special instance of such polynomials, that we call \textit{mod-4 polynomials}.

\begin{definition}[The mod-$4$ polynomials]
\label{def:mod4}
Let $k=2m+1$ be odd. Define
\begin{align*}
p(t)
&:=\prod_{\substack{r=0,1\pmod 4\\1\leq r\leq k-1}}
(t-r),
&
q(t)
&:=\prod_{\substack{r=2,3\pmod 4\\1\leq r\leq k-1}}
(t-r).
\end{align*}
Their normalized versions are
\[
P_k(t):=\frac{p(t)}{\sqrt2\,p(0)},
\qquad
Q_k(t):=\frac{q(t)}{\sqrt2\,q(0)}.
\]
\end{definition}

We introduce the following lemma, that will be useful for proving the $O(\log k)$ for this construction, and also the general construction in \Cref{lem:unknown-objective}.

\begin{lemma}[$O(\log k)$ cost]
\label{lem:majority-asymptotic}
Let $k\geq3$ be odd, and let $P_k,Q_k$ be as defined above. There
is a universal constant
$C>0$ such that, for every $0\leq t\leq k$,
\[
\binom{k}{t}\bigl(P_k(t)^2+Q_k(t)^2\bigr)
\leq C\frac{k}{(t+1)(k-t+1)}.
\]
\end{lemma}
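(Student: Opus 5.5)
The plan is to compare $P_k(t)^2$ with the ``full'' product over all interior roots, whose square root essentially equals $\binom{k}{t}^{-1}$. The endpoints $t=0$ and $t=k$ are handled directly. At $t=0$ we have $P_k(0)^2+Q_k(0)^2=1$, which is within the claimed bound. At $t=k$, pair each root $r$ with $k-r$. This map exchanges residues $0\leftrightarrow 0$, $1\leftrightarrow 3$, $2\leftrightarrow 2$, $3\leftrightarrow 1 \pmod 4$ when $k\equiv1\pmod 4$, with an analogous rule for $k\equiv 3$. From this, $|p(k)/p(0)|$ and $|q(k)/q(0)|$ are explicit ratios of products of consecutive pairs, and I would bound them directly. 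For an interior integer $t$, $pq(t)=0$, so exactly one of $P_k(t),Q_k(t)$ is nonzero. By the symmetry $p\leftrightarrow q$ (shift residues by $2$) it suffices to treat $t\in R_q$, i.e.\ $t\equiv 2,3\pmod 4$, and bound $P_k(t)^2=\tfrac12\prod_{r\in R_p}\bigl(\tfrac{t-r}{r}\bigr)^2$, where $R_p$ is the root set of $p$.

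First, take logarithms: $2\log|P_k(t)|+\log 2=2\sum_{r\in R_p}\bigl(\log|t-r|-\log r\bigr)$. Write $F(t):=\sum_{1\le r\le k-1,\,r\ne t}\log|t-r|-\sum_{r\le k-1,\, r\ne t}\log r$. The exact identity $\prod_{r\ne t}|t-r|/\prod_{r\ne t}r=(t-1)!(k-1-t)!\cdot t/(k-1)!$ gives $e^{F(t)}=\binom{k}{t}^{-1}\frac{k}{k-t}$ up to the harmless missing factor $t$ (the term $r=t$ lies in $R_q$, so nothing is lost on the $p$ side). Second, split $2\sum_{R_p}=F(t)+D(t)$, where $D(t)=\sum_{R_p}(\cdot)-\sum_{R_q\setminus\{t\}}(\cdot)$ is the signed ``alternating'' sum over the blocks $\{4j,4j+1\}$ versus $\{4j+2,4j+3\}$. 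Then $\binom{k}{t}P_k(t)^2\asymp \frac{k}{k-t}\,e^{D(t)}$, and the lemma reduces to showing $e^{D(t)}\lesssim \frac{1}{t+1}$ (up to constants and the symmetric factor).

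Third, and this is the main obstacle, I would evaluate $D(t)$ by splitting it into three pieces: the left distances $d=t-r$ for $r<t$, the right distances $d=r-t$ for $r>t$, and the denominators $\log r$. In each piece, a block of four consecutive values contributes $\log x+\log(x+1)-\log(x+2)-\log(x+3)=-4/x+O(x^{-2})$ or its negative, depending on which residue pair is assigned to $p$. So each piece is $\pm\log(\text{its length})+O(1)$. The lengths are $t$, $k-t$ and $k$; the singular near-diagonal terms $d=1,2,3$ contribute only $O(1)$. The sign of each drift depends on $t\bmod 4$ and on whether the distances run in increasing or decreasing order of $r$. Working through the case $t\equiv 2$ (where the $p$-roots adjacent to $t$ sit at $d=1,2$ on the left and $d=2,3$ on the right) should give $D(t)=-\log(t+1)-\log(k-t+1)+\log k+O(1)$, possibly with exponents $\pm\tfrac12$ coming out of the $2\sum$ versus $F$ bookkeeping. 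Combined with the previous step, this yields $\binom{k}{t}P_k(t)^2\le C\frac{k}{(t+1)(k-t+1)}$. I would first do this bookkeeping with the Euler--Maclaurin estimate $\sum_{j\le J}1/j=\log J+O(1)$. If the exponents do not come out exactly right, I would use the explicit Gamma-function form: each of the three pieces is a ratio like $\Gamma(J+\tfrac14)\Gamma(J+\tfrac12)/(\Gamma(J+\tfrac34)\Gamma(J+1))$, and Gautschi's inequality then gives the powers $J^{\pm1/2}$ exactly. Finally, the case $t\equiv 3$ and the $Q_k$ cases follow by the same computation or by the reflection $t\mapsto k-t$.
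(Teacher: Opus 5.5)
Your reduction is sound: $e^{F(t)}=\binom{k}{t}^{-1}\frac{k}{k-t}$ holds exactly, so $\binom{k}{t}P_k(t)^2=\frac{k}{2(k-t)}e^{D(t)}$. For $t\in R_q$ the lemma is therefore equivalent to $D(t)\le-\log(t+1)+O(1)$. However, the evaluation of $D(t)$, which carries all the content, is only sketched, and the mechanism you sketch is wrong.

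\emph{Where the logarithm actually comes from.} A block of four signs produces a $\mp1/i$ drift only in the phases ${+}{+}{-}{-}$ and ${-}{-}{+}{+}$. In the phases ${+}{-}{-}{+}$ and ${-}{+}{+}{-}$, where the roots of one polynomial are outermost, the block contributes only $O(i^{-2})$. When a drift is present, the one to three unpaired terms at the end of the range cancel it, or even reverse its sign, depending on the length modulo $4$. Take your own test case $k=4j+1$, $t\equiv2\pmod 4$:
\begin{itemize}
\item The left distances have pattern ${+}{+}{-}{-}$ and end at $d=t-1\equiv1\pmod4$, so their sum is $O(1)$.
\item The right distances have pattern ${-}{+}{+}{-}$, so their sum is also $O(1)$.
\item The denominators give exactly $\log|q(0)/p(0)|-\log t$, and $|q(0)/p(0)|=\prod_{i<j}\frac{(4i+2)(4i+3)}{(4i+1)(4i+4)}=\Theta(1)$.
\end{itemize}
Hence $D(t)=-\log t+O(1)$. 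The factor $1/t$ comes from the single omitted denominator $r=t$, not from three logarithmic drifts over lengths $t$, $k-t$, $k$.

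\emph{Why your predicted formula fails.} Your predicted $D(t)=-\log(t+1)-\log(k-t+1)+\log k+O(1)$ is wrong for $t>k/2$. It is symmetric under $t\mapsto k-t$, whereas the exact symmetry $|P_k(k-t)|=|P_k(t)|$ for $k\equiv1\pmod 4$ forces $D(t)-D(k-t)=\log\frac{k-t}{t}$. It also would not give the lemma: inserted into $\frac{k}{k-t}e^{D(t)}$ it yields $\frac{k^2}{(t+1)(k-t)(k-t+1)}$, which is of order $k$ at $t=k-2$. So ``combined with the previous step, this yields the bound'' does not hold.

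\emph{What is missing, and how the paper supplies it.} You need a proof that the distance part of $D(t)$ is bounded above, which is equivalent to $p(t)^2\lesssim\prod_{r\ne t}|t-r|=(t-1)!\,(k-1-t)!$, together with $|p(0)|\asymp|q(0)|$. The paper gets both without any asymptotics. It pairs the roots of $p$ as the outer points $4i+1,4i+4$ of each block (analogously for $q$) and uses $(a(a+3))^2\le a(a+1)(a+2)(a+3)$ for integers $a\notin\{-2,-1\}$, bounding the one block containing $t$ by $4$. The same pairing at $t=0$, combined with $|p(0)q(0)|=(k-1)!$, gives $p(0)^2,q(0)^2\asymp(k-1)!$. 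Your Gamma-function fallback could also close the gap, but only once the phases and endpoint terms are tracked correctly, and at that point it reproduces this pairing.

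\emph{A smaller slip.} For $k\equiv1\pmod4$, $r\mapsto k-r$ sends residues $0\leftrightarrow1$ and $2\leftrightarrow3$; the map you wrote is the one for $k\equiv0$. The correct map preserves $R_p$ and $R_q$, so $|p(k)|=|p(0)|$ and $|q(k)|=|q(0)|$ follow at once.
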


\begin{proof}
We give the argument for $k\equiv1\pmod4$; the case
$k\equiv3\pmod4$ is identical after exchanging the two endpoint
patterns. Write $k=4j+1$. The factors in $p$
can be paired as $(4i+1,4i+4)$, so
\[
p(t)
=\prod_{i=0}^{j-1}a_i(a_i+3),
\qquad a_i:=t-(4i+4).
\]
We use
\begin{equation}
\label{eq:four-factor-pairing}
(a(a+3))^2\leq a(a+1)(a+2)(a+3),
\end{equation}
which holds for $a\geq0$ or $a\leq-3$. The only exceptional
integer values are $a=-2,-1$, which appear only once by the definition of $p$ and $q$. Upper-bounding their square by $4$ yields
\begin{equation}
\label{eq:short-p-bound}
p(t)^2
\leq4(t-1)!(k-t-1)!.
\end{equation}
The same pairing for $q$ gives
\begin{equation}
\label{eq:short-q-bound}
q(t)^2
\leq4(t-1)!(k-t-1)!.
\end{equation}
Applying the pairing at $t=0$ and using
$|p(0)q(0)|=(k-1)!$ gives
\[
\frac{(k-1)!}{4}\leq p(0)^2,q(0)^2\leq4(k-1)!.
\]
Thus \Cref{eq:short-p-bound,eq:short-q-bound} imply
\[
P_k(t)^2+Q_k(t)^2
\leq
C\frac{(t-1)!(k-t-1)!}{(k-1)!},
\qquad 1\leq t\leq k-1.
\]
Multiplying by $\binom{k}{t}$ gives
\[
\binom{k}{t}\bigl(P_k(t)^2+Q_k(t)^2\bigr)
\leq C\frac{k}{t(k-t)}
\leq C'\frac{k}{(t+1)(k-t+1)}
\]
for $1\leq t\leq k-1$.
At $t=0$, the definitions give
$P_k(0)^2+Q_k(0)^2=1$. If $k\equiv1\pmod4$, then
$|p(k)|=|p(0)|$ and $|q(k)|=|q(0)|$, so
$P_k(k)^2+Q_k(k)^2=1$. If $k\equiv3\pmod4$, then
$|p(k)|=|q(0)|$ and $|q(k)|=|p(0)|$. Thus the bounds on
$p(0)$ and $q(0)$ above give
$P_k(k)^2+Q_k(k)^2=O(1)$. Finally, for $t\in \{0,k\}$, $\binom{k}{t} = 1$, which proves the lemma for the endpoints as $\frac{k}{k+1} = \Theta(1)$.
\end{proof}

\subsection{Learning Majority on \texorpdfstring{$1\leq |A|\leq k$}{1 <= |A| <= k}}
\label{sec:majority-unknown-size}

In this section, for every $k\leq n$, we generalize the previous problem for any subset $A$ of size \emph{at most}~$k$, instead of the previous promise of size exactly~$k$. 
The dual adversary SDP of this problem is the same as
\Cref{eq:junta-dual-adversary}, with $A$ and $B$ being any subsets such that $1 \leq |A|,|B| \leq k$. 
Since the hidden subsets are of unknown size, we use the
intersection-free framework introduced in
\Cref{eq:size-dependent-intersection-free-matrices}. Now we want to use the previous construction, and tailor it to the general promise.

For each $m\geq0$, set $P_m:=P_{2m+1}$ and $Q_m:=Q_{2m+1}$,
and let $W_m$ be the vector from
\Cref{lem:rank-one-intersection-free-structure} with
$p_t=P_m(t)$ and $q_t=Q_m(t)$. For
$|A| = 2m+1$ and $|B| = 2\ell+1$, our first idea is to use
\[
Y\llbracket(|A|,|A\cap S|),(|B|,|B\cap S|)\rrbracket
=W_m\llbracket |A\cap S|\rrbracket
W_\ell\llbracket |B\cap S|\rrbracket\,.
\]
However, in the previous $|A|=|B|=k$ case, every $A\neq B$ satisfied $|A\cap B| \leq k-1$. 
Here however, for $A \subsetneq B$, we have $m<\ell$ and
$|A \cap B| = |A| = 2m+1$, therefore the cancellation in
\Cref{eq:pq-vanishing} is no longer guaranteed. Indeed, by the same argument as in \Cref{eq:LU-expectation}, the feasibility constraint becomes
\begin{equation}
\label{eq:unknown-naive-feasibility}
\sum_{t=0}^{2m+1}(-1)^t\binom{2m+1}{t}
\bigl(P_m(t)Q_\ell(t)+Q_m(t)P_\ell(t)\bigr)=1\,.
\end{equation}
The $t=0$ term is $1$. For $1\leq t\leq2m$, the special mod $4$ construction in~\Cref{def:mod4} gives $P_m(t)=P_\ell(t)=0$ when $t=0,1\pmod4$, and
$Q_m(t)=Q_\ell(t)=0$ when $t=2,3\pmod4$, making the summand equal to zero. At $t=2m+1$, however, exactly one of the two terms in the summand is nonzero: if $m$ is even, $P_\ell(2m+1)=0$, while if $m$ is odd, $Q_\ell(2m+1)=0$. Depending on $m$, the sum reduces to
$1-\delta_{m\ell}$, where
\begin{equation}
\label{eq:unknown-endpoint-defect}
\delta_{m\ell}=
\begin{cases}
P_m(2m+1)Q_\ell(2m+1)&m\text{ even}\\
Q_m(2m+1)P_\ell(2m+1)&m\text{ odd}
\end{cases}
\end{equation}
Since $\delta_{m\ell}\neq0$, feasibility fails. We propose the following way of removing this unwanted defect.

\subsubsection{Engineering new polynomials}

Fix the layers $0\leq m\leq M$, where $k=2M+1$, and define the $(M+1)$-dimensional Hilbert space
\[
\mathcal H=\operatorname{span}_{\mathbb R}
\{e_*,e_0,\ldots,e_{M-1}\}
\]
where the vectors $\{e_*,e_0,\dots,e_{M-1}\}$ form an orthonormal basis. For each
$\ell$, we define $\mathcal H$-valued polynomials
$p_\ell,q_\ell$. Their coefficients on $e_*$ are
\[
\langle p_\ell(t),e_*\rangle=P_\ell(t)
\qquad
\langle q_\ell(t),e_*\rangle=Q_\ell(t)\,.
\]
Thus the $e_*$ coordinate contains the polynomials defined in \Cref{def:mod4}, and the
feasibility equation \Cref{eq:unknown-naive-feasibility} gives the
defect $\delta_{m\ell}$ in
\Cref{eq:unknown-endpoint-defect}. 
We use the other coordinates to correct that defect. Fix $m<\ell$ and define
\begin{equation}
\label{eq:unknown-Cml}
C_{m\ell}(t)
=
\frac{t}{t-(2m+2)}
\begin{cases}
Q_\ell(t)&m\text{ even}\\
P_\ell(t)&m\text{ odd}
\end{cases}\,.
\end{equation}
This definition is valid because $2m+2\leq2\ell$ and the mod-$4$ root
pattern makes $2m+2$ a root of $Q_\ell$ when $m$ is even and of
$P_\ell$ when $m$ is odd. Thus $t-(2m+2)$ divides the chosen polynomial. 
Following their definition in \Cref{def:mod4}, the roots of $P_{\ell}$ and $Q_{\ell}$ all have multiplicity one. Therefore dividing by $t-(2m+2)$ removes the zero at $2m+2$\, and
multiplying by $t$ introduces a zero at $0$. 
Hence $C_{m\ell}$ has the
same degree and all the same remaining roots. This can be viewed as
\[
\begin{array}{ccc}
C_{m\ell}
&&
\begin{cases}
Q_\ell&m\text{ even}\\
P_\ell&m\text{ odd}
\end{cases}
\\[4pt]
\underbrace{0}_{\text{new root}}
&\longleftrightarrow&
\underbrace{2m+2}_{\text{old root}}
\end{array}\,.
\]
For $0\leq j,\ell<M$, define the remaining coordinates by
\begin{equation}
\label{eq:unknown-vector-polynomials}
\begin{aligned}
\langle p_\ell(t),e_j\rangle
&=
\begin{cases}
\dfrac{C_{j\ell}(t)}{2j+1} & j<\ell\text{ and }j\text{ odd},\\[5pt]
P_\ell(t) & j=\ell\text{ and }\ell\text{ even},\\
0 & \text{otherwise},
\end{cases}
\qquad
\langle q_\ell(t),e_j\rangle
&=
\begin{cases}
\dfrac{C_{j\ell}(t)}{2j+1} & j<\ell\text{ and }j\text{ even},\\[5pt]
Q_\ell(t) & j=\ell\text{ and }\ell\text{ odd},\\
0 & \text{otherwise}.
\end{cases}
\end{aligned}
\end{equation}

\begin{lemma}[Feasibility]
\label{lem:unknown-feasibility}
Let $p_m,q_m$ be the polynomials in
\Cref{eq:unknown-vector-polynomials}, and set
\[
W_m(a)
=
\sum_{t=0}^{2m+1}K_t^{(2m+1)}(a)
\bigl(p_m(t)+(-1)^tq_m(t)\bigr)\,.
\]
For odd-sized hidden sets $|A|=2m+1$ and $|B|=2\ell+1$ of size at most
$k$, define the matrices
\[
X_S\llbracket A,B\rrbracket
=
\frac{1}{2^n}\left\langle
W_m(|A\cap S|),W_\ell(|B\cap S|)
\right\rangle\,.
\]
Then $\{X_S\}_{S\subseteq[n]}$ is a feasible solution of
\Cref{eq:junta-dual-adversary}.
\end{lemma}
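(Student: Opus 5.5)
The plan is to reuse the structure of the proof of \Cref{lem:rank-one-intersection-free-structure}, now for $\mathcal H$-valued vectors. Positive semidefiniteness is immediate: for fixed $S$, $X_S$ is $2^{-n}$ times the Gram matrix of the vectors $W_{m(A)}(|A\cap S|)\in\mathcal H$. It therefore remains to verify the equality constraints. For $|A|=2m+1$, I would split $W_m=L_m+U_m$ with
\[
L_m(a)=\sum_t K^{(2m+1)}_t(a)\,p_m(t),\qquad U_m(a)=\sum_t K^{(2m+1)}_t(2m+1-a)\,q_m(t).
\]
By \Cref{lem:krawtchouk-properties}\textup{(ii),(iv)}, applied coordinatewise, $L_m$ vanishes for $a>m$ and $U_m$ vanishes for $a\le m$, provided every coordinate of $p_m,q_m$ has degree at most $m$. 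This degree condition holds because $P_m,Q_m$ have degree $m$, and $C_{jm}$ has the same degree as $Q_m$ or $P_m$ (one root is traded, $2j+2\mapsto 0$).

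Hence, exactly as before, whenever $f_A(S)\ne f_B(S)$ only the cross terms survive. I would then invoke the size-dependent version of \Cref{lem:krawtchouk-properties}\textup{(v)}, namely $\mathbb E_S[K^{(|A|)}_s(a)K^{(|B|)}_t(|B|-b)]=\delta_{st}(-1)^t\binom{d}{t}$ with $d=|A\cap B|$. This reduces the constraint for $|A|=2m+1$, $|B|=2\ell+1$ to
\[
\sum_{t}(-1)^t\binom{d}{t}\Bigl(\langle p_m(t),q_\ell(t)\rangle+\langle q_m(t),p_\ell(t)\rangle\Bigr)=1 .
\]
By symmetry of the constraint in $A,B$, it suffices to treat $m\le\ell$.

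\textbf{Case $m=\ell$, $A\ne B$.} Here $d\le 2m$. In the $e_m$ coordinate only one of $p_m,q_m$ is nonzero, according to the parity of $m$. In each $e_j$ with $j<m$, $p_m$ is supported on odd $j$ and $q_m$ on even $j$. So all correction coordinates contribute zero, and the $e_*$ coordinate gives $2\sum_t(-1)^t\binom dt P_m(t)Q_m(t)=1$ by \Cref{eq:pq-vanishing}.

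\textbf{Case $m<\ell$.} Here $d\le 2m+1$. The $e_*$ coordinate gives $1$ when $d\le 2m$, because the interior terms vanish by the mod-$4$ root pattern and $\binom d{2m+1}=0$. It gives $1-\delta_{m\ell}$ when $d=2m+1$. For the correction coordinates, $p_m$ and $q_m$ vanish on $e_j$ for $j>m$. On $e_j$ with $j<m$, the odd/even supports of $p_m$ versus $q_\ell$, and of $q_m$ versus $p_\ell$, are disjoint. The only surviving coordinate is $e_m$. For $m$ even it contributes
\[
\sum_t(-1)^t\binom dt P_m(t)\,\frac{C_{m\ell}(t)}{2m+1},
\]
and for $m$ odd the same expression with $Q_m$ in place of $P_m$.

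I then check that this product vanishes for $0\le t\le 2m$. The factor $C_{m\ell}$ has a root at $0$ and keeps all roots of $Q_\ell$ (resp.\ $P_\ell$) in $[1,2m]$, namely those $\equiv 2,3$ (resp.\ $0,1$) mod $4$. The factor $P_m$ (resp.\ $Q_m$) kills the complementary residues in $[1,2m]$. So only $t=2m+1$ survives. Using $C_{m\ell}(2m+1)=-(2m+1)Q_\ell(2m+1)$ (resp.\ with $P_\ell$), this term equals $\binom{d}{2m+1}\delta_{m\ell}$. It therefore exactly cancels the defect when $d=2m+1$ and is $0$ otherwise, so the sum is $1$ in both subcases.

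The main obstacle is this last bookkeeping step. It requires confirming the sign of the surviving $t=2m+1$ term against the sign of $\delta_{m\ell}$ in \Cref{eq:unknown-endpoint-defect}, including which of $P_\ell,Q_\ell$ vanishes at $2m+1$ for each parity of $m$. It also requires confirming that the parity convention in \Cref{eq:unknown-vector-polynomials} places the correction against the nonzero $e_m$ entry of the smaller layer. A secondary point I would verify is the mixed-size Krawtchouk expectation identity, by conditioning on $A\cap B$ and using independence of the coordinates of $S$.
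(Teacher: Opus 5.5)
Your proposal is correct and follows essentially the paper's argument: a Gram matrix gives PSD, Krawtchouk duality and orthogonality give the $L_m/U_m$ support separation, the constraint reduces to $\sum_t(-1)^t\binom{d}{t}\bigl(\langle p_m(t),q_\ell(t)\rangle+\langle q_m(t),p_\ell(t)\rangle\bigr)=1$, only the $e_*$ and $e_m$ coordinates contribute, and the $e_m$ correction cancels the endpoint defect at $t=2m+1$. The only cosmetic difference is that the paper shows the combined coefficient $c_{m\ell}(t)$ vanishes pointwise for $1\le t\le 2m+1$, whereas you sum coordinatewise and cancel $\mp\binom{d}{2m+1}\delta_{m\ell}$ between $e_*$ and $e_m$; your sign checks are right, and the mixed-size identity you plan to verify is already \Cref{lem:krawtchouk-properties}(v) combined with (i).
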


\begin{proof}
Each $X_S$ is positive semidefinite because it is a Gram matrix.
Fix $m<\ell$. For $j<m$, the vector $p_m(t)$ can
have a nonzero $e_j$-coordinate only when $j$ is odd, whereas
$q_\ell(t)$ can have a nonzero $e_j$-coordinate only when $j$ is even.
For $j>m$, the vector $p_m(t)$ has no $e_j$-component.  Hence the only
possible common coordinates of $p_m(t)$ and $q_\ell(t)$ are $e_*$ and,
when $m$ is even, $e_m$.  Therefore
\begin{equation}
\label{eq:unknown-pq-cross}
\langle p_m(t),q_\ell(t)\rangle
=
P_m(t)Q_\ell(t)
+
\mathbf 1_{\{m\ {\rm even}\}}
P_m(t)\frac{C_{m\ell}(t)}{2m+1}\,.
\end{equation}
Similarly, the only possible common coordinates of $q_m(t)$ and
$p_\ell(t)$ are $e_*$ and, when $m$ is odd, $e_m$, so
\begin{equation}
\label{eq:unknown-qp-cross}
\langle q_m(t),p_\ell(t)\rangle
=
Q_m(t)P_\ell(t)
+
\mathbf 1_{\{m\ {\rm odd}\}}
Q_m(t)\frac{C_{m\ell}(t)}{2m+1}\,.
\end{equation}
Thus, writing
\begin{equation}
\label{eq:unknown-cml-explicit}
c_{m\ell}(t)
:=
\langle p_m(t),q_\ell(t)\rangle
+
\langle q_m(t),p_\ell(t)\rangle
\end{equation}
we have, for $m<\ell$,
\begin{equation}
c_{m\ell}(t)
=
P_m(t)Q_\ell(t)+Q_m(t)P_\ell(t)
+
\begin{cases}
\displaystyle
P_m(t)\frac{C_{m\ell}(t)}{2m+1}
& m\text{ even},\\[8pt]
\displaystyle
Q_m(t)\frac{C_{m\ell}(t)}{2m+1}
& m\text{ odd}.
\end{cases}\,.
\label{eq:unknown-cml-piecewise}
\end{equation}
In the case $m=\ell$, coordinates outside $e_*$ of $p_m(t)$ and $q_m(t)$
have disjoint support, and hence
\begin{equation}
\label{eq:unknown-cmm}
c_{mm}(t)=2P_m(t)Q_m(t)\,.
\end{equation}

These identities are exactly what is needed for feasibility.  First,
$C_{m\ell}(0)=0$ and
$P_r(0)=Q_r(0)=1/\sqrt{2}$ for every layer $r$, so
\begin{equation}
c_{m\ell}(0)=1\,.
\end{equation}
For $m<\ell$ and $1\leq t\leq 2m$, if $t=0,1\pmod4$, then
$P_m(t)=P_\ell(t)=0$, whereas if $t=2,3\pmod4$, then
$Q_m(t)=Q_\ell(t)=0$. Thus both $e_*$-coordinate products in
\Cref{eq:unknown-cml-piecewise} vanish. Moreover, for $1\leq t\leq 2m$,
$C_{m\ell}$ has the same roots as $Q_\ell$ when $m$ is even and as
$P_\ell$ when $m$ is odd, because its only replaced root is
$2m+2$. Hence
$P_m(t)C_{m\ell}(t)/(2m+1)=0$ when $m$ is even, and
$Q_m(t)C_{m\ell}(t)/(2m+1)=0$ when $m$ is odd.
At the remaining endpoint $t=2m+1$, the definition of $C_{m\ell}$ gives
\[
\frac{C_{m\ell}(2m+1)}{2m+1}
=
\begin{cases}
-Q_\ell(2m+1)&m\text{ even}\\
-P_\ell(2m+1)&m\text{ odd}
\end{cases}\,.
\]
The unique nonzero base term is the endpoint defect in
\Cref{eq:unknown-endpoint-defect}.  The correction term is the final
term in \Cref{eq:unknown-cml-piecewise}.  Thus, if $m$ is even, the
endpoint contribution to $c_{m\ell}$ is
\begin{align*}
P_m(2m+1)Q_\ell(2m+1)
&\quad+P_m(2m+1)\frac{C_{m\ell}(2m+1)}{2m+1}
\\
&=
P_m(2m+1)Q_\ell(2m+1)
-P_m(2m+1)Q_\ell(2m+1)
\\
&=0\,.
\end{align*}
If $m$ is odd, it is instead
\begin{align*}
Q_m(2m+1)P_\ell(2m+1)
&\quad+Q_m(2m+1)\frac{C_{m\ell}(2m+1)}{2m+1}
\\
&=
Q_m(2m+1)P_\ell(2m+1)
-Q_m(2m+1)P_\ell(2m+1)
\\
&=0\,.
\end{align*}
Hence
\begin{equation}
\label{eq:unknown-cross-vanishing}
c_{m\ell}(t)=0
\qquad
1\leq t\leq 2m+1
\qquad m<\ell.
\end{equation}
For $m=\ell$, \Cref{eq:unknown-cmm} and the identity
$P_m(t)Q_m(t)=0$ for $1\leq t\leq2m$, which is
\Cref{eq:pq-vanishing} applied to the polynomials from
Theorem~\ref{def:mod4}, give
\begin{equation}
c_{mm}(t)=0
\qquad 1\leq t\leq 2m\,.
\end{equation}
Now write
\begin{align*}
L_m(a)&=\sum_{t=0}^{2m+1}K_t^{(2m+1)}(a)p_m(t),\\
U_m(a)&=\sum_{t=0}^{2m+1}K_t^{(2m+1)}(a)(-1)^tq_m(t),\\
W_m(a)&=L_m(a)+U_m(a).
\end{align*}
the degree bound together with
\Cref{lem:krawtchouk-properties}\textup{(ii),(iv)} implies that
$L_m(a)=0$ for $a>m$. Applying the same argument to $U_m$ and using
\Cref{lem:krawtchouk-properties}\textup{(i)} gives
$U_m(a)=0$ for $a\leq m$.
Expanding $\langle W_m(a),W_\ell(b)\rangle$ gives four terms. On a
distinguishing query, $f_A(S)\ne f_B(S)$, the disjoint support makes
$\langle L_m(a),L_\ell(b)\rangle$ and
$\langle U_m(a),U_\ell(b)\rangle$ vanish. Hence only
$\langle L_m(a),U_\ell(b)\rangle$ and
$\langle U_m(a),L_\ell(b)\rangle$ survive. Therefore
\begin{align}
&\mathbb E_S\!\left[
\langle W_m(a),W_\ell(b)\rangle
\mathbf 1_{\{f_A(S)\neq f_B(S)\}}
\right]
\notag\\
&\qquad=
\mathbb E_S\!\left[
\langle L_m(a),U_\ell(b)\rangle
+
\langle U_m(a),L_\ell(b)\rangle
\right]
\notag\\
&\qquad=
\sum_{t=0}^{|A\cap B|}
(-1)^t\binom{|A\cap B|}{t}
\left(
\langle p_m(t),q_\ell(t)\rangle
+
\langle q_m(t),p_\ell(t)\rangle
\right)
\notag\\
&\qquad=
\sum_{t=0}^{|A\cap B|}
(-1)^t\binom{|A\cap B|}{t}c_{m\ell}(t)
=1\,.
\label{eq:unknown-feasibility-from-cml}
\end{align}
Indeed, since $c_{m\ell}(t)=c_{\ell m}(t)$, when $m\neq\ell$ we may assume $m<\ell$. Then
$|A\cap B|\leq 2m+1$ and
\Cref{eq:unknown-cross-vanishing} removes every term with $t\geq1$.
If $m=\ell$, because $A\neq B$, we have
$|A\cap B|\leq2m$, and the
same conclusion follows from \Cref{eq:unknown-cmm}.  Thus only the
$t=0$ term remains, and it equals $c_{m\ell}(0)=1$, which proves the
feasibility constraint.
\end{proof}

\begin{lemma}[Objective value]
\label{lem:unknown-objective}
The construction in \Cref{lem:unknown-feasibility} has objective value
$O(\log k)$.
\end{lemma}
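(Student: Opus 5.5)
The plan is to mirror the computation of \Cref{lem:rank-one-objective}, now with vector-valued coefficients. For a hidden set $A$ with $|A|=2m+1$, the diagonal entry summed over $S$ is
\[
\sum_{S}X_S\llbracket A,A\rrbracket=\mathbb E_S\bigl[\|W_m(|A\cap S|)\|^2\bigr],
\]
and $|A\cap S|\sim\operatorname{Bin}(2m+1,1/2)$. Since $L_m$ and $U_m$ have disjoint supports (shown in the proof of \Cref{lem:unknown-feasibility}), $\|W_m\|^2=\|L_m\|^2+\|U_m\|^2$. Applying the Krawtchouk orthogonality \Cref{lem:krawtchouk-properties}\textup{(iii)} coordinatewise in the orthonormal basis $\{e_*,e_0,\ldots,e_{M-1}\}$, and using \textup{(i)} to handle the sign $(-1)^t$ in $U_m$, the objective for layer $m$ should become
\[
\sum_{t=0}^{2m+1}\binom{2m+1}{t}\bigl(\|p_m(t)\|^2+\|q_m(t)\|^2\bigr).
\]

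Next I expand the squared norms coordinate by coordinate. The $e_*$ coordinate contributes $\sum_t\binom{2m+1}{t}(P_m(t)^2+Q_m(t)^2)$. By \Cref{lem:majority-asymptotic} this is at most
\[
C\sum_{t=0}^{2m+1}\frac{2m+1}{(t+1)(2m+2-t)}=O(\log m),
\]
using the partial fractions $\frac{1}{(t+1)(K+1-t)}=\frac{1}{K+2}\bigl(\frac{1}{t+1}+\frac{1}{K+1-t}\bigr)$. The diagonal coordinate $e_m$ repeats $P_m$ or $Q_m$, so it contributes at most the same amount. The remaining coordinates are $e_j$ with $j<m$, each carrying $C_{jm}(t)/(2j+1)$. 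So the total cost for layer $m$ is $O(\log k)$ plus
\[
\sum_{j<m}\frac{1}{(2j+1)^2}\sum_t\binom{2m+1}{t}C_{jm}(t)^2.
\]

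To bound $C_{jm}$, note $C_{jm}(t)=\frac{t}{t-(2j+2)}R(t)$, where $R\in\{P_m,Q_m\}$. This gives
\[
C_{jm}(t)^2\le \Bigl(\frac{t}{t-2j-2}\Bigr)^2 R(t)^2,
\]
and at $t=2j+2$ the quotient is the polynomial value $t\,R'(2j+2)$.
\begin{itemize}
\item For $t$ away from $2j+2$, the factor $\bigl(t/(t-2j-2)\bigr)^2$ is bounded by a constant when $t\ge 2(2j+2)$. It is at most $O((j+1)^2)$ otherwise, using $|t-2j-2|\ge1$.
\item The prefactor $1/(2j+1)^2$ cancels this $(j+1)^2$ growth.
\item \Cref{lem:majority-asymptotic} then gives $\binom{2m+1}{t}R(t)^2\le C\frac{2m+1}{(t+1)(2m+2-t)}$.
\end{itemize}

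The main obstacle is that a crude bound gives $O(\log k)$ per $j$, hence $O(m\log k)$ in total. To avoid this, I would split $t\le 4j+4$ from $t>4j+4$. For $t>4j+4$ the multiplier is $O(1)$, so the contribution is $\frac{1}{(2j+1)^2}O(\log k)$, which is summable in $j$. For $t\le4j+4$ I would rewrite $\frac{t^2}{(2j+1)^2(t-2j-2)^2}\le\frac{O(1)}{(t-2j-2)^2}$ and multiply by $\frac{2m+1}{(t+1)(2m+2-t)}=O(1/(t+1))$ in this range, since $t\le 4j+4\le 2m+2$ up to constants. This gives roughly $\sum_t\frac{1}{(t+1)(t-2j-2)^2}=O(1/(j+1))$, and summing over $j$ gives $O(\log k)$.

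The term at $t=2j+2$ itself needs the derivative value $R'(2j+2)$, which equals the product of $(2j+2-r)$ over the other roots. I would bound it by the same pairing inequality \Cref{eq:four-factor-pairing}, which should give a bound comparable to its neighbours, i.e.\ of order $(j+1)^2R(2j+1)^2$-type size, again absorbed by $1/(2j+1)^2$. This is the most delicate step, and I would check it against the factorial bounds \Cref{eq:short-p-bound,eq:short-q-bound}. Taking the maximum over $m\le M$ then gives $O(\log k)$.
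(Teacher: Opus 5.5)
Much of your plan matches the paper's proof: the layer cost $\sum_t\binom{2m+1}{t}\bigl(\|p_m(t)\|^2+\|q_m(t)\|^2\bigr)$, the $O(\log k)$ bounds for the $e_*$ and diagonal coordinates via \Cref{lem:majority-asymptotic}, the factorization $C_{jm}(t)=tR(t)/(t-2j-2)$, and your treatment of the range $t>4j+4$ are all fine. The step that fails is in the range $t\le 4j+4$. There you assert $\frac{2m+1}{(t+1)(2m+2-t)}=O\bigl(1/(t+1)\bigr)$ because ``$t\le 4j+4\le 2m+2$ up to constants''. Since $j$ runs up to $m-1$, for $j\gtrsim m/2$ this range contains every $t\le 2m+1$. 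That includes $t$ near the top endpoint, where the factor is $\Theta\bigl(1/(2m+2-t)\bigr)$ rather than $O(1/m)$. Your per-$j$ bound $O(1/(j+1))$ is therefore false. For $j=m-1$, the single term $t=2m+1$ equals $\frac{(2m+1)^2}{(2m-1)^2}R(2m+1)^2=\Theta(1)$: $R\in\{P_m,Q_m\}$ has no root at $2m+1$, and its endpoint value is $\Theta(1)$ by the endpoint analysis in \Cref{lem:majority-asymptotic}. In general the weight near $t\approx 2j+2$ is of order $\frac{1}{j+1}+\frac{1}{m-j}$, and you have kept only the first part.

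The conclusion survives with a short repair. Use $\frac{2m+1}{(t+1)(2m+2-t)}\le\frac{1}{t+1}+\frac{1}{2m+2-t}$. The first piece gives your $O(1/(j+1))$. The second, summed against $O\bigl((1+|t-2j-2|)^{-2}\bigr)$ and split according to whether $2m+2-t$ is below or above $m-j$, gives $O\bigl(1/(m-j)\bigr)$, and $\sum_{j<m}\frac{1}{m-j}=O(\log m)$.

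The paper avoids per-$j$ bounds altogether by reversing the order of summation. With $r=2m+1$, it proves the pointwise estimate $\binom{r}{t}\bigl(C_{jm}(t)/(2j+1)\bigr)^2\le C\frac{r}{(t+1)(r-t+1)}\cdot\frac{t^2}{(2j+1)^2(1+|t-2j-2|)^2}$ (\Cref{lem:unknown-uniform-correction}). It then shows that $\sum_{j\ge0}\frac{t^2}{(2j+1)^2(1+|t-2j-2|)^2}$ is bounded uniformly in $t$ (\Cref{lem:unknown-uniform-kernel}), and only then sums $\frac{r}{(t+1)(r-t+1)}$ over $t$. This treats both endpoints symmetrically and needs no split at $t=4j+4$.

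For the $t=2j+2$ term, aim directly for the weighted bound $\binom{r}{s}R'(s)^2\le C\frac{r}{(s+1)(r-s+1)}$ at a root $s$. This follows by rerunning the pairing argument with the factor $(t-s)$ deleted (\Cref{lem:unknown-root-pairing}). Comparing $R'(2j+2)^2$ with $R(2j+1)^2$ without the binomial weights is off by a factor of order $(2j+1)/(2m-2j-1)$.
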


\begin{proof}
Since $L_{\ell}$ and $U_{\ell}$ have disjoint support, we have $\|W_\ell(a)\|^2=\|L_\ell(a)\|^2+\|U_\ell(a)\|^2$. Together with the orthogonality identity in
\Cref{lem:krawtchouk-properties}\textup{(iii)}, the objective on
$\ell$ is
\begin{equation}
\label{eq:unknown-layer-cost}
T_\ell=\sum_{t=0}^{2\ell+1}\binom{2\ell+1}{t}
\bigl(\|p_\ell(t)\|^2+\|q_\ell(t)\|^2\bigr)\,.
\end{equation}
Expanding the coordinates in \Cref{eq:unknown-vector-polynomials} gives
\begin{align}
T_\ell
&=
\underbrace{
\sum_{t=0}^{2\ell+1}\binom{2\ell+1}{t}
\left(P_\ell(t)^2+Q_\ell(t)^2\right)
}_{O(\log k)\text{ by Lemma~\ref{lem:majority-asymptotic}}}
+
\underbrace{
\mathbf 1_{\{\ell<M\}}
\sum_{t=0}^{2\ell+1}\binom{2\ell+1}{t}
\left(
\mathbf 1_{\{\ell\text{ even}\}}P_\ell(t)^2
+\mathbf 1_{\{\ell\text{ odd}\}}Q_\ell(t)^2
\right)
}_{O(\log k)\text{ by Lemma~\ref{lem:majority-asymptotic}}}
\notag\\
&\quad+
\underbrace{
\sum_{m<\ell}\sum_{t=0}^{2\ell+1}\binom{2\ell+1}{t}
\left(\frac{C_{m\ell}(t)}{2m+1}\right)^2
}_{(*)}\,.
\label{eq:unknown-layer-cost-decomposition}
\end{align}

For every $0\leq\ell\leq M$, the term $(*)$ in
\Cref{eq:unknown-layer-cost-decomposition} satisfies
\begin{equation}
\label{eq:unknown-correction-energy}
\sum_{m<\ell}\sum_{t=0}^{2\ell+1}\binom{2\ell+1}{t}
\left(\frac{C_{m\ell}(t)}{2m+1}\right)^2
=O(\log k)\,.
\end{equation}
This is proved in~\Cref{app:majority-unknown-upper-bound}.
Since the overall objective is the maximum of the $T_\ell$, it is
$O(\log k)$.
\end{proof}

\paragraph{Generalizing to the whole $1\leq |A| \leq k$ promise}

We have proven the $O(\log k)$ upper bound for odd support sizes $|A|$, we now generalize to even support sizes. Introduce one known auxiliary variable
$y\in \{-1,1\}$ and define
\begin{equation}
\label{eq:unknown-parity-lift}
g_A(x,y)
=
\begin{cases}
f_A(x)&y=1\\
-f_A(\bar x)&y=-1
\end{cases}\,.
\end{equation}
If $|A|$ is odd,
$-\text{MAJORITY}_{|A|}(\bar x)=\text{MAJORITY}_{|A|}(x)$, so $g_A$ is
the same Majority-junta and $y$ is irrelevant.  If $|A|=2m$ is even,
then
\[
g_A(x,y)=\mathrm{MAJORITY}_{2m+1}(x_A,y)\,.
\]
One query to $g_A$ is simulated coherently by one query to $f_A$. The set that we need to learn is
\[
B_A
=
\begin{cases}
A&|A|\text{ odd}\\
A\cup\{y\}&|A|\text{ even}
\end{cases}
\]
which always has odd size and is at most $k+1$. Applying our algorithm from the feasible SDP solution uses $O(\log k)$ queries, and recovers $A$.

\begin{theorem}[Upper bound for learning Majority-juntas]
\label{thm:majority-upper-bound}
For every $2\leq k\leq n$, under the promises $1\leq |A|\leq k$ or
$|A|=k$,
\[
Q=O(\log k)\,.
\]
The case $k=1$ has constant query complexity.
\end{theorem}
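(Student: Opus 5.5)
The upper bound follows by assembling the feasible dual-adversary solutions built above and invoking the fact that the optimal value of \Cref{eq:junta-dual-adversary} characterizes bounded-error query complexity up to constant factors~\cite{reichardt2011span,lee2011stateconversion}. A feasible solution of value $V$ is converted, via the span-program/transducer framework, into an $O(V)$-query algorithm. It therefore suffices to exhibit, for each promise, a feasible family $\{X_S\}$ with objective $O(\log k)$. I will treat the promise $1\leq|A|\leq k$ first and then derive $|A|=k$ from it.

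\textbf{Promise $1\leq|A|\leq k$.} I first lift the instance to odd sizes using the auxiliary variable $y$ of \Cref{eq:unknown-parity-lift}. One query to $g_A$ is simulated by one query to $f_A$: query $f_A$ at $x$ or at $\bar x$, controlled on $y$, and flip the sign when $y=-1$. The hidden set $B_A$ has odd size at most $k+1\leq 2M+1$ with $M=\lceil k/2\rceil$. The map $A\mapsto B_A$ is injective, and $A$ is recovered from $B_A$ by deleting $y$. Hence learning $B_A$ among odd sets of size at most $2M+1$ in $[n]\cup\{y\}$ suffices. For that problem, \Cref{lem:unknown-feasibility} gives feasibility and \Cref{lem:unknown-objective} gives value $O(\log(k+1))=O(\log k)$. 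I need to check that the construction tolerates a promise that is a subfamily of all odd sets of size at most $2M+1$. This holds because the feasibility constraints for a subpromise are a subset of the original ones, so restricting the matrices to the subpromise stays feasible and PSD without increasing the objective. The same restriction argument covers the appended coordinate $y$, since $n+1\geq 2M+1$.

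\textbf{Promise $|A|=k$.} This is again a restriction of the previous promise, so the same bound applies for every $n\geq k$. If $k$ is odd, \Cref{lem:rank-one-intersection-free-structure}, \Cref{lem:rank-one-objective} and \Cref{lem:majority-asymptotic} give the bound directly: the sum is $\sum_t C\,k/((t+1)(k-t+1))=O(\log k)$ by the partial-fraction split $\frac{k}{(t+1)(k-t+1)}\leq \frac{1}{t+1}+\frac{1}{k-t+1}$ and a harmonic sum. If $k$ is even, I use the size-$(k+1)$ lift above.

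\textbf{Case $k=1$.} The target is $f_A(x)=x_i$ for an unknown $i\in[n]$. Querying the phase oracle on a uniform superposition and applying Hadamards recovers $i$ with one query, as in Bernstein--Vazirani.

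\textbf{Main obstacle.} The delicate step is the bookkeeping in the parity lift. I must confirm that $\mathrm{MAJORITY}_{2m+1}(x_A,y)$ matches $g_A$ exactly under the convention $\sgn(0)=1$, that is, that ties at $|x_A|=m$ are broken by $y$ consistently with \Cref{eq:unknown-parity-lift}. I must also confirm that the controlled simulation uses only one query to $f_A$. The remaining ingredient, the correction-energy bound \Cref{eq:unknown-correction-energy}, is deferred to the appendix and assumed here.
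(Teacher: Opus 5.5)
Your proposal is correct and follows essentially the paper's route. You use the dual-adversary feasible solutions of \Cref{lem:unknown-feasibility,lem:unknown-objective} (and the rank-one construction for odd $k$), the parity lift \Cref{eq:unknown-parity-lift} for even sizes, closure under restriction to subpromises, and conversion to an algorithm via the adversary characterization. The tie-breaking check you flag does go through: $\mathrm{MAJORITY}_{2m+1}(x_A,y)=1$ iff $|x_A|\geq m$ when $y=1$ and iff $|x_A|\geq m+1$ when $y=-1$, which matches $f_A(x)$ and $-f_A(\bar x)$ respectively.

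The one slip is harmless. For odd $k$ and $n=k$, the inequality $n+1\geq 2M+1$ with $M=\lceil k/2\rceil$ fails. You do not need it: for odd $k$ the lifted sets have size at most $k$, so $M=(k-1)/2$ suffices, and empty layers are removed by the same restriction argument anyway.
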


\subsection{A matching lower bound}
\label{sec:majority-lower-bound}

We assume that $k=2m$ is even. Indeed, if $k$ is odd, the promise
$|A|\leq k$ contains the subproblem $|A|\leq k-1$, giving the same $\Omega(\log(k-1)) = \Omega(\log k)$ lower bound.
We construct a feasible solution to \Cref{eq:junta-primal-adversary}, restricted to the following promise:
\[
\mathcal A:=\{A\subseteq U:|A|=m\} \subseteq \{A:1\leq |A| \leq k\}\,.
\]
Where $U\subseteq[n]$ is a known subset of size $k$. Define on this promise the following adversary matrix 
\begin{equation}
\label{eq:balanced-gamma}
\Gamma[A,B]
=
\begin{cases}
\displaystyle
\frac{1}{|A\setminus B|\binom{m}{|A\setminus B|}^{2}}&A\ne B\\[6pt]
0&A=B
\end{cases}\,.
\end{equation}

\begin{lemma}
\label{lem:balanced-gamma-norm}
The matrix in \Cref{eq:balanced-gamma} satisfies
\[
\norm{\Gamma}
=\sum_{s=1}^{m}\frac1s
=\Theta(\log k)\,.
\]
\end{lemma}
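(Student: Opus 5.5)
The plan is to use the fact that $\Gamma$ lies in the Bose--Mesner algebra of the Johnson scheme $J(k,m)$ on $\mathcal A=\binom{U}{m}$. Each entry $\Gamma[A,B]$ depends only on the distance $s=|A\setminus B|$. So we can write
\[
\Gamma=\sum_{s=1}^{m}\frac{1}{s\binom ms^2}\,D_s,
\]
where $D_s$ is the $0/1$ adjacency matrix of the distance-$s$ relation.

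\textbf{Step 1 (row sums).} Every row of $\Gamma$ has nonnegative entries. In the row indexed by $A$, the number of $B$ with $|A\setminus B|=s$ is $\binom ms\binom{k-m}{s}=\binom ms^2$, because $k=2m$. Hence each row sums to
\[
\sum_{s=1}^m \binom ms^2\cdot\frac{1}{s\binom ms^2}=\sum_{s=1}^m\frac1s .
\]
The all-ones vector is therefore an eigenvector with eigenvalue $H_m=\sum_{s\le m}1/s$.

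\textbf{Step 2 (spectral radius).} Since $\Gamma$ is symmetric and entrywise nonnegative, Perron--Frobenius (or simply $\norm{\Gamma}\le\max_A\sum_B|\Gamma[A,B]|$, the $\ell_\infty$ operator norm, which for symmetric matrices bounds the spectral norm) gives $\norm{\Gamma}\le H_m$. Together with the eigenvector from Step 1, this yields equality $\norm{\Gamma}=H_m$.

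\textbf{Step 3 (asymptotics).} Finally, $H_m=\ln m+O(1)=\Theta(\log k)$ because $m=k/2\ge1$. No step seems hard; the only point needing care is the count $\binom ms\binom{k-m}{s}$, which is exactly where the balanced choice $|A|=k/2$ inside a $k$-set $U$ is used. With this count the weighting $1/(s\binom ms^2)$ makes each distance class contribute exactly $1/s$. I expect the main work of the lower bound to lie not here but in the subsequent bounds on $\norm{\Gamma\circ\Delta_S}$.
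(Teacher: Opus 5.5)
Your proposal is correct and follows essentially the same route as the paper: you count $\binom ms^2$ sets $B$ at each distance $s$ to obtain constant row sums $\sum_{s=1}^m 1/s$, use the all-ones eigenvector for the lower bound, and bound $\norm{\Gamma}$ above by the maximum row sum. The paper's version of the upper bound is $\norm{\Gamma}\leq\sqrt{\norm{\Gamma}_1\norm{\Gamma}_\infty}$ combined with symmetry, which is the same estimate as yours.
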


\begin{proof}[Proof of \Cref{lem:balanced-gamma-norm}]
For fixed $A\in\mathcal A$, exactly $\binom ms^2$ sets $B$ satisfy
$|A\setminus B|=s$. Hence every row sum of $\Gamma$ is
\[
\sum_{s=1}^{m}\frac1s = \Theta(\log k)\,.
\]
Indeed, multiplying $\Gamma$ by the all-ones vector gives $\Gamma \mathbf{1} = \left(\sum_{s=1}^{m}\frac1s\right)\mathbf{1}$. Therefore $\norm\Gamma \geq
\sum_{s=1}^{m}\frac1s$. Similarly, because $\Gamma$ has nonnegative entries, its infinity norm is the maximum of the sums of each row, which equals $
\norm{\Gamma}_{\infty}
 = \sum_{s=1}^{m}\frac1s
$. Because $\Gamma$ is symmetric, its induced $1$-norm is also $\norm{\Gamma}_1 = \sum_{s=1}^{m}\frac1s$. Finally, using $\norm{\Gamma} \leq \sqrt{\norm{\Gamma}_{1}\norm{\Gamma}_{\infty}}$ we have the desired result.
\end{proof}

\begin{lemma}
\label{lem:balanced-mask-norm}
For every $S\subseteq [n]$
\[
\norm{\Gamma\circ\Delta_S} = O(1)\,.
\]
\end{lemma}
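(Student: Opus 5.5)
The plan is to bound $\norm{\Gamma\circ\Delta_S}$ with a \emph{weighted Schur test}. For a positive weight vector $v$ on $\mathcal A$, if $\sum_B (\Gamma\circ\Delta_S)\llbracket A,B\rrbracket\, v_B\leq C\, v_A$ for every $A$, then, since the matrix is symmetric and nonnegative, $\norm{\Gamma\circ\Delta_S}\leq C$.

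First I reduce the structure. Only $S\cap U$ matters, so let $j:=|S\cap U|$. For $A\in\mathcal A$ put $a:=|A\cap S|$, and similarly $b:=|B\cap S|$. Then $f_A(S)=\operatorname{MAJORITY}_m$ evaluated at weight $a$, so $\Delta_S\llbracket A,B\rrbracket=1$ exactly when $a$ and $b$ lie on opposite sides of the threshold $m/2$. Write $d_A:=|a-m/2|$, the distance of $A$ from the threshold.

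It is worth seeing why the unweighted test (as in \Cref{lem:balanced-gamma-norm}) is not enough. For $A$ with $d_A=O(1)$ and any $s$, a constant fraction of the $\binom ms^2$ sets $B$ with $|A\setminus B|=s$ land on the other side of the threshold. Each such $s$ then contributes $\Theta(1/s)$ to the row sum, and summing over $s$ again gives $\Theta(\log k)$.

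The fix exploits the fact that sets $B$ far from $A$ are typically far from the threshold. I take weights depending only on the distance,
\[
v_A:=(d_A+1)^{-\alpha}
\]
for a fixed $\alpha\in(0,1)$, say $\alpha=1/2$. To sample a uniformly random $B$ with $|A\setminus B|=s$, remove $s$ uniformly random elements of $A$ and add $s$ uniformly random elements of $U\setminus A$. Then
\[
b-a=X-Y,
\]
where $X$ and $Y$ are independent hypergeometric counts: $X$ counts added elements in $S$ (drawn from $j-a$ such elements among $m$), and $Y$ counts removed elements in $S$ (drawn from $a$ among $m$). The row-sum condition becomes
\[
\sum_{s=1}^{m}\frac1s\,\E_s\!\bigl[\mathbf 1_{\{b\text{ crosses}\}}\,(d_B+1)^{-\alpha}\bigr]\leq C\,(d_A+1)^{-\alpha}.
\]

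I would verify this using two estimates on $X-Y$, which is a sum of $2s$ hypergeometric draws.
\begin{enumerate}[label=(\alph*)]
\item \emph{Crossing probability.} Crossing requires $|X-Y-\E[X-Y]|\gtrsim d_A-|\E[X-Y]|$. A Hoeffding bound for hypergeometric variables should give $\Pr[\text{cross}]\leq e^{-c\,d_A^2/s}$, at least after handling the drift $\E[X-Y]=s(j-2a)/m$.
\item \emph{Anti-concentration.} A local limit bound, $\Pr[b=t]=O(1/\sqrt{s})$ uniformly in $t$ (hypergeometric variables are log-concave), gives $\E_s[(d_B+1)^{-\alpha}]=O(s^{-\alpha/2})$.
\end{enumerate}
Combining them, the terms with $s\lesssim d_A^2$ are exponentially small. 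The terms with $s\gtrsim d_A^2$ contribute $\sum_{s\gtrsim d_A^2}s^{-1-\alpha/2}=O(d_A^{-\alpha})$, which is the required bound. The drift term also helps here: when $A$ is on the side that $B$ drifts away from, large $s$ pushes $b$ further from the threshold, so $B$ either stops crossing or crosses to a large distance.

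The main obstacle is making these hypergeometric estimates uniform in $j$ and $a$, including the degenerate cases where $j$ is near $0$ or $k$ and $S$ nearly determines $a$. In those cases $X$ or $Y$ can have tiny variance, and the crossing set may contain no $B$ at all, which only helps. If the drift analysis becomes messy, I would instead condition on $Y$ and apply the one-sided local bound to $X$ alone, using whichever of $X$ or $Y$ has variance $\Omega(s)$.
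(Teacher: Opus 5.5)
\textbf{Verdict: genuine gap.} Your Schur-test weights $(d_A+1)^{-1/2}$ are essentially the paper's. The paper uses $(r-a)^{-1/2}$ below the threshold $r=\lceil m/2\rceil$ and $(a-r+1)^{-1/2}$ above it. So the problem lies entirely in bounding the row sums. There your estimates (a) and (b) are false in cases that matter, and not merely up to a drift correction.

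Take $S$ with $S\cap U=U\setminus A$. Then every $B$ with $|A\setminus B|=s$ has $b=|B\setminus A|=s$ exactly, i.e.\ $Y\equiv0$ and $X\equiv s$. Here $d_A=m/2$, yet $B$ crosses with probability $1$ for every $s\geq m/2$. This breaks the argument in several ways:
\begin{itemize}
\item (a) predicts a crossing probability of at most $e^{-cm/4}$.
\item (b) fails, since $\Pr[b=s]=1$.
\item Neither $X$ nor $Y$ has variance $\Omega(s)$, so your fallback does not apply.
\item The crossing set is far from empty.
\end{itemize}
Note that $|S\cap U|=m$ sits in the middle of its range. The degeneracy comes from $S$ being correlated with $A$, not from $j$ near $0$ or $k$.

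The same failure occurs with genuine randomness. Take $a=m/2-d$, $|S\cap U|=m$, and $\sqrt m\ll d\ll m$. The drift $2ds/m$ carries $b$ across the threshold with probability close to $1$ once $s$ exceeds $m/2$ by a margin, whereas (a) gives at most $e^{-cd^2/m}$. Moreover, (b) fails even for generic $S$ when $s$ is close to $m$. The hypergeometric variance carries the finite-population factor $(m-s)/(m-1)$, and at $s=m$ the set $B=U\setminus A$ is fully determined.

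Your heuristic that drifting sets ``cross to a large distance'' is also wrong. At the steps where $b$ passes the threshold, the crossing sets sit at distance $O(1)$, with weight $\Theta(1)$. The row sums are still fine in these examples, but for a reason that concentration and anti-concentration do not capture.

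The missing ingredient is a per-layer bound that is insensitive to drift and variance: for $A\in\mathcal L_a$ and $b\neq a$, $\sum_{B\in\mathcal L_b}\Gamma[A,B]\leq 1/|b-a|$. The paper proves it as follows.
\begin{itemize}
\item It builds $A_s$ by $s$ successive swaps along independent uniform orderings of $A$ and $U\setminus A$.
\item Then $X_s=|A_s\cap S|$ is a walk with exchangeable increments in $\{-1,0,1\}$, and $\sum_{B\in\mathcal L_b}\Gamma[A,B]=\sum_s\frac1s\Pr(X_s=b)$.
\item The cycle lemma gives $\Pr(\tau_b=s\mid X_s=b)=|b-a|/s$, where $\tau_b$ is the first hitting time of $b$.
\item Hence $\frac1s\Pr(X_s=b)=\frac1{|b-a|}\Pr(\tau_b=s)$, and summing over $s$ gives $1/|b-a|$.
\end{itemize}
With this bound, the Schur test reduces to $\sum_{j\geq1}\frac{1}{(i+j-1)\sqrt j}\leq 4/\sqrt i$.

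Separately, even granting (a) and (b), your final accounting does not close. Dropping the weight for $s\lesssim d_A^2$ leaves $\sum_{s\lesssim d_A^2}\frac1s e^{-cd_A^2/s}=\Theta(1)$, not $O(d_A^{-1/2})$, because $e^{-cd_A^2/s}$ is only a constant when $s\asymp d_A^2$. You would need to combine (a) and (b) within each term (e.g.\ via H\"older) rather than split the range of $s$.
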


The proof of \Cref{lem:balanced-mask-norm} is given in~\Cref{app:balanced-slice-lower-bound}.

\begin{theorem}[Lower bound under the promise \(1\leq |A|\leq k\)]
\label{thm:majority-lower-bounds}
\label{thm:leqk}
For every \(n\geq k\),
\[
Q=\Omega(\log k)\,.
\]
\end{theorem}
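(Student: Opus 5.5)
The proof combines the two lemmas just established with weak duality for the primal adversary program \Cref{eq:junta-primal-adversary}. I would first reduce to even $k=2m$, as the preceding paragraph explains: for odd $k$ the promise $1\le|A|\le k$ contains the promise $1\le|A|\le k-1$, and $\log(k-1)=\Theta(\log k)$ for $k\ge2$. The case $k=2$ is trivial, since $\Omega(\log 2)=\Omega(1)$.

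Now fix a known $U\subseteq[n]$ with $|U|=k$; this is possible because $n\ge k$. Consider the restricted promise $\mathcal A=\{A\subseteq U:|A|=m\}$. Any algorithm that solves the learning problem under $1\le|A|\le k$ also solves it on this sub-promise, because every $A\in\mathcal A$ satisfies $1\le|A|=m\le k$. So the quantum query complexity of the original problem is at least the adversary bound of $L_{\mathcal A}$.

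Next I take $\Gamma$ from \Cref{eq:balanced-gamma}. It is symmetric, since $|A\setminus B|=|B\setminus A|$ for sets of equal size, and it has zero diagonal. So it is a valid primal candidate once it is rescaled. By \Cref{lem:balanced-mask-norm} there is a universal constant $c$ with $\|\Gamma\circ\Delta_S\|\le c$ for all $S\subseteq[n]$. Then $\Gamma':=\Gamma/c$ is feasible for \Cref{eq:junta-primal-adversary}, and \Cref{lem:balanced-gamma-norm} gives objective value $\|\Gamma'\|=\frac1c\sum_{s=1}^m\frac1s=\Theta(\log k)$.

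Weak duality (the primal program lower-bounds the query complexity; the paper also cites that it characterizes it up to constants \cite{HoyerLeeSpalek2007,lee2011stateconversion}) then yields $Q=\Omega(\log k)$. The real difficulty is \Cref{lem:balanced-mask-norm}, the uniform $O(1)$ bound on the masked matrices, which is deferred to the appendix. Given that lemma, the theorem reduces to the bookkeeping above. The one point to check is that the promise restriction needs only $n\ge k$, not $n\ge2k$, since every set lives inside $U$.
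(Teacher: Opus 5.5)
Your proposal is correct and is essentially the paper's proof. Both reduce to even $k=2m$, restrict to the slice $\{A\subseteq U:|A|=m\}$ with $|U|=k$ (so only $n\ge k$ is needed), and combine \Cref{lem:balanced-gamma-norm} with \Cref{lem:balanced-mask-norm} in the positive-weight adversary bound $Q=\Omega(\|\Gamma\|/\max_S\|\Gamma\circ\Delta_S\|)$; your rescaling to the feasible primal point $\Gamma/c$ is just the normalized form of that same ratio.
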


\begin{proof}
The positive-weight adversary bound
\cite{ambainis2002quantum,HoyerLeeSpalek2007} and
\Cref{lem:balanced-gamma-norm,lem:balanced-mask-norm} give
\[
Q
=\Omega\!\left(
\frac{\norm{\Gamma}}
{\max_S\norm{\Gamma\circ\Delta_S}}
\right)
=\Omega(\log k)\,.
\]
\end{proof}

Combining~\Cref{thm:majority-upper-bound,thm:majority-lower-bounds} (upper and lower bounds)
proves part~\textup{(ii)} of~\Cref{thm:majority-query-complexity} which states that the quantum query complexity of learning majority on the promise $1\leq|A|\leq k$ is $\Theta(\log k)$ for any $n\geq k\geq2$.

\par
On the promise $|A|=k\geq 2$, where $n\geq 2k$, combining the same $O(\log k)$ upper bound with the lower bound~\Cref{thm:exact-majority-lower-bound}, proved in~\Cref{app:exact-majority-lower-bound}, proves
part~\textup{(i)}.

\section{Learning homogeneous LTFs with an  Example Oracle}\label{sec:example}

We now describe our quantum algorithm for learning an unknown homogeneous halfspace: $f_w(x)=\sgn(\langle w,x\rangle)$ with $w\in\sphere^{n-1}$, from quantum examples drawn from the standard Gaussian distribution. The learner is given coherent access to $\QEX(f_w,\gamma_n)$ and its inverse (see~\Cref{ssec:compmodel}). The main idea is to convert a quantum example into a phase-encoded representation of $f_w$, apply the quantum Hermite transform of~\cite{JainIyerSommaBaoJordan2025} to expose its Hermite spectrum, and exploit the special tensor structure of the Hermite coefficients of a ridge function. A suitable high-degree component encodes many identical copies of the unknown direction state $\ket w:=\sum_{i=1}^n w_i\ket i$, which can then be recovered by pure-state tomography.

We first describe the quantum algorithm at the level of its intermediate states. The details of the finite-dimensional implementation, approximation
errors, and gate complexity are deferred to the next section. Registers are
introduced as needed, and ancillas that have been returned to fixed states are
suppressed from the notation.

For convenience, write $\bar{f}(x):=\frac{1-f_w(x)}{2}\in\{0,1\}$, where $f_w(x)=(-1)^{\bar{f}(x)}$. A call to the quantum example oracle prepares
\begin{equation}
    \QEX(f_w,\gamma_n)\ket{0}
    =
    \int_{\R^n} \sqrt{\gamma_n(x)}
    \ket{x}\ket{\bar{f}(x)}dx.
    \label{eq:qex-example-state}
\end{equation}
Applying a Hadamard gate to the label register (i.e., the last qubit) gives
\begin{align}
    (I\otimes H)\QEX(f_w,\gamma_n)\ket{0}
    &=
    \frac{1}{\sqrt2}
    \int_{\R^n}
    \sqrt{\gamma_n(x)}
    \ket{x}
    \bigl(
        \ket{0}
        +
        (-1)^{\bar{f}(x)}\ket{1}
    \bigr)\, dx
    \nonumber\\
    &=
    \frac{1}{\sqrt2}
    \left(
        \ket{G}\ket{0}
        +
        \ket{\Phi_w}\ket{1}
    \right),
    \label{eq:qex-phase-flagged}
\end{align}
where
\[
    \ket{G}
    :=
    \int_{\R^n}\sqrt{\gamma_n(x)}\ket{x}\, dx,
    \qquad
    \ket{\Phi_w}
    :=
    \int_{\R^n}
    \sqrt{\gamma_n(x)}
    f_w(x)\ket{x} \, dx.
    \label{eq:phase-function-state}
\]
Thus, the desired phase-encoded function state appears coherently in the branch in which the label register is \(\ket1\). Measuring and postselecting
this register would succeed with probability exactly \(1/2\) (and we know when we succeeded); however, for the algorithm below we keep the register coherent, since this allows the
subsequent degree projection to be amplified unitarily.

We next apply the inverse quantum Hermite transform to the input register. Using the convention that \(\QHT^{-1}\) maps the Gaussian position
representation to Hermite coefficients, we obtain
\begin{equation}
    (\QHT^{-1}\otimes I)
    \frac{1}{\sqrt2}
    \left(
        \ket{G}\ket0
        +
        \ket{\Phi_w}\ket1
    \right)
    =
    \frac{1}{\sqrt2}
    \left(
        \ket{F_0}\ket0
        +
        \ket{F_w}\ket1
    \right),
    \label{eq:flagged-hermite-state}
\end{equation}
where the first branch \(\ket{F_0}:=\QHT^{-1}\ket G\) will be irrelevant, while
\begin{equation}
    \ket{F_w}
    :=
    \sum_{\alpha\in\mathbb N^n}
    \widehat f_w(\alpha)\ket{\alpha},
    \qquad
    \widehat f_w(\alpha)
    :=
    \E_{x\sim\gamma_n}
    \bigl[
        f_w(x)h_\alpha(x)
    \bigr].
    \label{eq:qex-hermite-state}
\end{equation}

To understand the structure of $\ket{F_w}$, write the one-dimensional Hermite expansion of the sign function as $\sgn(t)=\sum_{k\geq0}a_k h_k(t)$. Since $f_w(x)=\sgn(\langle w,x\rangle)$, the Hermite ridge identity of~\Cref{lem:ridge-identity} gives
$
    h_k(\langle w,x\rangle)
    =
    \sum_{|\alpha|=k}
    \sqrt{\frac{k!}{\alpha!}}\,
    w^\alpha h_\alpha(x)
$. Thus,
$
    \widehat f_w(\alpha)
    =
    a_{|\alpha|}
    \sqrt{\frac{|\alpha|!}{\alpha!}}\,
    w^\alpha
$. Defining quantum states
\begin{equation}
    \ket{C_k(w)}
    :=
    \sum_{|\alpha|=k}
    \sqrt{\frac{k!}{\alpha!}}\,
    w^\alpha\ket{\alpha},
    \label{eq:Ckw}
\end{equation}
we obtain the orthogonal decomposition
\begin{equation}
    \ket{F_w}
    =
    \sum_{k\geq0}
    a_k\ket{C_k(w)}.
    \label{eq:condensate-decomp}
\end{equation}
The crucial observation is that $\ket{C_k(w)}$ is a compressed representation of $k$ identical copies of the hidden direction state
$\ket w$. Indeed, define the $w$-independent linear map
\begin{equation}
    V_k\ket{\alpha}
    :=
    \sqrt{\frac{\alpha!}{k!}}
    \sum_{\substack{i=(i_1,\ldots,i_k)\in[n]^k:\\ |\{r:i_r=j\}|=\alpha_j \forall j\in[n]}}
    \ket{i_1}\otimes\cdots\otimes\ket{i_k}.
    \label{eq:overview-Vk}
\end{equation}
for every $\alpha\in\mathbb N^n$ with $|\alpha|=k$, where $\alpha!:=\prod_{j=1}^n\alpha_j!$. This map is simply the change of basis from the occupation-number basis ${\ket{\alpha}:|\alpha|=k}$ to the corresponding symmetric $k$-register basis of $\operatorname{Sym}^k(\mathbb C^n)$. Since both are orthonormal bases, $V_k$ is an isometry. And, a direct calculation gives
\begin{equation}
    V_k\ket{C_k(w)}
    =
    \ket w^{\otimes k}.
    \label{eq:overview-Vk-condensate}
\end{equation}
Hence, once a sufficiently-high-degree component of $\ket{F_w}$ has been isolated, learning $w$ reduces to pure-state tomography.  By the
tomography guarantee of~\cite{odonnell2015efficientquantumtomography, Haah_2023}, $K=\Theta(n/\eps^2)$ copies suffice to recover $w$ to $O(\eps)$ Euclidean error, up to the global sign, with constant success probability.

It remains to obtain such a high-degree condensate efficiently. For each $k\geq0$, let
\begin{equation}
    \Pi_k
    :=
    \sum_{|\alpha|=k}
    \ket{\alpha}\!\bra{\alpha}
    \label{eq:overview-Pik}
\end{equation}
denote the projector onto total Hermite degree $k$. From~\Cref{eq:condensate-decomp}, $\Pi_k\ket{F_w}=a_k\ket{C_k(w)}$ so a direct projection onto degree $k$ succeeds with probability $|a_k|^2$. Let $U_w$ denote the coherent preparation circuit obtained by applying $\QEX(f_w,\gamma_n)$, a Hadamard on the label register, and the inverse QHT. Its label-$1$ branch contains $\ket{F_w}$ with amplitude $1/\sqrt2$. Since both $U_w$ and $U_w^\dagger$ are available, amplitude amplification of the joint event that the label register is $1$ and the Hermite degree is $k$ uses $O\!\left(1/|a_k|\right)$ applications of the preparation circuit and its inverse.

A direct projection onto a single odd degree $k\in\{K,K+1\}$ is not optimal. By~\Cref{lem:sign-band}, for such an odd $k$ one has $|a_k|^2=\Theta(K^{-3/2})$, so amplifying a single useful degree would require $\Theta(K^{3/4})$ applications of $U_w$ and $U_w^\dagger$. The key observation is that tomography does not require a particular degree: every $k\geq K$ already contains at least $K$ copies of $\ket w$. We therefore enlarge the good subspace to the entire degree band
\begin{equation}
    B_K
    :=
    \{K,K+1,\ldots,2K\},
    \qquad
    \Pi_{B_K}
    :=
    \sum_{k=K}^{2K}\Pi_k.
    \label{eq:overview-band}
\end{equation}
The weight of this band is $p_K:=\|\Pi_{B_K}\ket{F_w}\|^2=\sum_{k=K}^{2K}|a_k|^2=\Theta(K^{-1/2})$, where only the odd degrees contribute. The joint label-and-band event has probability $p_K/2$. Consequently, amplitude amplification prepares the normalized band state
\begin{equation}
    \ket{F_w^{B_K}}
    :=
    \frac{1}{\sqrt{p_K}}
    \sum_{k=K}^{2K}
    a_k\ket{C_k(w)}
    \label{eq:overview-band-state}
\end{equation}
using only $O(K^{1/4})=O(n^{1/4}/\sqrt{\eps})$ applications of $U_w$ and $U_w^\dagger$. Finally, the isometries $V_k$ are applied coherently across the degree band, producing at least $K$ copies of $\ket w$, which are then used for tomography. 

The remainder of this section turns the above idealized picture into a finite-dimensional algorithm and accounts for all approximation and implementation costs. In~\Cref{sec:discrete-qht}, we first analyze the discretized quantum Hermite transform and show that the resulting state is close to the desired total degree portion of the ideal Hermite state. In~\Cref{sec:high-degree-projection}, we isolate and amplify an informative degree band \(k\in[K,2K]\). In~\Cref{sec:isometry-map}, we construct an efficient isometry that converts this band into a state containing \(K\) approximate copies of the hidden direction state \(\ket{w}\). We then recover a classical description of \(w\) using pure-state tomography in~\Cref{sec:pure-state-tomography}. Finally, in~\Cref{sec:example-putting-together}, we combine these ingredients and choose the parameters to obtain the stated query and gate complexities.
\subsection{Discrete Quantum Hermite Transform}\label{sec:discrete-qht}
For an integer $D\geq 1$, define the projector onto the computational subspace corresponding to one-dimensional Hermite degrees smaller than $D$ by
\begin{equation}
    P_D
    :=
    \sum_{k=0}^{D-1}
    \ket{k}\!\bra{k}.
    \label{eq:degree-D-projector}
\end{equation}
The finite-dimensional QHT is only guaranteed to reproduce the Hermite change of basis on this subspace. Accordingly, our goal is not to approximate the full state $\ket{F_w}$, but rather to guarantee
\begin{equation}
    \left\|
        P^{\otimes n}_D\ket{\widetilde F_{w,M}}
        -
        P^{\otimes n}_D\ket{F_w}
    \right\|
    \leq \eta.
    \label{eq:projected-Hermite-target}
\end{equation}
Since the degree band used later satisfies $\Pi_{B_K}P^{\otimes n}_D=\Pi_{B_K}$ whenever $D\geq2K+1$, this projected guarantee is sufficient for the learning algorithm.

After fixing degree $D$, preparing the Hermite state introduces two conceptually distinct sources of error. The first is the \emph{discretization error}, which stems from truncating the unbounded Gaussian domain and from replacing the continuum by a finite grid. The second is the \emph{$\QHT$ approximation error}: even on this discrete lattice, the finite quantum Hermite transform only approximates the ideal change of basis between discretized Hermite functions and occupation-number states. Accordingly, our analysis proceeds by first choosing the lattice parameters so that the discretized phase state is close to its continuum counterpart, and then choosing the parameters of the discrete QHT so that its action on all Hermite degrees in the band $[K,2K]$ is sufficiently accurate. Combining these two
bounds will show that the implemented Hermite state, after projection onto $\mathrm{Ran}(P_D^{\otimes n})$, is close to the corresponding projection of the ideal Hermite state: $P_D^{\otimes n}\ket{\widetilde F_{w,M}}\approx P_D^{\otimes n}\ket{F_w}.$

Note that the original implementation of the quantum Hermite transform ($\QHT$) in~\cite{JainIyerSommaBaoJordan2025} is formulated using the physicists' convention for Hermite polynomials and Hermite functions. In particular, their discretization uses the lattice $y_j = j\sqrt{\frac{2\pi}{M}}$ for $j\in \{-M/2,\ldots,M/2-1\}$ and $M=2^m$. Since throughout this work we use the probabilists' convention, we must rescale the spatial coordinate accordingly. Recall from Section~\ref{ssec:hermitedefs} that $\operatorname{He}_k(x)=2^{-k/2}H_k\!\left(\frac{x}{\sqrt{2}}\right)$. Consequently, we use the lattice spacing
$\Delta:=\sqrt{\frac{4\pi}{M}}$. With $L:=\sqrt{\pi M}$, we define
\begin{equation}
    S_\Delta[-L, L] = \{j\Delta: j\in \Z, -L\leq j\Delta < L  \}.
\end{equation}
Importantly, this fixed rescaling does not change the required Hilbert-space dimension $M$ or the asymptotic complexity of the QHT\@. It only changes the physical lattice spacing and truncation range.

Let us first consider the discretization error. We follow essentially the same analysis as in~\cite{JainIyerSommaBaoJordan2025}, after accounting for the rescaling introduced above. A minor additional subtlety arises in our setting because the sign function is discontinuous along the decision boundary $\{x:\langle w,x\rangle=0\}$. Consequently, discretization cells that intersect this boundary may introduce an additional error due to a change of sign across the grid. We bound this boundary contribution  separately in the proposition below.
\begin{proposition}[Halfspace-boundary discretization error]
\label{prop:halfspace-boundary-error}
Let $f_w(x)=\sgn(\langle w,x\rangle)$ with $w\in\sphere^{n-1}$. Fix a finite dimension $M>0$, with grid spacing $\Delta:=\sqrt{4\pi/M}$ and $L:=\Delta M/2$. For $x\in[-L,L]^n$, define $\bar x$ coordinatewise by $\bar x_i:=\max\{z\in S_\Delta[-L,L]:z\leq x_i\}$. Then
\[
    \int_{[-L,L]^n}\abs{f_w(x)-f_w(\bar x)}^2\gamma_n(x)\,\mathrm dx
    \leq4\sqrt{\frac{8n}{M}}.
\]
\end{proposition}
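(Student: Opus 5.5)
We bound the integrand pointwise and reduce to a one-dimensional Gaussian estimate. Since $f_w$ takes values in $\{\pm1\}$, we have $|f_w(x)-f_w(\bar x)|^2=4\cdot\mathbf 1_{\{f_w(x)\neq f_w(\bar x)\}}$. So it suffices to show
\[
\Pr_{x\sim\gamma_n}\bigl[x\in[-L,L]^n,\ f_w(x)\ne f_w(\bar x)\bigr]\le\sqrt{8n/M}.
\]
By construction, each coordinate satisfies $0\le x_i-\bar x_i<\Delta$. Hence $\|x-\bar x\|<\sqrt n\,\Delta$, and by Cauchy--Schwarz,
\[
|\langle w,x\rangle-\langle w,\bar x\rangle|\le\|w\|\,\|x-\bar x\|<\sqrt n\,\Delta .
\]
A sign disagreement requires $\langle w,x\rangle$ and $\langle w,\bar x\rangle$ to lie on opposite sides of $0$ (with the convention $\sgn(0)=1$). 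This forces $|\langle w,x\rangle|<\sqrt n\,\Delta$.

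Next we drop the restriction to the cube, which only enlarges the event, and use rotational invariance. Under $\gamma_n$, the variable $Z=\langle w,x\rangle$ is $\mathcal N(0,1)$. Its density is at most $1/\sqrt{2\pi}$, so
\[
\Pr[|Z|<\sqrt n\Delta]\le \frac{2\sqrt n\,\Delta}{\sqrt{2\pi}}=\sqrt{\frac{2n}{\pi}}\cdot\sqrt{\frac{4\pi}{M}}=\sqrt{\frac{8n}{M}}.
\]
Multiplying by $4$ gives exactly the claimed bound $4\sqrt{8n/M}$.

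The main point to check is that the Cauchy--Schwarz step is not too lossy for the constant. Exploiting the one-sided rounding, $x_i-\bar x_i\in[0,\Delta)$, could improve the bound, but it is not needed, since the crude estimate already matches the stated constant. We should also confirm that the floor $\bar x_i$ is well defined for every $x_i\in[-L,L]$. This holds because $-L\in S_\Delta[-L,L]$: indeed $-L=-(M/2)\Delta$ is a grid point. The endpoint $x_i=L$ is excluded from the grid, but it still has a floor, namely $L-\Delta$. Finally, boundary points with $\langle w,x\rangle=0$ exactly have measure zero and cause no issue.
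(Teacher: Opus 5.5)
Your proof is correct and is essentially the paper's argument: you reduce the integrand to $4\cdot\mathbf 1_{\{f_w(x)\neq f_w(\bar x)\}}$, use Cauchy--Schwarz to show that a disagreement forces $|\langle w,x\rangle|\le\Delta\sqrt n$, and bound the probability of this event by the Gaussian density estimate $\sqrt{2/\pi}\,\Delta\sqrt n=\sqrt{8n/M}$. One cosmetic point: at $x_i=L$ you get $x_i-\bar x_i=\Delta$, so the strict inequality should be non-strict, which changes nothing since that set has measure zero.
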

\begin{proof}
    If $f_w(x)\neq f_w(\bar x)$, then the two values have opposite signs, which implies
    \begin{equation}
        \abs{\inner{w}{x}}\leq\abs{\inner{w}{x-\bar x}}
        \leq\norm{w}\norm{x-\bar x}\leq\Delta\sqrt n.
    \end{equation}
    Hence,
    \begin{align}
        \Pr_{x\sim\gamma_n}[f_w(x)\neq f_w(\bar x)]
        &\leq\Pr_{x\sim\gamma_n}[\abs{\inner{w}{x}}\leq\Delta\sqrt n]
        \leq\sqrt{\frac{2}{\pi}}\,\Delta\sqrt n
        =\sqrt{\frac{8n}{M}}.
    \end{align}
    The last equality follows from our choice of $\Delta$. Since $\abs{f_w(x)-f_w(\bar x)}=2$ whenever $f_w(x)\neq f_w(\bar x)$, we obtain
    \begin{align}
        \int_{[-L,L]^n}\abs{f_w(x)-f_w(\bar x)}^2\gamma_n(x)\,\mathrm dx
        &\leq4\Pr_{x\sim\gamma_n}[f_w(x)\neq f_w(\bar x)]
        \leq4\sqrt{\frac{8n}{M}}.
    \end{align}
\end{proof}
We now bound the total discretization error of the Hermite coefficients up to a finite degree $D$. For the continuous expansion, define
\begin{equation}
    \widehat f_w(\alpha):=\int_{\R^n}f_w(x)h_\alpha(x)\gamma_n(x)\,dx.
\end{equation}
For the discretized domain, we define the corresponding normalized discrete coefficients by
\begin{equation}
    \widehat f_{w,M}(\alpha)
    :=
    \frac{\Delta^n}{\sqrt{Z}}
    \sum_{\bar{x}\in S_{\Delta}[-L,L]^n}
    f_w(\bar{x})
    h_\alpha(\bar{x})
    \gamma_n(\bar{x}),
\end{equation}
where $Z:=\Delta^n\sum_{\bar{x}\in S_{\Delta}[-L,L]^n}\gamma_n(\bar{x})$ is the normalization factor of the discretized Gaussian distribution.

The following lemma quantifies the cumulative error between the continuous and discrete Hermite coefficients, restricted to degrees up to $D$, as a function of the finite discretization dimension $M$ . For better readability, we defer the detailed proof of this lemma to~\Cref{app:discretization}.

\begin{lemma}\label{lem:discretization-error}
    For every finite degree cutoff $D$ and every $M\geq64n^2(D+1)$, the following inequality holds:
    \begin{equation}
        \left(
        \sum_{\alpha\in[D]_0^n}
        \left|
            \widehat f_{w,M}(\alpha)
            -
            \widehat f_w(\alpha)
        \right|^2
        \right)^{1/2}
        \leq
        24n
    \sqrt{\frac{(D+1)}{M}}
    +2\left(
        \frac{8n}{M}
    \right)^{1/4}
    +
    2\sqrt{2n}\,e^{-\pi M/4}.
    \end{equation}
\end{lemma}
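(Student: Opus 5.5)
The plan is to decompose the difference $\widehat f_{w,M}(\alpha)-\widehat f_w(\alpha)$ into three pieces, one for each term on the right-hand side, and bound the $\ell_2$ norm of each over $\alpha\in[D]_0^n$. Throughout we use that $\{h_\alpha\}$ is orthonormal in $L^2(\gamma_n)$, so that for any $g\in L^2(\gamma_n)$,
\[
\sum_{\alpha\in[D]_0^n}\bigl|\langle g,h_\alpha\rangle\bigr|^2\leq\norm{g}_{L^2(\gamma_n)}^2
\]
by Bessel's inequality. We introduce two intermediate quantities. The first is the truncated continuous coefficient $c^{(1)}_\alpha:=\int_{[-L,L]^n}f_w(x)h_\alpha(x)\gamma_n(x)\,dx$. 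The second is the ``piecewise-constant sign'' coefficient $c^{(2)}_\alpha:=\int_{[-L,L]^n}f_w(\bar x)h_\alpha(x)\gamma_n(x)\,dx$. The triangle inequality then splits the error into
\[
\bigl\|\widehat f_w-c^{(1)}\bigr\|,\qquad \bigl\|c^{(1)}-c^{(2)}\bigr\|,\qquad \bigl\|c^{(2)}-\widehat f_{w,M}\bigr\|.
\]

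\textbf{Truncation term.} This is the Bessel inequality applied to $g=f_w\mathbf 1_{\R^n\setminus[-L,L]^n}$. Its squared norm is $\gamma_n(\R^n\setminus[-L,L]^n)\leq 2n\,e^{-L^2/2}$. With $L^2=\pi M$ this is at most $2n e^{-\pi M/2}$, so the term is bounded by $\sqrt{2n}\,e^{-\pi M/4}$. The stated constant $2\sqrt{2n}$ leaves slack, presumably to absorb the same tail appearing again in the normalization $Z$.

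\textbf{Boundary term.} This is Bessel applied to $g=(f_w-f_w(\bar\cdot))\mathbf 1_{[-L,L]^n}$. \Cref{prop:halfspace-boundary-error} bounds $\norm{g}^2$ by $4\sqrt{8n/M}$, which gives exactly the $2(8n/M)^{1/4}$ term.

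\textbf{Quadrature term (main obstacle).} This term compares $\int_{\text{cell}}h_\alpha\gamma_n\,dx$ with $\Delta^n h_\alpha(\bar x)\gamma_n(\bar x)/\sqrt Z$ on each grid cell, using that $f_w(\bar x)$ is constant on the cell. Write $\phi_\alpha:=h_\alpha\sqrt{\gamma_n}$, the product of normalized Hermite functions.

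\begin{enumerate}
\item Factor the integrand as $f_w(\bar x)\sqrt{\gamma_n(x)}\cdot\phi_\alpha(x)$. The error then becomes an inner product of the bounded function $f_w(\bar x)\sqrt{\gamma_n}$ with $\phi_\alpha-\phi_\alpha(\bar\cdot)$, plus a similar piece with $\sqrt{\gamma_n(x)}-\sqrt{\gamma_n(\bar x)}$. Working at the amplitude level $\sqrt{\gamma_n}$ is how the state is actually encoded.
\item Control the variation within a cell by the mean value theorem: $|\phi_\alpha(x)-\phi_\alpha(\bar x)|\leq\Delta\sum_i|\partial_i\phi_\alpha|$.
\item Use the ladder relation $\partial\phi_k=\tfrac12\bigl(\sqrt k\,\phi_{k-1}-\sqrt{k+1}\,\phi_{k+1}\bigr)$ (probabilists' scaling). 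This gives $\norm{\partial_i\phi_\alpha}_{L^2}\leq\sqrt{\alpha_i+1}$, and summed over coordinates an $L^2$ bound of order $n\sqrt{D+1}$ per $\alpha$.
\end{enumerate}

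The delicate point is to sum over all $\alpha\in[D]_0^n$ without paying a factor $D^{n/2}$. To avoid this, the $\ell_2$ norm over $\alpha$ should be taken as a dual norm: pair against an arbitrary unit vector $(\beta_\alpha)$ and bound
\[
\int \bigl(\textstyle\sum_\alpha\beta_\alpha\phi_\alpha\bigr)\cdot(\text{cell error}).
\]
The variation of $\Phi:=\sum_\alpha\beta_\alpha\phi_\alpha$ over a cell is then controlled by $\Delta\norm{\nabla\Phi}\leq\Delta\, n\sqrt{D+1}$, using orthogonality again. Since $\Delta=\sqrt{4\pi/M}$, this yields the $24n\sqrt{(D+1)/M}$ term after collecting constants.

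Finally, replace $Z$ by $1$: the relative error $|Z-1|$ is itself bounded by the same Riemann-sum argument plus the tail, which is where the hypothesis $M\geq64n^2(D+1)$ is used to keep $Z$ bounded away from $0$. Summing the three pieces gives the lemma.
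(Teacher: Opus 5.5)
Your outer architecture is sound. The truncation and boundary pieces are correctly handled by Bessel's inequality applied to $f_w\mathbf 1_{\R^n\setminus[-L,L]^n}$ and $(f_w-f_w(\bar\cdot))\mathbf 1_{[-L,L]^n}$; these are exactly the paper's $\mathcal E_{\mathrm{tail}}$ and $\mathcal E_{\mathrm{bdry}}$. Pairing against a unit vector $\beta$ to avoid a $D^{n/2}$ loss is also what the paper does.

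The gap is in the quadrature step, which is the heart of the lemma. The mean-value bound $|\phi_\alpha(x)-\phi_\alpha(\bar x)|\le\Delta\sum_i|\partial_i\phi_\alpha|$ evaluates the gradient at an unknown intermediate point, not at $x$, so by itself it gives no $L^2$ estimate. The estimate you actually need is $\|\Phi-\Phi(\bar\cdot)\|_{L^2([-L,L]^n)}\lesssim\Delta\|\nabla\Phi\|$. This cannot come from a generic cellwise argument when $n\geq2$, because the value at a cell corner is not controlled by the $L^2$ norm of the gradient on the cell. For example, a logarithmic cutoff centered at the corner has $\Phi(\bar x)=1$ and $\int_{\mathrm{cell}}|\Phi-\Phi(\bar x)|^2$ of order $\Delta^2$, yet arbitrarily small $\int_{\mathrm{cell}}|\nabla\Phi|^2$. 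So this step is asserted rather than proved.

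The missing idea is to reduce to one dimension using the tensor-product structure of the Hermite functions.
\begin{itemize}
\item In one dimension the cellwise Poincar\'e inequality $\int_{I_j}|v(t)-v(z_j)|^2\,dt\le\frac{4\Delta^2}{\pi^2}\int_{I_j}|v'|^2\,dt$ does hold. Together with the ladder relation, it shows that the discretized one-dimensional synthesis map $T_d$ differs from the continuous one $T_c$ by at most $\zeta=\sqrt{16(D+1)/M}$ in operator norm.
\item The paper then discretizes one coordinate at a time. The $r$-th hybrid difference is $T_d^{\otimes(r-1)}\otimes(T_d-T_c)\otimes T_c^{\otimes(n-r)}$, of norm at most $(1+\zeta)^{r-1}\zeta$, because each already-discretized coordinate can inflate norms by a factor $1+\zeta$.
\item Summing gives $(1+\zeta)^n-1\le2n\zeta$. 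This is where $M\geq64n^2(D+1)$ enters: it guarantees $n\zeta\le1/2$. It is not needed to keep $Z$ away from $0$.
\end{itemize}
The same hybrid argument is needed for your $\sqrt{\gamma_n(x)}-\sqrt{\gamma_n(\bar x)}$ piece, and for bounding $\|\Phi(\bar\cdot)\|$, which your factorization requires.

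On constants: your grouping multiplies the Gaussian-variation and normalization errors by $\|\Phi(\bar\cdot)\|$, which can be as large as $e^{1/2}$, rather than by a unit-norm function, so the stated constants need rechecking. The fix is to pair the factors differently. Pair $\Phi(\bar\cdot)-\Phi$ against the unit-norm function $Z^{-1/2}f_w(\bar x)\sqrt{\gamma_n(\bar x)}\mathbf 1_{[-L,L]^n}(x)$. Pair $\Phi$ itself, which has norm exactly $1$, against the remaining amplitude error, as the paper does. This recovers $8n\sqrt{(D+1)/M}+16n/\sqrt M\le24n\sqrt{(D+1)/M}$ and the tail coefficient $2\sqrt{2n}$ exactly.
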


The next error is the approximation error from the $\QHT$ circuit itself. On the rescaled lattice introduced above, for each Hermite degree
$k\in\{0,\ldots,D-1\}$, we define the corresponding discrete Hermite state with the grid spacing $\Delta=\sqrt{4\pi/M}$ by
\begin{equation}
    \ket{\psi_k^{(M)}}
    =
    \left(\frac{2}{M}\right)^{1/4}
    \sum_{x\in S_\Delta[-L,L]}
    e^{-x^2/4}h_k(x)\ket{x}.
    \label{eq:discrete-hermite-state-expanded}
\end{equation}
Here $\ket{x}$ denotes the computational-basis state associated with the lattice point $x=j\Delta$. Under the change of variables $x=\sqrt{2}\,y$, our convention differs from that of~\cite{JainIyerSommaBaoJordan2025} by the known phase $(-1)^k$ on degree $k$. Let
\[
    R_D\ket{k}:=(-1)^k\ket{k},\qquad 0\leq k<D.
\]
We therefore compose the QHT of~\cite{JainIyerSommaBaoJordan2025} with $R_D$ on its degree input, and denote the resulting phase-corrected circuit again by $\mathsf{QHT}_{D,M}$. This adds only $O(\log D)$ gate overhead and makes the synthesis convention agree exactly with~\Cref{eq:discrete-hermite-state-expanded}. Their analysis then gives the following lemma. 
\begin{lemma}[Discrete Quantum Hermite transform
{\cite[Theorem~19]{JainIyerSommaBaoJordan2025}}]
\label{lemma:quantum-hermite-transform}
Let $\eta\in(0,1/2)$ and let $D\geq2$ satisfy $D>\log(1/\eta)$. For a sufficiently large universal constant $C>0$ and a power of two
\begin{equation}
  M\geq C D^{9/4}\eta^{-13/4},
  \label{eq:qht-grid-size}
\end{equation}
there is a quantum circuit $\mathsf{QHT}_{D,M}$ such that
\begin{equation}
  \left\|
    \mathsf{QHT}_{D,M}
      \sum_{k=0}^{D-1}c_k\lvert k\rangle
    -\sum_{k=0}^{D-1}c_k\lvert\psi_k^{(M)}\rangle
  \right\|
  \leq\eta
  \label{eq:qht-guarantee}
\end{equation}
for every $c\in\mathbb C^D$ with $\sum_{k=0}^{D-1}|c_k|^2=1$. The circuit uses $O(\log^3(M)\log(1/\eta))$ elementary gates.
\end{lemma}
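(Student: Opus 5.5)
This lemma is essentially a restatement of \cite[Theorem~19]{JainIyerSommaBaoJordan2025} in our probabilists' convention, so the plan is to transport their guarantee through an explicit change of variables rather than re-derive the QHT construction.

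\textbf{Step 1: match the lattices.} Their theorem is stated for the physicists' lattice $y_j=j\sqrt{2\pi/M}$ and the discrete Hermite states
\[
\ket{\phi_k^{(M)}}\propto\sum_j e^{-y_j^2/2}H_k(y_j)/\sqrt{2^kk!}\,\ket{j}.
\]
Set $x_j=\sqrt2\,y_j$. Then $x_j=j\sqrt{4\pi/M}=j\Delta$, and $j$ runs over $\{-M/2,\ldots,M/2-1\}$, which is exactly $S_\Delta[-L,L]$ with $L=\Delta M/2=\sqrt{\pi M}$. The computational basis state $\ket{j}$ is unchanged, so no gates are spent on the relabeling.

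\textbf{Step 2: match the states.} Using $h_k(x)=H_k(x/\sqrt2)/\sqrt{2^kk!}$ and $e^{-x^2/4}=e^{-y^2/2}$, each amplitude of $\ket{\psi_k^{(M)}}$ equals the corresponding amplitude of their state up to two factors.
\begin{itemize}
\item A $k$-independent normalization: $(2/M)^{1/4}$ versus their prefactor. Both are normalized to first order by the Riemann-sum computation $\Delta\sum_x e^{-x^2/2}h_k(x)^2\approx\sqrt{2\pi}$, so $(2/M)^{1/4}=(\Delta/\sqrt{2\pi})^{1/2}$ and the constants should agree exactly.
\item The sign $(-1)^k$ noted in the text.
\end{itemize}
I will check that the normalizations agree. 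If they differ by a constant, I will absorb it into their definition of the target states, which their theorem already treats.

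\textbf{Step 3: apply their theorem.} Once $\ket{\psi_k^{(M)}}=(-1)^k\ket{\phi_k^{(M)}}$, precompose their circuit with $R_D$. For any unit vector $c$,
\[
\mathsf{QHT}_{D,M}\sum_k c_k\ket{k}=\mathsf{QHT}^{\rm JISBJ}\sum_k(-1)^kc_k\ket{k}.
\]
Their guarantee, applied to the unit vector $((-1)^kc_k)_k$ under the same hypotheses ($D>\log(1/\eta)$ and $M\ge CD^{9/4}\eta^{-13/4}$ a power of two), gives distance at most $\eta$ from $\sum_k(-1)^kc_k\ket{\phi_k^{(M)}}=\sum_kc_k\ket{\psi_k^{(M)}}$.

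\textbf{Step 4: gate count.} $R_D$ is a $Z$ gate on the least significant qubit of the degree register, so it costs $O(1)\subseteq O(\log D)$ gates. The total therefore remains $O(\log^3 M\log(1/\eta))$.

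\textbf{Main obstacle.} The delicate point is bookkeeping rather than analysis. I need to confirm that their theorem is stated for arbitrary unit superpositions, not just single basis states. I also need to check that their normalization of the discrete Hermite functions matches $(2/M)^{1/4}$ exactly, since approximate normalization would introduce an extra error that must be folded into $\eta$ by adjusting the constant $C$.
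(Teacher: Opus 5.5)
Your proposal is correct and follows the same route as the paper: rescale the physicists' lattice by $x=\sqrt2\,y$ (giving $\Delta=\sqrt{4\pi/M}$, $L=\sqrt{\pi M}$, and an exactly matching prefactor $(2/M)^{1/4}$), correct the remaining $(-1)^k$ phase by precomposing with $R_D$, and apply Jain et al.'s Theorem~19 to the unit vector $((-1)^k c_k)_k$. Your remark that $R_D$ is a single $Z$ gate on the least significant bit of the degree register (so $O(1)$ rather than the stated $O(\log D)$ overhead) is a harmless sharpening.
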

Now, we are ready to bound the total error incurred in the discrete quantum Hermite-transform step. Under an appropriate choice of $M$, which we specify later, we have 
\begin{equation}
    (I\otimes H)\QEX_{M}(f_w, \gamma_n)\ket{0} = \frac{1}{\sqrt2}
    \left(
        \ket{\widetilde G_M}\ket0_Y
        +
        \ket{\widetilde \Phi_{w,M}}\ket1_Y
    \right) 
\end{equation}
where
\begin{equation}
   \ket{\widetilde G_M}:= \frac{\Delta^{n/2}}{\sqrt Z}
    \sum_{\bar{x}}
    \sqrt{\gamma_n(\bar{x})}
    \ket{\bar{x}}, \qquad \ket{\widetilde\Phi_{w,M}}
    :=
    \frac{\Delta^{n/2}}{\sqrt Z}
    \sum_{\bar{x}}
    f_w(\bar{x})
    \sqrt{\gamma_n(\bar{x})}
    \ket{\bar{x}}.
    \label{eq:finite-phase-state}
\end{equation}
Applying inverse $\QHT_{D, M}$ on the $n$ registers, we denote
\begin{equation}
    \ket{\widetilde F_{w,M}} := (\QHT^\dagger_{D, M})^{\otimes n} \ket{\widetilde \Phi_{w, M}}, \qquad \ket{\widetilde F_{0,M}} := (\QHT^\dagger_{D, M})^{\otimes n} \ket{\widetilde G_M}
\end{equation}
By~\Cref{lemma:quantum-hermite-transform}, for every unit vector in $\mathrm{Ran}(P_D)$, the inverse transform $\QHT^\dagger_{D,M}$ approximates the ideal inverse Hermite transform with additive error in Euclidean norm. Thus, our goal is to control its projection onto the degree-$D$ subspace:
\begin{equation}
    \left\|
        P^{\otimes n}_D\ket{\widetilde F_{w,M}}
        -
        P^{\otimes n}_D\ket{F_w}
    \right\|
    \leq \eta.
    \label{eq:total-hermite-state-target}
\end{equation} 
where $\ket{F_w}$ is the full normalized Hermite state
\begin{equation}
    \ket{F_w}
    =
    \sum_{\alpha\in\mathbb N^n}
    \widehat f_w(\alpha)\ket{\alpha}.
\end{equation}
Its restriction to the degree-$D$ QHT subspace is the generally subnormalized vector
\begin{equation}
    P^{\otimes n}_D\ket{F_w}
    =
    \sum_{\alpha\in[D]_0^n}
    \widehat f_w(\alpha)\ket{\alpha}.
    \label{eq:projected-ideal-Hermite-vector}
\end{equation}
This is presented in the following theorem,
\begin{theorem}[Finite-dimensional preparation of the Hermite component]
\label{thm:finite-hermite-state-preparation}
Let $f_w(x)=\sgn(\langle w,x\rangle)$ with $w\in\sphere^{n-1}$. Let $D\geq 2$ and $0<\eta<1/2$ satisfy $D>\log(4n/\eta)$. Let $M$ be a power of two satisfying
\begin{equation}
\label{eq:M-total-choice}
    M
    \geq
    \max\left\{
        64n^2(D+1),
        \frac{20736\,n^2(D+1)}{\eta^2},
        \frac{165888\,n}{\eta^4},
        \frac{4}{\pi}
        \log\!\left(
            \frac{12\sqrt{2n}}{\eta}
        \right),
        C D^{9/4}
        \left(
            \frac{4n}{\eta}
        \right)^{13/4}
    \right\},
\end{equation}
where $C>0$ is the universal constant from~\Cref{lemma:quantum-hermite-transform}. Then there exists a quantum circuit $A_w$, using one query to
$\QEX_M(f_w,\gamma_n)$ and $ O\!\left(n\log^3(M)\log\frac{4n}{\eta}\right)$ additional elementary gates, such that
\begin{equation}
    A_w\ket{0}
    =
    \frac{1}{\sqrt2}
    \left(
        \ket{\widetilde F_{0,M}}\ket0_Y
        +
        \ket{\widetilde F_{w,M}}\ket1_Y
    \right),
    \label{eq:finite-hermite-coherent-output}
\end{equation}
where $\ket{\widetilde F_{w,M}}$ is normalized and satisfies
\begin{equation}
    \left\|
        P^{\otimes n}_D\ket{\widetilde F_{w,M}}
        -
        P^{\otimes n}_D\ket{F_w}
    \right\|
    \leq
    \eta.
    \label{eq:final-Hermite-preparation-error}
\end{equation}
\end{theorem}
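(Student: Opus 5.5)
The circuit $A_w$ is the obvious one: apply $\QEX_M(f_w,\gamma_n)$ once, a Hadamard on the label register $Y$, and then $(\QHT^\dagger_{D,M})^{\otimes n}$ on the $n$ coordinate registers, where each $\QHT_{D,M}$ is the phase-corrected circuit of \Cref{lemma:quantum-hermite-transform} instantiated with per-coordinate accuracy $\eta':=\eta/(4n)$. The hypothesis $D>\log(4n/\eta)=\log(1/\eta')$ and the last term of \Cref{eq:M-total-choice} are exactly what that lemma needs. The form \Cref{eq:finite-hermite-coherent-output} then follows from \Cref{eq:finite-phase-state} by definition. Normalization of $\ket{\widetilde F_{w,M}}$ is immediate because $\ket{\widetilde\Phi_{w,M}}$ is normalized (its squared entries are $\Delta^n\gamma_n(\bar x)/Z$, which sum to $1$) and the circuits are unitary. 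The gate count is $n$ copies of the $O(\log^3 M\log(4n/\eta))$ bound plus $O(1)$ gates and the $O(\log D)$ phase corrections.

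For the error bound \Cref{eq:final-Hermite-preparation-error}, I would introduce the intermediate vector $\ket{F_{w,M}}:=\sum_{\alpha}\widehat f_{w,M}(\alpha)\ket{\alpha}$ of discrete coefficients and use the triangle inequality with three pieces.

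\textbf{(a) Discretization.} Bound $\|P_D^{\otimes n}\ket{F_{w,M}}-P_D^{\otimes n}\ket{F_w}\|$ by \Cref{lem:discretization-error}. The choices in \Cref{eq:M-total-choice} are made so that each of its three terms is at most $\eta/6$:
\begin{itemize}
\item $24n\sqrt{(D+1)/M}\le\eta/6$ iff $M\ge 20736n^2(D+1)/\eta^2$;
\item $2(8n/M)^{1/4}\le\eta/6$ iff $M\ge 165888n/\eta^4$;
\item the exponential term is handled by the $\frac4\pi\log$ condition.
\end{itemize}
This gives total $\eta/2$.

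\textbf{(b) QHT approximation.} Compare $P_D^{\otimes n}(\QHT_{D,M}^\dagger)^{\otimes n}\ket{\widetilde\Phi_{w,M}}$ with $P_D^{\otimes n}\ket{F_{w,M}}$. The key observation is that the discrete coefficient $\widehat f_{w,M}(\alpha)$ equals $\langle\psi_\alpha^{(M)}|\widetilde\Phi_{w,M}\rangle$, where $\ket{\psi_\alpha^{(M)}}=\bigotimes_i\ket{\psi_{\alpha_i}^{(M)}}$. This holds because $e^{-x^2/4}$ times the prefactor $(2/M)^{1/4}$ matches $\Delta^{1/2}\sqrt{\gamma_1(x)}$ times $\sqrt{\Delta}$, up to the normalization handled by $Z$. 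I would verify this constant carefully, since that is where a mismatch could hide.

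Let $\Psi:=\sum_{\alpha\in[D]_0^n}\ket{\psi_\alpha^{(M)}}\bra{\alpha}$ be the ideal synthesis map, so that $P_D^{\otimes n}\ket{F_{w,M}}=\Psi^\dagger\ket{\widetilde\Phi_{w,M}}$. Also
\[
P_D^{\otimes n}(\QHT^\dagger)^{\otimes n}=\bigl(\QHT^{\otimes n}P_D^{\otimes n}\bigr)^\dagger .
\]
Hence the difference is at most the operator norm $\|\QHT^{\otimes n}P_D^{\otimes n}-\Psi\|$. A telescoping (hybrid) argument over the $n$ tensor factors reduces this to $n$ times the single-factor error. \Cref{lemma:quantum-hermite-transform} bounds that single-factor error by $\eta'$ on unit vectors of $\mathrm{Ran}(P_D)$. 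The telescoping also needs the single-factor operators to have norm $O(1)$. Here $\|\QHT P_D\|\le1$, and $\|\Psi_1\|\le 1+\eta'$ follows from the lemma itself. This gives $n\eta'(1+\eta')^{n}$, which is too lossy, so instead I would order the hybrids so that every untouched factor is either $\QHT P_D$ (norm $\le1$) or $P_D$ composed with $\QHT$. That yields roughly $2n\eta'\le\eta/2$.

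\textbf{Main obstacle.} The step I expect to be hardest is (b). Specifically, I need the operator-norm hybrid to avoid a $(1+\eta')^n$ blowup, and I need the exact identity between discrete coefficients and overlaps with discrete Hermite states on the rescaled lattice. If the overlap identity holds only approximately, I would absorb the discrepancy into part (a), since \Cref{lem:discretization-error} is stated for these same normalized coefficients. Summing (a) and (b) gives at most $\eta$.
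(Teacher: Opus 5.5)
Your proposal is correct and follows the paper's proof essentially step for step. It uses the same triangle inequality through the discrete-coefficient vector, the same $\eta/6$ allocation for the three terms of \Cref{lem:discretization-error}, and the same adjoint-plus-hybrid operator-norm argument with per-coordinate accuracy $\eta/(4n)$; your overlap identity $\widehat f_{w,M}(\alpha)=\langle\psi^{(M)}_\alpha|\widetilde\Phi_{w,M}\rangle$ in fact holds exactly. The one correction: the $(1+\eta')^n$ factor you worried about is harmless, since plain telescoping gives $(1+\eta')^n-1\le e^{\eta/4}-1\le\eta/2$, which is exactly the bound the paper uses, so no reordering of the hybrids is needed.
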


\begin{proof}
We separate the total approximation error into the two above-mentioned conceptually distinct contributions: the discretization error of the Hermite coefficients and the approximation error of the finite quantum Hermite transform. 

Define the finite-grid Hermite coefficient vector
\begin{equation}
    \ket{F_{w,M}^{(D)}}:=\sum_{\alpha\in[D]_0^n}\widehat f_{w,M}(\alpha)\ket{\alpha}.
    \label{eq:discrete-Hermite-vector}
\end{equation}
By the triangle inequality,
\begin{align}
    &
    \left\|
        P^{\otimes n}_D\ket{\widetilde F_{w,M}}
        -
        P^{\otimes n}_D\ket{F_w}
    \right\|
    \nonumber\\
    &\qquad\leq
    \underbrace{
    \left\|
        P^{\otimes n}_D\ket{\widetilde F_{w,M}}
        -
        \ket{F_{w,M}^{(D)}}
    \right\|
    }_{\mathcal E_{\mathrm{QHT}}}
    +
    \underbrace{
    \left\|
        \ket{F_{w,M}^{(D)}}
        -
        P^{\otimes n}_D\ket{F_w}
    \right\|
    }_{\mathcal E_{\mathrm{disc}}}.
    \label{eq:total-error-decomposition}
\end{align}
By~\Cref{lem:discretization-error}, whenever $M\geq64n^2(D+1)$,
\begin{align}
    \mathcal E_{\mathrm{disc}}
    &=
    \left(
        \sum_{\alpha\in[D]_0^n}
        \left|
            \widehat f_{w,M}(\alpha)
            -
            \widehat f_w(\alpha)
        \right|^2
    \right)^{1/2}
    \nonumber\\
    &\leq
    24n\sqrt{\frac{D+1}{M}}
    +
    2\left(
        \frac{8n}{M}
    \right)^{1/4}
    +
    2\sqrt{2n}\,e^{-\pi M/4}.
    \label{eq:disc-error-combined}
\end{align}
We allocate an error budget $\eta/6$ to each of the three terms on the right-hand side. It is therefore sufficient that
\begin{equation}
    M
    \geq
    \max\left\{
        64n^2(D+1),
        \frac{20736\,n^2(D+1)}{\eta^2},
        \frac{165888\,n}{\eta^4},
        \frac{4}{\pi}
        \log\!\left(
            \frac{12\sqrt{2n}}{\eta}
        \right)
    \right\}.
    \label{eq:M-condition-disc}
\end{equation}
Under these conditions,
\begin{equation}
    \mathcal E_{\mathrm{disc}}
    \leq
    \frac{\eta}{2}.
    \label{eq:disc-error-half}
\end{equation}

We now bound $\mathcal E_{\mathrm{QHT}}=\left\|P^{\otimes n}_D\ket{\widetilde F_{w,M}}-\ket{F_{w,M}^{(D)}}\right\|$. For one coordinate, define the ideal discrete Hermite synthesis map
\begin{equation}
    S_{D,M}
    :=
    \sum_{k=0}^{D-1}
    \ket{\psi_k^{(M)}}\!\bra{k},
    \label{eq:discrete-Hermite-synthesis}
\end{equation}
where
\begin{equation}
    \ket{\psi_k^{(M)}}=\left(\frac{2}{M}\right)^{1/4}\sum_{x\in S_\Delta[-L,L]}e^{-x^2/4}h_k(x)\ket{x}.
    \label{eq:discrete-Hermite-state-recalled}
\end{equation}
Equivalently, since $\Delta=\sqrt{4\pi/M}$ and $\sqrt{\gamma(x)}=(2\pi)^{-1/4}e^{-x^2/4}$,
\begin{equation}
    \ket{\psi_k^{(M)}}
    =
    \sqrt{\Delta}
    \sum_{x\in S_\Delta[-L,L]}
    h_k(x)\sqrt{\gamma(x)}\ket{x}.
    \label{eq:discrete-Hermite-state-probabilistic}
\end{equation}
Consider the actual one-dimensional QHT unitary, $\mathsf{QHT}_{D,M}$. Apply~\Cref{lemma:quantum-hermite-transform} with additive error of $\frac{\eta}{4n}$, then for every normalized $\ket c\in\operatorname{Ran}(P_D)$, we have
\begin{equation}
    \left\|
        \mathsf{QHT}_{D,M}\ket c
        -
        S_{D,M}\ket c
    \right\|
    \leq
    \frac{\eta}{4n}.
    \label{eq:QHT-normalized-action}
\end{equation}
Now, for a general state $\ket{v}$, we have $P_D\ket{v} \in \operatorname{Ran}(P_D)$, and
\begin{align}
    \left\|
    ( \mathsf{QHT}_{D,M}
        -
        S_{D,M}) P_D\ket{v}
    \right\| 
    &\leq
    \frac{\eta}{4n} \norm{P_D\ket{v}} \\
    &\leq \frac{\eta}{4n} \norm{\ket{v}} \\
    &\leq \frac{\eta}{4n}.
    \label{eq:QHT-operator-norm}
\end{align}
Thus,
\begin{align}
    \left\|
    (\mathsf{QHT}_{D,M}
        -
        S_{D,M}) P_D
    \right\| = \left\|
    \mathsf{QHT}_{D,M}P_D
        -
        S_{D,M}
    \right\| \leq \frac{\eta}{4n},
\end{align}
the equality comes from the fact that $S_{D,M} P_D = S_{D,M}$.

To lift this upper bound to $n$ coordinates, we use a simple hybrid argument. We consider
\begin{equation}
    \|S_{D,M}\|
    \leq
    \|\mathsf{QHT}_{D,M}P_D\|
    +
    \|\mathsf{QHT}_{D,M}P_D-S_{D,M}\|
    \leq
    1+\frac{\eta}{4n}.
\end{equation}
Thus, with $\delta:=\frac{\eta}{4n}$,
\begin{align}
    \left\|
        (\mathsf{QHT}_{D,M}P_D)^{\otimes n}
    -
    S_{D,M}^{\otimes n}
    \right\|
    &\leq
    \delta
    \sum_{j=0}^{n-1}
    (1+\delta)^j
    \nonumber\\
    &=
    (1+\delta)^n-1
    \nonumber\\
    &\leq
    e^{n\delta}-1
    =
    e^{\eta/4}-1
    \nonumber\\
    &\leq
    \frac{\eta}{2},
    \label{eq:n-dimensional-QHT-error}
\end{align}
where the final inequality follows from $0<\eta<1/2$, and hence $e^{\eta/4}-1\leq 2(\eta/4)=\eta/2$.
Taking adjoints and using invariance of the operator norm under adjunction,
\begin{equation}
    \left\|
        (P_D\mathsf{QHT}_{D,M}^{\dagger})^{\otimes n}
        -
        S_{D,M}^{\dagger\otimes n}
    \right\|
    \leq
    \frac{\eta}{2}.
    \label{eq:n-dimensional-QHT-inverse}
\end{equation}
Now, we consider
\begin{align}
    \mathcal E_{\mathrm{QHT}}
    &=
    \left\|
        P^{\otimes n}_D\ket{\widetilde F_{w,M}}
        -
        \ket{F_{w,M}^{(D)}}
    \right\|
    \nonumber\\
    & = \left\|
        (P_D \mathsf{QHT}_{D,M}^{\dagger})^{\otimes n} \ket{\widetilde \Phi_{w, M}} \
        -
        S^{\dagger \otimes n}_{D, M}\ket{\widetilde \Phi_{w, M}}
    \right\| \\
    &\leq
    \frac{\eta}{2}.\label{eq:QHT-error-half}
\end{align}

By~\Cref{lemma:quantum-hermite-transform}, the choice $\delta=\eta/(4n)$ is valid provided $ M
    \geq
    C D^{9/4}
    \left(
        \frac{4n}{\eta}
    \right)^{13/4}$. Each one-dimensional QHT uses $O\!\left(
        \log^3(M)
        \log\frac{4n}{\eta}
    \right)$ elementary gates, and hence the $n$-fold transform requires $O\!\left(
        n\log^3M
        \log\frac{4n}{\eta}
    \right)$ gates.

Finally, combining
\Cref{eq:disc-error-half,eq:QHT-error-half} with~\Cref{eq:total-error-decomposition}, we obtain
\begin{align}
    \left\|
        P^{\otimes n}_D\ket{\widetilde F_{w,M}}
        -
        P^{\otimes n}_D\ket{F_w}
    \right\|
    &\leq
    \mathcal E_{\mathrm{QHT}}
    +
    \mathcal E_{\mathrm{disc}}
    \nonumber\\
    &\leq
    \frac{\eta}{2}
    +
    \frac{\eta}{2}
    =
    \eta.
\end{align}
This proves the theorem.
\end{proof}

\subsection{High-degree projection}\label{sec:high-degree-projection}

After the discrete quantum Hermite transform of~\Cref{thm:finite-hermite-state-preparation}, we obtain a finite-dimensional state $P^{\otimes n}_D\ket{\widetilde F_{w,M}}$ that is close in Euclidean norm to the ideal Hermite-coefficient state $P^{\otimes n}_D\ket{F_w}$. We now extract from this state only the Hermite degrees that contain sufficiently many (i.e., $\Theta(n/\eps^2)$) copies of the hidden direction $w$. Fix an integer $K\ge 1$, and define the degree band
\begin{equation}
    B_K:=\{K,K+1,\ldots,2K\},
\end{equation}  
together with the corresponding projector
\begin{equation}
    \Pi_{B_K} := \sum_{k\in B_K}\Pi_k, \qquad \Pi_k := \sum_{|\alpha|=k}\ket{\alpha}\!\bra{\alpha}.    
\end{equation}
We henceforth choose $D\geq2K+1$. Since every multi-index $\alpha$ satisfying $K\leq|\alpha|\leq2K$ also satisfies $\alpha_i\leq2K<D$ for every $i$, we have
\begin{equation}
    \Pi_{B_K}P^{\otimes n}_D
    =
    P^{\otimes n}_D\Pi_{B_K}
    =
    \Pi_{B_K}.
    \label{eq:band-contained-in-D}
\end{equation}
Consequently, the projected approximation guarantee of~\Cref{thm:finite-hermite-state-preparation} directly controls all amplitudes relevant to the degree band $B_K$. Thus, conditioned on successfully projecting onto \(B_K\), the ideal normalized target state is
\begin{equation}
\ket{F_w^{B_K}} := \frac{\Pi_{B_K}\ket{F_w}} {\|\Pi_{B_K}\ket{F_w}\|} = \frac{1}{\sqrt{p_K}} \sum_{k\in B_K} a_k\ket{C_k(w)},    
\end{equation} 
$p_K := \sum_{k\in B_K}|a_k|^2$. Starting instead from the actually prepared state \(\ket{\widetilde F_{w,M}}\), we aim to obtain a state $\ket{\widetilde F_{w,M}^{B_K}}$ that is close to $\ket{F_w^{B_K}}$. In the following theorem, we show that we can prepare $\ket{\widetilde F_{w, M}^{B_K}}$ using only $O(K^{1/4})$ applications of the Hermite-state preparation circuit and its inverse, while keeping this state close to the ideal band state $\ket{F_w^{B_K}}$.
\begin{theorem}[High-degree projection]
\label{thm:high-degree-projection}
Let $K\geq 1$ and $D\geq 2K+1$, and let $B_K$ and
$\Pi_{B_K}$ be as defined above. Define $ p_K :=\norm{\Pi_{B_K}\ket{F_w}}^2$, and $\ket{F_w^{B_K}}:=\frac{\Pi_{B_K}\ket{F_w}}{\sqrt{p_K}}$. Let $0<\xi<1/2$. Suppose there is a unitary circuit $A_w$ satisfying
\begin{equation}
    A_w\ket{0}
    =
    \frac{1}{\sqrt{2}}
    \left(
        \ket{\widetilde F_{0,M}}\ket{0}_Y
        +
        \ket{\widetilde F_{w,M}}\ket{1}_Y
    \right),
    \label{eq:coherent-preparation-Aw}
\end{equation}
where the second branch satisfies
\begin{equation}
    \left\|
        P^{\otimes n}_D\ket{\widetilde F_{w,M}}
        -
        P^{\otimes n}_D\ket{F_w}
    \right\|
    \leq
    \frac{\xi\sqrt{p_K}}{4}.
    \label{eq:high-degree-prep-error}
\end{equation}
Then there is a quantum circuit that prepares a state $\ket{\widetilde F_{w,M}^{B_K}}$ satisfying
\begin{equation}
    \left\|
        \ket{\widetilde F_{w,M}^{B_K}}
        -
        \ket{F_w^{B_K}}
    \right\|
    \leq
    \xi,
    \label{eq:amplified-band-state-error}
\end{equation}
using $O\!\left(K^{1/4}\log\frac{1}{\xi}\right)$ applications of $A_w$ and $A_w^\dagger$, together with $O\!\left(K^{1/4}\log\frac{1}{\xi}\,n\bigl(\log M+\log(nD)\bigr)\right)$ additional elementary gates.
\end{theorem}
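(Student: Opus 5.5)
The plan is to run amplitude amplification on the state $A_w\ket0$ with respect to the \emph{good projector} $\Pi_{\mathrm{good}}:=\Pi_{B_K}\otimes\ket1\!\bra1_Y$. This projector selects the label-$1$ branch and the Hermite-degree band simultaneously. The reflection $I-2\Pi_{\mathrm{good}}$ needs only the label bit and the total degree $|\alpha|=\sum_i\alpha_i$. We compute the latter reversibly from the $n$ registers of $\lceil\log_2 D\rceil$ qubits each, using a tree of adders and a comparison with $K$ and $2K$, and then uncompute it. This costs $O(n\log(nD))$ gates per reflection. The reflection about $A_w\ket0$ is $A_w(I-2\ket0\!\bra0)A_w^\dagger$, whose zero-test costs $O(n\log M)$ gates on the $n\log M$-qubit input. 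Together these account for the stated per-iteration gate count.

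First, I would compute the good amplitude. By \Cref{eq:band-contained-in-D}, $\Pi_{B_K}\ket{\widetilde F_{w,M}}=\Pi_{B_K}P_D^{\otimes n}\ket{\widetilde F_{w,M}}$. Hence, by \Cref{eq:high-degree-prep-error},
\[
\bigl\|\Pi_{B_K}\ket{\widetilde F_{w,M}}-\Pi_{B_K}\ket{F_w}\bigr\|\le \xi\sqrt{p_K}/4.
\]
The good probability is $\tilde p/2$ with $\tilde p:=\|\Pi_{B_K}\ket{\widetilde F_{w,M}}\|^2$, and $\sqrt{\tilde p}\in[(1-\xi/4)\sqrt{p_K},(1+\xi/4)\sqrt{p_K}]$. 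Next, \Cref{lem:sign-band} gives $|a_{2r+1}|^2=\Theta(r^{-3/2})$ and $a_{2r}=0$. Summing over odd $k\in[K,2K]$, that is about $K/2$ terms each of size $\Theta(K^{-3/2})$, yields $p_K=\Theta(K^{-1/2})$, so $\sqrt{\tilde p/2}=\Theta(K^{-1/4})$.

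Second, I would amplify. Since $p_K$ is a known explicit sum of the $|a_k|^2$, we know the good amplitude only up to a factor $1\pm\xi/4$, not exactly. I would therefore use fixed-point amplitude amplification (Yoder–Low–Chuang). Given a lower bound $\Omega(K^{-1/4})$ on the amplitude, it reaches overlap $\ge 1-\xi^2/8$ with the normalized good state using $O(K^{1/4}\log(1/\xi))$ calls to $A_w$ and $A_w^\dagger$. A simpler alternative is standard amplification with a deterministic rotation angle computed from the estimate, followed by a $\log(1/\xi)$ repetition/boosting argument. The output, after discarding the label qubit that is now $\ket1$, is close to $\ket{\widetilde F^{B_K}_{w,M}}:=\Pi_{B_K}\ket{\widetilde F_{w,M}}/\sqrt{\tilde p}$, up to an error at most $\xi/2$ in norm, after fixing the global phase.

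Third, I would bound the distance between the normalized states. Use the inequality
\[
\|u/\|u\|-v/\|v\|\|\le 2\|u-v\|/\|v\|
\]
with $u=\Pi_{B_K}\ket{\widetilde F_{w,M}}$ and $v=\Pi_{B_K}\ket{F_w}$, where $\|v\|=\sqrt{p_K}$. This gives $\|\ket{\widetilde F^{B_K}_{w,M}}-\ket{F^{B_K}_w}\|\le 2\cdot\frac{\xi\sqrt{p_K}/4}{\sqrt{p_K}}=\xi/2$. Adding the amplification error gives the claimed total of $\xi$. The main obstacle I expect is the amplification step. With an only approximately known success amplitude, plain Grover iteration can overshoot. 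Getting an error that scales as $\log(1/\xi)$, while controlling how the amplified state depends on the imperfect $A_w$, is the delicate part. I would handle it via fixed-point amplification, whose guarantee holds for any amplitude above the assumed lower bound.
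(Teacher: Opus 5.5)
Your proposal follows the paper's proof essentially step for step: the same joint good projector $\Pi_{B_K}\otimes\ket1\!\bra1_Y$, the same use of $\Pi_{B_K}P_D^{\otimes n}=\Pi_{B_K}$ with the reverse triangle inequality and \Cref{lem:sign-band} to get a good amplitude $\Omega(K^{-1/4})$, Yoder--Low--Chuang fixed-point amplification contributing error $\xi/2$, and the normalization-perturbation inequality contributing the other $\xi/2$. Two small points to tidy. First, the output registers of $\QHT^\dagger_{D,M}$ have $\log M$ qubits, so the band marker must first check $\alpha\in[D]_0^n$ (at $O(n\log M)$ cost, already within your budget) rather than summing $\lceil\log_2 D\rceil$-bit registers. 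Second, the paper measures the good flag before removing the label register, whereas you keep or discard a label qubit that is only approximately $\ket1$.
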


\begin{proof}
The main point is that we do not postselect the label register before
performing amplitude amplification. Instead, we coherently mark the
\emph{joint} event that the label register is $\ket{1}_Y$ and that the
Hermite degree lies in $B_K$, and then amplitude-amplify that event.

We first construct a reversible circuit for recognizing the degree band. Let
\[
\ell:=\left\lceil\log\bigl(n(D-1)+1\bigr)\right\rceil.
\]
Each physical register $\alpha_i$ has $\log M$ bits. We first compare every $\alpha_i$ with $D$, at a total cost of $O(n\log M)$ gates, and proceed only if $\alpha\in[D]_0^n$. Then an $\ell$-qubit register is sufficient to store $|\alpha|=\sum_{i=1}^n \alpha_i$ for every $\alpha\in[D]_0^n$. Starting from
$\ket{\alpha}\ket{0^\ell}$, we reversibly compute the total degree via
\begin{equation}
    U_{\mathrm{deg}}:
    \ket{\alpha}\ket{0^\ell}
    \longmapsto
    \ket{\alpha}\ket{|\alpha|}.
    \label{eq:degree-computation}
\end{equation}
Using a ripple-carry adder~\cite{cuccaro2004new}, each addition into the $\ell$-bit accumulator costs $O(\ell)$ elementary reversible gates.
Applying this to the $n$ registers $\alpha_1,\ldots,\alpha_n$ therefore costs $O(n\ell)=O\!\left(n\bigl(\log M+\log(nD)\bigr)\right)$ gates.

We next reversibly check whether $K\leq |\alpha|\leq 2K$. Comparisons with the fixed integers $K$ and $2K$ can be implemented using $O(\ell)$ additional elementary gates. We then combine this band-membership bit with the label qubit $Y$ and flip a flag qubit precisely when both
conditions are satisfied. After uncomputing all temporary registers, this gives a unitary $U_{\mathrm{good}}$ satisfying
\begin{equation}
    U_{\mathrm{good}}
    \ket{\alpha}\ket{y}_Y\ket{0}_G
    =
    \ket{\alpha}\ket{y}_Y
    \ket{
        y\cdot\mathbf 1_{\{K\leq|\alpha|\leq2K\}}
    }_G.
    \label{eq:joint-good-marker}
\end{equation}
Equivalently, the good-subspace marker first checks whether
$\alpha\in[D]_0^n$. Conditioned on this check succeeding, it computes
$|\alpha|=\sum_i\alpha_i$ and marks the state precisely when
$K\leq|\alpha|\leq2K$ and $Y=1$. Thus the good projector is
\begin{equation}
    \Pi_{\mathrm{good}}
    :=
    P^{\otimes n}_D\Pi_{B_K}P^{\otimes n}_D
    \otimes\ket1\!\bra1_Y
    =
    \Pi_{B_K}\otimes\ket1\!\bra1_Y,
\end{equation}
where the final equality follows from $D\geq2K+1$. The total gate complexity of $U_{\mathrm{good}}$, including the computation and uncomputation of the degree register, is $O\!\left(n\bigl(\log M+\log(nD)\bigr)\right)$.

We now apply this marking circuit to the coherently prepared state in~\Cref{eq:coherent-preparation-Aw}. Define $ \widetilde p_K:=\left\|\Pi_{B_K}\ket{\widetilde F_{w,M}}\right\|^2$. Then the component marked as good is
\begin{equation}
    \Pi_{\mathrm{good}}A_w\ket{0}
    =
    \frac{1}{\sqrt2}
    \Pi_{B_K}\ket{\widetilde F_{w,M}}
    \ket{1}_Y,
    \label{eq:joint-good-component}
\end{equation}
and hence the total success probability is
\begin{equation}
    \widetilde q_K
    :=
    \norm{\Pi_{\mathrm{good}}A_w\ket{0}}^2
    =
    \frac{\widetilde p_K}{2}.
    \label{eq:joint-good-probability}
\end{equation}
The factor $1/2$ is precisely the probability mass of the desired phase-encoded branch of the quantum example state. Since $\Pi_{B_K}$ is an orthogonal
projector,~\Cref{eq:high-degree-prep-error} gives
\begin{align}
    \left\|
        \Pi_{B_K}\ket{\widetilde F_{w,M}}
        -
        \Pi_{B_K}\ket{F_w}
    \right\|
    &=
    \left\|
        \Pi_{B_K}P^{\otimes n}_D
        \left(
            \ket{\widetilde F_{w,M}}
            -
            \ket{F_w}
        \right)
    \right\|
    \nonumber\\
    &\leq
    \left\|
        P^{\otimes n}_D
        \left(
            \ket{\widetilde F_{w,M}}
            -
            \ket{F_w}
        \right)
    \right\|
    \nonumber\\
    &\leq
    \frac{\xi\sqrt{p_K}}{4}.
    \label{eq:projected-preparation-error}
\end{align}
Therefore, by the reverse triangle inequality,
\begin{align}
    \sqrt{\widetilde p_K}
    &=
    \norm{
        \Pi_{B_K}\ket{\widetilde F_{w,M}}
    }
    \nonumber\\
    &\geq
    \norm{
        \Pi_{B_K}\ket{F_w}
    }
    -
    \norm{
        \Pi_{B_K}\ket{\widetilde F_{w,M}}
        -
        \Pi_{B_K}\ket{F_w}
    }
    \nonumber\\
    &\geq
    \left(1-\frac{\xi}{4}\right)\sqrt{p_K}.
    \label{eq:approx-band-amplitude-lower}
\end{align}
By~\Cref{lem:sign-band},
\begin{equation}
    p_K
    =
    \sum_{k=K}^{2K}|a_k|^2
    =
    \Theta(K^{-1/2}),
    \label{eq:ideal-band-probability}
\end{equation}
where only odd degrees contribute. Consequently,
\begin{equation}
    \sqrt{\widetilde q_K}
    =
    \sqrt{\frac{\widetilde p_K}{2}}
    \geq
    \left(1-\frac{\xi}{4}\right)
    \sqrt{\frac{p_K}{2}}
    =
    \Omega(K^{-1/4}).
    \label{eq:joint-good-amplitude-lower}
\end{equation}
In particular, there are universal constants $c_0>0$ and $K_0\geq 1$ such that $\sqrt{\widetilde q_K}\geq c_0K^{-1/4}$ for every $K\geq K_0$; we use this known lower bound when choosing the fixed-point amplification schedule. The finitely many $K<K_0$ are absorbed into the constant. We can therefore apply fixed-point amplitude amplification to the
preparation circuit $A_w$, using $\Pi_{\mathrm{good}}$ as the good
subspace. The normalized good component is
\begin{equation}
    \ket{\Psi_{\mathrm{good}}}
    :=
    \frac{
        \Pi_{B_K}\ket{\widetilde F_{w,M}}
    }{
        \sqrt{\widetilde p_K}
    }
    \ket{1}_Y.
    \label{eq:normalized-good-component}
\end{equation}
By fixed-point amplitude amplification~\cite{YoderLowChuang2014}, if
the good amplitude is at least $\delta$, a state within Euclidean
distance $\varepsilon_{\mathrm{AA}}$ of
$\ket{\Psi_{\mathrm{good}}}$ can be prepared using
$O(\delta^{-1}\log(1/\varepsilon_{\mathrm{AA}}))$ applications of the
preparation circuit and its inverse. Taking $\delta=c_0K^{-1/4}$ and $\varepsilon_{\mathrm{AA}}:=\frac{\xi}{4}$, we obtain a state $\ket{\Omega_{\mathrm{out}}}$ satisfying
\begin{equation}
    \left\|
        \ket{\Omega_{\mathrm{out}}}
        -
        \ket{\Psi_{\mathrm{good}}}
    \right\|
    \leq
    \frac{\xi}{4},
    \label{eq:aa-output-error}
\end{equation}
using
\begin{equation}
    O\!\left(
        K^{1/4}\log\frac{1}{\xi}
    \right)
    \label{eq:band-aa-cost}
\end{equation}
applications of $A_w$ and $A_w^\dagger$.

We now measure the joint good subspace
$\Pi_{\mathrm{good}}$. Equivalently, we apply
$U_{\mathrm{good}}$ to a fresh flag qubit and measure the flag.
Since
$\Pi_{\mathrm{good}}\ket{\Psi_{\mathrm{good}}}
=\ket{\Psi_{\mathrm{good}}}$,
\Cref{eq:aa-output-error} implies
\begin{equation}
    \left\|
        (I-\Pi_{\mathrm{good}})
        \ket{\Omega_{\mathrm{out}}}
    \right\|
    \leq
    \frac{\xi}{4}.
\end{equation}
Hence the measurement succeeds with probability at least
\begin{equation}
    1-\frac{\xi^2}{16}.
    \label{eq:final-good-measurement-probability}
\end{equation}

Conditioned on obtaining the good outcome, let
\begin{equation}
    \ket{\Omega_{\mathrm{good}}}
    :=
    \frac{
        \Pi_{\mathrm{good}}\ket{\Omega_{\mathrm{out}}}
    }{
        \norm{\Pi_{\mathrm{good}}\ket{\Omega_{\mathrm{out}}}}
    }.
\end{equation}
Using the normalization perturbation inequality
\begin{equation}
    \left\|
        \frac{u}{\norm{u}}
        -
        \frac{v}{\norm{v}}
    \right\|
    \leq
    \frac{2\norm{u-v}}{\norm{v}},
    \label{eq:normalization-perturbation}
\end{equation}
with
\[
    u
    =
    \Pi_{\mathrm{good}}\ket{\Omega_{\mathrm{out}}},
    \qquad
    v
    =
    \ket{\Psi_{\mathrm{good}}},
\]
and noting that $\norm{v}=1$, we obtain
\begin{align}
    \left\|
        \ket{\Omega_{\mathrm{good}}}
        -
        \ket{\Psi_{\mathrm{good}}}
    \right\|
    &\leq
    2
    \left\|
        \Pi_{\mathrm{good}}
        \left(
            \ket{\Omega_{\mathrm{out}}}
            -
            \ket{\Psi_{\mathrm{good}}}
        \right)
    \right\|
    \nonumber\\
    &\leq
    2
    \left\|
        \ket{\Omega_{\mathrm{out}}}
        -
        \ket{\Psi_{\mathrm{good}}}
    \right\|
    \leq
    \frac{\xi}{2}.
    \label{eq:postselection-error}
\end{align}

By construction,
$\ket{\Omega_{\mathrm{good}}}$ lies exactly in
\[
    \operatorname{Ran}(\Pi_{B_K})
    \otimes
    \operatorname{span}\{\ket1_Y\}.
\]
Therefore it can be written uniquely as
\begin{equation}
    \ket{\Omega_{\mathrm{good}}}
    =
    \ket{\widetilde F_{w,M}^{B_K}}
    \ket1_Y
\end{equation}
for some normalized
$\ket{\widetilde F_{w,M}^{B_K}}
\in\operatorname{Ran}(\Pi_{B_K})$.
At this point, we remove the label register $\ket{1}_Y$.

It remains to compare the normalized good component with the ideal band state. Define
\begin{equation}
    \ket{\Psi_{\mathrm{target}}}
    :=
    \frac{
        \Pi_{B_K}\ket{\widetilde F_{w,M}}
    }{
        \sqrt{\widetilde p_K}
    }.
    \label{eq:actual-normalized-band-state}
\end{equation}
Then
$\ket{\Psi_{\mathrm{good}}}
=\ket{\Psi_{\mathrm{target}}}\ket1_Y$.
Applying~\Cref{eq:normalization-perturbation} with
\[
    u=\Pi_{B_K}\ket{\widetilde F_{w,M}},
    \qquad
    v=\Pi_{B_K}\ket{F_w},
\]
and using~\Cref{eq:projected-preparation-error}, gives
\begin{align}
    \left\|
        \ket{\Psi_{\mathrm{target}}}
        -
        \ket{F_w^{B_K}}
    \right\|
    &\leq
    \frac{
        2
        \left\|
            \Pi_{B_K}\ket{\widetilde F_{w,M}}
            -
            \Pi_{B_K}\ket{F_w}
        \right\|
    }{
        \norm{\Pi_{B_K}\ket{F_w}}
    }
    \nonumber\\
    &\leq
    \frac{
        2\cdot(\xi\sqrt{p_K}/4)
    }{
        \sqrt{p_K}
    }
    =
    \frac{\xi}{2}.
    \label{eq:normalized-band-correctness}
\end{align}
Since tensoring with $\ket1_Y$ preserves Euclidean distance,
\begin{equation}
    \left\|
        \ket{\Psi_{\mathrm{good}}}
        -
        \ket{F_w^{B_K}}\ket1_Y
    \right\|
    \leq
    \frac{\xi}{2}.
\end{equation}
Combining this with~\Cref{eq:postselection-error} and applying the
triangle inequality yields
\begin{equation}
    \left\|
        \ket{\Omega_{\mathrm{good}}}
        -
        \ket{F_w^{B_K}}\ket1_Y
    \right\|
    \leq
    \xi.
\end{equation}
Removing the now exactly fixed label register, therefore, gives
\begin{equation}
    \left\|
        \ket{\widetilde F_{w,M}^{B_K}}
        -
        \ket{F_w^{B_K}}
    \right\|
    \leq
    \xi,
\end{equation}
as claimed.
\end{proof}

\subsection{Isometry map}\label{sec:isometry-map}
The ideal band state is supported on Hermite degrees \(k\in B_K\), where each degree-\(k\) component is a condensate state \(\ket{C_k(w)}\). The actual amplified state is within distance $\xi$ of it. The purpose of this subsection is to coherently decompress this occupation-number representation into copies of the hidden direction state \(\ket{w}=\sum_{i=1}^n w_i\ket{i}\). To this end, we map the occupation-number basis \(\{\ket{\alpha}:|\alpha|=k\}\) to the corresponding symmetric \(k\)-register representation in \(\operatorname{Sym}^k(\mathbb C^n)\). For a multi-index \(\alpha=(\alpha_1,\ldots,\alpha_n)\) with total degree \(|\alpha|=k\), the entry \(\alpha_i\) records the multiplicity of the label \(i\). The corresponding symmetric state is the normalized uniform superposition over all \(k\)-tuples \((i_1,\ldots,i_k)\in[n]^k\) in which each label \(i\) appears exactly \(\alpha_i\) times. This gives a natural isometry from the degree-\(k\) occupation-number basis to the symmetric subspace. The change of representation is therefore exactly the isometry \(V_k\) introduced in~\Cref{eq:overview-Vk},
\[ 
\ket{\alpha} \longmapsto \sqrt{\frac{\alpha!}{k!}} \sum_{\substack{i=(i_1,\ldots,i_k)\in[n]^k:\\ |\{r:i_r=j\}|=\alpha_j \forall j\in[n]}} \ket{i_1}\cdots\ket{i_k}. 
\]
The key observation is that, when applied to the state \(\ket{C_k(w)}\), the multinomial coefficients cancel and this first-quantized state becomes precisely \(\ket{w}^{\otimes k}\). The circuit of~\cite{liu2025lowdepthquantumsymmetrization} implements precisely this isometry efficiently. Since the band state contains a coherent superposition of degrees \(k\in B_K\), we first coherently compute the total degree \(k=|\alpha|\) into an ancillary register and then apply the corresponding second-to-first conversion controlled on \(k\). Padding the unused first-quantized registers with a fixed blank state gives a single coherent band isometry \(V_{B_K}\); because every \(k\in B_K\) satisfies \(k\geq K\), its output factors as \(\ket{w}^{\otimes K}\ket{R_w}\), thereby exposing \(K\) copies of the hidden direction for the tomography step.

The following theorem characterizes the gate complexity of coherently implementing the isometry \(V_k\) across all degrees \(k\in B_K\), based on the second-to-first quantization construction of~\cite{liu2025lowdepthquantumsymmetrization}. This conversion is entirely independent of the target oracle and therefore requires no additional example or membership queries.
\begin{theorem}[Efficient band isometry]
\label{thm:band-condensate-isometry}
Let $K\geq1$, $D\geq2K+1$, and $M\geq D$. There exists a $w$-independent quantum circuit implementing the isometry map $V_{B_K}$ such that
\begin{equation}
    V_{B_K}\ket{F_w^{B_K}}
    =
    \ket{w}^{\otimes K}\ket{R_w},
    \label{eq:band-isometry-ideal-action}
\end{equation}
where
\begin{equation}
    \ket{R_w}
    :=
    \frac{1}{\sqrt{p_K}}
    \sum_{k=K}^{2K}
        a_k\,
        \ket{k}
        \ket{w}^{\otimes(k-K)}
        \ket{\perp}^{\otimes(2K-k)}
    \label{eq:band-residual-state}
\end{equation}
is a normalized residual state. The circuit can be implemented using
\[
O\!\left(n\bigl(\log M+\log(nD)\bigr)\right)
+\widetilde O\!\left(Kn+K^2\log n\right)
\]
elementary gates, where the $\widetilde O$ notation hides factors polylogarithmic in $K$ and $n$ but not $\log M$.
\end{theorem}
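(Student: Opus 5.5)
The circuit $V_{B_K}$ will be built as a composition of three $w$-independent pieces: a degree computation, a degree-controlled second-to-first quantization map, and a padding/reordering step. Correctness will then follow from the single identity $V_k\ket{C_k(w)}=\ket w^{\otimes k}$, applied degree by degree.

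\emph{Step 1 (degree register).} Starting from $\ket{\alpha}$ with $\alpha\in[D]_0^n$, we reuse the ripple-carry accumulation from the proof of \Cref{thm:high-degree-projection}. This computes $\ket{\alpha}\ket{0^\ell}\mapsto\ket{\alpha}\ket{|\alpha|}$ at cost $O(n(\log M+\log(nD)))$, which is the first term of the claimed bound. We keep the degree register $\ket{k}$ rather than uncomputing it; it becomes the $\ket{k}$ label that appears in $\ket{R_w}$.

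\emph{Step 2 (conversion).} Controlled on $\ket{k}$ with $k\in B_K$, we apply the occupation-to-symmetric-register map of~\cite{liu2025lowdepthquantumsymmetrization}. It writes the output into $2K$ registers of dimension $n+1$, where the extra symbol $\perp$ is the blank state. On input $\ket{\alpha}$ with $|\alpha|=k$, the map produces $V_k\ket{\alpha}$ in the first $k$ registers and $\ket{\perp}^{\otimes(2K-k)}$ in the rest, and it uncomputes $\ket{\alpha}$ (its occupation numbers are recoverable from the first-quantized output).

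Because $k$ varies over only $K+1$ values, I plan to implement a single uniform circuit rather than $K+1$ separately controlled circuits. The sorting-based construction of~\cite{liu2025lowdepthquantumsymmetrization} already handles variable particle number when blanks are padded at the end. Its cost is $\widetilde O(Kn+K^2\log n)$ for $2K$ particles over $n$ modes: the $Kn$ term comes from reading the $n$ occupation registers into $O(K)$ particles, and the $K^2\log n$ term from symmetrization and sorting networks acting on $O(K)$ registers of $\log n$ bits.

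\emph{Step 3 (correctness).} By linearity and \Cref{eq:overview-Vk-condensate}, the map sends
\[
\frac{1}{\sqrt{p_K}}\sum_{k}a_k\ket{C_k(w)}
\;\longmapsto\;
\frac{1}{\sqrt{p_K}}\sum_k a_k\ket{k}\ket w^{\otimes k}\ket\perp^{\otimes(2K-k)}.
\]
Since $k\ge K$ for every $k\in B_K$, the first $K$ registers carry $\ket w^{\otimes K}$ in every branch, so this factors out. What remains is exactly $\ket{R_w}$, whose norm is $1$ by orthogonality of the $\ket{k}$ labels and $\|\ket{C_k(w)}\|=1$.

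One further point needs checking: that the map is an isometry on all of $\operatorname{Ran}(\Pi_{B_K})$, and not just on the vectors $\ket{C_k(w)}$. This holds because each $V_k$ sends one orthonormal basis to another and different $k$ land in orthogonal blank patterns. The condition $M\ge D$ only ensures that the registers can hold each $\alpha_i<D$.

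The main obstacle is Step 2. It requires verifying that the cited circuit genuinely implements $V_k$ with the exact normalization $\sqrt{\alpha!/k!}$ and without garbage, coherently across varying $k$, within the stated gate bound. If its guarantee covers only a fixed $k$, the fallback is to control on $k$ and pay an extra $O(K)$ factor, which must then be absorbed or avoided. The approach I would try first is to pad the input with $2K-k$ extra particles in a dummy $(n+1)$-th mode, making the particle number fixed at $2K$, and then move the dummy particles to the end with a sorting network at cost $\widetilde O(K\log n)$.
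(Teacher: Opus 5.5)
Your correctness argument matches the paper's. This covers keeping the degree register as the label $\ket{k}$, padding with $\ket{\perp}^{\otimes(2K-k)}$, and factoring out $\ket{w}^{\otimes K}$ because every $k\in B_K$ has $k\geq K$. It also covers normalizing $\ket{R_w}$ through the orthogonal $\ket{k}$ labels and $\|\ket{C_k(w)}\|=1$, and checking the isometry on all of $\operatorname{Ran}(\Pi_{B_K})$.

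\textbf{The gap is the gate count in Step~2.} Your primary plan rests on two things you do not establish. The first is an assumed property of the cited construction, namely that it already handles variable particle number. The second is a cost breakdown read off from the target bound rather than derived, and it does not fit the cited circuit. For fixed $k$, that circuit has depth $\widetilde O(\log^3 k)$ on $O(n\log k+k\log n+k\log k)$ qubits, hence $\widetilde O(n+k\log n)$ gates. A single uniform circuit on $O(K)$ particles would therefore cost about $\widetilde O(n+K\log n)$, not $\widetilde O(Kn+K^2\log n)$.

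\textbf{Your fallback is the paper's proof.} The stated bound comes from the route you set aside. Apply the fixed-$k$ circuit once for each of the $K+1$ values $k\in B_K$, controlled on the degree register equalling $k$. That control is an $O(\log K)$-bit equality test, so it adds only polylogarithmic overhead. This gives
\[
\sum_{k=K}^{2K}\widetilde O(n+k\log n)=\widetilde O(Kn+K^2\log n),
\]
which is exactly the paper's proof. The extra factor of $K$ you wanted to ``absorb or avoid'' is already part of the stated bound, so the obstacle you flagged is not one. What is missing is just this per-$k$ accounting.

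\textbf{The padding alternative is incomplete as written.} You adjoin $2K-k$ particles in a dummy mode $n+1$, run the fixed-$2K$ conversion, and then sort. Making a sorting network reversible means keeping its comparator outcomes. These are an injective function $g(P)$ of the dummy-position set $P$. After the sort the state is
\[
V_k\ket{\alpha}\ket{\perp}^{\otimes(2K-k)}\otimes\binom{2K}{k}^{-1/2}\sum_{|P|=2K-k}\ket{g(P)},
\]
not $V_k\ket{\alpha}\ket{\perp}^{\otimes(2K-k)}$. The garbage depends only on $k$, so it would not harm the later tomography. But to get exactly the map $V_{B_K}$ and the residual $\ket{R_w}$ in the statement, you must uncompute it. One way is an inverse Dicke-state preparation controlled on the degree register. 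With that step added, this route would be a genuine alternative, possibly with better $K$-dependence than the stated bound.
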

\begin{proof}
We construct the isometry in two steps. We first recall the fixed-particle-number conversion of Liu, Childs, and Gottesman~\cite{liu2025lowdepthquantumsymmetrization}, and then extend the construction coherently to a superposition of degrees $k\in B_K$.

The second-to-first quantization procedure of \cite{liu2025lowdepthquantumsymmetrization} reversibly converts the occupation-number representation into the associated normalized symmetric first-quantized state. For a multi-index
$\alpha=(\alpha_1,\ldots,\alpha_n)\in\mathbb N^n$ with $|\alpha|=k$, let
$l_\alpha\in[n]^k$ denote the canonical string containing exactly
$\alpha_i$ copies of the label $i$ for each $i\in[n]$. In our notation, its action on a basis state is
\begin{equation}
    \ket{\alpha}
    \longmapsto
    \frac{1}{\sqrt{k!\alpha!}}
    \sum_{\sigma\in S_k}
        \ket{\sigma(l_\alpha)},
    \qquad
    \alpha!:=\prod_{i=1}^n\alpha_i!.
    \label{eq:liu-second-first-map}
\end{equation}
Every distinct string of type $\alpha$ occurs exactly $\alpha!$
times in the sum over permutations.  Therefore,
\begin{align}
    \frac{1}{\sqrt{k!\alpha!}}
    \sum_{\sigma\in S_k}
        \ket{\sigma(l_\alpha)}
    &=
    \sqrt{\frac{\alpha!}{k!}}
    \sum_{\substack{i=(i_1,\ldots,i_k)\in[n]^k:\\ |\{r:i_r=j\}|=\alpha_j \forall j\in[n]}}
        \ket{i_1}\cdots\ket{i_k}
    \nonumber\\
    &=
    V_k\ket{\alpha}.
    \label{eq:liu-equals-vk}
\end{align}
Thus, the second-to-first quantization circuit implements exactly the isometry $V_k$ required here.

The construction of~\cite{liu2025lowdepthquantumsymmetrization} has $\widetilde O(\log^3 k)$ circuit depth. The second-quantized input uses $O(n\log k)$
qubits to store the $n$ occupation numbers, while the first-quantized output uses $O(k\log n)$ qubits to store the $k$ mode labels. Together with the
$O(k\log k)$ ancillary workspace used by the symmetrization procedure, the circuit acts on $O\!\left(n\log k+k\log n+k\log k\right)$ qubits. Because each depth layer consists of disjoint bounded-locality gates, it contains at most a constant times the number of active qubits in gates. Therefore the gate count is at most the depth times $O(Q)$, where $Q$ is the number of active qubits, and hence is at most
\begin{equation}
    \widetilde O\!\left(
        n+k\log n
    \right),
    \label{eq:fixed-k-conversion-complexity}
\end{equation}
where all additional logarithmic factors are absorbed into the $\widetilde O$ notation.

We now extend the preceding construction to the state produced by the high-degree projection, whose total degree may be in superposition over $B_K$. Starting from an occupation basis state $\ket{\alpha}$, we first compute its total degree into an ancillary register,
\begin{equation}
    \ket{\alpha}\ket{0}
    \longmapsto
    \ket{\alpha}\ket{|\alpha|}.
    \label{eq:isometry-compute-degree}
\end{equation}
We first check reversibly that $\alpha\in[D]_0^n$, which costs $O(n\log M)$ gates. Conditioned on this check, computing the total degree costs $O(n\ell)=O(n\log(nD))$ gates~\cite{cuccaro2004new}. The ideal target state is supported on $B_K$; the actual amplified state is within distance $\xi$ of it.

For each $k\in B_K$, conditioned on the degree register containing $\ket{k}$, we apply $V_k$ to the degree-$k$ component. We use $2K$ output registers and leave the final $2K-k$ registers in a fixed blank state $\ket{\perp}$. Thus the ideal controlled transformation is
\begin{equation}
    \ket{\alpha}\ket{k}\ket{0}
    \longmapsto
    V_k\ket{\alpha}\,\ket{k}\,
    \ket{\perp}^{\otimes(2K-k)},
    \qquad |\alpha|=k.
    \label{eq:controlled-band-isometry}
\end{equation}
Thus, we formally define
\begin{equation}
    V_{B_K}
    :=
    \bigoplus_{k=K}^{2K}
    \left(
        V_k\otimes
        \ket{\perp}^{\otimes(2K-k)}
    \right),
    \label{eq:band-direct-sum-isometry}
\end{equation}
We extend this isometry to the full input Hilbert space by acting trivially, with suitable workspace, on the orthogonal complement of the valid band subspace, and use the same notation $V_{B_K}$ for the extension. The validity check above makes this extension efficient. On the band subspace, we have
\begin{align}
    V_{B_K}\ket{F_w^{B_K}}
    &=
    \frac{1}{\sqrt{p_K}}
    \sum_{k=K}^{2K}
        a_k
        \ket{k}
        \ket{w}^{\otimes k}
        \ket{\perp}^{\otimes(2K-k)}
    \nonumber\\
    &=
    \ket{w}^{\otimes K}
    \otimes
    \frac{1}{\sqrt{p_K}}
    \sum_{k=K}^{2K}
        a_k
        \ket{k}
        \ket{w}^{\otimes(k-K)}
        \ket{\perp}^{\otimes(2K-k)}
    \nonumber\\
    &=
    \ket{w}^{\otimes K}\ket{R_w},
    \label{eq:band-isometry-factorization}
\end{align}
where
\begin{equation}
    \ket{R_w}
    :=
    \frac{1}{\sqrt{p_K}}
    \sum_{k=K}^{2K}
        a_k
        \ket{k}
        \ket{w}^{\otimes(k-K)}
        \ket{\perp}^{\otimes(2K-k)}.
\end{equation}
The states corresponding to different values of $k$ are orthogonal because of the degree register, so $\ket{R_w}$ is normalized.

A direct superposition implementation applies, for every $k\in B_K$, the circuit $\widetilde V_k$ controlled on the degree register being equal to
$k$. Adding such a classical control adds a multiplicative polylogarithmic overhead. Therefore, we use $O\!\left(n\bigl(\log M+\log(nD)\bigr)\right)+\widetilde O\!\left(Kn+K^2\log n\right)$, as claimed. 
\end{proof}
\begin{corollary}[Robustness of the band isometry]
\label{cor:band-isometry-robustness}
Suppose the state produced by
\Cref{thm:high-degree-projection} satisfies
\begin{equation}
    \left\|
        \ket{\widetilde F_{w,M}^{B_K}}
        -
        \ket{F_w^{B_K}}
    \right\|
    \leq \xi.
\end{equation}
Then
\begin{equation}
    \left\|
        V_{B_K}\ket{\widetilde F_{w,M}^{B_K}}
        -
        \ket{w}^{\otimes K}\ket{R_w}
    \right\|
    \leq \xi.
    \label{eq:band-isometry-output-error}
\end{equation}
\end{corollary}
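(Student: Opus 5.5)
The plan is to use the fact that $V_{B_K}$ is an isometry, so it preserves Euclidean distances. I then combine this with the ideal action~\eqref{eq:band-isometry-ideal-action} from \Cref{thm:band-condensate-isometry}.

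The first step is to make sure the isometry property applies to the actual input $\ket{\widetilde F_{w,M}^{B_K}}$. By construction in \Cref{thm:high-degree-projection}, this state lies exactly in $\operatorname{Ran}(\Pi_{B_K})$, since it was obtained by measuring the good projector and discarding the label register. On that subspace, $V_{B_K}$ is the direct sum~\eqref{eq:band-direct-sum-isometry} of the maps $V_k\otimes\ket{\perp}^{\otimes(2K-k)}$, tagged by the degree register $\ket{k}$.

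Each $V_k$ is an isometry because it maps the orthonormal occupation basis $\{\ket\alpha:|\alpha|=k\}$ to orthonormal symmetric states. The supports of these symmetric states are disjoint sets of strings of distinct types, so they are orthonormal. Appending the fixed blank state preserves norms. Finally, distinct $k$ land in orthogonal subspaces because of the degree register. Hence $V_{B_K}$ restricted to $\operatorname{Ran}(\Pi_{B_K})$ is an isometry. Its extension to the orthogonal complement does not matter here, since both vectors we compare lie in $\operatorname{Ran}(\Pi_{B_K})$. Note that $\ket{F_w^{B_K}}$ is also in this range by definition.

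The second step is the estimate itself. Using linearity and the isometry property,
\[
\left\|V_{B_K}\ket{\widetilde F_{w,M}^{B_K}}-\ket{w}^{\otimes K}\ket{R_w}\right\|
=\left\|V_{B_K}\bigl(\ket{\widetilde F_{w,M}^{B_K}}-\ket{F_w^{B_K}}\bigr)\right\|
=\left\|\ket{\widetilde F_{w,M}^{B_K}}-\ket{F_w^{B_K}}\right\|\leq\xi .
\]
The first equality uses~\eqref{eq:band-isometry-ideal-action}.

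The only point needing care, and hence the main obstacle, is the subspace issue in the first step. If the implemented circuit only approximates $V_{B_K}$, or acts differently off the band, then exact support in $\operatorname{Ran}(\Pi_{B_K})$ is what guarantees no extra error. I would check this explicitly from the projection step. As a fallback, I would note that any unitary extension is norm-preserving on the whole space, so the bound holds regardless of support.
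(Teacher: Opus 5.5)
Your proposal is correct and follows the paper's own argument: rewrite the target as $V_{B_K}\ket{F_w^{B_K}}$ using the ideal action from \Cref{thm:band-condensate-isometry}, then use that $V_{B_K}$ preserves the norm of the difference. Your additional check that both states lie exactly in $\operatorname{Ran}(\Pi_{B_K})$, where $V_{B_K}$ is genuinely isometric, fills in a detail the paper leaves implicit; the paper simply asserts that $V_{B_K}$ is an isometry, even though off the band it is only described as a ``trivial'' extension.
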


\begin{proof}
Since $V_{B_K}$ is an isometry,
\begin{align}
    \left\|
        V_{B_K}\ket{\widetilde F_{w,M}^{B_K}}
        -
        \ket{w}^{\otimes K}\ket{R_w}
    \right\|
    =
    \left\|
        V_{B_K}\ket{\widetilde F_{w,M}^{B_K}}
        -
        V_{B_K}\ket{F_w^{B_K}}
    \right\|=
    \left\|
        \ket{\widetilde F_{w,M}^{B_K}}
        -
        \ket{F_w^{B_K}}
    \right\|
    \leq \xi,
\end{align}
where we used
$V_{B_K}\ket{F_w^{B_K}}
=\ket{w}^{\otimes K}\ket{R_w}$.
\end{proof}
\subsection{Pure-state tomography}\label{sec:pure-state-tomography}
At this point, we obtain a state that is close to the state that contains \(K\) identical copies of the hidden direction state $\ket{w}$. Let $\rho_K$ denote the reduced state of the first $K$ output registers. By the robustness guarantee of the previous subsection, $\rho_K$ is close in trace distance to the ideal product state $\ket{w}\!\bra{w}^{\otimes K}$. Our final task is therefore to recover a classical description of $w$ from these approximately prepared copies. We use the gate-efficient tomography procedure from~\cite{Haah_2023}, which shows that $K=\Theta(n/\eps^2)$ copies suffice to recover the state vector of a pure state up to Euclidean error $O(\eps)$ up to global phase with constant probability. Moreover, because the actual $K$-register input is close in trace distance to the ideal product state, the success probability of the same tomography measurement deteriorates by at most this input error. Finally, we use three additional example queries to learn the global sign.

\begin{theorem}[Robust gate-efficient pure-state tomography]
\label{thm:pure-state-tomography}
Assume $n\geq2$; the case $n=1$ is trivial. Let $w\in\mathbb R^n$ be a unit vector, and let $\rho_K$ be a state on $(\mathbb C^n)^{\otimes K}$ satisfying
\begin{equation}
    \frac12
    \left\|
        \rho_K
        -
        \ket{w}\!\bra{w}^{\otimes K}
    \right\|_1
    \leq
    \xi.
    \label{eq:tomography-input-error}
\end{equation}
There are universal constants $C,c>0$ such that, for every $\eps\in(0,1)$,
$K\geq Cn/\eps^2$, and $M\geq cn^2/\eps^2$, there is a tomography procedure
that, followed by three additional queries to $\QEX_M(f_w,\gamma_n)$,
outputs a unit vector
$\widetilde w\in\mathbb R^n$ satisfying
\begin{equation}
    \left\|
        \widetilde w-w
    \right\|
    \leq
    \eps
    \label{eq:tomography-vector-error}
\end{equation}
with probability at least $73/96-\xi$ using $O\!\left(\frac{n}{\eps^2}\log\frac{1}{\eps}\polylog\, n\right)$ gates and together with
$\poly(n,1/\eps)$ classical post-processing time.
\end{theorem}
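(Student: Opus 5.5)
The plan is to combine three ingredients: a black-box pure-state tomography guarantee, a robustness transfer from ideal to approximate input, and a sign-disambiguation step using a few example queries. The success probability $73/96$ will come from multiplying or adding constant failure probabilities from these two stages.

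\emph{Tomography.} Invoke the gate-efficient pure-state tomography of~\cite{Haah_2023}. On $K\geq Cn/\eps^2$ copies of $\ket{w}\!\bra{w}$ it outputs a unit vector $\ket{\widehat v}\in\mathbb C^n$ with
\[
\min_\phi\norm{e^{i\phi}\ket{\widehat v}-\ket w}\leq\eps/4
\]
with probability at least some constant (say $7/8$). It uses $O(\frac{n}{\eps^2}\log\frac1\eps\polylog n)$ gates plus polynomial classical post-processing.

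\emph{Robustness transfer.} The tomography procedure is a fixed POVM followed by classical post-processing. Any event therefore has probability under $\rho_K$ differing from its probability under $\ket w\!\bra w^{\otimes K}$ by at most the trace distance, which is at most $\xi$ by~\eqref{eq:tomography-input-error}. This accounts for the $-\xi$ in the success probability.

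\emph{Converting to a real vector.} Since $w$ is real, choose the phase $\phi$ maximizing $\mathrm{Re}\,\langle w|e^{i\phi}\widehat v\rangle$. This is computable classically without knowing $w$, for instance by taking the phase that maximizes $\norm{\mathrm{Re}(e^{i\phi}\widehat v)}$. Then take the real part and renormalize. The triangle inequality together with the normalization inequality~\eqref{eq:normalization-inequality} give a real unit vector $u$ with $\min_{\sigma\in\{\pm1\}}\norm{\sigma u-w}\leq\eps/2$, at constant-factor loss. The only ambiguity left is the global sign $\sigma$.

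\emph{Fixing the sign.} Use the three extra calls to $\QEX_M(f_w,\gamma_n)$. Measuring each in the computational basis gives a classical example $(\bar x,f_w(\bar x))$ with $\bar x$ drawn from the discretized Gaussian. For each example, record whether $\sgn\langle u,\bar x\rangle=f_w(\bar x)$, and take the majority vote over the three examples to decide $\sigma$.

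A single example is misleading only if $f_u$ disagrees with $f_{\sigma u}$ on it. By \Cref{lem:gaussian-disagreement}, the continuum disagreement probability is at most $\frac12\norm{\sigma u-w}\leq\eps/4$. The discretization is handled as in \Cref{prop:halfspace-boundary-error}: the event $f_w(x)\neq f_w(\bar x)$, or the same event for $u$, has probability $O(\sqrt{n/M})$. With $M\geq cn^2/\eps^2$ this is $O(\eps/\sqrt n)$, which is small. The discretized Gaussian is also within total variation $O(e^{-\Omega(M)})$ of the truncated continuum. Each example is therefore correct with probability at least $1-\eps/4-O(\eps)$, which exceeds $3/4$ when $\eps<1$ and $c$ is large.

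A majority of three independent votes, each correct with probability $p\geq 3/4$, is correct with probability $p^3+3p^2(1-p)\geq 27/32$. The union bound gives total success probability at least
\[
1-\tfrac18-\tfrac5{32}-\xi\geq 73/96-\xi
\]
after adjusting the constants. The output is $\widetilde w=\sigma u$, with $\norm{\widetilde w-w}\leq\eps$.

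\emph{Main obstacle.} The delicate step is the phase-to-real conversion. We must show that $u$ is within $O(\eps)$ of $\pm w$ while only a phase-equivalence class of $\ket{\widehat v}$ is known. I would handle it as follows. Write $e^{i\phi_0}\widehat v=w+e$ with $\norm e\leq\eps/4$. The real part of $e^{i\phi_0}\widehat v$ is then within $\eps/4$ of $w$, so its norm is at least $1-\eps/4$. Any $\phi$ maximizing $\norm{\mathrm{Re}(e^{i\phi}\widehat v)}$ must satisfy $e^{i\phi}\approx\pm e^{i\phi_0}$, up to an $O(\sqrt\eps)$ or better angle error. I would bound this angle error carefully, using the fact that the imaginary part of $e^{i\phi_0}\widehat v$ has norm at most $\eps/4$.
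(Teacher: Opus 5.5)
\textbf{Overall.} Your plan follows the paper's proof step for step: tomography via~\cite{Haah_2023}, a trace-distance transfer costing $\xi$, extraction of a real unit vector, and a three-sample majority vote for the sign. None of the slips below changes the strategy.

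\textbf{The step you call the main obstacle.} It is easier than you expect, and no angle control is needed. For every phase $\psi$, $\mathrm{Re}\bigl(e^{i\psi}(w+e)\bigr)=\cos\psi\,w+\mathrm{Re}(e^{i\psi}e)$ lies within $\norm{e}$ of $\cos\psi\,w$. So it suffices that the chosen real part have norm bounded below, which forces $|\cos\psi|$ to be bounded below; the normalization inequality then finishes. The paper uses exactly this with only two candidate phases. It writes $v=a+ib$ and keeps whichever of $a,b$ is longer, whose norm is at least $1/\sqrt2$, so no optimization over $\phi$ is needed. Your maximizer also works, since $\norm{R}\geq1-\eps/4$ forces $|\cos\psi|\geq1-\eps/2$.

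\textbf{Quantitative claims that need repair.}
\begin{enumerate}
\item \emph{Constants.} Your final count gives $1-\tfrac18-\tfrac5{32}=\tfrac{69}{96}<\tfrac{73}{96}$. You therefore need the $n^{-100}\leq\tfrac1{12}$ failure bound of~\cite{Haah_2023}, not $\tfrac18$. Your per-vote error bound $\eps/4+O(\eps/\sqrt c)$ is also not at most $\tfrac14$ uniformly as $\eps\to1$. So run the tomography to accuracy a smaller constant times $\eps$. The paper takes $\tau=\eps^2/1152$, which gives $\norm{\bar w-sw}\leq\eps/6$, so that each vote errs with probability at most $\eps/4\leq\tfrac14$ in total.
\item \emph{Total-variation distance.} The grid distribution $\propto\gamma_n(\bar x)$ is not $e^{-\Omega(M)}$-close to the continuum in total variation. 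Already in one dimension, the left-endpoint rule is off by $\Theta(M^{-1/2})$. The paper's bound is $16n/\sqrt M+2\sqrt{2n}\,e^{-\pi M/4}$. This term, not the boundary term, is what forces $M\geq cn^2/\eps^2$.
\item \emph{Ties on the grid.} Grid points with $\inner{u}{\bar x}=0$ carry positive mass, and at those points $f_{-u}(\bar x)=f_u(\bar x)=1$. So the claim that a vote is misleading only if $f_{\sigma u}(\bar x)\neq f_w(\bar x)$ (presumably what you meant) fails on ties when $\sigma=-1$. Your coupling argument covers this only if the rounding event is taken for $f_u$ itself, as you wrote, rather than for $f_{\sigma u}$. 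The paper instead bounds the tie probability separately by $O(M^{-1/2})$.
\end{enumerate}
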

\begin{proof}
We first consider the ideal input $\ket{w}\!\bra{w}^{\otimes K}$. By~\cite[Theorem~C.1]{Haah_2023}, for every tomography accuracy parameter $\tau>0$, using $O\!\left(\frac{n}{\tau}\right)$ copies of an unknown pure state $\ket{w}$, there is a tomography algorithm that outputs a unit vector $\ket{v}$ satisfying
\begin{equation}
    |\langle v,w\rangle|
    \geq
    1-\tau
    \label{eq:HKOT-overlap}
\end{equation}
except with probability at most $n^{-100}$, using $O\!\left(\frac{n}{\tau}\log\frac1\tau\,\polylog n\right)$ gates. The resulting classical post-processing time is therefore $\poly(n,1/\eps)$.

We choose $\tau:=\eps^2/1152$. For unit vectors $v,w$, we have
\begin{align}
    \min_{|\zeta|=1}
    \|v-\zeta w\|^2
    =
    2\left(1-|\langle v,w\rangle|\right)\leq
    2\tau.
    \label{eq:tomography-phase-distance}
\end{align}
Hence there exists a phase $\zeta=e^{i\phi}$ such that
\begin{equation}
    \|v-\zeta w\|
    \leq
    \sqrt{2\tau}
    =
    \frac{\eps}{24}.
    \label{eq:tomography-phase-aligned-error}
\end{equation}

The output vector $v$ is complex, but we require a real vector. Write
$v=a+ib$ for $a,b\in\mathbb R^n$. Let $r$ be whichever of $a$ and $b$ has the larger Euclidean norm. Since $\|a\|^2+\|b\|^2=1$,
\begin{equation}
    \|r\|
    \geq
    \frac1{\sqrt2}.
    \label{eq:real-component-lower-bound}
\end{equation}
If $r=a$, let $c=\cos\phi$, and if $r=b$, let $c=\sin\phi$.~\Cref{eq:tomography-phase-aligned-error} implies $\|r-cw\|\leq\frac{\eps}{24}$ and $|c|\geq\|r\|-\frac{\eps}{24}>\frac12$. Define the real unit vector
\begin{equation}
    \bar w
    :=
    \frac{r}{\|r\|}.
    \label{eq:real-tomography-estimate}
\end{equation}
Let $s=\sgn(c)$. Using the reverse triangle inequality,
\begin{align}
    \|\bar w-sw\|
    \leq
    \frac{2\|r-cw\|}{|c|}
    \leq
    \frac{\eps}{6}.
    \label{eq:tomography-real-error}
\end{align}
Therefore, on the ideal input state $\ket{w}\bra{w}^{\otimes K}$,
\begin{equation}
    \Pr\!\left[
        \min_{s\in\{\pm1\}}
        \|\bar w-sw\|
        \leq
        \frac{\eps}{6}
    \right]
    \geq
    1-(n)^{-100}.
    \label{eq:tomography-good-event}
\end{equation}

We next transfer the guarantee to the actual input $\rho_K$ that is $\xi$-close in trace distance to $\ket{w}\bra{w}^{\otimes K}$. By~\Cref{eq:tomography-input-error} and contractivity of trace distance under quantum channels, the total variation distance between the output distribution obtained from $\rho_K$ and that obtained from the ideal input $\ket{w}\!\bra{w}^{\otimes K}$ is at most $\xi$. Consequently,
\begin{equation}
    \Pr_{\rho_K}\!\left[
        \min_{s\in\{\pm1\}}
        \|\bar w-sw\|
        \leq
        \frac{\eps}{6}
    \right]
    \geq
    1-(n)^{-100}-\xi.
    \label{eq:robust-tomography-good-event}
\end{equation}

It remains to resolve the global phase $s$. We use three additional calls to $\QEX_M(f_w,\gamma_n)$ and measure the example and label registers in the computational basis, obtaining independent pairs $(\bar X_j,Y_j)$, where $\bar X_j\sim\gamma_{n,M}$ and $Y_j=f_w(\bar X_j)$. Since $\bar w$ is known classically, sample $j$ votes for $\bar w$ if $f_{\bar w}(\bar X_j)=Y_j$ and for $-\bar w$ otherwise. We define $\widetilde w$ to be the majority vote.

Condition on the event in~\Cref{eq:robust-tomography-good-event}, and let $s_\star\in\{\pm1\}$ satisfy $\|s_\star\bar w-w\|\leq\frac{\eps}{6}$.
Under the continuous Gaussian distribution, the probability that we obtain the wrong sign because of a bad sample is bounded using~\Cref{lem:gaussian-disagreement} as follows,
\begin{equation}
    d_{\gamma_n}
    \left(
        f_{s_\star\bar w},
        f_w
    \right)
    \leq
    \frac{2}{\pi}
    \arcsin\!\left(\frac{\eps}{12}\right)
    \leq
    \frac{\eps}{12}.
    \label{eq:continuous-sign-error}
\end{equation}

To compare the discretized and continuous disagreement probabilities, define the piecewise-constant density
\[
    \widetilde\gamma_M(x)
    :=Z^{-1}\gamma_n(\bar x)\mathbf 1_{[-L,L]^n}(x).
\]
The Gaussian-amplitude estimates used in~\Cref{app:discretization} and the reverse triangle inequality give
\[
    \operatorname{TV}(\widetilde\gamma_M,\gamma_n)
    \leq
    \left\|\sqrt{\widetilde\gamma_M}-\sqrt{\gamma_n}\right\|
    \leq
    \frac{16n}{\sqrt M}+2\sqrt{2n}\,e^{-\pi M/4}.
\]
Moreover, $\|x-\bar x\|\leq\Delta\sqrt n$. If $f_u(x)\neq f_u(\bar x)$, then $|\langle u,x\rangle|\leq\Delta\sqrt n$, and hence, for $X\sim\gamma_n$,
\[
    \Pr[f_u(X)\neq f_u(\bar X)]
    \leq
    \Pr[|N(0,1)|\leq\Delta\sqrt n]
    \leq
    \sqrt{\frac{2}{\pi}}\,\Delta\sqrt n.
\]
Applying the same bound to $v$, using $\Delta=\sqrt{4\pi/M}$, and taking a union bound yields
\begin{equation}
    d_{\gamma_{n,M}}(f_u,f_v)
    \leq
    d_{\gamma_n}(f_u,f_v)
    +\frac{16n}{\sqrt M}
    +4\sqrt{\frac{2n}{M}}
    +2\sqrt{2n}\,e^{-\pi M/4}
    \leq d_{\gamma_n}(f_u,f_v)+\frac{\eps}{12},
    \label{eq:discrete-disagreement-bound}
\end{equation}
uniformly over unit vectors $u,v$, where the last inequality holds for the chosen $M \geq 576^2\cdot\frac{n^2}{\eps^2}$.

We must also account for grid points satisfying $\langle u,\bar X\rangle=0$. For every unit vector $u$, choose $i$ such that $|u_i|\geq1/\sqrt n$. Conditioned on all coordinates of $\bar X$ except $\bar X_i$, the equation $\langle u,\bar X\rangle=0$ determines at most one possible grid value of $\bar X_i$. Every atom of the one-dimensional discretized Gaussian has probability $O(\Delta)=O(M^{-1/2})$. Hence, uniformly over unit $u$, for sufficiently large $M$,
\[
\Pr_{\bar X\sim\gamma_{n,M}}[\langle u,\bar X\rangle=0]\leq\frac{\eps}{12}.
\]
Away from this tie event, an incorrect orientation vote implies $f_{s^\star\bar w}(\bar X)\neq f_w(\bar X)$. For each of the three independent samples, the corresponding orientation vote is therefore wrong with probability at most $\eps/12+\eps/12+\eps/12=\eps/4\leq1/4$. Thus, conditioned on~\Cref{eq:robust-tomography-good-event}, the majority vote is wrong with probability at most
\[
    3(1/4)^2(3/4)+(1/4)^3=5/32.
\]
Combining this with~\Cref{eq:robust-tomography-good-event}, the probability that tomography succeeds and the correct orientation is selected is at least $1-n^{-100}-\xi-5/32$. For $n\geq2$, we have $n^{-100}\leq1/12$, so this probability is at least $73/96-\xi$. On this event, $\|\widetilde w-w\|\leq\eps/6\leq\eps$. The sign-resolution step uses exactly three additional calls to $\QEX_M(f_w,\gamma_n)$, completing the proof.
\end{proof}

\subsection{Putting everything together}\label{sec:example-putting-together}
\begin{theorem}[Quantum learning of Gaussian halfspaces]
\label{thm:quantum-example-learning}
Assume $n\geq2$; the case $n=1$ is trivial and can be handled separately. Let $f_w(x):=\sgn(\langle w,x\rangle)$ for some $w\in\mathbb S^{n-1}$, and let $\eps\in(0,1/2)$. Set $K:=\left\lceil c_1 n/\eps^2\right\rceil$, $D:=2K+1$, and $M:=2^{\left\lceil\log_2\!\left(c_2n^{13/4}K^{49/16}\right)\right\rceil}$, where $c_1,c_2>0$ are sufficiently large universal constants. Given coherent access to $\QEX_M(f_w,\gamma_n)$ and its inverse, there exists a quantum algorithm that outputs a unit vector
$\widetilde w\in\mathbb R^n$ such that
\[
    \|\widetilde w-w\|\leq\eps
\]
with probability at least $2/3$. The algorithm makes
\[
    O\!\left(
        \frac{n^{1/4}}{\sqrt{\eps}}
    \right)
\]
queries to $\QEX_M(f_w,\gamma_n)$ and its inverse, using  $\widetilde O\!\left(\frac{n^2}{\eps^4}\right)$ additional gates, where the $\widetilde O$ notation hides factors polylogarithmic in $n$ and $1/\eps$, together with
$\operatorname{poly}(n,1/\eps)$ classical post-processing time.
\end{theorem}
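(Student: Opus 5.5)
The proof composes the four subroutines of this section in order and then chooses the error budgets. We fix the tomography failure parameter $\xi$ to be a small constant, say $\xi=1/20$, so that $73/96-\xi\geq 2/3$. We also fix the Hermite-preparation accuracy
\[
\eta:=\frac{\xi\sqrt{p_K}}{4}.
\]
By \Cref{lem:sign-band}, $p_K=\Theta(K^{-1/2})$, so $\eta=\Theta(K^{-1/4})$.

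The first step applies \Cref{thm:finite-hermite-state-preparation} with $D=2K+1$ and this $\eta$. This gives a circuit $A_w$ that uses one query to $\QEX_M(f_w,\gamma_n)$ and has the form required by \Cref{thm:high-degree-projection}. For this we must check three things.
\begin{itemize}
\item The condition $D>\log(4n/\eta)$ holds because $D=\Theta(n/\eps^2)$ while $\log(4n/\eta)=O(\log(n/\eps))$.
\item The chosen power of two $M=\Theta(n^{13/4}K^{49/16})$ dominates every term of \eqref{eq:M-total-choice}, up to choosing $c_2$ large.
\item It also satisfies the requirement $M\geq c\,n^2/\eps^2$ of \Cref{thm:pure-state-tomography}.
\end{itemize}
The binding term is $CD^{9/4}(4n/\eta)^{13/4}=\Theta(K^{9/4}n^{13/4}K^{13/16})=\Theta(n^{13/4}K^{49/16})$, which explains the stated exponent. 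The remaining terms are $O(n^2K^{3/2})$, $O(nK)$ and logarithmic, so they are much smaller. This parameter bookkeeping is the step I expect to need the most care, because the exponents must line up exactly. It is also the only place where the precise form of $M$ matters.

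The second step invokes \Cref{thm:high-degree-projection}. It produces a state $\ket{\widetilde F^{B_K}_{w,M}}$ within distance $\xi$ of $\ket{F_w^{B_K}}$. The cost is $O(K^{1/4}\log(1/\xi))=O(K^{1/4})$ applications of $A_w$ and $A_w^\dagger$, hence $O(K^{1/4})=O(n^{1/4}/\sqrt\eps)$ oracle queries.

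The third step applies the band isometry of \Cref{thm:band-condensate-isometry}. By \Cref{cor:band-isometry-robustness}, the output is $\xi$-close in Euclidean norm to $\ket w^{\otimes K}\ket{R_w}$. Tracing out the residual register, the reduced state $\rho_K$ is within trace distance $\xi$ of $\ket w\!\bra w^{\otimes K}$, since trace distance between pure states is at most Euclidean distance and partial trace is contractive.

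The fourth step uses \Cref{thm:pure-state-tomography} with $K\geq Cn/\eps^2$, which holds by choosing $c_1\geq C$. It outputs $\widetilde w$ with $\|\widetilde w-w\|\leq\eps$ with probability at least $73/96-\xi\geq 2/3$, using three extra queries. The measurement in the second step fails with probability at most $\xi^2/16$. We absorb this failure by taking $\xi$ slightly smaller, or by observing that the bound on the postselected state already accounts for it.

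For the total query count, the second step and the three sign queries together give $O(n^{1/4}/\sqrt\eps)$.

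For the gate count, we sum over the steps.
\begin{itemize}
\item Each application of $A_w$ costs $O(n\log^3M\log(n/\eta))=\widetilde O(n)$ gates. Together with the marking circuit of cost $O(n\log(MnD))$, the $O(K^{1/4})$ repetitions give $\widetilde O(nK^{1/4})$.
\item The isometry costs $\widetilde O(Kn+K^2\log n)=\widetilde O(n^2/\eps^4)$.
\item Tomography costs $\widetilde O(n/\eps^2)$.
\end{itemize}
The dominant term is $K^2=\Theta(n^2/\eps^4)$, which gives the claimed $\widetilde O(n^2/\eps^4)$ bound. Classical post-processing is $\poly(n,1/\eps)$ by \Cref{thm:pure-state-tomography}.
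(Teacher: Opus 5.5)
Your proposal is correct and follows the paper's proof step by step. It uses the same $\eta=\xi\sqrt{p_K}/4=\Theta(K^{-1/4})$, the same binding term $CD^{9/4}(4n/\eta)^{13/4}=\Theta(n^{13/4}K^{49/16})$, and the same chain of Hermite preparation, high-degree projection, band isometry and tomography with identical query and gate accounting; the only change is $\xi=1/20$ in place of the paper's $1/48$. You also account explicitly for the $\xi^2/16$ failure probability of the good-subspace measurement, which the paper's proof silently drops. That is handled by the existing slack, since $73/96-1/20-1/6400>2/3$. Your alternative remark that the postselected-state bound ``already accounts for it'' is not right, because that bound is conditional on success, and should be deleted.
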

\begin{proof}
We combine the finite-dimensional Hermite preparation, high-degree projection, band isometry, and pure-state tomography procedures.

We first choose $M$ sufficiently large to apply~\Cref{thm:finite-hermite-state-preparation}. Let $\sgn(z) = \sum_r a_r h_r(z)$ be the Hermite expansion of the sign function. Its Hermite mass in degrees $K$ through $2K$ is $p_K:= \sum_{k =K}^{2K}\abs{a_k}^2$. By~\Cref{lem:sign-band}, we have $p_K = \Theta(K^{-1/2})$. Setting $\xi = 1/48$, we choose $\eta:=\frac{\xi\sqrt{p_K}}{4} = \frac{1}{192}\sqrt{p_K}$. Since $K = O(n/\eps^2)$ and $D=O(n/\eps^2)$, the dominant term is
\begin{equation}
    O\!\left(
        D^{9/4}
        \left(\frac{n}{\eta}\right)^{13/4}
    \right)
    =
    O\!\left(
        n^{13/4}K^{49/16}
    \right) = O(\poly(n, \frac{1}{\eps})).
    \label{eq:final-M-dominant-term}
\end{equation}
Since $p_K=\Theta(K^{-1/2})$, we have $\eta^{-1}=\Theta(K^{1/4})$. Hence $\log(4n/\eta)=O(\log(n/\eps))$, whereas $D=\Theta(n/\eps^2)$. Thus, after increasing the universal constant $c_1$ if necessary, $D>\log(4n/\eta)$.
Consequently, for a sufficiently large universal constant $c_2$,
\[
    M
    =
    2^{\left\lceil
        \log_2
        \left(
            c_2n^{13/4}K^{49/16}
        \right)
    \right\rceil}
\]
satisfies all the hypotheses of~\Cref{thm:finite-hermite-state-preparation} and~\Cref{thm:pure-state-tomography}. Hence there is a circuit
$A_w$ using one query to $\QEX_M(f_w,\gamma_n)$ and $O\!\left(n\log^3M\log\frac{4n}{\eta}\right)$ additional gates such that its phase branch
$\ket{\widetilde F_{w,M}}$ satisfies
\begin{equation}
    \left\|
        P^{\otimes n}_D\ket{\widetilde F_{w,M}}
        -
        P^{\otimes n}_D\ket{F_w}
    \right\|
    \leq
    \frac{\xi\sqrt{p_K}}{4}.
    \label{eq:final-Hermite-error}
\end{equation}
Since $D=2K+1$, we have $\Pi_{B_K}P^{\otimes n}
_D=\Pi_{B_K}$, and therefore
\begin{equation}
    \left\|
        \Pi_{B_K}\ket{\widetilde F_{w,M}}
        -
        \Pi_{B_K}\ket{F_w}
    \right\|
    \leq
    \frac{\xi\sqrt{p_K}}{4}.
\end{equation}
Thus the hypothesis of~\Cref{thm:high-degree-projection} is satisfied. 

Applying that theorem with $\xi=1/48$ prepares the normalized band state $\ket{\widetilde F_{w,M}^{B_K}}$ such that
\begin{equation}
    \left\|
        \ket{\widetilde F_{w,M}^{B_K}}
        -
        \ket{F_w^{B_K}}
    \right\|
    \leq
    \xi
    =
    \frac1{48}.
    \label{eq:final-band-state-error}
\end{equation}
Since $\xi$ is a fixed constant, the high-degree projection uses $ O(K^{1/4}) = O(\frac{n^{1/4}}{\sqrt{\eps}})$ applications of $A_w$ and $A_w^\dagger$. Each such application uses one query to $\QEX_M(f_w,\gamma_n)$ or its inverse. The gate cost at this stage is
\begin{align}
    O\!\left[
        K^{1/4}
        \left(
            n\log^3M
            \log\frac{2n}{\eta}
            +
            n\bigl(\log M+\log(nD)\bigr)
        \right)
    \right]=
    \widetilde O(nK^{1/4}) = \widetilde O\p{\frac{n^{5/4}}{\sqrt{\eps}}},
    \label{eq:final-projection-gate-cost}
\end{align}.

We next apply the $w$-independent isometry $V_{B_K}$ from \Cref{thm:band-condensate-isometry} to $\ket{\widetilde F_{w,M}^{B_K}}$. We then have
\begin{align}
    \left\|
        V_{B_K}\ket{\widetilde F_{w,M}^{B_K}}
        -
        \ket{w}^{\otimes K}\ket{R_w}
    \right\|
    =
    \left\|
        V_{B_K}
        \left(
            \ket{\widetilde F_{w,M}^{B_K}}
            -
            \ket{F_w^{B_K}}
        \right)
    \right\|
    \leq
    \frac1{48}.
    \label{eq:final-isometry-state-error}
\end{align}
Let $\rho_K$ be the reduced state of the first $K$ output registers. For pure states, the trace distance between the corresponding density
operators is upper bounded by their Euclidean state-vector distance. Taking the partial trace over the residual registers and using
contractivity of trace distance therefore yields
\begin{equation}
    \frac12
    \left\|
        \rho_K
        -
        \ket{w}\!\bra{w}^{\otimes K}
    \right\|_1
    \leq
    \frac1{48}.
    \label{eq:final-tomography-input}
\end{equation}
By~\Cref{thm:band-condensate-isometry}, this stage makes no additional oracle queries and uses
\begin{equation}
    O\!\left(
        n\bigl(\log M+\log(nD)\bigr)
    \right)
    +
    \widetilde O\!\left(
        Kn+K^2\log n
    \right) = \widetilde O\p{\frac{n^2}{\eps^4}}.
    \label{eq:final-isometry-gate-cost}
\end{equation}
elementary gates.

Finally, applying~\Cref{thm:pure-state-tomography} to $\rho_K$ yields a classical estimate of the hidden direction such that 
$\|\widetilde w-w\|\leq\eps$ as claimed with probability at least $73/96-1/48=71/96>2/3$. The tomography measurements use
\begin{equation}
    O\!\left(
        \frac{n}{\eps^2}
        \log\frac1\eps\,
        \polylog\, n
    \right)
    \label{eq:final-tomography-gate-cost}
\end{equation}
gates, together with $\poly(n,1/\eps)$ classical post-processing time.

The finite-dimensional Hermite preparation uses one quantum-example query per application of $A_w$ or $A_w^\dagger$. The high-degree
projection uses $O(K^{1/4})$ such applications, while the isometry and tomography steps use no further oracle queries. The final
orientation test uses three additional queries. Hence, total query complexity is
\begin{align}
    O(K^{1/4})+3
    =
    O\!\left(
        \frac{n^{1/4}}{\sqrt{\eps}}
    \right).
    \label{eq:final-query-total}
\end{align}

Summing
\Cref{eq:final-projection-gate-cost},
\Cref{eq:final-isometry-gate-cost}, and
\Cref{eq:final-tomography-gate-cost} gives the total gate complexity
\begin{align}
    \widetilde O\!\left(
        nK^{1/4}
        +
        Kn
        +
        K^2\log n
        +
        \frac{n}{\eps^2}
    \right) = \widetilde O\!\left(
        \frac{n^{5/4}}{\sqrt{\eps}}
        +
        \frac{n^2}{\eps^2}
        +
        \frac{n^2}{\eps^4}
        +
        \frac{n}{\eps^2}
    \right) = \widetilde O\!\left(
        \frac{n^2}{\eps^4}
    \right).
    \label{eq:final-gate-K}
\end{align}
This proves the theorem.
\end{proof}

\section{Conclusion and Open Problems}

We have studied the quantum query complexity of learning linear threshold functions under three natural access models, obtaining exponential quantum improvements for learning general LTFs from real-domain membership queries and for the special case of homogeneous LTFs corresponding to Majority-juntas from Boolean membership queries, together with a quartic improvement in the dimension for learning homogeneous LTFs from quantum examples under the Gaussian distribution.

Several concrete questions remain open.

\begin{enumerate}
\item For learning homogeneous LTFs with real membership queries, can one
close the gap between the $\Omega(\log(1/\eps))$ lower bound and the
$O(\log(n/\eps))$ upper bound?

\item Our \(O(\log k)\)-query algorithm for Majority-juntas improves substantially over the \(\widetilde O(k^{1/4})\) Fourier sampling and amplitude amplification approach of~\cite{Montanaro2022QuantumAF}, but it is obtained indirectly through the dual adversary method. It would be interesting to understand whether the corresponding span program can be translated into a new explicit and reusable quantum primitive, just as the Bernstein--Vazirani algorithm~\cite{bernstein1993quantum} helped establish Fourier sampling.

\item Can the intersection-free framework yield improved algorithms for
learning other symmetric juntas? More broadly, can an analogous restriction be
useful for learning problems governed by other symmetry groups?

\item What is the optimal quantum-example complexity of learning homogeneous LTFs? In particular, can one prove a lower bound matching, or approaching, the
upper bound obtained from the quantum Hermite transform?

\item Can the Hermite-transform approach for the example oracle be extended from homogeneous halfspaces to general LTFs with a nonzero threshold~$\theta$?
\end{enumerate}

\paragraph{Acknowledgments.}
The authors thank Francisco Escudero Gutiérrez, Sander Gribling, Arjan Cornelissen, Jop Briët, and Aleksandrs Belovs for very helpful discussions and guidance. The authors also want to thank Sid Jain for discussions on using the quantum Hermite transform to learn LTFs.

\paragraph{AI-use disclosure.}
GPT-5.6-Sol was used to assist with proof development and literature research. In Section~3, the connection between learning LTFs and convex optimization, and the potential of using Jordan's gradient-estimation algorithm, were identified by the authors; based on this connection and guidance from the authors, the AI was used to help develop and refine the proof structure. In Section~4, the rank-one intersection-free reduction of the dual adversary SDP, as well as the numerical optimization of this restricted SDP, were proposed by the authors. Starting from this framework, and the resulting numerical solutions, the AI helped strengthen the connection with Krawtchouk polynomials (that was already present in \cite{belovs2015symmetricjuntas}, but for the primal adversary SDP), and subsequently found the mod-4 polynomial construction. The AI also found the extension of this construction to the more general promise $1\leq |A|\leq k$, as well as the matching lower-bound. In Section~5, the authors developed the overall algorithmic framework, while AI was used to reduce the non-oracle gate complexity. An initial implementation of the required isometry map used a Schur-transform circuit~\cite{Krovi2019}, resulting in an $O(n^{10})$ gate complexity. The AI suggested replacing this construction with a substantially simpler second-to-first quantization circuit for identical bosons~\cite{liu2025lowdepthquantumsymmetrization}, reducing the gate complexity to $O(n^2)$. Note that this modification affects only the non-oracle gate complexity and does not contribute to the claimed improvement in membership-query complexity. Finally, GPT-6-Astra and Claude Fable were used for help with proofreading. The authors reviewed and verified the resulting arguments, citations, and text, and take full responsibility for the contents of the paper. 

\bibliographystyle{alpha}
\bibliography{ref}

\clearpage

\appendix
\crefalias{section}{appendix}
\crefname{appendix}{Appendix}{Appendices}
\Crefname{appendix}{Appendix}{Appendices}
\section{Supplementary proofs of classical bounds}

\subsection{Classical upper bound for Majority-juntas}

\begin{proposition}[Classical upper bound]
\label{prop:majority-classical-upper-bound}
The Majority-junta problem with the promise $1\leq |A|\leq k$ can be learned classically using $O(k\log(en/k))$ membership queries for all $n\geq k$. If $\liminf_{k\to\infty}n/k>1$, this gives the matching $O(k\log(n/k))$ classical upper bound.
\end{proposition}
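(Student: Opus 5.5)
We will give an explicit adaptive algorithm that identifies $A$ with $O(k\log(en/k))$ membership queries. The idea is to use ``balanced'' queries that put the majority on a knife edge, so that flipping one more variable reveals membership.

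\emph{Determining $|A|$.} Write $s:=|A|$. For a fixed ordering of $[n]$, consider the queries $S_j=\{1,\dots,j\}$ for $j=0,\dots,n$. As $j$ increases, $f_A(S_j)=\mathrm{MAJORITY}_s(|A\cap S_j|)$ is monotone in $j$, switching from $-1$ to $+1$ at a unique index $j^\star$. At that index $|A\cap S_{j^\star}|=\lceil s/2\rceil$ and $j^\star\in A$. Binary search finds $j^\star$ with $O(\log n)$ queries.

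Let $T:=S_{j^\star-1}$. Then $|A\cap T|=\lceil s/2\rceil-1$, and adding any single element $i\notin T$ gives $f_A(T\cup\{i\})=+1$ if and only if $i\in A$. So the state $T$ is a \emph{pivot}: relative to $T$, the function restricted to $[n]\setminus T$ acts as an OR-type membership test on single elements, and more generally on sets. Indeed, for $R\subseteq[n]\setminus T$ we have $f_A(T\cup R)=+1$ iff $|A\cap R|\ge1$. This is a group-testing oracle for $A\setminus T$.

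For the part inside $T$, we dualize. Removing elements from $S_{j^\star}$ (which has exactly $\lceil s/2\rceil$ elements of $A$) flips the output to $-1$ iff the removed set meets $A$. This gives a group-testing oracle for $A\cap S_{j^\star}$.

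\emph{Recovering $A$.} Apply adaptive group testing (e.g.\ Hwang's generalized binary splitting) to each of the two parts. Each part contains at most $k$ defectives among at most $n$ items, so each costs $O(k\log(en/k))$ tests. Each test is one membership query. The total is $O(\log n+k\log(en/k))=O(k\log(en/k))$, since $\log n\le k\log(en/k)$ for $k\ge1$.

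\emph{Edge cases and the second claim.} The main subtlety is getting the parity and threshold conventions right. The paper's $\mathrm{MAJORITY}$ outputs $1$ iff $|x_A|\ge |A|/2$. For even $s$ the switching point is at $s/2$, and the pivot argument still works with $\lceil s/2\rceil$ replaced by $s/2$. I must check that both the ``add one'' and the ``remove one'' oracles are exactly OR functions. Adding: $T$ has $s/2-1$ elements of $A$, so the output is $+1$ iff at least one is added. Removing: $S_{j^\star}$ has $s/2$ elements of $A$, so the output is $-1$ iff at least one is removed. This check I expect to be the only fiddly step.

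Finally, when $\liminf n/k>1$, $\log(en/k)=\Theta(\log(n/k))$ for large $k$, giving the $O(k\log(n/k))$ bound.
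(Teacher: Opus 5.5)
Your proposal is correct and follows essentially the same route as the paper. Both arguments binary-search over prefixes to find the $\lceil |A|/2\rceil$-th element of $A$ as a pivot, simulate OR group tests by adding sets to the prefix just below the pivot and by removing sets from the prefix ending at it, and finish with adaptive group testing on both sides. The only difference is the group-testing subroutine: the paper uses Du--Park, which needs no knowledge of the number $d$ of defectives and costs $d\log_2(N/d)+4d$, while you run generalized binary splitting with the promise bound $k$ as an upper bound on the defective count, which gives the same $O(k\log(en/k))$ total.
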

\begin{proof}
Let $m=\lceil |A|/2\rceil$ and let $p$ be the $m$-th smallest element of $A$. We define the prefixes $P_j=[j]$, for which
$f_A(P_j)=1$ if and only if $|A\cap P_j|\geq m$. Thus, $p$ can be found by
binary search using $O(\log n)$ queries. Put $L=[p-1]$ and
$R=\{p+1,\ldots,n\}$, so that $|A\cap L|=m-1$. For every $Q\subseteq R$,
\[
f_A(L\cup Q)=1
\quad\Longleftrightarrow\quad
A\cap Q\neq\emptyset,
\]
whereas for every $Q\subseteq L$,
\[
f_A((L\setminus Q)\cup\{p\})=0
\quad\Longleftrightarrow\quad
A\cap Q\neq\emptyset.
\]
Hence, after finding $p$, membership queries to $f_A$ simulate group-testing queries on $L$ and $R$. Du and
Park~\cite{DuPark94} gave an adaptive group-testing algorithm that does not
require knowing in advance the number $d$ of elements of $A$ among $N$ items and uses at most
$d\log_2(N/d)+4d$ queries. In our case, $d=|A\cap L|$ when applying the algorithm to $L$ and $d=|A\cap R|$ when applying it to $R$, and these satisfy
$|A\cap L|+|A\cap R|=|A|-1$. Applying their algorithm to the two
sides identifies $A$ using
\[
O\!\left(\log n+|A|\log\frac{en}{|A|}\right)
=O\!\left(k\log\frac{en}{k}\right)
\]
membership queries, where the last inequality uses $|A|\leq k$. If $\liminf_{k\to\infty}n/k>1$, then $\log(n/k)$ is bounded below by a positive constant for all sufficiently large $k$. Consequently,
\[
    k\log\frac{en}{k}
    =k\left(1+\log\frac nk\right)
    =O\!\left(k\log\frac nk\right),
\]
which proves the claimed matching upper bound in this regime.
\end{proof}

\subsection{Classical lower bounds for membership queries}
We first establish an information-theoretic lower bound on the classical membership-query complexity of learning homogeneous halfspaces under the
standard Gaussian distribution. The bound applies to arbitrary adaptive randomized learners and allows improper output hypotheses. The argument is
based on the metric packing characterization of membership-query learning from~\cite{kulkarni1993active}.
Let $\calC_n:=\{f_w:w\in\sphere^{n-1}\}$ denote this concept class.

\begin{definition}[Packing Number]\label{def:packing-number}
Let $\calC$ be a concept class over an instance space $\X$, and let $\D$ be a distribution over $\X$. For $\eps>0$, an $\eps$-packing of $\calC$ with respect to $\D$ is a subset $\mathcal K\subseteq\calC$ such that $d_{\D}(f,g)>\eps$ for every pair of distinct $f,g\in\mathcal K$.  The $\eps$-packing
number of $\calC$ under $\D$ is
\[
    \mathcal P(\calC,\D,\eps)
    :=
    \sup\left\{
        |\mathcal K|:
        \mathcal K\subseteq\calC
        \text{ is an $\eps$-packing}
    \right\}.
\]
\end{definition}
The query complexity depends on the packing number as follows.
\begin{lemma}\label{lem:classical-lb}
Let $\calC$ be a concept class and let $\D$ be a distribution over the instance space. Any randomized membership-query learner that, with probability at least $1-\delta$, outputs a hypothesis with error at most $\eps$ requires at least
\[
\left\lceil\log_2\!\left((1-\delta)\,\mathcal P(\calC,\D,2\eps)\right)\right\rceil
\]
queries.
\end{lemma}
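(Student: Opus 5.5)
The plan is the standard counting argument, made rigorous via Yao's principle and Fano-type reasoning. Let $\mathcal K=\{f_1,\ldots,f_P\}$ be a $2\eps$-packing of $\calC$ under $\D$ with $P=\mathcal P(\calC,\D,2\eps)$; if the supremum is infinite, work with arbitrarily large finite packings. First, reduce learning to identification. Given any hypothesis $h$ with $d_{\D}(h,f_i)\le\eps$, the triangle inequality for disagreement probability gives $d_{\D}(h,f_j)\ge d_{\D}(f_i,f_j)-\eps>\eps$ for every $j\neq i$. So $f_i$ is the unique element of $\mathcal K$ within distance $\eps$ of $h$, and nearest-neighbour decoding recovers $i$. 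The learner, followed by this decoding, therefore identifies the hidden index with probability at least $1-\delta$ for every $i\in[P]$.

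Second, fix the randomness. Consider a $T$-query randomized learner. Average its success probability over a uniformly random target $i\in[P]$; the average is at least $1-\delta$. By averaging, there is a fixed random seed $r$ for which the resulting deterministic adaptive learner succeeds on at least a $(1-\delta)$ fraction of the indices. Such a deterministic learner is a binary decision tree of depth $T$, since each membership query returns a single bit and the next query depends only on the previous answers. Its output, and hence the decoded index, is a function of the leaf reached. So it can output at most $2^T$ distinct indices, and it is correct on at most $2^T$ targets. This gives $2^T\ge(1-\delta)P$, that is $T\ge\log_2((1-\delta)P)$. Since $T$ is an integer, we may take the ceiling. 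For infinite $\mathcal P$, apply the same argument to finite packings of every size, which yields an unbounded lower bound.

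The main thing to be careful about is the derandomization step. The learner's randomness may also affect which hypothesis is output, not only which queries are made. Fixing the whole seed $r$ handles both at once. The guarantee ``with probability at least $1-\delta$ for every target'' gives exactly the needed average bound. Everything else is the one-line bit-counting argument.
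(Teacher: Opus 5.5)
Your proposal is correct and follows essentially the same argument as the paper: take a uniform target over a finite $2\eps$-packing, use the triangle inequality to show a hypothesis is $\eps$-close to at most one packing element, fix the internal randomness to get a binary decision tree with at most $2^T$ leaves, and conclude $2^T\geq(1-\delta)\,\mathcal P(\calC,\D,2\eps)$. The only cosmetic difference is that you pick a good seed by averaging, whereas the paper averages the per-seed bound $2^q/M$ over the randomness; neither Yao's principle nor any Fano-type reasoning is actually needed.
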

\begin{proof}
Let $\mathcal K$ be any finite $2\eps$-packing of size $M$, and choose the target uniformly from $\mathcal K$. After fixing its internal randomness, a $q$-query learner is a deterministic binary decision tree with at most $2^q$ transcripts. The hypothesis associated with any transcript can be $\eps$-close to at most one member of $\mathcal K$, since otherwise the triangle inequality would place two distinct members at distance at most $2\eps$. Thus the deterministic learner succeeds on at most $2^q$ targets, and its average success probability is at most $2^q/M$. Averaging over the learner's randomness preserves this bound. Since the learner succeeds with probability at least $1-\delta$ on every target, $1-\delta\leq2^q/M$, and hence $q\geq\log_2((1-\delta)M)$. Taking the supremum over all finite $2\eps$-packings proves the claim.
\end{proof}
We now specialize this general bound to homogeneous halfspaces under the standard Gaussian distribution.
\begin{theorem}[Classical randomized MQ lower bound]
\label{thm:classical-main}
Let $n\geq2$ and let $0<\eps\leq\eps_0$, where $\eps_0>0$ is a sufficiently small universal constant. Any randomized classical membership-query learner that, for every \(w\in\sphere^{n-1}\), outputs with probability at least \(2/3\) a classifier \(h\) satisfying $d_{\gamma_n}(h,f_w)\leq\eps$ requires
\begin{equation}
  q=\Omega\!\left(n\log\frac1\eps\right)
  \label{eq:classical-main}
\end{equation}
queries.
\end{theorem}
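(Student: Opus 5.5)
The plan is to apply \Cref{lem:classical-lb} with $\delta=1/3$. This reduces the statement to a lower bound on the packing number of the form
\[
\mathcal P(\calC_n,\gamma_n,2\eps)\geq (c/\eps)^{n-1}
\]
for a universal constant $c>0$ and all sufficiently small $\eps$. Then
\[
q\geq\log_2\!\bigl(\tfrac23 (c/\eps)^{n-1}\bigr)=(n-1)\log_2(c/\eps)-O(1)=\Omega(n\log(1/\eps)),
\]
using $n\geq2$ so that $n-1\geq n/2$. We also need $\eps\leq\eps_0$ small enough that $\log(c/\eps)\geq\frac12\log(1/\eps)$ and the additive constant is absorbed.

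To build the packing, we convert Gaussian disagreement into angles using the exact homogeneous formula \eqref{eq:homogeneous-gaussian-disagreement}:
\[
d_{\gamma_n}(f_u,f_v)=\frac{1}{\pi}\arccos\langle u,v\rangle .
\]
So $\{f_u:u\in\mathcal K\}$ is a $2\eps$-packing exactly when the unit vectors in $\mathcal K$ have pairwise angles greater than $2\pi\eps$. Distinct vectors give distinct functions, so the concept count equals $|\mathcal K|$. Since the angle satisfies $\alpha\geq\|u-v\|$ (the chord is at most the arc), it suffices to find a Euclidean $\beta$-separated set on $\sphere^{n-1}$ with $\beta:=2\pi\eps\cdot(1+o(1))$, say $\beta=7\eps$.

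The remaining step is the standard volumetric lower bound for packings of the sphere, which I expect to be the only genuinely technical point. Take a maximal $\beta$-separated set $\mathcal K\subseteq\sphere^{n-1}$. By maximality it is a $\beta$-net, so the caps of Euclidean radius $\beta$ around its points cover the sphere, which gives
\[
|\mathcal K|\geq\frac{\sigma(\sphere^{n-1})}{\sigma(\mathrm{cap}(\beta))}.
\]
To bound the cap measure, project the cap onto the tangent hyperplane at its center. The cap's surface area is at most $(1+O(\beta))$ times the area of an $(n-1)$-dimensional ball of radius $\beta$ once $\beta\leq1$. This gives the ratio
\[
\frac{\sigma(\sphere^{n-1})}{\sigma(\mathrm{cap}(\beta))}\geq\frac{n\,\omega_n}{2\,\omega_{n-1}\beta^{n-1}},
\]
where $\omega_d$ is the volume of the $d$-dimensional unit ball. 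The ratio $n\omega_n/\omega_{n-1}$ grows like $\Theta(\sqrt n)\geq1$, so $|\mathcal K|\geq\beta^{-(n-1)}/2$.

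A simpler alternative, if the spherical volume computation becomes cumbersome, is to embed a cube grid. Take the points $(t_1,\dots,t_{n-1},1)$ with $t_i\in\beta'\Z\cap[0,1/\sqrt n]$ and normalize them. There are $(1/(\beta'\sqrt n))^{n-1}$ such points, and normalization contracts distances by at most a constant factor. However, this loses a $\sqrt n^{\,n-1}$ factor, giving $\Omega(n\log(1/(\eps\sqrt n)))$, so the volumetric argument is preferable.

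Plugging $\beta=7\eps$ into $\mathcal P\geq\frac12(7\eps)^{-(n-1)}$ then yields the bound claimed in \Cref{thm:classical-main}.
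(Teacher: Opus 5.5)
Your proposal is correct and follows the paper's proof: apply \Cref{lem:classical-lb} with $\delta=1/3$, use the exact homogeneous disagreement formula to turn a $2\eps$-packing into a set of unit vectors with pairwise angle greater than $2\pi\eps$, and lower-bound that spherical packing number by $(c/\eps)^{n-1}$. The only difference is that you prove the packing bound with the standard volumetric cap-covering argument, whereas the paper simply cites it.
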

\begin{proof}
From~\Cref{lem:classical-lb}, with $\delta=1/3$, any such learner must make at least
\[
\left\lceil
\log_2\!\left(\frac{2}{3}\,\mathcal{P}(\calC_n,\gamma_n,2\eps)\right)
\right\rceil
\]
queries. Thus, it remains to lower-bound $\mathcal{P}(\calC_n,\gamma_n,2\eps)$. By~\Cref{lem:gaussian-disagreement}, for homogeneous halfspaces $d_{\gamma_n}(f_u,f_w)=\theta(u,w)/\pi$, where $\theta(u,w)=\arccos(\langle u,w\rangle)$ is the angle between $u$ and $w$. Consequently, a $2\eps$-packing of $\calC_n$ under $d_{\gamma_n}$ is equivalent to a packing of $\sphere^{n-1}$ whose distinct
points have angular separation greater than $2\pi\eps$. Standard spherical packing bounds imply that there exists a universal constant $c>0$ such that $\mathcal{P}(\calC_n,\gamma_n,2\eps)\geq\left(\frac{c}{\eps}\right)^{n-1}$~\cite{Vershynin2018}. Taking logarithms proves~\Cref{eq:classical-main}.
\end{proof}

\subsection{Classical lower bounds for example queries}

\begin{theorem}[Classical example-query lower bound~{\cite[Theorem~13]{BalcanLong2013}}]
\label{thm:classical-example-main}
Let $n\geq 3$, $0<\eps\leq 1/8$, and $0<\delta\leq 1/4$.
Suppose that, for every unknown target $f_w\in\calC_n$, a randomized
classical learner given access to $\EX(f_w,\gamma_n)$ outputs a
hypothesis $h$ satisfying $d_{\gamma_n}(h,f_w)\leq\eps$ with probability
at least $1-\delta$. Then the learner requires
\begin{equation}
    q
    =
    \Omega\!\left(
        \frac{n}{\eps}
        +
        \frac{1}{\eps}\log\frac1\delta
    \right)
    \label{eq:classical-example-main}
\end{equation}
example queries.
\end{theorem}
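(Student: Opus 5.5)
This is a classical lower bound, cited from Balcan and Long, so the proof I would give is a self-contained information-theoretic argument with two separate terms. The $\frac1\eps\log\frac1\delta$ term comes from a two-hypothesis argument. The $n/\eps$ term comes from a packing-plus-information argument. Each example reveals little information about $w$ because it is only informative near the decision boundary.

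\textbf{Confidence term.} Take $n\geq 3$ and choose two unit vectors $u,v$ at angle $\alpha=3\pi\eps$. By \Cref{lem:gaussian-disagreement}, $d_{\gamma_n}(f_u,f_v)=3\eps>2\eps$, so any hypothesis that is $\eps$-close to one target is $\eps$-far from the other. Under the two targets, an example $(x,f(x))$ has the same distribution except on the disagreement wedge, which has mass $3\eps$. With $q$ examples, the probability that no sample falls in the wedge is $(1-3\eps)^q\geq e^{-6\eps q}$, using $\eps\le 1/8$. On that event the transcripts are identically distributed, so the learner errs on one of the two targets with probability at least $\tfrac12 e^{-6\eps q}$. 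Requiring this to be at most $\delta$ gives $q=\Omega(\frac1\eps\log\frac1\delta)$.

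\textbf{Dimension term.} This is the main obstacle, because the naive metric-entropy argument only gives $\Omega(n\log\frac1\eps)$. I would apply Fano's inequality to a local packing. First, build a family of vectors $w_1,\dots,w_N$, all within angle $C\eps$ of a fixed $e_1$. Their tangential components should form a packing of a sphere of radius $\Theta(\eps)$ in $\R^{n-1}$ with pairwise angle greater than $2\pi\eps$, which allows $N=2^{\Omega(n)}$. By \Cref{lem:gaussian-disagreement}, the targets are pairwise more than $2\eps$ apart. Success with probability $1-\delta\ge 3/4$ then requires $I(J;\text{transcript})=\Omega(\log N)=\Omega(n)$, where $J$ is a uniformly random index. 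The key estimate is a per-example mutual information bound of $O(\eps)$. The label distribution given $x$ is deterministic under each target, and all targets agree unless $|x_1|\lesssim \eps\|x_{\perp}\|$, an event of Gaussian probability $O(\eps)$. I would bound the mutual information by the conditional entropy of the label, which is at most $\Pr[\text{labels not all equal}]\cdot 1 = O(\eps)$. Chaining over the $q$ i.i.d.\ examples gives $I\le O(\eps q)$, and therefore $q=\Omega(n/\eps)$.

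Adding the two bounds gives $\Omega(\frac{n}{\eps}+\frac1\eps\log\frac1\delta)$. Instead of rederiving this, I could cite Balcan--Long's Theorem~13 directly, checking that their hypotheses ($n\ge 3$, Gaussian marginal, $\eps\le1/8$, $\delta\le1/4$) match the ones here.
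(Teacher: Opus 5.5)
Your confidence term is correct. Your fallback (check that $\gamma_n$ is a full-rank log-concave distribution, then invoke Balcan--Long's Theorem~13) is exactly the paper's proof, which consists of nothing more than that citation. The self-contained dimension term, however, has a gap at its key estimate. The event $|x_1|\lesssim\eps\norm{x_\perp}$ does not have Gaussian probability $O(\eps)$. Since $\norm{x_\perp}\approx\sqrt{n-1}$, it has probability $\Theta(\min\{1,\eps\sqrt n\})$. This is not mere slack in your inclusion. \Cref{lem:gaussian-disagreement} gives each \emph{pairwise} disagreement region mass $O(\eps)$, but their union over $2^{\Omega(n)}$ targets does not have that mass. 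Write $w_j=\cos\beta_j\,e_1+u_j$ with $u_j\perp e_1$. Sudakov's minoration applied to the $\Theta(\eps)$-separated points $u_j$ gives $\max_j\langle u_j,x_\perp\rangle\gtrsim\eps\sqrt n$ and $\min_j\langle u_j,x_\perp\rangle\lesssim-\eps\sqrt n$ for typical $x_\perp$. So for any family of the kind you describe, the labels really are non-unanimous on a set of that mass. Bounding $H(Y\mid X=x)$ by the indicator of non-unanimity therefore only yields $q=\Omega(n+\sqrt n/\eps)$, not $\Omega(n/\eps)$.

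The missing idea is that the first step of your bound is an identity:
$I(J;X,Y)=H(Y\mid X)=\E_x H_2(p(x))$,
where $p(x)$ is the fraction of indices $j$ with $f_{w_j}(x)\neq\sgn(x_1)$ and $H_2$ is the binary entropy. All the loss comes from replacing $H_2(p(x))$ by $1$. Instead, fix $x_1$ and average over the independent $x_\perp$ first. Each $\langle u_j,x_\perp\rangle$ is distributed as $\calN(0,\norm{u_j}^2)$ with $\norm{u_j}=O(\eps)$. Hence $\E_{x_\perp}p(x)\leq\Pr[Z>c|x_1|/\eps]\leq\frac12$ for $Z\sim\calN(0,1)$. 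Concavity of $H_2$ and its monotonicity on $[0,\frac12]$ then give $\E_{x_\perp}H_2(p(x))\leq H_2(\Pr[Z>c|x_1|/\eps])$. Integrating against the density of $x_1$, which is at most $1/\sqrt{2\pi}$, gives $O(\eps)$, because $\int_0^\infty H_2(\Pr[Z>s])\,ds<\infty$. This bound holds for every family within angle $O(\eps)$ of $e_1$. With it, your Fano and chaining steps do give $q=\Omega(n/\eps)$.
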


\begin{proof}
The standard Gaussian distribution $\gamma_n$ is log-concave and has
full-rank covariance. The result therefore follows directly from
\cite[Theorem~13]{BalcanLong2013}, which gives a lower bound of
\[
    \Omega\!\left(
        \frac{n}{\eps}
        +
        \frac{1}{\eps}\log\frac1\delta
    \right)
\]
for passive learning of homogeneous linear separators under any
full-rank log-concave distribution.
\end{proof}

\section{Supplementary proofs for the Boolean-query results}

\subsection{Krawtchouk polynomial properties}
    \begin{lemma}
    \label{lem:krawtchouk-properties}
    The following properties hold.

    \begin{enumerate}[label=\roman*.]

        \item
        For every $r\geq0$ and $0\leq t,a\leq r$,
        \begin{equation}
        \label{eq:kraw-reflection}
        K_t^{(r)}(r-a)
        =
        (-1)^t K_t^{(r)}(a)
        \end{equation}

        \item
        For every $r\geq0$ and $0\leq t,a\leq r$,
        \begin{equation}
        \label{eq:kraw-duality}
        \binom{r}{a}K_t^{(r)}(a)
        =
        \binom{r}{t}K_a^{(r)}(t)
        \end{equation}

        \item
        For every $r\geq0$ and $0\leq u,v\leq r$,
        \begin{equation}
        \label{eq:kraw-orthogonality}
        \sum_{a=0}^{r}
        \binom{r}{a}
        K_u^{(r)}(a)K_v^{(r)}(a)
        =
        2^r\binom{r}{u}\mathbf{1}_{\{u=v\}}
        \end{equation}

        \item
        If $P$ is a polynomial of degree strictly less than $u$, then
        \begin{equation}
        \label{eq:kraw-polynomial-orthogonality}
        \sum_{a=0}^{r}
        \binom{r}{a}
        K_u^{(r)}(a)P(a)
        =0
        \end{equation}

        \item
        Let $A,B\subseteq[n]$ satisfy $|A|=r$ and $|B|=s$, and let
        $S\subseteq[n]$ be uniformly random.  Then for every
        $0\leq u\leq r$ and $0\leq v\leq s$,
        \begin{equation}
        \label{eq:kraw-correlation-intersection}
        \mathbb{E}_S
        \left[
            K_u^{(r)}(|A\cap S|)
            K_v^{(s)}(|B\cap S|)
        \right]
        =
        \mathbf{1}_{\{u=v\}}
        \binom{|A\cap B|}{u}
        \end{equation}

    \end{enumerate}
    \end{lemma}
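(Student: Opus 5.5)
The plan is to prove all five properties from the generating function
\[
\sum_{t=0}^{r}K_t^{(r)}(a)\,z^t=(1-z)^a(1+z)^{r-a},
\]
which follows at once from \Cref{def:krawtchouk-polynomials}: expand $(1-z)^a=\sum_j(-1)^j\binom aj z^j$ and $(1+z)^{r-a}$, then collect the coefficient of $z^t$.

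Properties (i) and (iii) come almost directly from this identity. For (i), replace $a$ by $r-a$ and then $z$ by $-z$; this swaps the two factors, so $K^{(r)}_t(r-a)=(-1)^tK^{(r)}_t(a)$. For (iii), I will use the combinatorial model: for a set $A$ with $|A|=r$, a subset $T\subseteq A$ of size $t$, and $S\subseteq A$ with $|S|=a$, we have $K_t^{(r)}(a)=\sum_{|T|=t}(-1)^{|T\cap S|}$. This is the character-sum form obtained by splitting $T$ into $T\cap S$ and $T\setminus S$. The characters $\chi_T(S)=(-1)^{|T\cap S|}$ are orthogonal over $S\subseteq A$. Summing the double sum over all $S$ therefore keeps only the diagonal $T=T'$, which gives $2^r\binom ru\mathbf 1_{u=v}$.

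For (ii), I will count $\sum_{|S|=a,\,|T|=t}(-1)^{|S\cap T|}$ over subsets of an $r$-set in two ways. Fixing $S$ gives $\binom ra K_t(a)$, and fixing $T$ gives $\binom rt K_a(t)$, so the two are equal.

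For (iv), I will write $P$ in the basis $K_0^{(r)},\dots,K_{u-1}^{(r)}$. This is possible because $K_v^{(r)}(a)$ is a polynomial in $a$ of exact degree $v$: its leading coefficient comes from the expansion and equals $(-2)^v/v!$. The result then follows from (iii).

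Property (v) is the one that needs real care, since $A$ and $B$ have different sizes and overlap. I will use the character form on each set separately:
\[
K_u^{(r)}(|A\cap S|)=\sum_{T\subseteq A,|T|=u}\chi_T(S),\qquad K_v^{(s)}(|B\cap S|)=\sum_{T'\subseteq B,|T'|=v}\chi_{T'}(S),
\]
where now $\chi_T(S)=(-1)^{|T\cap S|}$ for $S\subseteq[n]$. For uniform $S\subseteq[n]$ we have $\mathbb E_S[\chi_T\chi_{T'}]=\mathbf 1_{T=T'}$. A surviving pair therefore needs $T=T'\subseteq A\cap B$ with $u=v$, and there are $\binom{|A\cap B|}{u}$ such sets.

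The main step to check is the character expansion itself, namely that $\sum_{|T|=t,T\subseteq A}(-1)^{|T\cap S|}$ depends only on $|A\cap S|$ and equals $K_t^{(|A|)}(|A\cap S|)$. This is the same splitting of $T$ into $T\cap S$ and $T\setminus S$ as before, with $j=|T\cap S|$. Once it is in place, (iii) and (v) are just orthogonality of characters.
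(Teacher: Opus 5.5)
Your proposal is correct. For (iv) and (v) it matches the paper's argument. In (iv) you expand $P$ in the basis $K_0^{(r)},\dots,K_{u-1}^{(r)}$ and invoke (iii). In (v) you write each Krawtchouk value as a sum of characters $\chi_T(S)=(-1)^{|T\cap S|}$ over $u$-subsets $T$, grouped by $j=|T\cap S|$, and then use orthonormality of characters under uniform $S$.

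You differ on (i)--(iii). The paper does not prove these; it cites them as the standard reflection, reciprocity and orthogonality identities from Nomura--Terwilliger. You derive them directly. For (i) you use the generating function $(1-z)^a(1+z)^{r-a}$ with $z\mapsto -z$. For (ii) you double count $\sum_{|S|=a,\,|T|=t}(-1)^{|S\cap T|}$. For (iii) you use the same character expansion, which makes (iii) just the case $A=B$ of (v), multiplied by $2^r$.

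You also check that $K_v^{(r)}$ has exact degree $v$ with leading coefficient $(-2)^v/v!$. This is what justifies the basis claim in (iv), which the paper asserts without proof.

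The paper's route is shorter and relies on a reference that covers general parameters $K_t(a;p,r)$. Yours is self-contained and elementary. It also shows that (ii), (iii) and (v) all come from the single identity $K_t^{(|A|)}(|A\cap S|)=\sum_{T\subseteq A,\,|T|=t}\chi_T(S)$, with the generating function handling (i).
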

\begin{proof}[Proof of \Cref{lem:krawtchouk-properties}]
Statements \textup{(i)}--\textup{(iii)} are the standard reflection,
reciprocity, and orthogonality identities for binary Krawtchouk
polynomials. In the notation of~\cite{nomura2012krawtchouk}, our
polynomial is $K_t^{(r)}(a)=\binom rt K_t(a;1/2,r)$; orthogonality and
the generating function are given there in Theorems~3.22 and~3.25.
For \textup{(iv)}, the polynomials
$K_0^{(r)},\ldots,K_{u-1}^{(r)}$ form a basis of the polynomials of
 degree at most $u-1$.  Thus, if $\deg P<u$, write
$P=\sum_{j=0}^{u-1}c_jK_j^{(r)}$.  By \textup{(iii)}
\[
\sum_{a=0}^{r}\binom{r}{a}K_u^{(r)}(a)P(a)
=
\sum_{j=0}^{u-1}c_j
\sum_{a=0}^{r}\binom{r}{a}
K_u^{(r)}(a)K_j^{(r)}(a)
=0
\]

For \textup{(v)}, define
$\chi_T(S)=(-1)^{|T\cap S|}$.  If $|A|=r$, then grouping
$T\subseteq A$, $|T|=u$, according to
$j=|T\cap S|$ gives
\begin{equation}
\label{eq:kraw-character-proof}
K_u^{(r)}(|A\cap S|)
=
\sum_{\substack{T\subseteq A\\|T|=u}}\chi_T(S)
\end{equation}
Similarly
\[
K_v^{(s)}(|B\cap S|)
=
\sum_{\substack{U\subseteq B\\|U|=v}}\chi_U(S)
\]
The Boolean-cube characters are orthonormal under uniform $S$, hence
\begin{align*}
&\mathbb{E}_S
\left[
K_u^{(r)}(|A\cap S|)
K_v^{(s)}(|B\cap S|)
\right]\\
&\qquad=
\sum_{\substack{T\subseteq A,\ |T|=u}}
\sum_{\substack{U\subseteq B,\ |U|=v}}
\mathbb{E}_S[\chi_T(S)\chi_U(S)]\\
&\qquad=
\mathbf{1}_{\{u=v\}}
\#\{T\subseteq A\cap B:|T|=u\}\\
&\qquad=
\mathbf{1}_{\{u=v\}}
\binom{|A\cap B|}{u}
\end{align*}
\end{proof}
\paragraph{Krawtchouk expansions of threshold-supported vectors.}
\label{par:krawtchouk-threshold-expansions}
By \Cref{lem:krawtchouk-properties}\textup{(iii)}, the normalized
Krawtchouk polynomials
$K_t/\sqrt{\binom{k}{t}}$, for $0\leq t\leq k$, form an orthonormal basis
of $\mathbb{R}^{k+1}$ with respect to the binomial distribution
$\operatorname{Bin}(k,1/2)$. Hence any vectors $L$ and $U$ can be written
without loss of generality as
\[
L(a)=\sum_{t=0}^{k}K_t(a)p_t
\qquad
U(a)=\sum_{t=0}^{k}K_t(k-a)q_t\,.
\]
If $L$ is supported on $\{0,\ldots,m\}$,
\Cref{lem:krawtchouk-properties}\textup{(ii)} gives
\[
p_t=2^{-k}\sum_{a=0}^{m}K_a(t)L(a)
\]
so $t\mapsto p_t$ is the restriction of a polynomial of degree at most
$m$. The same argument applied to $a\mapsto U(k-a)$ shows that
$t\mapsto q_t$ is also the restriction of a polynomial of degree at most
$m$.

\subsection{Upper bound under the promise \texorpdfstring{$1\leq |A|\leq k$}{1 <= |A| <= k}}
\label{app:majority-unknown-upper-bound}
\begin{lemma}[Normalized root-pairing estimate]
\label{lem:unknown-root-pairing}
Let $r=2\ell+1$, and let $R$ be either
$P_\ell$ or $Q_\ell$ from
\Cref{def:mod4}. There is a constant
$C$ such that, for every $0\leq t\leq r$,
\begin{equation}
\label{eq:unknown-normalized-root-pairing}
\binom rt R(t)^2
\leq C\frac{r}{(t+1)(r-t+1)}.
\end{equation}
Moreover, if $s$ is a root of $R$, then
\begin{equation}
\label{eq:unknown-normalized-root-derivative}
\binom rs R'(s)^2
\leq C\frac{r}{(s+1)(r-s+1)}.
\end{equation}
\end{lemma}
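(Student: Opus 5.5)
The first inequality is essentially \Cref{lem:majority-asymptotic} restated for a single polynomial, since $\binom rt R(t)^2\leq\binom rt\bigl(P_\ell(t)^2+Q_\ell(t)^2\bigr)$. For $\ell\geq1$ I would invoke that lemma directly with $k=r$. The degenerate case $\ell=0$ (so $r=1$, with $P_0,Q_0$ constant $1/\sqrt2$) I would check by hand. So the real content is the derivative bound at a root $s$.

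For the derivative, write $R(t)=c\prod_{j\in Z}(t-j)$, where $Z$ is the root set (either the residues $0,1$ or the residues $2,3\bmod 4$ in $\{1,\dots,r-1\}$) and $c=1/(\sqrt2\,\rho(0))$ with $\rho\in\{p,q\}$. At a root $s\in Z$ we get
\[
R'(s)=c\prod_{j\in Z\setminus\{s\}}(s-j).
\]
Equivalently, $R'(s)$ is the value at $t=s$ of the polynomial $\widetilde R(t):=R(t)/(t-s)$. I would rerun the pairing argument from the proof of \Cref{lem:majority-asymptotic} on $\widetilde R$.

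The roots of $R$ come in pairs $(4i+1,4i+4)$ or $(4i+2,4i+3)$ (up to the endpoint patterns), and we use $(a(a+3))^2\leq a(a+1)(a+2)(a+3)$ or $((a+1)(a+2))^2\leq a(a+1)(a+2)(a+3)$. For every pair not containing $s$, the squared factor at $t=s$ is bounded by the product of four consecutive integers $\prod_{j}|s-j|$ over a block that avoids $s$. The pair containing $s$ has its other factor equal to $s-s'$ with $|s-s'|\leq3$. That factor is bounded by a constant and replaces the one block containing $s$. Collecting terms should give
\[
R'(s)^2\leq C\,c^2\,(s-1)!\,(r-s-1)!,
\]
up to constant factors from the $O(1)$ exceptional factors near $s$ and the endpoints. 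This is the same shape as \Cref{eq:short-p-bound}. Using $c^2=1/(2\rho(0)^2)\leq 2/(r-1)!$, from the bound $\rho(0)^2\geq (r-1)!/4$ in \Cref{lem:majority-asymptotic}, yields $R'(s)^2\leq C(s-1)!(r-s-1)!/(r-1)!$. Multiplying by $\binom rs$ gives $C r/(s(r-s))\leq C' r/((s+1)(r-s+1))$, since roots satisfy $1\leq s\leq r-1$.

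The main obstacle is the bookkeeping in the pairing step. Removing the factor $(t-s)$ breaks one pair, and at $t=s$ the block of four consecutive integers around $s$ contains $0$, so the product bound cannot be used for that block. I would handle this by treating the at most two neighbouring pairs of $s$ separately, bounding each of their factors by a constant (all are at most $4$ in absolute value). I would then check that the remaining blocks still telescope into $\prod_{j\neq s,\,1\leq j\leq r-1}|s-j|=(s-1)!(r-1-s)!$, losing only a constant number of factors of size $O(1)$. Losing such factors only changes $C$, but it must be verified for both residue patterns and both cases $r\equiv1,3\pmod 4$.
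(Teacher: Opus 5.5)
Your proposal follows the paper's proof. The first bound comes from \Cref{lem:majority-asymptotic} by discarding one of the two nonnegative terms. The derivative bound comes from writing $R'(s)=c\prod_{u\in Z\setminus\{s\}}(s-u)$ and rerunning the root-pairing argument with the factor $t-s$ omitted. Your separate check of $\ell=0$, which \Cref{lem:majority-asymptotic} does not cover since it assumes $k\geq3$, is a welcome addition.

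One step of your bookkeeping is wrong as written. For pairs of adjacent roots $(4i+2,4i+3)$, the inequality $\bigl((a+1)(a+2)\bigr)^2\leq a(a+1)(a+2)(a+3)$ is false. Since $(a+1)(a+2)=a(a+3)+2\geq0$ for every integer $a$, the reverse inequality always holds; for example, $a=1$ gives $36>24$. So at $t=s$ you lose a factor $1+\frac{2}{a(a+3)}$ at \emph{every} such pair, not only at the constantly many pairs near $s$. (The paper's phrase ``the same pairing for $q$'' glosses over this same point.)

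This is harmless, but it needs an argument. For the pairs not containing $s$, the relevant values of $a$ lie in a single residue class mod $4$ and satisfy $a(a+3)\geq10$. Hence $\prod_a\bigl(1+\frac{2}{a(a+3)}\bigr)\leq\exp\bigl(\sum_a\frac{2}{a(a+3)}\bigr)$ is bounded by an absolute constant.

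Alternatively, re-pair those roots as $(4i+3,4i+6)$, so that the correct inequality $\bigl(a(a+3)\bigr)^2\leq a(a+1)(a+2)(a+3)$ applies. The at most two leftover roots at the ends can then be handled directly against their partial blocks.

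With either fix, your bound $R'(s)^2\leq C\,c^2(s-1)!\,(r-s-1)!$ and the rest of your argument go through.
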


\begin{proof}
By \Cref{lem:majority-asymptotic}, the first bound holds with
$R(t)^2$ replaced by $P_\ell(t)^2+Q_\ell(t)^2$. Since both terms are
nonnegative, the same bound follows immediately for $R(t)^2$ alone. For the second estimate, write $R(t)=c\prod_{u\in Z}(t-u)$, where $Z$ denotes the set of roots of the associated polynomial. If
$s\in Z$, then
$R'(s)=c\prod_{u\in Z\setminus\{s\}}(s-u)$.
Thus, it is the same proof as
\Cref{lem:majority-asymptotic}, but $t-s$ is omitted, which gives the same bound.
\end{proof}

\begin{lemma}
\label{lem:unknown-uniform-correction}
Let $\ell\geq1$, $m<\ell$, and $r=2\ell+1$. For every
$0\leq t\leq r$,
\begin{equation}
\label{eq:unknown-uniform-correction}
\binom rt
\left(\frac{C_{m\ell}(t)}{2m+1}\right)^2
\leq
C\frac{r}{(t+1)(r-t+1)}
\frac{t^2}
{(2m+1)^2(1+|t-2m-2|)^2}.
\end{equation}
\end{lemma}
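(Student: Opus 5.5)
Write $R:=Q_\ell$ if $m$ is even and $R:=P_\ell$ if $m$ is odd. By definition,
\[
\frac{C_{m\ell}(t)}{2m+1}=\frac{t}{2m+1}\cdot\frac{R(t)}{t-(2m+2)},
\]
and $s:=2m+2\leq 2\ell<r$ is a simple root of $R$. The bound therefore reduces to showing
\[
\binom rt\left(\frac{R(t)}{t-s}\right)^2\leq C\frac{r}{(t+1)(r-t+1)}\frac{1}{(1+|t-s|)^2}
\]
for all integers $0\leq t\leq r$. Multiplying by $t^2/(2m+1)^2$ then gives exactly the claimed inequality, so the plan is to prove this reduced bound by splitting into two cases according to whether $t=s$.

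\textbf{Case $t\neq s$.} Here $|t-s|\geq1$, so $(t-s)^2\geq\frac14(1+|t-s|)^2$. The first estimate of \Cref{lem:unknown-root-pairing}, $\binom rt R(t)^2\leq Cr/((t+1)(r-t+1))$, then gives the reduced bound directly, at the cost of a factor $4$ in the constant.

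\textbf{Case $t=s$.} The quotient $R(t)/(t-s)$ is the polynomial $c\prod_{u\in Z\setminus\{s\}}(t-u)$, whose value at $t=s$ is $R'(s)$. Since $1+|t-s|=1$ here, the reduced bound is exactly the second estimate of \Cref{lem:unknown-root-pairing}, $\binom rs R'(s)^2\leq Cr/((s+1)(r-s+1))$.

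The only possible obstacle is the legitimacy of the derivative estimate at $s$: that bound was only sketched in \Cref{lem:unknown-root-pairing}, by rerunning the pairing argument of \Cref{lem:majority-asymptotic} with the factor $(t-s)$ removed. If that sketch needs care (removing one factor breaks a pair $(a,a+3)$), I would redo the pairing for $\prod_{u\neq s}(s-u)$ directly. The unpaired leftover factor has size $O(1)$ relative to the neighbouring factorial terms, so this should give $R'(s)^2\lesssim (s-1)!\,(r-s-1)!/(r-1)!$ up to constants. Multiplying by $\binom rs$ then gives $O(r/(s(r-s)))$, as required. The remaining steps are routine bookkeeping.
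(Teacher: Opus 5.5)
Your proof is correct and follows the paper's own argument essentially step for step. The paper likewise picks $R\in\{Q_\ell,P_\ell\}$ by the parity of $m$ and writes $C_{m\ell}(t)=tD(t)$ with $D(t)=R(t)/(t-(2m+2))$. For $t\neq 2m+2$ it applies the first estimate of \Cref{lem:unknown-root-pairing} together with $1/d^2\leq 4/(1+|d|)^2$, and at $t=2m+2$ it uses $D(2m+2)=R'(2m+2)$ and the derivative estimate. Your extra remark is a sound supplement but not needed, since the paper simply takes the derivative bound from \Cref{lem:unknown-root-pairing}: the broken pair leaves a single factor, which changes the pairing bound only by a constant.
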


\begin{proof}
Let $R=Q_\ell$ when $m$ is even and $R=P_\ell$ when $m$ is odd.
By the mod-$4$ root pattern, $2m+2$ is a root of $R$. Thus
\[
D(t):=\frac{R(t)}{t-(2m+2)}
\]
is a polynomial, including at $t=2m+2$, and
\Cref{eq:unknown-Cml} becomes
\begin{equation}
\label{eq:unknown-Cml-quotient}
C_{m\ell}(t)=tD(t).
\end{equation}
For $t\ne2m+2$, \Cref{eq:unknown-normalized-root-pairing} gives
\[
\binom rtD(t)^2
\leq C\frac{r}{(t+1)(r-t+1)}
\frac{4}{(1+|t-2m-2|)^2},
\]
where we used $1/d^2\leq4/(1+|d|)^2$ for every nonzero integer $d$.
At $t=2m+2$, we have $D(2m+2)=R'(2m+2)$, so
\Cref{eq:unknown-normalized-root-derivative} gives the same bound.
Multiplying by $t^2/(2m+1)^2$, absorbing the factor $4$ into $C$,
and using \Cref{eq:unknown-Cml-quotient} proves the claim.
\end{proof}

\begin{lemma}
\label{lem:unknown-uniform-kernel}
For every integer $t\geq0$,
\begin{equation}
\label{eq:unknown-uniform-kernel}
\sum_{m\geq0}
\frac{t^2}
{(2m+1)^2(1+|t-2m-2|)^2}
\leq C\,.
\end{equation}
\end{lemma}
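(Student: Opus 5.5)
We need to bound $\sum_{m\geq0} t^2/((2m+1)^2(1+|t-2m-2|)^2)$ uniformly in $t\geq0$. The plan is to reparametrize the sum by $u:=2m+2\geq2$, which runs over the even integers at least $2$, so that $2m+1=u-1\geq u/2$. The summand is then at most
\[
\frac{4t^2}{u^2(1+|t-u|)^2}.
\]
For $t=0$ the sum vanishes, so we assume $t\geq1$. We split the range of $u$ into two regimes according to whether $u$ is comparable to $t$.

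In the \emph{near regime} $t/2\leq u\leq 2t$, we have $t/u\leq2$, so each term is at most $16/(1+|t-u|)^2$. Summing over all integers $u$ gives at most $16\sum_{d\in\mathbb Z}(1+|d|)^{-2}=O(1)$.

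The \emph{far regime} splits into two cases.

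\begin{enumerate}
\item If $u<t/2$, then $|t-u|\geq t/2$, so the factor $t^2/(1+|t-u|)^2\leq4$. The term is therefore at most $16/u^2$, and $\sum_{u\geq2}u^{-2}=O(1)$.
\item If $u>2t$, then $|t-u|\geq u/2\geq t$, so the term is at most $4t^2/(u^2\cdot t^2)=4/u^2$, whose sum is again $O(1)$.
\end{enumerate}

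Adding the three contributions gives a constant $C$ independent of $t$, which proves the lemma.

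There is no real obstacle. The only point needing care is the near regime: there the factor $t^2/(2m+1)^2$ is not small, and it must be controlled by the ratio bound $t/u\leq2$ rather than by decay in $u$. The boundary cases such as $t=1$ and $u=2$ fit into these estimates with the stated constants.
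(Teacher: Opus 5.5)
Your proof is correct and follows essentially the same route as the paper. The paper substitutes $a=2m+1$ and $N=t-1$ (so $|t-2m-2|=|N-a|$) and splits into two cases: for $a\le N/2$, the bound $1+|N-a|\ge N/2$ absorbs $t^2$, and for $a>N/2$, the bound $1/a^2\le 4/N^2$ absorbs it, leaving the convergent sums $\sum_a a^{-2}$ and $\sum_{j\in\mathbb Z}(1+|j|)^{-2}$. Your three regimes are the same dichotomy; your near regime and your regime $u>2t$ could be merged into the single case $u\ge t/2$, where $t/u\le 2$ already gives the bound $16/(1+|t-u|)^2$ summed over all $u$.
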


\begin{proof}
The case $t\leq2$ is immediate. For $t\geq3$, take $N=t-1$
and extend the sum to all positive integers:
\[
\sum_{m\geq0}
\frac1{(2m+1)^2(1+|t-2m-2|)^2}
\leq
\sum_{a\geq1}\frac1{a^2(1+|N-a|)^2}\,.
\]
For $a\leq N/2$, we have $1+|N-a|\geq N/2$, while for $a>N/2$,
we have $a\geq N/2$. Hence
\[
\sum_{a\geq1}\frac1{a^2(1+|N-a|)^2}
\leq
\frac{C}{N^2}
\left(
\sum_{a\geq1}\frac1{a^2}
+
\sum_{j\in\mathbb Z}\frac1{(1+|j|)^2}
\right)
\leq
\frac{C}{t^2}\,.
\]
Multiplying by $t^2$ proves \Cref{eq:unknown-uniform-kernel}.
\end{proof}

\begin{proof}[Proof of \Cref{eq:unknown-correction-energy}]
The claim is trivial for $\ell=0$, so assume $\ell\geq1$ and put
$r=2\ell+1$. Summing \Cref{eq:unknown-uniform-correction} over
$m<\ell$ and applying \Cref{lem:unknown-uniform-kernel} gives
\[
\sum_{m<\ell}\binom rt
\left(\frac{C_{m\ell}(t)}{2m+1}\right)^2
\leq
C\frac{r}{(t+1)(r-t+1)}\,.
\]
Therefore
\begin{align*}
\sum_{m<\ell}\sum_{t=0}^{r}\binom rt
\left(\frac{C_{m\ell}(t)}{2m+1}\right)^2
&\leq
C\sum_{t=0}^{r}\frac{r}{(t+1)(r-t+1)}\\
&=
\frac{2Cr}{r+2}\sum_{j=1}^{r+1}\frac1j\\
&=
O(\log(r+1))
=
O(\log k)\,.
\end{align*}
This proves \Cref{eq:unknown-correction-energy}.
\end{proof}

\subsection{Lower bound under the promise \texorpdfstring{$1\leq|A|\leq k$}{1 <= |A| <= k}}
\label{app:balanced-slice-lower-bound}
\begin{proof}[Proof of \Cref{lem:balanced-mask-norm}]
Fix $S\subseteq[n]$ and put $r=\lceil m/2\rceil$, where $|A|=m$ on
$\mathcal A$ and $k=2m$. Define
$\mathcal L_a:=\{A\in\mathcal A:|A\cap S|=a\}\,$.
We first show that, for $A\in\mathcal L_a$ and $b\ne a$,
\begin{equation}
\label{eq:balanced-row-layer}
\sum_{B\in\mathcal L_b}\Gamma[A,B]\leq\frac1{|b-a|}\,.
\end{equation}
Choose independent uniform orderings $a_1,\ldots,a_m$ of $A$ and
$b_1,\ldots,b_m$ of $U\setminus A$, and set
\[
A_s=(A\setminus\{a_1,\ldots,a_s\})\cup\{b_1,\ldots,b_s\},
\qquad
X_s=|A_s\cap S|\,.
\]
Since $A_s$ is uniform among the $\binom ms^2$ sets at distance $s$
from $A$,
\begin{equation}
\label{eq:balanced-row-probability}
\sum_{B\in\mathcal L_b}\Gamma[A,B]
=\sum_{s=1}^{m}\frac1s\Pr(X_s=b)\,.
\end{equation}

We use the following standard form of the cycle lemma
\cite{DvoretzkyMotzkin1947,Raney1960}: if $z_1,\ldots,z_s$ are integers
with $z_i\leq1$ and positive total sum $q=\sum_{i=1}^s z_i$,
then exactly $q$ of the $s$ cyclic shifts, counted by their starting
positions, have all partial sums strictly positive. Let $\tau_b:=\min\{t\geq1:X_t=b\}$
and put $q=b-a$. Suppose first that $b>a$. Conditional on $X_s=b$, define the
reversed increments
\[
z_i:=X_{s+1-i}-X_{s-i},
\qquad 1\leq i\leq s.
\]
Then $z_i\in\{-1,0,1\}$ and $\sum_{i=1}^s z_i=q$. Moreover, for every $1\leq j\leq s$, $\sum_{i=1}^j z_i=X_s-X_{s-j}$.
Since $X_t$ changes by at most one at each step, all the partial sums
are positive exactly when $X_s$ first reaches $b$ at time $s$, that is,
when $\tau_b=s$.

Conditional on $X_s=b$, the increments are exchangeable, so the
conditional distribution of $(z_1,\ldots,z_s)$ is invariant under cyclic
shifts. The cycle lemma therefore gives
\[
\Pr(\tau_b=s\mid X_s=b)=\frac{q}{s}.
\]
The case $b<a$ is identical after replacing the increments by their
negatives, and gives the same bound. Thus, in both cases,$
\Pr(\tau_b=s\mid X_s=b)=\frac{|b-a|}{s}$, and hence $\Pr(\tau_b=s)=\frac{|b-a|}{s}\Pr(X_s=b)$.
Consequently, \Cref{eq:balanced-row-probability} is at most
\[
\frac1{|b-a|}
\sum_{s=1}^m\Pr(\tau_b=s)
\leq
\frac1{|b-a|},
\]
which proves \Cref{eq:balanced-row-layer}.

Put $M=\Gamma\circ\Delta_S$. If the query is constant on
$\mathcal A$, then $M=0$. Otherwise define
\[
w(A)=
\begin{cases}
(r-|A\cap S|)^{-1/2},&|A\cap S|<r,\\[4pt]
(|A\cap S|-r+1)^{-1/2},&|A\cap S|\geq r.
\end{cases}
\]
For $A\in\mathcal L_a$ with $a<r$, put $i=r-a$. By
\Cref{eq:balanced-row-layer},
\[
(Mw)(A)
\leq\sum_{j\geq1}\frac{1}{(i+j-1)\sqrt j}
\leq\frac1i\sum_{j\leq i}\frac1{\sqrt j}
   +\sum_{j>i}\frac1{j^{3/2}}
\leq\frac4{\sqrt i}=4w(A)\,.
\]
If $A\in\mathcal L_a$ with $a\geq r$, put $i=a-r+1$ and write the
layers below the threshold as $b=r-j$, $j\geq1$. Then
$a-b=i+j-1$, so the same calculation gives
$(Mw)(A)\leq4w(A)$. Therefore $Mw\leq4w$ entrywise.
Since $M$ is symmetric and its entries are nonnegative, the weighted Schur
test~\cite{DymKatsnelson2007} and $Mw\leq4w$ imply
\[
\norm{\Gamma\circ\Delta_S}
=
\norm{M}_{2\to2}
\leq4 = O(1).
\]
\end{proof}
\subsection{Lower bound under the promise \texorpdfstring{$|A|=k$}{|A| = k}}
\label{app:exact-majority-lower-bound}

\begin{theorem}
\label{thm:exact-majority-lower-bound}
\label{thm:eqk}
Under the promise $|A|=k$,
\[
Q=\Omega\!\left(\log(\min\{k,n-k\})\right)\,.
\]
In particular, if $n\geq2k$, then $Q=\Omega(\log k)$.
\end{theorem}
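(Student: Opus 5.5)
The plan is to reduce to the balanced-slice construction behind \Cref{thm:majority-lower-bounds}. The key observation is that its proof, \Cref{lem:balanced-gamma-norm} together with the proof of \Cref{lem:balanced-mask-norm}, never uses the specific value of the Majority threshold on $U$. It only uses that, for a fixed query $S$, the label $f_A(S)$ is a threshold function of the single layer parameter $|A\cap S|$.

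Set $m':=\min\{k,n-k\}$, assuming $m'\geq 2$ since otherwise the bound is trivial. Choose disjoint known sets $F,U\subseteq[n]$ with $|F|=k-m'$ and $|U|=2m'$. This is possible because $(k-m')+2m'=k+m'\leq n$, using $m'\leq n-k$; and $|F|\geq0$ because $m'\leq k$.

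Restrict the promise to the subfamily
\[
\mathcal A':=\{F\cup A':A'\subseteq U,\ |A'|=m'\}\subseteq\{A:|A|=k\}.
\]
A lower bound for this subproblem is a lower bound for the full problem.

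Fix a query $S$ and write $c_S:=|F\cap S|$, which depends only on $S$ and not on $A'$. Then
\[
f_{F\cup A'}(S)=1 \iff c_S+|A'\cap S|\geq k/2 \iff |A'\cap S|\geq r_S,
\quad\text{where } r_S:=\lceil k/2\rceil-c_S .
\]
So on $\mathcal A'$ every query is either constant, when $r_S\leq0$ or $r_S>m'$, or a threshold in the layer index $a=|A'\cap S|$ at some integer threshold $r_S$.

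Next, take the adversary matrix of \Cref{eq:balanced-gamma}, with $m$ replaced by $m'$ and indexed by the sets $A'$. \Cref{lem:balanced-gamma-norm} applies verbatim, since it is a statement about the Johnson scheme on $\binom{U}{m'}$. It gives $\norm{\Gamma}=\sum_{s=1}^{m'}1/s=\Theta(\log m')$.

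For the mask norm, I would rerun the proof of \Cref{lem:balanced-mask-norm} with $r:=r_S$ in place of $\lceil m/2\rceil$. The row-layer bound \Cref{eq:balanced-row-layer}, proved via the cycle lemma, concerns only the random walk $X_s=|A_s\cap S|$ inside $U$, so it is unaffected by the threshold. The weight vector $w$ defined relative to $r_S$, combined with the weighted Schur test, then gives $\norm{\Gamma\circ\Delta_S}\leq4$ for every $S$. The constant cases give $\Gamma\circ\Delta_S=0$.

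The positive-weight adversary bound then yields
\[
Q=\Omega\!\left(\frac{\log m'}{4}\right)=\Omega\bigl(\log\min\{k,n-k\}\bigr),
\]
which is $\Omega(\log k)$ when $n\geq2k$.

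The main point to check carefully is the claim that the Schur-test estimate $(Mw)(A)\leq4w(A)$ holds for an arbitrary threshold $r_S\in\{1,\dots,m'\}$, including thresholds near the boundary layers $0$ or $m'$, where one side of the threshold has few layers. I expect this to be harmless: the sums over $j$ only shrink when layers are missing, and the sums were bounded using only $|b-a|=i+j-1$. I would verify it explicitly, together with the bookkeeping that the parity of $k$ only shifts $r_S$ by a known integer.
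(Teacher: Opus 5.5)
Your proposal is correct and is essentially the paper's proof. The paper also pads with a fixed known set of size $k-\min\{k,n-k\}$, restricts to hidden sets $C\cup A$ with $A\in\binom{U}{m}$ and $|U|=2m$, and reuses the matrix $\Gamma$ from \Cref{eq:balanced-gamma}. It then reruns the proof of \Cref{lem:balanced-mask-norm} with the shifted threshold $\lceil k/2\rceil-|C\cap S|$. Your worry about the Schur-test estimate under an arbitrary threshold is resolved just as you expect: the bound depends only on $|b-a|=i+j-1$, and missing layers only remove nonnegative terms.
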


\begin{proof}
Put $m=\min\{k,n-k\}$ and $c=k-m$. The claim is trivial for $m=0$.
Otherwise choose disjoint known sets $C,U\subseteq[n]$ with
$|C|=c$ and $|U|=2m$, and restrict to hidden sets$
H_A=C\cup A
\qquad
A\in\binom Um\,.
$.
Index $\Gamma$ by the sets $A,B\in\binom Um$ and define it by
\[
\Gamma[A,B]
=
\begin{cases}
\displaystyle
\frac{1}{|A\setminus B|\binom{m}{|A\setminus B|}^{2}}&A\ne B\\[6pt]
0&A=B
\end{cases}\,.
\]
After relabeling $U$ as $[2m]$, the proof of
\Cref{lem:balanced-gamma-norm} gives
$\norm{\Gamma}=\Theta(\log m)$. For a query $S$, put
$T=S\cap U$. Its answer is determined by
\[
|A\cap T|\geq
\left\lceil\frac{k}{2}\right\rceil-|C\cap S|\,.
\]
Thus the proof of \Cref{lem:balanced-mask-norm}, with this shifted
threshold, gives
$\norm{\Gamma\circ\Delta_S}=O(1)$. The positive-weight adversary
bound~\cite{ambainis2002quantum,HoyerLeeSpalek2007} now yields
\[
Q=\Omega(\log m)
=\Omega\!\left(\log(\min\{k,n-k\})\right)\,.
\]
When $n\geq2k$, we have $m=k$.
\end{proof}

\section{Supplementary proofs for the example-query results}\label{app:discretization}
\begingroup
\renewcommand{\thelemma}{\ref{lem:discretization-error}}
\begin{lemma}
    For every finite degree cutoff $D$ and every $M\geq64n^2(D+1)$, the following inequality holds:
    \begin{equation}
        \left(
        \sum_{\alpha\in[D]_0^n}
        \left|
            \widehat f_{w,M}(\alpha)
            -
            \widehat f_w(\alpha)
        \right|^2
        \right)^{1/2}
        \leq
        24n
    \sqrt{\frac{(D+1)}{M}}
    +2\left(
        \frac{8n}{M}
    \right)^{1/4}
    +
    2\sqrt{2n}\,e^{-\pi M/4}.
    \end{equation}
\end{lemma}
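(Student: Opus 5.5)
We will bound the left-hand side, which is the $\ell^2$ norm over $\alpha\in[D]_0^n$ of the coefficient errors. The plan is to rewrite both coefficient families as inner products in $L^2(\mathbb R^n)$ against Hermite functions $\phi_\alpha(x):=h_\alpha(x)\sqrt{\gamma_n(x)}$, which are orthonormal. Then $\widehat f_w(\alpha)=\langle \phi_\alpha, \Psi\rangle$ with $\Psi(x):=f_w(x)\sqrt{\gamma_n(x)}$. For the discrete coefficient we introduce the piecewise-constant embedding
\[
\widetilde\Psi_M(x):=\frac{1}{\sqrt Z}f_w(\bar x)\sqrt{\gamma_n(\bar x)}\,\mathbf 1_{[-L,L]^n}(x),
\]
and write $\widehat f_{w,M}(\alpha)=\langle \widetilde\phi_{\alpha},\widetilde\Psi_M\rangle$, where $\widetilde\phi_\alpha$ is the piecewise-constant version of $\phi_\alpha$ (value $\phi_\alpha(\bar x)$ on each cell). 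We then split the error with the triangle inequality into three terms:
\begin{align*}
\bigl|\widehat f_{w,M}(\alpha)-\widehat f_w(\alpha)\bigr|
&\leq \bigl|\langle \widetilde\phi_\alpha-\phi_\alpha,\widetilde\Psi_M\rangle\bigr|
+\bigl|\langle \phi_\alpha,\widetilde\Psi_M-\Psi_M^{\mathrm{fw}}\rangle\bigr|
+\bigl|\langle \phi_\alpha,\Psi_M^{\mathrm{fw}}-\Psi\rangle\bigr|.
\end{align*}
Here $\Psi_M^{\mathrm{fw}}(x):=f_w(\bar x)\sqrt{\gamma_n(x)}\mathbf 1_{[-L,L]^n}(x)$ has the discretized sign but the exact Gaussian amplitude.

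The last two terms are handled by Bessel's inequality. Since $\{\phi_\alpha\}$ is orthonormal, the $\ell^2$ sum over $\alpha$ of each such term is at most the $L^2$ norm of the corresponding function difference.

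The third difference splits into a tail and a boundary part. The tail piece, $\Psi$ restricted outside $[-L,L]^n$, has squared norm at most $2n e^{-L^2/2}$; with $L^2=\pi M$ this contributes the $2\sqrt{2n}e^{-\pi M/4}$ term, up to constants. The boundary piece, $(f_w(\bar x)-f_w(x))\sqrt{\gamma_n(x)}$, has squared norm bounded by \Cref{prop:halfspace-boundary-error}, at most $4\sqrt{8n/M}$. Taking the square root gives the $2(8n/M)^{1/4}$ term.

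The second difference only involves the amplitude $\sqrt{\gamma_n}$ and its normalization, since $|f_w|=1$. This reduces to $\|\sqrt{\gamma_n}-Z^{-1/2}\sqrt{\gamma_n(\bar\cdot)}\|_{L^2}$. We bound it through a per-coordinate Lipschitz estimate of $e^{-x^2/4}$ on cells of width $\Delta$, which gives $O(\sqrt n\,\Delta)=O(n/\sqrt M)$, together with $|Z-1|=O(n\Delta)$ plus the tail. This lands inside the $24n\sqrt{(D+1)/M}$ budget.

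The first term is the main obstacle, because it must be controlled uniformly over all $D^n$ indices, so Bessel does not apply directly. The plan is to use Cauchy--Schwarz in reverse form:
\[
\sum_\alpha |\langle \widetilde\phi_\alpha-\phi_\alpha,\widetilde\Psi_M\rangle|^2=\|P(\widetilde\Psi_M)\|^2,
\]
where $P$ is the operator $\sum_\alpha|\cdot\rangle\langle\widetilde\phi_\alpha-\phi_\alpha|$. It then suffices to bound the operator norm of the discretization map restricted to $\mathrm{span}\{\phi_\alpha:\alpha\in[D]_0^n\}$. This tensorizes: for one dimension, $\|\widetilde\phi_k-\phi_k\|\leq \Delta\|\phi_k'\|$, and $\|\phi_k'\|=O(\sqrt{k+1})$ by the ladder relations. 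So the single-coordinate operator deviation is $\delta_1=O(\sqrt{(D+1)/M})$. A hybrid argument over the $n$ coordinates, exactly as in \eqref{eq:n-dimensional-QHT-error}, gives $(1+\delta_1)^n-1\leq 2n\delta_1$. The hypothesis $M\geq 64n^2(D+1)$ is precisely what makes $n\delta_1\leq 1$, which produces the $24n\sqrt{(D+1)/M}$ term after tracking constants. The delicate point is that the discretized Hermite functions are not exactly orthonormal, so their Gram matrix must be controlled; the operator-norm formulation avoids needing orthonormality.
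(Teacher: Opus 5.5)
Your proposal is correct and follows essentially the paper's route. The paper writes the $\ell^2$ norm as a supremum over unit coefficient vectors $c$ and splits $\int\Psi_c(\bar x)\Phi_{w,M}\,dx-\int\Psi_c\Phi_w\,dx$ into a smooth-discretization term and a phase-state term. The smooth term is handled by a one-dimensional cell estimate, the ladder relations and an $n$-fold hybrid argument. The phase-state term is split into Gaussian, normalization, boundary (via \Cref{prop:halfspace-boundary-error}) and tail pieces. This is exactly your operator-norm/Bessel decomposition in dual form.

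One step needs to be stated more carefully: the single-coordinate bound. Apply the cell estimate to a general $v=\sum_{k<D}a_k\phi_k$, and use the ladder relations together with orthonormality to get $\|v'\|\leq\sqrt{D+1}\,\|a\|$. The per-basis bounds $\|\widetilde\phi_k-\phi_k\|=O(\Delta\sqrt{k+1})$ alone only give an operator norm of order $\Delta D$, not $\Delta\sqrt{D}$. Also restrict this term to $[-L,L]^n$, where $\widetilde\Psi_M$ is supported, so you never need to control the tails of $\phi_\alpha$ outside the box.
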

\endgroup
\begin{proof}
For every collection of coefficients $c=(c_\alpha)_{\alpha\in[D]_0^n}$ satisfying $\sum_{\alpha\in[D]_0^n}|c_\alpha|^2=1$, we have
\begin{equation}
    \left(
        \sum_{\alpha\in[D]_0^n}
        \left|
            \widehat f_{w,M}(\alpha)
            -
            \widehat f_w(\alpha)
        \right|^2
        \right)^{1/2}
        \leq
        \sup_c
        \abs{
            \sum_{\alpha\in[D]_0^n}
            c_\alpha
            \left(
                \widehat f_{w,M}(\alpha)-\widehat f_w(\alpha)
            \right)
        }.
        \label{target_error}
\end{equation}
For convenience, define
\[
\Psi_c(x):=\sum_{\alpha\in[D]_0^n}c_\alpha h_\alpha(x)\sqrt{\gamma_n(x)}.
\]
Since the normalized Hermite polynomials are orthonormal with respect to $\gamma_n$, we have $\|\Psi_c\|^2=\sum_{\alpha\in[D]_0^n}|c_\alpha|^2=1$. Define also
\[
\Phi_w(x):=f_w(x)\sqrt{\gamma_n(x)},
\qquad
\Phi_{w,M}(x):=Z^{-1/2}f_w(\bar{x})\sqrt{\gamma_n(\bar{x})}\mathbf 1_{[-L,L]^n}(x).
\]
By the definition of $Z$, $\|\Phi_{w,M}\|^2=\|\Phi_w\|^2=1$. Thus,
\begin{equation}
    \widehat f_w(\alpha)
    =
    \int_{\R^n}
    h_\alpha(x)\sqrt{\gamma_n(x)}
    \Phi_w(x)\,dx,
    \label{eq:continuous-coefficient-inner}
\end{equation}
and, since each discretization cell has volume $\Delta^n$,
\begin{equation}
    \widehat f_{w,M}(\alpha)
    =
    \int_{[-L,L]^n}
    h_\alpha(\bar{x})
    \sqrt{\gamma_n(\bar{x})}
    \Phi_{w,M}(x)\,dx.
    \label{eq:discrete-coefficient-inner}
\end{equation}
Hence, we decompose
\begin{align}
    \abs{\sum_{\alpha\in[D]_0^n}
    {c_\alpha}
    \left(
        \widehat f_{w,M}(\alpha)
        -
        \widehat f_w(\alpha)
    \right)}
    &=
    \abs{\int_{[-L,L]^n}
    {\Psi_c(\bar{x})}
    \Phi_{w,M}(x)\,dx
    -
    \int_{\R^n}
    {\Psi_c(x)}
    \Phi_w(x)\,dx}
    \nonumber\\
    &\leq \abs{\int_{[-L,L]^n}
    \left(
        {\Psi_c(\bar{x})}
        -
        {\Psi_c(x)}
    \right)
    \Phi_{w,M}(x)\,dx}
    \nonumber\\
    &\quad+
    \abs{\int_{\R^n}
    {\Psi_c(x)}
    \left(
        \Phi_{w,M}(x)-\Phi_w(x)
    \right)dx} \\
    &\leq \left(
        \int_{[-L,L]^n}
        \left|
            \Psi_c(\bar{x})-\Psi_c(x)
        \right|^2dx
        \right)^{1/2}
    \nonumber\\
    &\quad+
    \left(
        \int_{\R^n}
        \abs{\Phi_{w,M}(x)-\Phi_w(x)}^2dx
    \right)^{1/2}.
    \label{eq:main-error-decomposition}
\end{align}
We now bound these two terms separately.

We consider the first term. Let us start with a one-dimensional normalized Hermite decomposition such that for $t\in \R$, we define $v(t):=\sum_{k=0}^D a_k h_k(t)\sqrt{\gamma(t)}$. For each grid cell $I_j=[z_j,z_j+\Delta)$, we have $\bar t=z_j$ for every $t\in I_j$. Applying the one-sided Poincaré inequality to $u(t):=v(t)-v(z_j)$, which satisfies $u(z_j)=0$, gives
\begin{equation}
    \int_{I_j}
    |v(t)-v(z_j)|^2\,dt
    \leq
    \frac{4\Delta^2}{\pi^2}
    \int_{I_j}|v'(t)|^2\,dt
    \leq
    \frac{4\Delta^2}{\pi}
    \int_{I_j}|v'(t)|^2\,dt,
    \label{eq:cell-poincare-bound}
\end{equation}
where we used $4/\pi^2\leq4/\pi$. Summing over all grid cells yields
\begin{equation}
    \int_{-L}^{L}
    |v(\bar t)-v(t)|^2\,dt
    \leq
    \frac{4\Delta^2}{\pi}
    \int_{-L}^{L}|v'(t)|^2\,dt
    \leq
    \frac{4\Delta^2}{\pi}\|v'\|^2.
    \label{eq:sum-cell-poincare}
\end{equation}
We next bound $\|v'\|$.  The normalized probabilists' Hermite functions satisfy
\begin{equation}
    \phi_k'(t)
    =
    \frac12
    \left(
        \sqrt{k}\,\phi_{k-1}(t)
        -
        \sqrt{k+1}\,\phi_{k+1}(t)
    \right).
    \label{eq:phi-derivative}
\end{equation}
Therefore,
\begin{align}
    \|v'\|
    &\leq
    \frac12
    \left(
        \sum_{k=0}^D
        k|a_k|^2
    \right)^{1/2}
    +
    \frac12
    \left(
        \sum_{k=0}^D
        (k+1)|a_k|^2
    \right)^{1/2}
    \nonumber\\
    &\leq
    \sqrt{D+1}
    \left(
        \sum_{k=0}^D|a_k|^2
    \right)^{1/2},
    \label{eq:v-derivative-bound}
\end{align}
where we used the orthonormality of $\{\phi_k\}_{k\geq0}$ in $L^2(\R)$. Combining~\Cref{eq:sum-cell-poincare} and~\Cref{eq:v-derivative-bound}, and using $\Delta=\sqrt{4\pi/M}$, gives
\begin{equation}
    \left(
        \int_{-L}^{L}
        |v(\bar t)-v(t)|^2\,dt
    \right)^{1/2}
    \leq
    \sqrt{\frac{16(D+1)}{M}}
    \left(
        \sum_{k=0}^D|a_k|^2
    \right)^{1/2},
    \label{eq:one-dimensional-grid-error}
\end{equation}
We now lift this one-dimensional estimate to $n$ dimensions using the same hybrid argument as in~\cite[Proposition~29]{JainIyerSommaBaoJordan2025}.
For $r=1,\ldots,n$, define
\begin{equation}
    \Psi_c^{(r)}(x)
    :=
    \mathbf 1_{[-L,L]^n}(x)
    \sum_{\alpha\in[D]_0^n}
    c_\alpha
    \left(
        \prod_{i=1}^{r}
        \phi_{\alpha_i}(\bar x_i)
    \right)
    \left(
        \prod_{i=r+1}^{n}
        \phi_{\alpha_i}(x_i)
    \right).
    \label{eq:Psi-hybrid}
\end{equation}
Thus, $\Psi_c^{(0)}(x)=\mathbf 1_{[-L,L]^n}(x)\Psi_c(x)$, and $\Psi_c^{(n)}(x)=\mathbf 1_{[-L,L]^n}(x)\Psi_c(\bar x)$. The $r$th hybrid replaces only the $r$th coordinate by its discretized version. By~\Cref{eq:one-dimensional-grid-error}, this replacement has operator norm at most $\zeta:= \sqrt{{16(D+1)}/{M}}$. Moreover, the continuous one-dimensional Hermite synthesis map has norm at most $1$, while its discretized version has norm at most $1+\zeta$. Hence,
\begin{align}
    \|
        \Psi_c^{(r)}-\Psi_c^{(r-1)}
    \|
    &\leq
    (1+\zeta)^{r-1}
    \zeta
    \left(
        \sum_{\alpha\in[D]_0^n}
        |c_\alpha|^2
    \right)^{1/2}
    \nonumber\\
    &=
    \zeta(1+\zeta)^{r-1},
    \label{eq:rth-hybrid-error}
\end{align}
where in the last equality we used $\sum_\alpha|c_\alpha|^2=1$. Applying the triangle inequality over the $n$ hybrids, we obtain
\begin{align}
    \left(
        \int_{[-L,L]^n}
        \left|
            \Psi_c(\bar x)-\Psi_c(x)
        \right|^2dx
    \right)^{1/2}
    &=
    \|
        \Psi_c^{(n)}-\Psi_c^{(0)}
    \|
    \nonumber\\
    &\leq
    \sum_{r=1}^n
    \|
        \Psi_c^{(r)}-\Psi_c^{(r-1)}
    \|
    \nonumber\\
    &\leq
    \zeta
    \sum_{r=0}^{n-1}
    (1+\zeta)^r
    \nonumber\\
    &=
    (1+\zeta)^n-1.
    \label{eq:n-dimensional-grid-error-pre}
\end{align}
The assumption $M\geq64n^2(D+1)$ implies $n\zeta\leq\frac12$. Hence, $(1+\zeta)^n-1\leq e^{n\zeta}-1\leq2n\zeta$, and therefore
\begin{equation}
    \left(
        \int_{[-L,L]^n}
        |
            \Psi_c(\bar{x})-\Psi_c(x)
        |^2dx
    \right)^{1/2}
    \leq
    2n
    \sqrt{\frac{16(D+1)}{M}}.
    \label{eq:smooth-hermite-grid-error}
\end{equation}

We now consider the second term of~\Cref{eq:main-error-decomposition} and decompose the difference $\Phi_{w,M}-\Phi_w$ into four contributions:
\begin{align}
    \Phi_{w,M}(x)-\Phi_w(x)
    &=
    \underbrace{
        \left(
            \frac1{\sqrt Z}-1
        \right)
        f_w(\bar{x})
        \sqrt{\gamma_n(\bar{x})}
        \mathbf 1_{[-L,L]^n}(x)
    }_{\mathcal E_{\mathrm{norm}}(x)}
    \nonumber\\
    &\quad+
    \underbrace{
        f_w(\bar{x})
        \left(
            \sqrt{\gamma_n(\bar{x})}
            -
            \sqrt{\gamma_n(x)}
        \right)
        \mathbf 1_{[-L,L]^n}(x)
    }_{\mathcal E_{\mathrm{Gauss}}(x)}
    \nonumber\\
    &\quad+
    \underbrace{
        \left(
            f_w(\bar{x})-f_w(x)
        \right)
        \sqrt{\gamma_n(x)}
        \mathbf 1_{[-L,L]^n}(x)
    }_{\mathcal E_{\mathrm{bdry}}(x)}
    \nonumber\\
    &\quad-
    \underbrace{
        f_w(x)\sqrt{\gamma_n(x)}
        \mathbf 1_{\R^n\setminus[-L,L]^n}(x)
    }_{\mathcal E_{\mathrm{tail}}(x)}.
    \label{eq:phase-four-term-decomposition}
\end{align}
Therefore,
\begin{equation}
    \|\Phi_{w,M}-\Phi_w\|
    \leq
    \|\mathcal E_{\mathrm{norm}}\|
    +
    \|\mathcal E_{\mathrm{Gauss}}\|
    +
    \|\mathcal E_{\mathrm{bdry}}\|
    +
    \|\mathcal E_{\mathrm{tail}}\|.
    \label{eq:phase-four-norms}
\end{equation}

We bound these four terms in turn.

\begin{itemize}
    \item Applying the same discretization estimate~\Cref{eq:one-dimensional-grid-error} to the degree-zero Hermite function $\sqrt{\gamma_n(x)}$ and same hybrid argument analysis gives
        \begin{equation}
    \|\mathcal E_{\mathrm{Gauss}}\|
    =
    \left(
        \int_{[-L,L]^n}
        \left|
            \sqrt{\gamma_n(\bar{x})}
            -
            \sqrt{\gamma_n(x)}
        \right|^2dx
    \right)^{1/2}
    \leq
    2n\sqrt{\frac{16}{M}}.
    \label{eq:bound-Gauss-term}
    \end{equation}
    \item By~\Cref{prop:halfspace-boundary-error},
\begin{equation}
    \|\mathcal E_{\mathrm{bdry}}\|
    \leq
    2\left(
        \frac{8n}{M}
    \right)^{1/4}.
    \label{eq:bound-boundary-term}
\end{equation}
    \item Since $|f_w(x)|=1$,
\begin{equation}
    \|\mathcal E_{\mathrm{tail}}\|^2
    =
    \int_{\R^n\setminus[-L,L]^n}
    \gamma_n(x)\,dx = \Pr[x\notin[-L,L]^n]\leq
    2n e^{-L^2/2}
    =
    2n e^{-\pi M/2}.\label{eq:bound-tail-term}
\end{equation}
    \item Let
    \[
        A(x):=\sqrt{\gamma_n(\bar x)}\mathbf 1_{[-L,L]^n}(x),
        \qquad
        B(x):=\sqrt{\gamma_n(x)}.
    \]
    By the reverse triangle inequality,
    \begin{align}
        \|\mathcal E_{\mathrm{norm}}\|
        =|\sqrt Z-1|
        &=\bigl|\|A\|-\|B\|\bigr|\\
        &\leq \|A-B\|\\
        &\leq \|\mathcal E_{\mathrm{Gauss}}\|+\|\mathcal E_{\mathrm{tail}}\|\\
        &\leq \frac{8n}{\sqrt M}+\sqrt{2n}\,e^{-\pi M/4}.
        \label{eq:bound-normalization-term}
    \end{align}
\end{itemize}
Combining~\Cref{eq:bound-Gauss-term},~\Cref{eq:bound-boundary-term},~\Cref{eq:bound-tail-term}, and~\Cref{eq:bound-normalization-term}, we obtain
\begin{equation}
    \|\Phi_{w,M}-\Phi_w\|
    \leq
    \frac{16n}{\sqrt M}
    +2\left(
        \frac{8n}{M}
    \right)^{1/4}
    +2\sqrt{2n}\,e^{-\pi M/4}.
    \label{eq:phase-total-error}
\end{equation}
Substituting~\Cref{eq:smooth-hermite-grid-error} and~\Cref{eq:phase-total-error} into~\Cref{eq:main-error-decomposition} and finally~\Cref{target_error} gives
\begin{align}
    \left(
        \sum_{\alpha\in[D]_0^n}
        \left|
            \widehat f_{w,M}(\alpha)
            -
            \widehat f_w(\alpha)
        \right|^2
        \right)^{1/2} 
    &\leq
    24n
    \sqrt{\frac{(D+1)}{M}}
    +2\left(
        \frac{8n}{M}
    \right)^{1/4}
    +
    2\sqrt{2n}\,e^{-\pi M/4}.
\end{align}
This proves the lemma.
\end{proof}
\end{document}